\newcommand{\Appendix}[1]{the full version for}
\newcommand{\linelabel}[1]{}
\newtheorem{theorem}{Theorem}[section]
\newtheorem{lemma}[theorem]{Lemma}
\newtheorem{proposition}[theorem]{Proposition}
\newtheorem{definition}{Definition}
\renewcommand{\comment}[1]{}
\algnewcommand\algorithmicinput{\textbf{Input:}}
\algnewcommand\algorithmicoutput{\textbf{Output:}}
\algnewcommand\INPUT{\item[\algorithmicinput]}
\algnewcommand\OUTPUT{\item[\algorithmicoutput]}
\algnewcommand{\LineComment}[1]{\Statex \(\triangleright\) #1}
\title{Recovery from Non-Decomposable Distance Oracles}
\author{Zhuangfei Hu\thanks{University of Waterloo. Email: zhuangfei.hu@uwaterloo.ca} \and Xinda Li\thanks{University of Waterloo. Email: xinda.li@uwaterloo.ca} \and David P. Woodruff\thanks{Carnegie Mellon University. Email: dwoodruf@cs.cmu.edu} \\
\and
 Hongyang Zhang\thanks{University of Waterloo. Email: hongyang.zhang@uwaterloo.ca
}
\and
Shufan Zhang\thanks{University of Waterloo. Email: shufan.zhang@uwaterloo.ca}
\footnote{The preliminary version of this work was accepted and presented at ITCS 2023 \citep{hu_et_al:LIPIcs.ITCS.2023.73} and the full version has been accepted by IEEE Transactions on Information Theory \citep{hu2023recoveryTIT}.}
}
\date{}
\newtheorem{example}{Example}[section]
\def\renewtheorem#1{%
  \expandafter\let\csname#1\endcsname\relax
  \expandafter\let\csname c@#1\endcsname\relax
  \gdef\renewtheorem@envname{#1}
  \renewtheorem@secpar
}
\def\renewtheorem@secpar{\@ifnextchar[{\renewtheorem@numberedlike}{\renewtheorem@nonumberedlike}}
\def\renewtheorem@numberedlike[#1]#2{\newtheorem{\renewtheorem@envname}[#1]{#2}}
\def\renewtheorem@nonumberedlike#1{  
\def\renewtheorem@caption{#1}
\edef\renewtheorem@nowithin{\noexpand\newtheorem{\renewtheorem@envname}{\renewtheorem@caption}}
\renewtheorem@thirdpar
}
\def\renewtheorem@thirdpar{\@ifnextchar[{\renewtheorem@within}{\renewtheorem@nowithin}}
\def\renewtheorem@within[#1]{\renewtheorem@nowithin[#1]}
\theoremstyle{definition}
\newcommand{\bits}{\{0,1\}}
\newcommand{\mc}[1]{\mathcal{#1}}
\newcommand{\mb}[1]{\mathbf{#1}}
\newcommand{\mbb}[1]{\mathbb{#1}}
\newtheorem{claim}[theorem]{Claim}
\newcommand{\frechet}{Fr\'echet\xspace}
\newcommand{\revise}[1]{{#1}}
\newcommand{\needreview}[1]{{#1}}
\newcommand{\eat}[1]{}
\newcommand{\stitle}[1]{\smallskip \noindent{\bf #1}}
\newcounter{margin} %
\definecolor{xcolor}{rgb}{0.55, 0.71, 0.0}
\definecolor{hcolor}{rgb}{1.0, 0.08, 0.58}
\DeclareMathOperator\MSS{MSS}
\DeclareMathOperator\DTW{DTW}
\DeclareMathOperator\dist{dist}
\DeclareMathOperator\len{len}
\DeclareMathOperator\Cost{Cost}
\DeclareMathOperator{\poly}{\mathsf{poly}}
\DeclareMathOperator{\polylog}{\mathsf{polylog}}
\newif\ifarxiv
\newif\ifTIT
\begin{document}

\maketitle

\begin{abstract}
A line of work has looked at the problem of recovering an input from \emph{distance queries}. In this setting, there is an unknown sequence $s \in \{0,1\}^{\leq n}$, and one chooses a set of queries $y \in \{0,1\}^{\mc{O}(n)}$ and receives $d(s,y)$ for a distance function $d$. The goal is to make as few queries as possible to recover $s$. Although this problem is well-studied for \emph{decomposable} distances, i.e., distances of the form $d(s,y) = \sum_{i=1}^n f(s_i, y_i)$ for some function $f$, which includes the important cases of Hamming distance, $\ell_p$-norms, and $M$-estimators, to the best of our knowledge this problem has not been studied for non-decomposable distances, for which there are important special cases such as edit distance, dynamic time warping (DTW), \frechet distance, earth mover's distance, and so on. We initiate the study and develop a general framework for such distances. Interestingly, for some distances such as DTW or \frechet, exact recovery of the sequence $s$ is provably impossible, and so we show by allowing the characters in $y$ to be drawn from a slightly larger alphabet this then becomes possible. In a number of cases we obtain optimal or near-optimal query complexity. We also study the role of adaptivity for a number of different distance functions.  One motivation for understanding non-adaptivity is that the query sequence can be fixed and the distances of the input to the queries provide a non-linear embedding of the input, which can be used in downstream applications involving, e.g., neural networks for natural language processing.
\end{abstract}

\newpage
\tableofcontents

\newpage

\section{Introduction}
\ifarxiv
We study the problem of exact recovery of a sequence from queries to a distance oracle.
\fi
\ifTIT
\IEEEPARstart{W}{e study} the problem of exact recovery of a sequence from queries to a distance oracle.
\fi
Suppose there is an unknown input sequence $s$ with length at most $n$, defined on a binary alphabet $\{0,1\}$.
Assume we have a distance oracle which returns the distance $d(s,q)$ between a query sequence $q$ and the unknown sequence $s$, where the query sequence $q$ is chosen either adaptively or non-adaptively.
The problem is to determine the sequence $s$ with a minimal number of queries to the distance oracle.
\needreview{This problem has been studied for decomposable distances, that is, the distance function between two sequences can be computed as the sum of distances between pairs of characters at the same entry, but never for non-decomposable distances.}
Among all non-decomposable distances, we are particularly interested in the edit distance, ($p$)-Dynamic Time Warping ($p$-DTW), and \frechet distances.
The edit distance measures the minimum number of edit operations (i.e., insertions, deletions, and substitutions) for transforming one sequence to another.
\revise{\linelabel{intro-dtw}The $p$-DTW distance ($1\le p<\infty$) between two sequences $x, y$ is defined as the minimum $\ell_p$ distance between two equal-length expansions of $x, y$, where the expansion of a sequence means you can duplicate each character of each sequence an arbitrary number of times. 
When $p=1$, the $p$-DTW distance is called the DTW distance.}
If we consider the $\ell_\infty$ norm instead of the $\ell_p$ norm, we obtain the \frechet distance.

The problem of exact recovery for \emph{decomposable distances} is well-studied in the literature, \needreview{under the names of} the coin-weighing problem \citep{shapiro1960e1399,bshouty2009optimal} and the group testing problems \citep{dorfman1943detection,aldridge2019group, coja2020optimal}.
The coin-weighing problem is to identify the weight of each coin from a collection of $n$ coins, each being of weight either $w_0$ or $w_1$ ($w_0$ and $w_1$ are distinct). In this problem, our only access to the coins is via weighing a subset of the coins on a spring scale.
The group testing problem has also been shown to be equivalent to the coin-weighing problem in some settings  \citep{wang2017optimal}.
This line of research has been extensively studied with interesting applications.
For example, the coin-weighing problem can be found in the detection problem \citep{soderberg1963combinatory}, the problem of determining a collection \citep{cantor1966determination}, and the distinguishing family problem \citep{li1991combinatorics}.

The query complexity of the adaptive version of the problem is also related to the original Mastermind game \citep{knuth1976computer}.
The Mastermind problem can be phrased as guessing an input sequence based on Hamming distance queries. 
The non-adaptive version of \needreview{this problem can be shown to be equivalent to the well-studied non-adaptive coin-weighing problem \citep{bshouty2009optimal}.} 
One can then consider other variants of the Mastermind game where the input sequence is guessed based on other distance metrics, such as permutation-based distances \citep{afshani2019query}, $\ell_p$ distances \citep{fernandez2019query} and graph distances \citep{rodriguez2014strong, jiang2019metric}. 
However, general distance metrics that do not decompose into coordinate sums are less understood. 
In this paper, we initiate the study of this exact recovery problem on  \emph{non-decomposable} distances.

One motivation of our exact recovery problem is its application to adversarially robust learning on  discrete domains.
It is well-known that deep neural networks are vulnerable to adversarial examples: test inputs that have been modified slightly in the $\ell_p$ space can lead to problematic machine predictions.
Though there exist various techniques such as Pixel-DP~\citep{lecuyer2019certified} and randomized smoothing~\citep{DBLP:conf/icml/CohenRK19} that achieve certified robustness against $\ell_p$-norm perturbations in continuous domains, in many tasks such as natural language processing, the $\ell_p$ norm is not well-defined for discrete perturbations. 
To resolve this issue, inputs from a discrete domain are usually mapped to vectors in the $\ell_p$ space before being passed to a classifier; this is also known as a word embedding. 
We require two properties of such a mapping: 1) zero information loss; 2) Lipschitzness with respect to the distance metric in the input space. 
We show that the exact recovery problem yields a direct construction of such mappings: suppose the set of query sequences is $\{q_1, \ldots, q_m\}$ and $s$ is the unknown input sequence; the mapping for $s$: $\phi{(s)} = [d(s, q_1), \ldots, d(s, q_m)]$ has Lipschitz constant at most $\sqrt{m}$ (in the $\ell_2$ norm) and maintains complete information about $s$.
Similar to edit distance, which can be used for describing the adversarial capability in changing sequences, the DTW and \frechet distances have received significant attention for their flexibility in handling temporal sequences.
The special instance of our problem on DTW and \frechet distances may be useful for analyzing the robustness of DTW neural networks \citep{cai2019dtwnet}.

A distance embedding further inspires theoretical applications in functional analysis~\citep{vershynin2011lectures}. While the space of input sequence $s$ is a metric space, it may not be a Hilbert space with a definition of norm and inner product. Our result provides us with a tool to define a mapping from a metric space to a Hilbert space without loss of information about the input sequences. One can then use the norm or inner product to analyze input sequences, e.g., when two input sequences are orthogonal and how to normalize an input sequence to have norm $1$.

\comment{
\subsection{Our Assumptions}
\hy{I don't think this section says much. How about deleting it?}
We start from clarifying the assumptions made in this paper. 
Throughout the paper, we assume the alphabet of the unknown input sequence $s$ is \bits. 
To recover this sequence, we submit adaptive or non-adaptive query sequences to the distance oracles where these query sequences may be allowed to utilize alphabets outside \bits. 
We assume the maximum length of $s$ is $n$ while the exact length of $s$ is unknown. 
In the solution to our problems, we introduce one additional character in the query sequence to edit distance and $\mathcal{O}(n)$ additional characters in the query sequence to the DTW distance \hy{This is not true}.
For edit distance, introducing more than 1 extra characters cannot encode more information in the query results, because the edit distance between 0 (or 1) and any other additional character is the same \hy{edit distance between two characters?}.
}

\subsection{Our Contribution and Results}

\revise{To the best of our knowledge, this paper makes the first effort to consider the non-decomposable distance recovery problem.
We first present a general framework to tackle with this problem, and then exhaustively explore representative distances of this class, i.e., edit distance, DTW distance, and \frechet distance.
We also study the role of adaptivity and non-adaptivity and obtain a number of results on lower bounds and upper bounds of query complexity.
Before introducing our technical results, we would like to clarify the assumptions we make in the setting of the problem and justify some of them.
}

\smallskip

\noindent\textbf{Assumptions.}
Throughout the paper, we assume the alphabet of the unknown input sequence $s$ is \bits. We note that under this assumption, all of our results for DTW described below will apply to $p$-DTW.
To recover the sequence $s$, we submit adaptive or non-adaptive query sequences to a distance oracle. As we will show in Section \ref{sec:hardness}, for some distance metrics, there exist input sequences that cannot be distinguished by any sequence on a binary alphabet. Therefore, our query sequences may be allowed to utilize alphabets outside \bits with $\mc{O}(1)$ extra characters to exactly recover the input sequence. 
For edit distance, the extended alphabet can contain \emph{any symbol} outside the binary alphabet, as the edit distance oracle counts the edit operations no matter what symbol is used.
For ($p$-)DTW distance and \frechet distance, the extended alphabet can consist of \emph{any real number}.
We assume the maximum length of $s$ is $n$, while the exact length of $s$ is unknown.

\smallskip

\noindent\textbf{Extension to non-binary inputs.}
\revise{
The binary input sequence setting is not an over-simplified assumption.
All the results we obtain on the binary setting can be naturally extended to any non-binary alphabet $\Sigma$ by encoding the non-binary alphabet in a binary domain.
This will increase the query complexity by a constant factor from $|\Sigma|$ (one-hot encoding) to $\log(|\Sigma|)$ (binary encoding).
Though this may not be the best solution if one considers a large alphabet, this extension works for the results for all distance metrics shown in this paper.
Improvement on this extension to the recovery problem leaves room for future research.
}

\smallskip

\stitle{Optimality.}
\revise{\linelabel{optimality}Throughout the paper, we consider asymptotic optimality, that is, the asymptotic complexity lower and upper bounds match orderwise.
We would like to investigate lower bounds of the problem per distance instance, and develop algorithms that shows upper bounds can match lower bounds up to constant factor or logarithmic factor (under Big-O / Big-Omega tilde notation).}

\smallskip

\revise{To list the results we obtain on this non-decomposable distance recovery problem,}
we begin with a general coordinate descent framework that can help recover sequences from a large class of distance oracles, including but not limited to earth mover's distance (EMD), cascaded norms ($\ell_p$ of $\ell_q$), and $A$ norms (a.k.a. Mahalanobis distance).
We then present improved results on three specific distance metrics: edit distance, DTW distance, and \frechet distance. We first provide several observations on the sequence recovery problem, showing the existence of indistinguishable input sequences despite the fact that we can query their DTW and Fr\'echet distances with all possible binary query sequences. 
We also prove lower bounds on the query complexity in our distance recovery problem w.r.t. DTW, edit, and \frechet distances.
Then we present our main results on recovering sequences from edit, DTW, and \frechet distance oracles, with adaptive and non-adaptive strategies.

\medskip

\subsubsection{Existence of Indistinguishable Sequences}
\label{sec:hardness}

We observe that, for some distances, there exist sequences that cannot be distinguished by any query sequence over a binary alphabet.
This can be proved by showing concrete examples, i.e., a pair of sequences that cannot be distinguished, which we show is true for the DTW and the \frechet distances, as stated in the following theorem.

\begin{theorem}[Informal, existence of indistinguishable sequences]
There exists a pair of sequences $(s, s')$ such that $s$ and $s'$ cannot be distinguished by any query sequence on a binary alphabet, for the DTW distance and the \frechet distance.
\end{theorem}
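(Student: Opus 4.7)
The plan is to exhibit an explicit pair $s \neq s'$ such that every binary query $q$ yields $\DTW(s, q) = \DTW(s', q)$ (and the analogous equality for \frechet), thereby establishing indistinguishability. The natural candidate is to take two binary sequences that agree after collapsing consecutive duplicates, so the simplest example is $s = 0$ and $s' = 00$ (one could also use $s=01$, $s'=011$, etc.). The guiding intuition is that DTW and \frechet are defined via equal-length expansions that already allow one to duplicate consecutive characters at zero cost; this invariance should propagate to the sequences themselves.

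The key technical step is a duplication-invariance lemma: if $s'$ is obtained from $s$ by repeating one character in place, then $\DTW(s, q) = \DTW(s', q)$ for every $q$, and similarly for \frechet. To show $\DTW(s', q) \le \DTW(s, q)$, I would start from any optimal warping path for $(s, q)$ and insert an extra step that advances the $s$-pointer through the new duplicate while keeping the $q$-pointer fixed; because the duplicate carries the same character as its neighbor, the inserted cost equals an existing term in the sum and the total is preserved. For the reverse inequality, I would take a warping path for $(s', q)$ and collapse the consecutive steps that visit the duplicated pair into a single step, which only removes a non-negative term from the sum. The \frechet version replaces sums by maxima throughout; the same insertion/collapse operations work because the inserted entry is at most the existing running max, and removing an entry never increases a max.

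Given the lemma, the theorem is immediate: applying it once to the pair $s = 0$ and $s' = 00$ gives $\DTW(s, q) = \DTW(s', q)$ for every binary $q$, and similarly for \frechet, so $(s, s')$ is indistinguishable despite $s \neq s'$. For additional concreteness, I would also verify directly that $\DTW(0^a, q)$ equals the number of $1$s in $q$ and that the \frechet distance equals the indicator that $q$ contains a $1$; both quantities are manifestly independent of $a \geq 1$, which sanity-checks the lemma without invoking it.

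The main technical care lies in making the warping-path insertion and collapse operations respect monotonicity and surjectivity, which is a routine but slightly tedious bookkeeping exercise. Beyond that there is no genuine obstacle; the phenomenon is a structural consequence of the expansion-based definition of DTW/\frechet rather than anything subtle about binary alphabets, which presumably motivates the paper's subsequent step of enlarging the query alphabet to restore recoverability.
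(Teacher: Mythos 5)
Your proposed ``duplication-invariance lemma'' is false for DTW, and the attempted proof of it contains the error. You argue that inserting an extra step that advances the $s$-pointer through the new duplicate contributes a cost term that ``equals an existing term in the sum and the total is preserved.'' But the inserted term equals an existing term \emph{and is added to the sum}, so the total \emph{increases} by that amount rather than staying constant. Concretely, with your suggested witness $s = 0$, $s' = 00$ and the query $q = 1$: the only correspondence for $(0,1)$ is $(0,1)$ itself with cost $1$, whereas any correspondence for $(00,1)$ must have length at least $2$ (an expansion of $00$ has length $\geq 2$, and the matching expansion of $1$ is $1^{\ell}$), so $d_{\DTW}(00,1)=2$. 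Hence $(0,00)$ \emph{is} distinguished by the single-character query $1$. Your ``direct verification'' that $d_{\DTW}(0^a,q)$ equals the number of $1$'s in $q$ is also incorrect — $d_{\DTW}(0^a,1^m)=\max(a,m)$, which depends on $a$ — so the intended sanity check does not catch the mistake.

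The asymmetry you noticed between the two inequalities in the lemma is real: collapsing steps can only decrease a sum (so $d_{\DTW}(s,q)\leq d_{\DTW}(s',q)$ does hold), but inserting steps increases it, so the reverse inequality fails. For the \frechet distance your argument \emph{does} go through, precisely because $\max$ is unchanged by inserting a term equal to an existing one; duplication-invariance is exactly the paper's notion of equivalence under \frechet, and any pair like $(1,11)$ works there. For DTW, however, the paper needs a genuinely different witness. It uses $s = 010110$ and $s' = 011010$ — two sequences of the same length, same condensed expression, and the same multiset of run lengths but in a permuted order — and shows by a case analysis on the structure of optimal matchings that every binary $q$ gives equal DTW distance to both. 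The phenomenon there is not run-length duplication (which is detectable by DTW) but an inability of binary queries to detect \emph{where} the short runs sit inside the sequence, which is far more delicate than the expansion-based intuition you invoked.
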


The formal proof of this theorem for the DTW distance is deferred to Theorem \ref{theorem:dtw_hardness}.
The analogous discussion for the \frechet distance can be found in Section \ref{sec:frechet}.
Due to the existence of indistinguishable sequences, we define the concept of an \emph{equivalence class} of sequences, which is a set of input sequences which are indistinguishable from all queries by a given distance oracle.

This observation suggests the scope of the distance recovery problem we study.
We further \emph{categorize the recovery guarantee into the following three levels}, from strong to weak:
1) recover the \textbf{exact input sequence};
2) recover any sequence in the \textbf{same equivalent class of the input sequence}, where the equivalence class is defined to be the set for which any two input sequences in the equivalence class cannot be distinguished by calling the distance oracle to all query sequences;
3) recover any sequence which has \textbf{zero distance to the input sequence}.
While the third level is the weakest one, in certain cases it can be reduced to the first two levels---for norm-induced distance functions, the recovered sequence is exactly the input sequence; for semi-norm-induced distance functions, the recovered sequence is in the same equivalence class.
\revise{For other distance functions which are not \emph{metric}, recovering a sequence with zero distance to input does not necessarily imply any one of the first two levels.}
We will show that our general coordinate descent framework can recover sequences with the third-level guarantee.

\bigskip

\subsubsection{General \emph{Coordinate Descent} Framework for Adaptively Querying Distance Oracles}

We develop a general framework for recovering an input sequence from adaptive queries, which models the problem as a \emph{zero-th order optimization} and utilizes a coordinate-descent-based algorithm to give a solution.
The \emph{coordinate descent} framework defines the distance between the input sequence and the query sequence as the loss function. The objective of the optimization is to reduce the loss function to $0$, which guarantees what we call the \emph{third level of recovery}.
We define a \emph{step operation} to modify the query sequence. For example, in the context of edit distance, a step operation is defined as adding/removing/substituting a character of the query sequence.
To perform coordinate descent, our algorithm performs one step operation each time and queries the oracle to find a direction for which the loss decreases by \revise{at least a pre-determined constant scalar}.
By iteratively performing this method, the loss can be reduced to $0$ and we show that the overall complexity of this method is $\poly(n)$, given that the maximum length of the sequence is $n$.
For a large class of non-decomposable distance functions, such as the earth mover's distance (EMD), the cascaded norm ($\ell_p$ of $\ell_q$), and the $A$ norm, we can use this framework to yield a solution, as stated in the following theorem.

\begin{theorem}[Coordinate Descent for Adaptive Distance Queries]
For an arbitrary distance oracle, a binary alphabet $\bits$ and any input sequence $s \in \bits^i$ where $0 \leq i \leq n$, using \emph{coordinate descent} can \emph{reduce the distance to} the input sequence $s$ \emph{to 0}, by adaptively querying the distance oracle between $s$ and a set of query sequences with query complexity at most $\poly(n)$.
\end{theorem}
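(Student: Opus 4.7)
The plan is to reduce recovery to a zeroth-order optimization of the map $q \mapsto d(s,q)$ and to exhibit a greedy coordinate-descent procedure that drives this value to zero with only polynomially many oracle calls. First I would fix a \emph{step operation}: a single-character insertion at any of the $i{+}1$ positions, a single-character deletion at any of the $i$ positions, or a substitution at any position of the current query $q$. Over the binary alphabet there are at most $O(n)$ such neighbors. The algorithm initializes $q$ to an arbitrary short string (say the empty string), queries the oracle at every neighbor of $q$, and replaces $q$ by the neighbor minimizing $d(s,\cdot)$; it halts when no neighbor strictly decreases the distance.

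Since each iteration costs $O(n)$ oracle calls, it suffices to show two things: (i) when the procedure halts, $d(s,q)=0$; and (ii) the number of iterations is at most $\poly(n)$. For (i), I would rely on the structural property shared by the target distances (edit distance, EMD, cascaded $\ell_p$ of $\ell_q$, and $A$-norms) that they are \emph{step-connected toward the optimum}: from any $q$ with $d(s,q)>0$ there is some step operation producing $q'$ with $d(s,q')<d(s,q)$. For edit distance this is immediate from the definition of an edit script, since the first operation in any shortest script witnesses the descent; for the norm-induced distances one verifies it via a short exchange argument using convexity of the outer norm and the fact that local moves correspond to unit changes in the coordinates being optimized over.

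For (ii), I would use the distance itself, rounded if necessary to a grid of granularity $1/\poly(n)$, as a potential function. In all instances of interest the initial value is at most $\poly(n)$ and each iteration decreases the potential by at least one grid unit, yielding $\poly(n)$ iterations and hence a total query complexity of $\poly(n)$. The rounding is safe because the local-improvement property of (i) supplies a strict decrease, which is preserved once the grid is finer than the minimum achievable gap.

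The main obstacle is establishing (i) uniformly across the ``large class'' of non-decomposable distances claimed by the theorem. The cleanest route is to abstract the required property as a hypothesis on the oracle (``every non-optimal query admits a one-step improvement'') and then verify it case by case for the concrete distances using their specific combinatorial or convex structure: for EMD via an augmenting-path argument on the transportation polytope, for cascaded norms and $A$-norms via convexity of the induced objective as a function of a single toggled coordinate. A secondary subtlety is that for continuous-valued distances strict descent does not by itself imply a quantitative lower bound on the per-iteration improvement; this is precisely what the grid-rounding step in (ii) is designed to finesse, and making that argument clean while preserving the $\poly(n)$ iteration bound is where I expect most of the work to lie.
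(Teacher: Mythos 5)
Your proposal is essentially the paper's own proof: a local-search (coordinate-descent) loop whose potential function is the oracle value itself, with termination guaranteed because each iteration strictly decreases the potential by a quantifiable amount and the initial potential is $\poly(n)$-bounded. The paper packages the two properties you identify — a one-step improvement by at least a fixed positive constant $c$, and a $\poly(n)$ upper bound on the distance — as explicit hypotheses of the formal version (Theorem~\ref{theorem:adaptive_general}) and then verifies them per-distance in an appendix, which is operationally the same as your plan to ``abstract the required property as a hypothesis on the oracle and then verify it case by case''; your grid-rounding variant is a mild generalization of the paper's constant-$c$ assumption (allowing a $1/\poly(n)$ per-step decrease rather than a constant), and leads to the same $\poly(n)$ iteration bound.
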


Sufficient conditions for using this framework and further details can be found in Theorem \ref{theorem:adaptive_general}.

\bigskip

\subsubsection{Lower Bounds on the Recovery Problem}

If we study the problem of exact recovery (the first level of recovery), we can obtain an information-theoretic lower bound of  $\Tilde{\Omega}(n)$ for various distance oracles, given by the following theorem. 
Here $f(n) = \Tilde{\Omega}(g(n))$ if \revise{\linelabel{poly-1}$f(n) = \Omega(g(n)/\polylog(n))$}. 

\begin{theorem}[Lower Bounds for Exact Recovery]
\label{theorem:lower_bound_exact}
For any input sequence $s \in \bits^i$ where $0 \leq i \leq n$, if for any input sequence and query the distance oracle has $\poly(n)$ possible values, any algorithm which \emph{exactly recovers} $s$ by querying the distance oracle between $s$ and a set of query sequences requires query complexity at least $\Tilde{\Omega}(n)$. 
\end{theorem}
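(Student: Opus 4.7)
The plan is to prove this lower bound by a standard information-theoretic counting argument, treating any (possibly adaptive) recovery algorithm as a decision tree whose branching factor is bounded by the number of distinct oracle responses.

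First I would count the number of possible inputs. Since $s \in \bits^i$ for some $0 \le i \le n$, the set of possible inputs has size $\sum_{i=0}^{n} 2^i = 2^{n+1}-1 = \Theta(2^n)$. Any algorithm that exactly recovers $s$ must induce an injective mapping from this set of inputs to the set of possible transcripts of oracle responses: if two distinct inputs $s \ne s'$ produced the same transcript on some run of the algorithm, the algorithm would output the same sequence on both and hence fail to recover at least one of them.

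Next I would bound the number of distinct transcripts. For a non-adaptive algorithm issuing $q$ fixed queries, each query's answer lies in a set of size at most $\poly(n)$ by hypothesis, so there are at most $\poly(n)^q$ possible transcripts. For an adaptive algorithm, the same bound holds by viewing the algorithm as a decision tree: each internal node corresponds to a query, has at most $\poly(n)$ children (one per possible oracle answer), and after $q$ rounds the tree has at most $\poly(n)^q$ leaves, each of which corresponds to one transcript. Combining the two observations yields
\[
\poly(n)^q \;\ge\; 2^{n+1}-1,
\]
which after taking logarithms gives $q \cdot O(\log n) \ge n$, i.e.\ $q = \Omega(n / \log n) = \tilde\Omega(n)$, matching the claim under the convention for $\tilde\Omega$ stated just before the theorem.

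I do not expect any step to be a true obstacle; the only real care is in the adaptive case, where one must be explicit that ``query complexity'' is worst-case so that the decision-tree depth is indeed $q$, and one must verify that the $\poly(n)$-bounded range of the oracle depends only on the lengths of $s$ and the query, not on their contents (which is what the hypothesis of the theorem provides). Aside from this bookkeeping, the bound is a direct consequence of the information-theoretic inequality $\log(\text{\#inputs}) \le q \cdot \log(\text{\#responses per query})$.
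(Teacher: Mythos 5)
Your proposal is correct and matches the paper's own argument, which likewise counts the $2^{n+1}-1$ possible inputs against the at most $\poly(n)^q$ possible response transcripts and concludes $q = \Omega(n/\log n) = \tilde\Omega(n)$. The decision-tree framing for the adaptive case and the caveat that this counting argument covers only deterministic algorithms (the paper handles randomized lower bounds separately, via a reduction from the INDEX communication game in its instantiation for edit distance) are both accurate observations that align with, rather than depart from, the paper's treatment.
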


The idea behind this bound is that, there are exponentially many possible input sequences with length at most $n$, while \revise{for the distance oracles given in our setting,} the output of each query is a distance between two sequences which only has $\poly(n)$ possibilities. Hence, we need at least 
\revise{$\log_{\poly(n)}(2^{n+1})=\Tilde{\Omega}(n)$}
queries.
We instantiate this theorem on the edit distance and DTW distance in Theorem \ref{thm:edit_lower_bound} and Theorem \ref{thm:dtw_lower_bound}, for recovery to the exact input distance.

We note for the DTW distance and \frechet distance, there exist indistinguishable sequences, which lead to the recovery problem for equivalence class.
Since \needreview{the total number of equivalence classes is less than the number of input sequences, the previous counting technique (based on simple facts from information theory) no longer works}. So we need a different argument, as we give in the following theorem:

\begin{table*}[tbp]
\centering
\caption{Summary of our results for recovering arbitrary input sequences of length $n$ under the constraint that the query length is of $\mc{O}(n)$. LB: Lower Bound. \#EC: Number of Extra Characters.}
\label{table:summary}
{
\centering
\resizebox{1.0\textwidth}{!}{%
\begin{tabular}{c|c|c|c|l|c|c}
\hline
Oracle & Query Complexity  & LB & \eat{Query Length &} Adaptive? & \#EC &  Level of Recovery & Positions\\
\hline
\hline
Edit     &    $\revise{2k \log (n/k) + k + \log n + c}$ or $n+2$   & $\Tilde{\Omega}(n)$ & \eat{$\mc{O}(n)$ &}  Adaptive &  0 & Exact sequence & Theorems \ref{theorem:adaptive_edit}\&\ref{theorem:adaptive_edit_N+1} \\ 
Edit     &    $n+1$   & $\Tilde{\Omega}(n)$ & \eat{$\mc{O}(n)$ &}  Non-adaptive &  1 &  Exact sequence & Theorem \ref{theorem:edit} \\ 
Edit     &    \revise{$\frac{1}{2}(n^2 + 3n)$}   & $\Tilde{\Omega}(n)$ & \eat{$\mc{O}(n)$ &}  Non-adaptive &  0 &  Exact sequence & Theorem \ref{theorem:edit_non_adaptive_n2} \\ 
($p$-)DTW      &    $n+1$   & $\Tilde{\Omega}(n)$  & \eat{$\mc{O}(n)$ &}  Adaptive  & 1  & Exact sequence & Theorem \ref{thm:dtw_adaptive}  \\ 
($p$-)DTW      &    \revise{$2n$}   & $\Omega(n)$  & \eat{$\mc{O}(n^2)$ &}  Non-adaptive  & 0 & Equivalent class  & Theorem \ref{theorem:oz_equivalence}  \\ 
($p$-)DTW      &    $n^2+n$   & $\Tilde{\Omega}(n)$  & \eat{$\mc{O}(n)$ &}  Non-adaptive  & 1  &  Exact sequence & Theorem \ref{theorem:dtw_1e}  \\ 
($p$-)DTW      &    $n+2$   & $\Tilde{\Omega}(n)$  & \eat{$\mc{O}(n)$ &}  Non-adaptive  & 2\textsuperscript{*} &  Exact sequence & Theorem \ref{theorem:dtw_o1}   \\ 
\frechet  &    $2n - 1$  & $2n - 1$ & \eat{$\mc{O}(n)$ &}  N/A\textsuperscript{\textdagger}  & 0\textsuperscript{**}  & Equivalent class & Theorem \ref{theorem:frechet} \\ 
Any distance  &   $\poly(n)$  & - & \eat{$\mc{O}(n)$ &}  Adaptive  & 0 & Zero distance to input & Theorem \ref{theorem:adaptive_general}  \\ 
\hline
\end{tabular}
}
}
\flushleft{
\rule{0in}{1.2em}{\footnotesize \textsuperscript{\textdagger} For both adaptively and non-adaptively querying the \frechet distance oracle, the optimal bound on the query complexity is $2n-1$.
}

\rule{0in}{1.2em}{\footnotesize \textsuperscript{*} Increasing \#EC from $2$ to an arbitrary constant cannot improve the query complexity to be better than $\mc{\tilde{O}}(n)$.
}

\rule{0in}{1.2em}{\footnotesize \textsuperscript{**} Involving extra characters not only cannot improve the level of recovery from ``equivalence class'' to ``exact sequence'', but also cannot improve the query complexity (see Theorem \ref{thm:frechet_extra_not_useful}).
}
}
\end{table*}

\begin{theorem}[Lower Bounds for Equivalence Class Recovery]
\label{theorem:lower_bound_equivalence}
For a binary alphabet $\bits$ and any input sequence $s \in \bits^i$ where $0 \leq i \leq n$, any algorithm which \needreview{recovers the sequence $s$ up to equivalence} by querying the DTW or \frechet distance oracle between $s$ and a set of query sequences requires query complexity at least ${\Omega}(n)$. 
\end{theorem}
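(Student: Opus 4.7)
The plan is to exhibit, for each of DTW and \frechet, an explicit family $\cF\subseteq\bits^{\le n}$ of $2^{\Omega(n)}$ pairwise inequivalent binary sequences, and then run an adversary argument that forces $\Omega(n)$ queries to distinguish them up to equivalence. Note that a naive counting argument — $\log$ of the number of equivalence classes divided by $\log$ of the number of distance values — only gives $\tilde{\Omega}(n)$ because a single query can in principle return one of $\poly(n)$ real values; so the crux is to show that on the family $\cF$, any single query effectively reveals only $O(1)$ bits of information about which equivalence class was chosen.

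First, I would construct $\cF$ from a block template. Fix a small constant $c$ and partition the index set $[n]$ into $m=\Theta(n)$ blocks of length $c$. Each block carries one bit $v_i\in\bits$, and depending on $v_i$ contains one of two short \emph{gadgets} $G_0,G_1$ known to be pairwise DTW-inequivalent (e.g.\ different run-length patterns such as $01$ vs.\ $0011$, which we already saw are inequivalent because DTW-distance to $0^L$ counts ones). Consecutive gadgets are separated by long alternating \emph{buffer} runs; the buffers are designed so that, for any query $q$ of length $O(n)$, every optimal warping (resp.\ Fr\'echet alignment) is forced to match the $i$-th gadget of $s_v$ against a contiguous piece $q^{(i)}$ of $q$. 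This gives a block-decomposition lemma $d(s_v,q)=\sum_i d(G_{v_i},q^{(i)})+C(q)$, from which pairwise inequivalence of $\cF$ follows immediately: for $v\neq v'$, pick $i$ with $v_i\neq v'_i$ and a query concentrated on block $i$.

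Second, I would run the following adversary. Maintain a set $S\subseteq\cF$ of sequences consistent with the responses so far, initialized to $\cF$. On each query $q$, the adversary returns the value $r$ that maximises $|\{v\in S:d(s_v,q)=r\}|$. Writing $d(s_v,q)-C(q)=\sum_i \delta_i(v_i)$ with $\delta_i(b)=d(G_b,q^{(i)})$, the distribution of $d(s_v,q)$ over uniform $v\in\cF$ is a sum of $m$ independent $\pm$-type contributions. A pigeonhole/entropy argument on this sumset shows that the most popular value has mass at least $1/2$ inside $S$, so each query shrinks $|S|$ by at most a factor of $2$. After $k$ queries, $|S|\ge 2^{\Omega(n)-k}$, so pinning down a single equivalence class requires $k=\Omega(n)$.

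The main obstacle is making the block-decomposition lemma rigorous: a priori, an adversarial warping path for DTW could cross block boundaries and correlate many bits of $v$ through a single response. I would discharge this by choosing the buffer runs to be long enough (length $\omega(c)$) and structured enough that any warping which misaligns two blocks incurs cost strictly larger than the block-aligned warping, via a standard exchange/charging argument. For \frechet this step is strictly easier, since the $\ell_\infty$ aggregation means the distance is determined by the \emph{worst} per-block alignment, so the same block template works with simpler buffers; in fact, combined with the constant-alphabet range of Fr\'echet values on binary inputs, the one-bit-per-query bound on the \frechet side drops out almost immediately and yields the clean $\Omega(n)$ matching the $2n-1$ upper bound recorded in Table~\ref{table:summary}.
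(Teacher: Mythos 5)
Your plan has several gaps that I think are fatal, and your route is quite different from (and considerably more fragile than) the paper's.

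\textbf{The \frechet half of the plan cannot start.} You propose to exhibit $2^{\Omega(n)}$ pairwise \frechet-inequivalent sequences in $\bits^{\le n}$, but there are only $2n$ nontrivial \frechet equivalence classes over a binary alphabet: the classes are exactly the alternating strings $0101\cdots$ and $1010\cdots$ of each length $1\le i\le n$, since two binary strings have \frechet distance $0$ iff they have the same condensed expression and distance $1$ otherwise. So the exponential family does not exist. Moreover ``one bit per query'' against $O(n)$ classes gives only an $\Omega(\log n)$ information-theoretic bound, not $\Omega(n)$. The actual reason the \frechet bound is linear is a \emph{covering} argument, not an entropy argument: each query $q$ returns $0$ only if the input is in the single class $[q]$ and $1$ otherwise, so a query set of size $<2n-1$ leaves two classes untested and indistinguishable. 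That is the paper's proof (Theorem~\ref{thm:frechet_lower_bound}), and it gives the exact constant $2n-1$ matching the upper bound.

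\textbf{The DTW block-decomposition lemma is unsupported and likely false.} DTW between two binary strings is equivalent (Theorem~\ref{theorem:dtw_mss}) to solving a single global MSS instance over the inner runs of the longer-run string: choose $(\#\textsc{runs}(s)-\#\textsc{runs}(q))/2$ runs to absorb, minimizing total absorbed length subject to a $1$-separation constraint. That constraint couples all the runs of $s$ at once. Adding long buffer runs between gadgets does not localize the optimization; if anything, long buffers are the last runs an MSS solution would ever select, so the minimizer is free to pick and choose among the short gadget runs \emph{across} blocks. You also note that the contiguous pieces $q^{(i)}$ in your decomposition would be determined by the optimal warping, which itself depends on $v$; that breaks the ``sum of independent per-block contributions'' structure you later need. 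The paper avoids this entirely by working at the level of run counts: for each $c$ it builds a pair $(s_c,s'_c)=(01^301^3(0^31^3)^c0,\ 01^30^21^30^21^3(0^31^3)^{c-1}0)$ and shows, using Claim~\ref{claim:MSS_133232} and Theorem~\ref{theorem:obv_4}, that these two sequences can be separated \emph{only} by a query whose number of runs falls in the window $[2c,2c+6]$. Since the windows are disjoint across $c$, any query set that distinguishes every pair must contain $\Omega(n)$ queries.

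\textbf{The adversary step is also broken even granting the decomposition.} You claim that for $d(s_v,q)=C(q)+\sum_i\delta_i(v_i)$, the most popular response value captures mass $\ge 1/2$ of $S$, so each query costs at most one bit. But if the per-block differences $\delta_i(1)-\delta_i(0)$ are nonzero for a constant fraction of blocks (which is exactly the regime where $\cF$ is genuinely rich), the distribution of the sum over uniform $v$ is spread over $\Theta(m)$ values with mode mass $\Theta(1/\sqrt{m})$ by anti-concentration, not $1/2$. Pigeonhole over the support only guarantees mass $\Omega(1/m)$, which yields $\Omega(n/\log n)$, not $\Omega(n)$. To get the linear bound you would need to argue that each query has $\delta_i(0)\neq\delta_i(1)$ for at most $O(1)$ values of $i$ --- i.e., that a single query meaningfully probes only $O(1)$ blocks --- but this is precisely the part of the argument that is missing, and it is what the paper's run-count-window construction is engineered to guarantee in its own setting.

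In short, the paper's proof for DTW hinges on the run-count/MSS structure in a way that makes each distinguishing ``feature'' detectable only in a narrow band of query run-counts; your block-template-plus-entropy strategy does not reproduce that localization, and on the \frechet side the exponential family you posit does not exist.
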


We highlight our techniques used in proving this lower bound in Section \ref{sec:our_techniques}, while the formal proof can be seen in Theorem \ref{thm:dtw_no_extra_lowerbound} and Theorem \ref{thm:frechet_lower_bound}.

\bigskip

\subsubsection{Adaptively Querying Distance Oracles, Optimally}

\noindent
We first answer the distance recovery problem with adaptive query strategies.
Our solutions are summarized in the theorem below.

\begin{theorem}[Upper Bounds for Adaptive Exact Recovery]
\label{theorem:upper_bound}
For a binary alphabet $\bits$ and any input sequence $s \in \bits^i$ where $0 \leq i \leq n$, there exists an algorithm which can \emph{exactly recover} the input sequence $s$, by \emph{adaptively} querying the distance oracle (for the \emph{edit} and \emph{DTW} distances) between $s$ and a set of query sequences with query complexity at most $\mc{O}(n)$.
\end{theorem}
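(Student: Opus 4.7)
The plan is to prove the theorem by exhibiting two separate adaptive algorithms, one for each oracle, corresponding to the rows marked ``Adaptive'' and ``Exact sequence'' for edit and $(p$-$)$DTW in Table~\ref{table:summary}; the detailed constructions are developed in Theorems \ref{theorem:adaptive_edit_N+1} and \ref{thm:dtw_adaptive}. Both algorithms share a common template: spend one query to learn the length $\ell = |s|$ of the hidden sequence, then spend one query per position to extract the characters of $s$ in left-to-right order. This yields $\ell + 1 \le n+1$ queries overall, which is $\mathcal{O}(n)$.

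For the edit distance case, the initial query is the empty string: since every edit transcript from $s$ to $\emptyset$ must delete every character of $s$, the oracle returns exactly $\ell$. The inductive step recovers $s_i$ given $s_1, \ldots, s_{i-1}$ by submitting a single query of the form $s_1 \cdots s_{i-1}\, \sigma\, P$, where $\sigma \in \bits$ is a probe character and $P$ is a fixed padding. The padding is designed so that the oracle response is a strictly monotone function of $s_i$: when $s_i = \sigma$, the prefix and probe align at zero marginal cost and the response equals a value predictable from $P$, whereas when $s_i \ne \sigma$ any optimal edit alignment must pay at least one additional operation that the padding prevents from being ``absorbed'' by downstream insertions or deletions. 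Comparing the returned distance against the prediction for $\sigma$ therefore reveals $s_i$.

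For the DTW case, we exploit the one permitted extra character $c \in \mathbb{R} \setminus \bits$, chosen to have strictly positive DTW cost against both $0$ and $1$. One initial query again pins down $\ell$. For each subsequent position $i$, the query places $c$ so that it acts as a barrier forcing any optimal warping to align position $i$ of $s$ against a unique designated position of the query; this isolates the contribution of $s_i$ in the returned DTW value, from which a short case analysis extracts $s_i$ (and, by the assumption in Section~\ref{sec:hardness}, the same construction applies to $p$-DTW).

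The main obstacle in both algorithms is a structural lemma showing that the one-bit-per-query guarantee actually holds. Edit distance allows insertions and deletions, and DTW allows arbitrary duplications, so a carelessly chosen query may admit many optimal alignments that wash out the dependence on $s_i$. The technical heart of each algorithm is therefore a case analysis proving that the padding $P$ (for edit) or the placement of the barrier character $c$ (for DTW) forces the optimal alignment to be unique on the relevant portion, so that the oracle response deterministically encodes $s_i$. Once this is in place, the $\mathcal{O}(n)$ bound is immediate by induction on $i$.
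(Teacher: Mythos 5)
Your DTW sketch tracks the paper's proof of Theorem~\ref{thm:dtw_adaptive} reasonably closely: learn $\ell$ from the one-character query $\frac{1}{2}$, then recover $s[k+1]$ by querying $s[1,k]\,s[k]\,(\frac{1}{2})^{\ell-k-1}$ and checking whether the returned value equals the baseline $(\ell-k-1)/2$. The paper uses $\frac{1}{2}$ as a \emph{uniform-cost suffix padding} (each $\frac{1}{2}$ contributes exactly $\frac{1}{2}$ against any bit) rather than as a single ``barrier'' character at a designated position, but the effect you describe --- isolating the cost contribution of the newly probed position --- is what the paper's case analysis establishes.

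Your edit-distance sketch, however, does not match the paper and has a gap. You propose a prefix-probe query $s_1\cdots s_{i-1}\,\sigma\,P$ over the binary alphabet and claim that when $s_i=\sigma$ the response ``equals a value predictable from $P$.'' That cannot be right as stated: after the prefix and probe align, the oracle response is $d_L(s_{i+1}\cdots s_\ell,\,P)$, which depends on the \emph{unknown} suffix of $s$ and not on $P$ alone. Worse, for the most natural binary padding $P=0^{\ell-i}$ the query becomes $s_1\cdots s_{i-1}0^{\ell-i+1}$ and the oracle returns $d_L(s_i\cdots s_\ell,\,0^{\ell-i+1})$, which is exactly the number of $1$'s in $s[i,\ell]$ \emph{regardless} of the value of $s_i$ (edit distance from any binary string to an equal-length all-zero string is just the count of $1$'s), so this padding cannot distinguish $s_i=0$ from $s_i=1$ at all. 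The paper's Theorem~\ref{theorem:adaptive_edit_N+1} avoids this trap by never incorporating the recovered prefix into the queries: it uses $e_0=0^\ell$ and the indicator strings $e_i=0^{i-1}10^{\ell-i}$, and reads $s[i]$ off the \emph{difference} $d_L(s,e_0)-d_L(s,e_i)$, a comparison of two oracle values in which the unknown-suffix contribution cancels. (The paper's alternative route, Theorem~\ref{theorem:adaptive_edit}, first recovers the condensed expression by binary search and then each run length by line-plus-binary search, which is again structurally different.) To salvage a prefix-probe scheme you would need to exhibit a concrete binary $P$ together with a prediction rule that correctly accounts for the unseen suffix and prove the decision is unambiguous; the all-zero example shows this is genuinely nontrivial, and the paper does not provide such a construction.
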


All results in Theorem \ref{theorem:upper_bound} match our lower bounds on the query complexity.
Without extra character(s), using the DTW distance oracle we can only recover a sequence in the same equivalence class.
Our result in Theorem \ref{theorem:upper_bound} for the DTW distance is achieved with the assistance of $1$ extra character outside the alphabet $\{0,1\}$, and the proof and algorithm can be found in Theorem \ref{thm:dtw_adaptive}.

For the edit distance, we have two different adaptive algorithms that can achieve the $\mc{O}(n)$ bound.
The first algorithm makes use of the property that, for two sequences, the edit distance is equal to the difference in their lengths, if and only if one sequence is a subsequence of the other.
We construct an $\mc{O}(n)$ adaptive query set and a binary search algorithm utilizing this property to recover the input sequence.
Our second algorithm instead queries the length of the input sequence by an empty sequence and then finds a set of $\mc{O}(n)$ bases as the query set, from which we can reconstruct the input sequence.
These are further detailed in Theorem \ref{theorem:adaptive_edit} and Theorem \ref{theorem:adaptive_edit_N+1}.

For the \frechet distance, adaptive and non-adaptive strategies are essentially the same, because we prove that $2n-1$ queries are necessary and sufficient for recovering from a \frechet distance oracle.
However, we can only recover a sequence in the equivalence class in this setting. This result is described as a non-adaptive query strategy in Theorem \ref{theorem:frechet}.

\bigskip

\subsubsection{Non-adaptively Querying Distance Oracles, Optimally}

Next we describe our non-adaptive query strategies for the distance recovery problem.
Theorem \ref{thm:upper_bound_non_adaptive_exact} shows upper bounds for exact sequence recovery, while Theorem \ref{thm:upper_bound_non_adaptive_equivalent_class} summarizes our results on the recovery problem of finding a sequence in the same equivalence class as the input sequence.

\begin{theorem}[Upper Bounds for Non-adaptive Exact Recovery]
\label{thm:upper_bound_non_adaptive_exact}
For a binary alphabet $\bits$ and any input sequence $s \in \bits^i$ where $0 \leq i \leq n$, there exists an algorithm which can \emph{exactly recover} the input sequence $s$, by querying the distance oracle (for the \emph{edit} and \emph{DTW} distances) between $s$ and a \emph{non-adaptive} set of query sequences with query complexity at most $\mc{O}(n)$, with the assistance of $\mc{O}(1)$ extra characters in the query sequences. 
\end{theorem}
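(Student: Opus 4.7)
The plan is to establish the theorem as a union of two separate constructions, one for edit distance (giving $n+1$ non-adaptive queries with $1$ extra character) and one for DTW (giving $n+2$ non-adaptive queries with $2$ extra characters). Since both constructions achieve $\mathcal{O}(n)$ queries using $\mathcal{O}(1)$ extra characters, they together imply the theorem.

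For the edit distance part, the plan is to design a non-adaptive query family $\{q_0, q_1, \ldots, q_n\}$ where $q_0$ is a length-probing query (for example, the empty string or $0^n$) and each $q_i$ for $i \in \{1,\ldots,n\}$ embeds a distinguished extra character $\#$ as a positional marker, padded by $0$s on either side. The key observation is that, since $s \in \{0,1\}^{\le n}$, the symbol $\#$ can never be matched for free; its contribution to any optimal edit script is forced to a known quantity. By choosing $q_i$ so that $d_{\mathrm{edit}}(s, q_i)$ becomes a fixed function of $|s|$ and $s_i$ alone, each bit $s_i$ can be decoded independently once $|s|$ is known from the length-probing query.

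For the DTW part, the plan is to use two extra characters $a, b$ chosen as real numbers large in magnitude, and to design $n+2$ queries in which $a$ and $b$ act as ``anchors'' that deterministically force the optimal DTW alignment to localize at a target position. The queries will look like $\mathbf{q}_i = (a, 0^{i-1}, b, 0^{n-i})$ or similar, so that the $\ell_p$ cost contributions of the extra characters in the optimal alignment are computable from $|s|$ and prefix/suffix information, while the ``residual'' cost over the middle of $s$ exposes $s_i$. Two extra characters (rather than one) are used because DTW permits arbitrary duplications of a character along either input, and so a single anchor can be ``absorbed'' into an unpredictable run; two anchors with different magnitudes pin the alignment down on both sides of the position of interest.

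The main obstacle is the verification for DTW. Unlike edit distance, where each edit operation has unit cost and an extra character has a fixed, well-understood role, DTW's optimal warping can distribute the cost of matching $a$ or $b$ across many duplicated copies of a bit in $s$, and the optimum depends non-locally on $s$. The technical heart of the argument is to show that for sufficiently large $|a|$ and $|b|$, any optimal warping must match $a$ (respectively $b$) to exactly one position of $s$ (the one chosen by the construction), so that the resulting $\DTW_p(s, \mathbf{q}_i)^p$ is a known explicit function of $|s|$, the number of $1$s to the left and right of the anchor, and $s_i$. Combined with a constant number of ``global'' queries that determine $|s|$ and the total number of $1$s, the per-position queries then decode $s$ exactly in $n + 2$ queries, matching the bounds claimed in Theorems~\ref{theorem:edit} and~\ref{theorem:dtw_o1} which together imply the stated result.
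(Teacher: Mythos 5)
Your high-level decomposition is correct: the theorem is indeed established by two separate constructions, one for the edit distance with one extra character and $n+1$ queries (Theorem~\ref{theorem:edit}) and one for DTW with two extra characters and $n+2$ queries (Theorem~\ref{theorem:dtw_o1}), and you correctly name these as the targets. However, the specific constructions you sketch are not the ones the paper uses, and both sketches have genuine gaps.

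For edit distance, the paper's queries are $1^j 2^{\,n-j}$ for $j = 1, \ldots, n$ (plus the empty string), where $2$ is the wildcard, and Lemma~\ref{lem:edit_distance_mapping} gives the clean, length-independent formula $d_L(s, 1^j 2^{n-j}) = n - k$ where $k$ is the number of $1$'s among $s[1,j]$; consecutive differences then read off $s$ bit by bit. You instead propose positional-marker queries with $\#$ at position $i$ padded by $0$'s and claim the answer is a function of $|s|$ and $s_i$ alone. This is where the gap lies: in a non-adaptive family the queries cannot depend on $|s|$, so when $|s| < n$ the alignment between position $i$ of the query and a position of $s$ is not fixed, and the edit distance may realize its optimum by deleting $\#$ and shifting the alignment rather than substituting it against $s_i$. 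The paper's adaptive analogue (Theorem~\ref{theorem:adaptive_edit_N+1}) \emph{does} use positional markers $0^{i-1}10^{\ell-i}$, but it needs $\ell$ first, and without extra characters the non-adaptive version pays $\mc{O}(n^2)$ precisely because of this ambiguity; your sketch does not explain why $\#$ fixes it, while the paper's $1^j 2^{n-j}$ construction sidesteps it entirely by measuring a cumulative quantity.

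For DTW, the gap is more serious and you acknowledge it yourself. The paper uses \emph{small} fractional characters with $0 < b-a < a < b < \tfrac12$ and \emph{monotonic} queries $q^{(i)} = a^{n-i}b^i$; the machinery (Lemma~\ref{lemma:single-direction_o1} forcing every query character to have degree one, Lemma~\ref{lemma:3_o1} forcing every $1$ in $s$ to have degree one, Lemma~\ref{lemma:4_o1} constructing a fixed isomorphic family of optimal matchings anchored at the first $0$ of $s$, and the shifting-operation argument to certify optimality) is what makes the consecutive-query differences decodable. Your proposal uses \emph{large}-magnitude anchors $a, b$ placed at positions $1$ and $i+1$ with $0$-padding. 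Even granting that each anchor must match exactly one position of $s$ for $|a|, |b|$ large, the position that $b$ ends up matched to is \emph{not} a function of $i$ alone: it depends on how the $0$'s in the query absorb the $1$'s of $s$ on either side of $b$, which is a global property of the warping path and varies with $s$. Without a concrete argument pinning $b$'s landing spot (and the paper's approach does this only via a carefully structured monotonic query, not via magnitude), the ``residual cost exposes $s_i$'' step does not go through, and the decoding fails. This is the missing core of the argument, and it is not recoverable from the sketch as written.

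In summary: correct decomposition and correct target theorems, but both sketched constructions diverge from the paper's and the DTW one in particular has a concrete, unaddressed gap in the anchoring claim.
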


With $1$ extra character, we show the construction of a set of non-adaptive queries that can exactly recover sequences from the edit distance (Theorem \ref{theorem:edit}), while with $2$ extra characters, we can exactly recover input sequences from the DTW distance (Theorem \ref{theorem:dtw_o1}).
Both results match our lower bound on the query complexity, while we complement our results with an $\mc{O}(n^2)$ query complexity algorithm for the DTW distance with $1$ extra character (Theorem \ref{theorem:dtw_1e}). 
We note that non-adaptive strategies have limited power compared to adaptive strategies. Hence, we \needreview{consider adding extra characters to construct query strategies that are comparable to the lower bounds.} 
For the edit distance, introducing more than $1$ extra characters cannot encode more information in the query results, because the cost between $0$ (or $1$) and any other additional character is always the same.

\begin{theorem}[Upper Bounds for Non-adaptive Equivalence Class Recovery]
\label{thm:upper_bound_non_adaptive_equivalent_class}
For a binary alphabet $\bits$ and any input sequence $s \in \bits^i$ where $0 \leq i \leq n$, there exists an algorithm which can \emph{recover the sequence in the same equivalence class as} the input sequence $s$, by querying the distance oracle (for the \emph{DTW} and \emph{\frechet} distances) between $s$ and a \emph{non-adaptive} set of query sequences with query complexity at most $\mc{O}(n)$, without extra characters in the query sequence.
\end{theorem}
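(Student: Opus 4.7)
The plan is to establish the two cases separately, each via a structural characterization of the equivalence class followed by an explicit $\mc{O}(n)$-size non-adaptive query set.

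\emph{DTW (and $p$-DTW).} I would first characterize equivalence: two binary sequences $s, s'$ are DTW-equivalent iff they share the same run-length encoded pattern $\bar s$, i.e., the alternating binary sequence obtained by collapsing consecutive duplicates. The forward direction follows because duplicating any adjacent identical pair in either argument leaves $\DTW$ unchanged (any expansion used on one side can be mirrored on the other), so $\DTW(s,q) = \DTW(\bar s, \bar q)$ for every binary $q$. Hence the equivalence classes among length-$\leq n$ binary sequences are indexed by the $\mc{O}(n)$ alternating binary strings of length at most $n$. The non-adaptive query set $Q$ I propose consists of all $2n$ alternating strings of length $1,\dots,n$ (two per length, starting with $0$ or $1$). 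Since $\DTW(s,q) = 0$ iff $\bar s = q$ for alternating $q$, the unique $q \in Q$ returning $0$ reveals $\bar s$; if none does, $s$ must be the empty sequence. This gives the $2n$ query bound, and the argument extends verbatim to $p$-DTW by the same invariance under adjacent duplication.

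\emph{\frechet.} The same RLE characterization holds: since each aligned pair in a coupling contributes $0$ or $1$ and the \frechet distance is the $\ell_\infty$ of the coupling, $F(s,q)$ is $\{0,1\}$-valued on binary inputs, with $F(s,q) = 0$ iff $\bar s = \bar q$. Each query thus tests exactly whether $s$ lies in the equivalence class of $q$. I would take as queries $2n-1$ of the $2n$ alternating representatives of length between $1$ and $n$: if some query returns $0$ we read off $\bar s$ immediately, and if all return $1$ then $\bar s$ must equal the single unqueried representative (the empty sequence being handled as a trivial separate case). This yields the claimed $\mc{O}(n)$ non-adaptive query complexity.

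\emph{Main obstacle.} The principal technical step is the converse half of the equivalence characterization: whenever two alternating binary strings $u \neq v$ differ, some single query must distinguish them. The cleanest route is to use $u$ itself as the distinguishing query, so that the distance from $u$ to $u$ is $0$ while the distance from $v$ to $u$ is strictly positive; verifying the latter reduces to a short combinatorial argument that an alternating $v \neq u$ cannot be aligned to $u$ without at least one mandatory character mismatch, regardless of how either side is expanded. This rigidity of alternating strings under DTW/\frechet couplings is precisely what justifies using them as the query set, and the $\mc{O}(n)$ bound obtained matches the $\Omega(n)$ lower bound of Theorem~\ref{theorem:lower_bound_equivalence}.
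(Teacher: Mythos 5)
Your DTW half rests on a false premise. You claim duplicating a character ``leaves $\DTW$ unchanged,'' giving $\DTW(s,q) = \DTW(\bar s, \bar q)$ for every binary $q$, so that the equivalence classes are exactly the condensed-expression (RLE) classes. That property holds for \frechet (an $\ell_\infty$ of coupling costs, insensitive to multiplicity) but not for DTW, which is an $\ell_1$ sum: already $d_{\DTW}(00,1)=2 \neq 1=d_{\DTW}(0,1)$, so $0$ and $00$ share the condensed expression $0$ yet are separated by the query $q=1$ and therefore lie in different equivalence classes. In general $d_{\DTW}(s,q)=0$ iff the condensed expressions agree, but the \emph{nonzero} distances encode run-lengths, so equal condensed expression is necessary but nowhere near sufficient for DTW equivalence. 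This is precisely why the paper's indistinguishable pair ($010110$ vs.\ $011010$, both condensed to $01010$) required a nontrivial matching argument (Theorem~\ref{theorem:dtw_hardness}), and why the text remarks that the DTW equivalence classes have no simple description. Your decoding rule --- report the alternating string whose query returns $0$ --- is therefore unsound: it outputs the same answer on $0$ and on $00$, even though these are distinguishable and must be mapped to different equivalence classes.

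Even as a bare query set, alternating strings are not obviously sufficient. The paper's $z_i$'s and $o_i$'s in Theorem~\ref{theorem:oz_equivalence} deliberately begin and end with runs of length $n$, so that the offset-removal recursion of Theorem~\ref{theorem:obv_4} has a unique non-trivial branch; this pins down a single MSS instance (Theorem~\ref{theorem:dtw_mss}) per query and enables the Claim~\ref{claim:first_last_blocks}-style extraction of the first/last run lengths, which is the engine of the whole proof. Alternating queries have all runs of length $1$, so end-run removal is cheap, the recursion branches many ways, and that analysis does not go through; the ``rigidity'' you gesture at is exactly the part that needs to be built into the query design. The \frechet half of your proposal, by contrast, is correct and is essentially Theorem~\ref{theorem:frechet}: there the equivalence classes really are the condensed expressions, each query is a $\{0,1\}$-valued membership test, and $2n-1$ of them suffice. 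The error in the DTW half is treating DTW as if it had \frechet's multiplicity-insensitivity.
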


By Theorem \ref{thm:upper_bound_non_adaptive_equivalent_class}, if we are not allowed to use extra characters, we can only recover the sequence in the same equivalence class as the input sequence for the DTW distance.
Our query construction and proof are shown in Theorem \ref{theorem:oz_equivalence}.
We also remark that for \frechet distance,
using extra characters cannot help to improve the results of Theorem \ref{theorem:frechet}, as shown in Theorem \ref{thm:frechet_extra_not_useful}.

\stitle{Summary.}
The main technical results of this paper are summarized in Table \ref{table:summary}.

\subsection{Paper Roadmap}
The remainder of the paper is organized as follows.
Section \ref{sec:background} introduces the notations and essential background definitions (regarding sequence, distances, and matching properties) used in this paper.
Section \ref{sec:our_techniques} highlights the techniques and insights behind our proofs of non-adaptively querying the DTW distance oracle, which helps the understanding of the most non-trivial and interesting parts of this paper.
Section \ref{sec:adaptive} consists of our results on the recovery problem with adaptive queries, which begin with a general framework for all non-decomposable distances and follow by instantiations as per distance using specific properties.
We present and discuss our results on the lower bounds and upper bounds of query complexity for recovery with non-adaptive queries on edit distance, DTW distance, and \frechet distance, with different recovery guarantees, in Section \ref{sec:edit_non}, Section \ref{sect:dtw} and Section \ref{sec:frechet}, respectively.
Section \ref{sec:related_work} summarizes the related papers to our problem.
As an initiation of this line of study in the recovery of non-decomposable distances, we finally describe the yet-open problems in Section \ref{sec:open_problems}.

\section{Preliminaries}
\label{sec:background}

\revise{\phantomsection\linelabel{notation-0}We would like to briefly introduce the fundamental concepts, definitions and notations that are involved in this paper. An alphabet is a finite set of characters. A binary alphabet contains two elements, $\Sigma_b \coloneqq \bits$. A sequence is either empty $\phi$, or an enumerated collection of characters selected from a given alphabet.
We denote the length of a sequence $s$ by \phantomsection\linelabel{len-op-0}$\revise{\len}(s)$.
Throughout the paper, we use $[n]$ to denote the set $\{ 1, 2, \dots, n \}$.
Then for sequence $s$, $[\len(s)]$ represents its indices set.
Note our indices set starts from 1.

A distance function between a pair of sequences measures the similarity and the structural relationship between them.
A distance function $\dist(\cdot, \cdot)$, as a \emph{metric}, satisfies the following properties:
\begin{itemize}
    \item \emph{Identity}: $\dist(s, s') = 0$ iff $s = s'$;
    
    \item \emph{Commutativity}: $\dist(s, s') = \dist(s', s)$;

    \item \emph{Triangle inequality}: for any sequence $x$, 
    $\dist(s, s') \leq \dist(s, x) +  \dist(x, s')$;
    
    \item \emph{Non-negativity}: $\dist(s, s') \geq 0$.
\end{itemize}

Different distance functions can capture the similarity information from different perspectives.
While we use $\dist(\cdot, \cdot)$ to denote the distance metric in general, in this paper we are in particular interested in the edit distance (denoted by $d_L(\cdot, \cdot)$, $L$ for Levenshtein), ($p$)-Dynamic Time Warping ($p$-DTW) distance (denoted by $d_{\DTW}(\cdot, \cdot)$), and \frechet distance (denoted by $d_F(\cdot, \cdot)$), which are non-decomposable to a sum of coordinate-wise contributions.
We note that the widely used DTW distance is \emph{not a metric} because identity and triangle inequality properties do not hold for it.
It has been shown a generalization to $p$-th power of DTW (i.e., $p$-DTW) distance satisfies weak triangle inequality up to a factor parameterized by $p$ and the sequence length \citep{DBLP:conf/waoa/BuchinDGPR22}.
We discuss in this paper how the missing triangle inequality affects our recovery problem (especially for DTW).

There are several other definitions related to sequences that are useful in our paper.
}

\begin{table*}[tbp]
\centering
\caption{\revise{Summary of Main Notations}}
\label{table:notations}
{
\centering
\resizebox{\textwidth}{!}{%
\begin{tabular}{r|l||r|l}
\hline
Notation & Meaning of Notation  & Notation & Meaning of Notation\\
\hline
$s$ & The input sequence & $\phi$ & The empty sequence \\
$s[i]$ & The $i$-th character of sequence $s$ & $s[i, j]$ & A substring of $s$ (from the $i$-th to the $j$-th character) \\
$c^m$ & Repeating character $c$ for $m$ times  & $\len(s)$ & The length of $s$ \\
$[n]$ & $\{ 1, 2, \ldots, n \}$  & $[\len(s)]$ & The index set of $s$ \\
\textsc{lor}($s, i$) & The length of the $i$-th run of $s$ & \#\textsc{runs}($s$) & The number of runs in $s$\\
$\mc{Q}$ & Query set & $q^{(i)}$ & The $i$-th query in the query set \\
$\dist(\cdot, \cdot)$ & The general distance oracle & $d_L(\cdot, \cdot)$ & The edit distance oracle \\
$d_{\DTW}(\cdot, \cdot)$ & The DTW distance oracle & $d_F(\cdot, \cdot)$ & The \frechet distance oracle \\
$\MSS(seq, r)$ & A MSS instance & $\| \cdot \|_p$ & $\ell_p$ norm \\

\hline
\end{tabular}
}
}
\end{table*}

\begin{definition}[Runs and Expansion, \citep{DBLP:conf/compgeom/BravermanCKWY19}]
The runs of a sequence $x$ are the maximal substrings consisting of a single
repeated character. Any sequence obtained from $x$ by extending $x$’s runs is an expansion of $x$.
\revise{For a given character $c$, we use $c^{m}$ to represent the sequence obtained by repeating $c$ for $m$ times.}
We denote \revise{the length of the $i$-th run of $x$ by \textsc{lor}($x, i$), where \textsc{lor} means \emph{Length of Run} function, and} the number of runs of a sequence $x$ by \#\textsc{runs}($x$).
\end{definition}

The following definitions of a condensed expression and subsequence are useful in developing our algorithms.

\begin{definition}[Condensed Expression]
\label{def:condensed}
We say $y$ is a condensed expression of $x$ if (i) $y$ has the same number of runs as $x$, \revise{(ii) the first and last character of $y$ and $x$ are the same,} (iii) each run of $y$ only has 1 character.
\end{definition}

\begin{definition}[Subsequence and Substring]
\label{def:subsequence_substring}
Given a sequence $y$, its subsequence $x$ is derived by deleting zero or more characters from $y$ without changing the order of the remaining characters.
The substring $x'$ is a \emph{contiguous subsequence} of $y$.
We use $x[a]$ to denote the $a$-th character of the sequence $x$, and $x[a, b]$ to denote a substring of $x$ which starts from the $a$-th character and ends at the $b$-th character.
\end{definition}

As an example, consider the sequence $0010111$.
The number of runs in this sequence is $4$.
The runs of sequence $0010111$ are $00$ (the 1st run), $1$ (the 2nd run), $0$ (the 3rd run), and $111$ (the last run), with length of $2, 1, 1, 3$, respectively. 
By duplicating the characters, we can extend a run in a sequence and then obtain another sequence which is an expansion of the original one.
For instance, by extending the second run in $0010111$, we get $0011110111$ which is the expansion of sequence $0010111$.
The condensed expression of $0010111$ is the sequence $0101$.
Sequences $010$, $101$, $0111$ are subsequences (or substrings) of $0010111$, while $01111$, $000$, $1111$ are only subsequences (not substrings).

\revise{The definitions of these three distances (Edit, DTW, and \frechet) are listed as follows.}

\begin{definition}[Edit Distance, or Levenshtein Distance \citep{levenshtein1966binary}]
Given two sequences $x$ and $y$, the edit distance $d_{L}(x, y)$ equals the \textit{minimal number} of \textit{edit operations} required for a sequence $x$ to be transformed to sequence $y$.
Specifically, we consider the Levenshtein distance  \citep{levenshtein1966binary} which captures the addition, deletion, and substitution of single symbols.
\end{definition}

\revise{\linelabel{l1_abs}We use $\| \cdot \|_1$ or simply $\| \cdot \|$ to denote the $\ell_1$ norm distance between two equi-length sequences whose symbols are real numbers. The notation for absolute value $|\cdot|$ is used to calculate the cost or difference between two characters.}

\begin{definition}[DTW Distance, \citep{DBLP:conf/compgeom/BravermanCKWY19}]
\label{def:dtw}
\revise{\phantomsection\linelabel{DTW-definition}Consider two sequences $x, y$ of length $m_1$ and $m_2$, respectively.} A correspondence $(\overline{x}, \overline{y})$ 
between $x$ and $y$ is a pair of equal-length expansions of $x$ and $y$. The cost of a correspondence is calculated as the $\ell_1$ distance between $\overline{x}, \overline{y}$: $\| \overline{x} - \overline{y}\|_1$. A correspondence between $x$ and $y$ is said to be optimal if it has the minimum attainable cost, and the resulting cost is called the dynamic time warping distance $d_{\revise{\DTW}}(x, y)$, that is
$
d_{\revise{\DTW}}(x, y) = \min_{(\overline{x}, \overline{y}) \in \mc{W}_{x, y}} \| \overline{x} - \overline{y}\|_1,
$
where $\mc{W}_{x, y}$ denotes the set of all correspondences $(\overline{x}, \overline{y})$.
\end{definition}

\begin{definition}[$p$-DTW Distance, \citep{DBLP:conf/waoa/BuchinDGPR22}]
By replacing the $\ell_1$ norm in Definition \ref{def:dtw} with the $\ell_p$ norm ($1\le p<\infty$), we obtain the definition for the $p$-DTW distance.
\end{definition}

\comment{

\begin{definition}[Weak Triangle Inequality, \citep{buchin2021approximating}]
For any $m_1, m_2 \in \mbb{N}$, let $x, z \in X^{\leq m_1}, y \in X^{\leq m_2}$, and $1 < p < \infty$. Then, we have the following inequality,
$$
d_{\revise{\DTW}_p}(x, z) \leq m_1^{1/p} \cdot \big( d_{\revise{\DTW}_p}(x, y) + d_{\revise{\DTW}_p}(y, z) \big).
$$
\end{definition}
}

\revise{
In addition to the existing definitions, we need to introduce some new concepts essential to our proofs for ($p$)-DTW distance.

\begin{definition}[Monotonic Sequence]
\revise{\linelabel{ind-1}Recall that the indices set of sequence $x$ is denoted by $x.\text{indices} \coloneqq [\len(x)]$.}
We say a sequence $x$ is monotonic, if for every $i, j \in \revise{[\len(x)]}$, $i < j \Rightarrow x_i \leq x_j$, or for every $i, j \in \revise{[\len(x)]}$, $i < j \Rightarrow x_i \geq x_j$, where $x_i$ denotes the $i$-th character in $x$.
\end{definition}

\begin{definition}[Matching]
\label{def:matching}
Consider the query sequence $q$ and the input sequence $s$ as two \emph{vertex sets} $(U =\{ u_1, \dots, u_\ell \}, V = \{ v_1, \dots, v_n \})$ where the vertex set $U$ denotes the characters in sequence $q$ and the vertex set $V$ denotes the characters in sequence $s$.
Let $M$ be an \emph{edge set} that for each $m \coloneqq (u, v) \in M$, we have $u \in U$ and $v \in V$.
We say \revise{\linelabel{M_s_q_0}$M(q, s)$} is a \emph{matching} \revise{(or simply $M$ when the context is clear)} between $q$ and $s$ (or $U$ and $V$) if $M$ satisfies the following properties:\\
\indent 1) every vertex in $U$ and $V$ corresponds to at least one edge in $M$; \\
\indent 2) the first character in $U$ is matched to the first character in $V$ and the last character in $U$ is matched to the last character in $V$;\\
\indent 3) the indices of matched character pairs are monotonic, i.e., for any two edges $(u_i, v_j), (u_k, v_l) \in M$, $i > k \Rightarrow j\geq l$ and $j > l \Rightarrow i \geq k$.\\
We define the degree of a vertex, $\revise{\deg}(u_i)$ or $\revise{\deg}(v_j)$, as the number of associated edges in a matching $M$.
\end{definition}

\begin{definition}[DTW Matching]
\label{def:dtw-matching}
The cost of an edge $m \coloneqq (u_i, v_j) \in M$ is defined to be the $\ell_1$ norm distance $\revise{\Cost}(m) \coloneqq \| u_i - v_j \|$.
The cost of a matching is defined as $\revise{\Cost}(M) \coloneqq \sum_{m \in M} \revise{\Cost}(m)$.
Let $\mc{M}$ consist of all possible matchings between $q$ and $s$ (or $U$ and $V$).
If a matching $M \in \mc{M}$ has \emph{minimal cost} on the edges, that is $\revise{\Cost}(M) = \min_{M_i \in \mc{M}} \revise{\Cost}(M_i)$, we call this matching a \emph{DTW matching}. 
A \emph{DTW matching} yields a \emph{DTW distance} between $q$ and $s$.
\end{definition}

\linelabel{easy_matching}Based on our definitions, the concepts of matching provide a different perspective of the non-decomposable distance.
A matching between two vertex sets defines a possible alignment between two sequences with different lengths.
The notion of DTW matching better captures the graph-theoretical properties of the implicit optimal alignment in computing DTW distance than the conventional definition.
The cost of a DTW matching is equal to the DTW distance between two sequences which are constituted by the vertex sets respectively.
We note that there might exist multiple DTW matchings (of equal cost) between a pair of sequences.

\begin{definition}[Isomorphic Matching]
Given input sequence $s$ of length $\ell$, two query sequences $q$ and $q'$ of length $n$ and two corresponding matchings $M$ (between $q$ and $s$) and $M'$ (between $q$ and $s'$). 
We say $M$ and $M'$ are \emph{isomorphic} if, 
$\forall 1 \leq i \leq \ell$ and $\forall 1\leq j \leq n$, edge $(s_i, q_j)\in M$ $\iff$ edge $(s_i, q_j') \in M'$.  
\end{definition}
}

\begin{definition}[\frechet Distance]
By replacing the $\ell_1$ norm in Definition \ref{def:dtw} with the  $\ell_\infty$ norm, we obtain the definition of the \frechet distance.

\end{definition}

The \frechet distance in our paper is equivalent to the discrete \frechet distance in the prior works of \citep{eiter1994computing,aronov2006frechet}.

\stitle{Extended alphabet.}
\revise{\linelabel{extended_alphabet}Since in this paper we discuss recovery sequence based on distance queries from \emph{binary or extended alphabet}, we would like to note that the distance definitions are independent of the alphabets. That being said, while we study the problem by restricting the input sequence as drawn from the binary alphabet (which generalizes to any constant-sized alphabet by applying coding methods), we do not change the distance definitions in a skewed way of embedding special symbols on the extended alphabet or backdoors to the oracle.
To ensure that the distance output makes sense, we specify the extended alphabets for queries to the different distance oracles.
For edit distance, the extended alphabet can contain \emph{any symbol} outside the binary alphabet, as the edit distance oracle counts the edit operations no matter what symbol is used.
For ($p$-)DTW distance and \frechet distance, the extended alphabet can consist of \emph{any real number}. This makes sense because the DTW and \frechet distances are defined based on $\ell_p$ or $\ell_\infty$ cost.
}

The main notations used in this paper are summarized in Table \ref{table:notations}.

\section{Our Techniques}
\label{sec:our_techniques}

\revise{
In this section, we summarize and highlight the main technical insights behind our results on non-adaptive recovery from the DTW distance oracle, which are the most non-trivial and interesting parts of this paper.
We hope to convey our intuitive ideas in a less formal manner before diving into the full proofs in the later sections.
Reader may skip this section if they are looking for the complete statements and proofs of these results.
In particular, we will cover the intuitions behind the following four theorems.
}

\begin{theorem}[Hardness, Refers to Theorem \ref{theorem:dtw_hardness}]
\label{thm:informal_dtw_hardness}
There exists a pair of input sequences $s$ and $s'$ such that for any query sequence $q$, $d_{\revise{\DTW}}(s, q) = d_{\revise{\DTW}}(s', q)$. That is, $s$ and $s'$ cannot be distinguished by DTW Distance Oracle queries without using extra characters. 
\end{theorem}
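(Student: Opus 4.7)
The plan is to prove the theorem by exhibiting a specific pair of binary sequences $s \neq s'$ and verifying that $d_{\DTW}(s, q) = d_{\DTW}(s', q)$ for every binary query $q$.

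First I would pin down the \emph{necessary conditions} on any such pair by probing with simple queries. Taking $q = 0$ and $q = 1$ shows that any DTW-indistinguishable pair must have the same length and the same Hamming weight, since $d_{\DTW}(s, 0)$ equals the number of 1's in $s$ and $d_{\DTW}(s, 1)$ equals the number of 0's (each character of $q$ is matched to a contiguous block of $s$, with $\ell_1$ cost). Taking $q = s$ gives $d_{\DTW}(s', s) = d_{\DTW}(s, s) = 0$, and by the standard characterization of DTW-zero binary pairs, this forces $s$ and $s'$ to admit a common expansion, i.e.\ to share the same condensed expression (Definition~\ref{def:condensed}). Any candidate pair must therefore share condensed expression, total length, and weight, differing only in how characters are distributed among same-character runs.

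Guided by these constraints, I would write down an explicit pair $(s, s')$ with a common condensed expression and verify DTW equality using the matching formulation of Definitions~\ref{def:matching}--\ref{def:dtw-matching}: for any binary $q$, $d_{\DTW}(s, q)$ equals the minimum cost of a monotonic matching between the characters of $s$ and those of $q$. For each query $q$ I would produce a cost-preserving bijection between optimal DTW matchings on $(q, s)$ and $(q, s')$ by rerouting edges across the perturbed runs; concretely, this is captured by exhibiting, for each optimal $M(q, s)$, an isomorphic matching $M(q, s')$ of equal cost (in the sense of the paper's isomorphic-matching definition), and symmetrically in reverse.

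The main obstacle is that naive perturbations are almost always detectable by short queries. A direct computation shows that simply shifting a single character between two 0-runs separated by a 1-run already breaks equality at $q = 01$ or $q = 0110$, so the redistribution must occur in a way that the DTW ``expansion slack'' absorbs. The key structural lemma to isolate is therefore the following: because a single character of $s$ can be matched, at zero marginal cost, to an arbitrarily long same-symbol segment of $q$, once every run of $s$ and $s'$ is sufficiently long relative to the run count of $q$, any cost-minimizing matching admits a same-cost analogue on the other sequence obtained by re-weighting the degrees of vertices within each run. Establishing this lemma in full generality—uniformly over all binary $q$—is the hard step; once it is in hand, the optimum matching cost depends only on the condensed expression, total length, and weight, all of which are shared by $s$ and $s'$, and the theorem follows.
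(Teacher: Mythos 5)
Your overall strategy---exhibit a witness pair $(s,s')$, pin down necessary conditions via short probe queries, then prove equality for all $q$ by a cost-preserving rerouting of matchings---is the same shape as the paper's argument, which uses $s=010110$, $s'=011010$ and constructs, from any optimal matching on $(s,q)$, a matching on $(s',q)$ of equal cost (and vice versa). Your necessary conditions (equal length and weight from $q=0,1$; shared condensed expression from $q=s$) are correct and are all satisfied by the paper's pair.

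However, the step where you conclude is wrong: the three invariants (condensed expression, total length, Hamming weight) are \emph{not} sufficient for DTW-indistinguishability, so the claim that ``the optimum matching cost depends only on the condensed expression, total length, and weight'' cannot be a theorem. Concretely, take $s=010001101$ and $s'=011001001$. Both have length $9$, weight $4$, and condensed expression $010101$. Yet $d_{\DTW}(s,01)=2$ (split after position $5$: cost of ones in $01000$ plus zeros in $1101$ is $1+1$) while $d_{\DTW}(s',01)=3$ (the best split gives $3$), so $q=01$ distinguishes them. In the MSS language of Theorem~\ref{theorem:dtw_mss}, the interior run-length vectors are $(1,3,2,1)$ versus $(2,2,1,2)$, and $\MSS((1,3,2,1),2)=2\neq 3=\MSS((2,2,1,2),2)$. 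So sharing these three invariants does not close the argument; what the paper's example exploits is finer: its interior run-length vectors $(1,1,2)$ and $(2,1,1)$ are \emph{reverses of one another}, and MSS is reversal-symmetric, which is precisely why every MSS instance (hence every DTW query) returns the same value. Equivalently, the paper's rerouting relies on the local observation that $d_{\DTW}(1,r)=d_{\DTW}(11,r)$ for every non-empty $r\neq 0$, so a lone ``$1$'' and a ``$11$'' block are interchangeable within an optimal matching; this is what justifies the edge-by-edge reroute in Theorem~\ref{theorem:dtw_hardness}.

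The proposed ``structural lemma'' (that once every run of $s,s'$ is sufficiently long relative to the run count of $q$, optimal matchings can be re-weighted within runs at no cost) is also off-target as a route to the theorem. It has to hold \emph{for all} binary $q$ of length $\mc{O}(n)$, and such $q$ can have up to $\Theta(n)$ runs, which no fixed $s$ of length $\le n$ can dominate run-for-run; moreover the paper's witness $s=010110$ has several runs of length $1$, so the lemma's hypothesis is violated even for short queries. The correct structural fact is not about run lengths being large but about the symmetry/invariance of the induced MSS instances (or, in matching language, the interchangeability of a $1$ and a $11$ block against any non-$0$ query segment). You would need to replace your lemma with something of that kind before the theorem follows.
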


\revise{
Theorem \ref{thm:informal_dtw_hardness} shows the impossibility of only using binary sequences to recover the input sequence from the DTW distance oracle.}
If two input sequences cannot be distinguished, we say that they are in the same \emph{equivalence class}.
\revise{The following two informal theorems state the upper bound and lower bound on DTW distance recovery up to the equivalence class.}

\begin{theorem}[Informal, Upper Bound, Refers to Theorem \ref{theorem:oz_equivalence}]
\label{thm:informal_dtw_no_extra}
There exists a query set $\mc{Q}$ consisting of $\mc{O}(n)$ queries of length $\mc{O}(n)$, such that any two distinguishable input sequences can be distinguished by $\mc{Q}$.
\end{theorem}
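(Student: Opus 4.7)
The plan is to prove the theorem in two stages: first classify the DTW-equivalence classes on binary sequences, then design one ``indicator'' query per class.

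The heart of the first stage is the structural lemma that for any two binary sequences $s, s'$, one has $d_{\DTW}(s, s') = 0$ if and only if $s$ and $s'$ share the same condensed expression (Definition \ref{def:condensed}). The forward direction is immediate from Definition \ref{def:dtw}: a zero-cost correspondence consists of equal-length binary expansions of $s, s'$ that agree coordinate-wise, so both expansions collapse to the same condensed form. The reverse direction is constructive: if $s$ and $s'$ share a condensed expression of length $k$ with respective run lengths $(a_i)$ and $(b_i)$, then extending each run to length $\max(a_i, b_i)$ on both sides produces a common expansion and hence a zero-cost correspondence. A brief corollary is that $s$ and $s'$ have the same DTW distance to \emph{every} query sequence whenever their condensed expressions agree, since an optimal correspondence for $(s,q)$ can be re-stretched into one of equal cost for $(s',q)$.

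Granting this lemma, the equivalence class of any binary $s$ is completely determined by the pair $(c, k)$ consisting of its first character $c \in \{0,1\}$ and its number of runs $k \in \{1,\ldots,n\}$, so there are at most $2n$ equivalence classes. I would then define, for each such pair, a single length-$n$ binary query $q_{c,k}$ whose condensed expression is the alternating pattern of $k$ characters beginning with $c$ --- concretely, by placing the first $k-1$ runs as single characters and packing the remaining $n-k+1$ positions into the last run. This yields a non-adaptive query set $\mc{Q}$ of $2n$ queries, each of length exactly $n$, which meets the $\mc{O}(n)$ bounds in the statement. Correctness is then immediate from the structural lemma: $d_{\DTW}(s, q_{c,k}) = 0$ precisely when the equivalence class of $s$ equals $(c,k)$, so the unique vanishing coordinate of the vector $(d_{\DTW}(s, q))_{q \in \mc{Q}}$ identifies the class of $s$, and any two inputs from different classes must disagree on at least one query.

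The main subtlety lies in carefully establishing the structural lemma. Although DTW admits multiple optimal correspondences and generally fails the triangle inequality, the ``zero cost implies common condensed form'' direction is clean because binary sequences force every matched pair in a zero-cost correspondence to be identically valued; combined with the monotonicity and endpoint constraints of Definition \ref{def:dtw-matching}, this propagates equality through the runs and pins down a common condensed expression. Once this lemma is phrased cleanly, the remainder of the argument is mechanical, so the real work is in stating the structural lemma in a form robust enough to be reused for the lower-bound and extra-character variants treated later in the paper.
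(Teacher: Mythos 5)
The proposal is built on a false structural lemma. You correctly observe that $d_{\DTW}(s, s') = 0$ if and only if $s$ and $s'$ share a condensed expression, but the ``brief corollary'' you draw from it --- that equal condensed expressions imply equal DTW distance to \emph{every} query --- is wrong. Take $s = 01$ and $s' = 001$: both condense to $01$, yet $d_{\DTW}(s, 1) = 1$ while $d_{\DTW}(s', 1) = 2$, so the single-character query $1$ already distinguishes them. The re-stretching argument you sketch fails because when you grow a run of $s$ to match a longer run in $s'$, the extra characters of $s'$ must still be covered by \emph{some} characters of $q$, and those matches incur extra cost whenever the value differs. More generally, the fact that $d_{\DTW}$ is not a metric (no triangle inequality, no identity of indiscernibles) means that $d_{\DTW}(s,s')=0$ does \emph{not} place $s$ and $s'$ in the same equivalence class; the paper explicitly flags $101$ and $1011$ as a zero-distance yet distinguishable pair (the query $0$ separates them).

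Because of this, your classification of equivalence classes by the pair (first character, number of runs) undercounts badly: there are far more than $2n$ equivalence classes, and the run \emph{lengths} matter, not just their count. Your proposed query set --- one query per $(c,k)$ --- would therefore fail to separate, e.g., $0011$ and $011$. The paper's Claim~\ref{claim:condensed_exp} does use a ``one query per condensed expression'' idea as a \emph{first filter} (detecting when the condensed expressions differ), but the real work is in the case where the condensed expressions already agree: the paper pads the first and last runs of each query to length $n$, invokes the DTW-to-MSS reduction (Theorems~\ref{theorem:obv_4} and~\ref{theorem:dtw_mss}), and then argues via Claims~\ref{claim:runs_geq_3} and~\ref{claim:first_last_blocks} and a case analysis on $\#\textsc{runs}$ that if $\mc{Q}$ cannot separate $s$ from $s'$ then no binary query can. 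None of that machinery appears in your proposal. The correct restatement of the lemma you need is not ``same condensed expression $\Rightarrow$ indistinguishable'' but rather ``indistinguishable by $\mc{Q}$ $\Rightarrow$ indistinguishable by all binary queries,'' and establishing that requires the MSS analysis.
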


\revise{\linelabel{query-construct-0}$\mc{Q}$ is designed to contain all queries with $i$ runs, for any $i \in [1, n]$.}

\begin{theorem}[Informal, Lower Bound, Refers to Theorem \ref{thm:dtw_no_extra_lowerbound}]
\label{thm:informal_dtw_no_extra_lowerbound}
    For the binary alphabet $\bits$, any algorithm to recover an arbitrary input sequence $s \in \bits^\ell$, where $0 \leq \ell \leq n$, up to its equivalence class, by querying the DTW distance to a set of sequences, has query complexity  $\Omega(n)$.
\end{theorem}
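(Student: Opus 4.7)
The plan is to lower-bound the query complexity by an adversarial decision-tree argument built around a large, pairwise non-equivalent family of inputs. First I would construct a family $\mathcal{F}$ of binary sequences of length at most $n$ with $|\mathcal{F}| = 2^{\Omega(n)}$, all of whose members lie in distinct DTW equivalence classes. A natural choice is the family of alternating-run sequences $s_{\mathbf{b}} = 0^{b_1} 1^{b_2} 0^{b_3} \cdots$ parameterized by $\mathbf{b} = (b_1, \ldots, b_k) \in \{1, 2\}^k$ for $k = \Theta(n)$, so that the $k$ binary choices encode $\Theta(n)$ bits of independent information in the run-length profile. To certify pairwise non-equivalence, I would exhibit, for any two distinct parameter vectors, a specific distinguishing query—for example, a query targeting the first index at which the two vectors differ and forcing different optimal DTW matchings, arguing at the level of the matching / correspondence structure introduced in Section \ref{sec:background}.

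Next, I would model any (possibly adaptive) recovery algorithm as a decision tree whose internal nodes issue queries and whose outgoing edges are labeled by the returned integer DTW value. Because both the queries and the targets have length $O(n)$, every DTW value lies in $\{0, 1, \ldots, O(n)\}$, so the tree has branching factor at most $O(n)$. A correct algorithm must map distinct equivalence classes in $\mathcal{F}$ to distinct leaves, so the tree needs at least $|\mathcal{F}|$ leaves, giving the preliminary depth bound $\Omega(n / \log n)$.

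To tighten this to the stated $\Omega(n)$ and remove the $1/\log n$ factor, I would pass to a structured subfamily $\mathcal{F}' \subseteq \mathcal{F}$ on which any fixed query attains only $O(1)$ distinct DTW values. The intuition is that when the condensed expression and approximate length are held fixed across $\mathcal{F}'$, the small run-length perturbations ($b_i \in \{1, 2\}$) can shift the optimal DTW cost against any fixed query by only a constant amount, because an optimal DTW matching can absorb unit-length changes to individual runs at unit cost. With bounded branching on $\mathcal{F}'$, the decision-tree depth bound improves to $\Omega(\log |\mathcal{F}'|) = \Omega(n)$.

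The main obstacle will be making the bounded-branching claim in the last step rigorous. It requires a careful structural analysis of optimal DTW matchings showing that an adversary's bounded perturbations of individual run lengths translate into only bounded shifts in DTW cost, uniformly over all queries of length $O(n)$. If this refined analysis turns out to be too delicate, a fallback is to pursue a purely adversarial argument: maintain a candidate set initially equal to $\mathcal{F}$, and answer each query with the DTW value that retains the largest subset of candidates, showing directly that the candidate set shrinks by at most a constant factor per query.
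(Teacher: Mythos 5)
Your proposed family $\mathcal{F}$ of alternating-run sequences $s_{\mathbf{b}} = 0^{b_1}1^{b_2}0^{b_3}\cdots$ with $b_i \in \{1,2\}$ is \emph{not} pairwise non-equivalent, and this breaks the argument at the very first step. The paper's own hardness example (Theorem~\ref{theorem:dtw_hardness}) exhibits $s = 010110$ and $s' = 011010$, i.e.\ the run-length profiles $\mathbf{b}=(1,1,1,2,1)$ and $\mathbf{b}'=(1,2,1,1,1)$, as DTW-indistinguishable; both live in your family, have the same length, and even the same Hamming weight. So you cannot ``certify pairwise non-equivalence'' for this family — some of its members genuinely collapse into one equivalence class, and by the paper's remark at the start of Section~\ref{sect:dtw} the structure of these collapses is not simple to characterize. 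Without a provably large pairwise non-equivalent family, the decision-tree counting does not get off the ground, and the problem of building such a family is itself close to the heart of what makes this lower bound non-trivial.

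The bounded-branching step has a second, independent problem: it is false for even the simplest queries. The query $0^n$ returns the number of $1$'s in the input, which already ranges over $\Theta(n)$ values across any family that encodes $\Theta(n)$ bits of information in run lengths. Pinning down the number of $0$'s, the number of $1$'s, the length, and the condensed expression simultaneously shrinks your parameter space and still leaves you with the equivalence-class collapses above, so there is no obvious way to salvage $|\mathcal{F}'| = 2^{\Omega(n)}$ with $O(1)$ branching. Your ``fallback'' adversary argument hits the same wall — you would have to show each answer eliminates only a constant fraction, which is the same bounded-branching claim in disguise.

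The paper takes a genuinely different route that avoids both obstacles. Rather than one large family plus an information-theoretic count, it builds $\Omega(n)$ \emph{separate pairs} $(s,s')$, indexed by a parameter $c$, where $s$ and $s'$ have the same number of runs and the same first/last runs (so the offsets in the DTW-to-MSS reduction of Theorems~\ref{theorem:dtw_mss} and~\ref{theorem:obv_4} coincide). Using the MSS structure (Claim~\ref{claim:MSS_133232}) it then shows that the $c$-th pair is distinguishable, yet \emph{only} by a query whose number of runs lies in a window $[2c, 2c+6]$ of constant width. Choosing the $c$'s so these windows are disjoint forces any correct query set to contain at least one query per window, yielding $\Omega(n)$ queries directly, without ever needing a large pairwise non-equivalent family or a branching bound. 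This is a packing/fooling-pair argument keyed to the number-of-runs parameter, not a decision-tree counting argument, and that change of viewpoint is what makes the $\Omega(n)$ (rather than $\Omega(n/\log n)$) bound attainable.
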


\revise{Note that our upper bound matches the lower bound for DTW equivalence class recovery. The next exciting finding is that, using queries that contain a small number of extra characters, we can exactly recover the input sequence.}

\begin{theorem}[Informal, Upper Bound with Extra Chars, Refers to Theorem \ref{theorem:dtw_o1}]
\label{thm:informal_dtw_extra}
By introducing $\mc{O}(1)$ extra characters to the query sequence alphabet, we can recover any input sequence of length $\leq n$ with $\mc{O}(n)$ DTW queries.
\end{theorem}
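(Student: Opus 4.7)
The goal is to construct a set of $n+2$ non-adaptive queries, each of length $O(n)$, using two real-valued extra characters $\alpha,\beta \notin \{0,1\}$, such that the resulting DTW distances uniquely determine $s$.

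First, I would include the two binary queries $q_{\text{len}_0} = 0^n$ and $q_{\text{len}_1} = 1^n$. A short DTW-expansion argument shows that $d_{\DTW}(s, 0^n)$ equals the number of $1$'s in $s$: every $1$ of $s$ contributes cost at least $1$ (achieved by matching it to a single $0$), and any surplus $0$'s in $q_{\text{len}_0}$ can be absorbed for free by expanding the $0$-runs of $s$; the symmetric argument for $1^n$ yields the number of $0$'s. Together these two queries reveal the length $\ell = \len(s)$ and the count of each symbol.

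For the remaining $n$ queries $q_1,\dots,q_n$, I would use $\alpha$ and $\beta$ to create rigid ``barriers'' in any optimal DTW correspondence, forcing it into a structure that isolates positional information. Concretely, I intend a template of the form $q_i = \alpha^{a(i)} \cdot (\text{short binary body depending on } i) \cdot \beta^{b(i)}$, where $\alpha$ is chosen so that $|\alpha - 0|$ and $|\alpha - 1|$ are both much larger than any cost attainable by a ``good'' DTW alignment, and symmetrically for $\beta$. Because each $\alpha$ in the prefix of $q_i$ must be matched to some character of $s$ by the monotonicity of DTW matchings (Definition~\ref{def:matching}), paying these large costs forces the optimal correspondence to partition $s$ into a prefix/middle/suffix aligned to the $\alpha$-block, binary body, and $\beta$-block of $q_i$, respectively. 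With this rigidity, the value $d_{\DTW}(s, q_i)$ becomes an explicit, piecewise-linear function of a prefix count of $s$ (or of a single bit, depending on how the body is designed). Taking successive differences $d_{\DTW}(s, q_i) - d_{\DTW}(s, q_{i-1})$ then isolates one coordinate of $s$ per query, and combined with the length information from the first two queries this recovers $s$ bit by bit.

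\textbf{Main obstacle.} The heart of the argument is the rigidity analysis. DTW is flexible: each sequence can be stretched, so an $\alpha$ in the query could in principle be ``absorbed'' by matching it to many characters of $s$, potentially avoiding the intended barrier behavior. I expect to have to (i) choose the magnitudes of $\alpha,\beta$ as explicit constants (e.g., larger than the maximum possible binary-body cost, hence $O(1)$ since the body has length $O(n)$ and character gap $1$) and (ii) perform a case analysis on the possible DTW matchings between $s$ and $q_i$, showing a strict constant-gap suboptimality for every non-rigid matching via the edge-cost bookkeeping of Definition~\ref{def:dtw-matching}. Once rigidity is established, the reconstruction step is straightforward linear-algebra bookkeeping on prefix sums. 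If a single $\alpha$-and-$\beta$ framing is insufficient for full bit isolation, a secondary obstacle is designing the binary body of $q_i$ so that the residual DTW cost inside the framed region cleanly encodes the desired bit; I anticipate that a simple structure such as one $1$ placed at position~$i$ within an otherwise-$0$ body suffices, because the rigid framing reduces the residual problem to the already-understood $0^n$ query restricted to an $s$-substring.
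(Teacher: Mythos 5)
Your plan diverges from the paper's construction in a way that introduces a genuine gap. You propose large-valued barrier characters $\alpha,\beta$ flanking a short binary body, hoping that the high cost of each $\alpha$ edge forces the $\alpha$-block to ``spread out'' over a prefix of $s$ and thereby pin the binary body to a known middle substring. But DTW does not give you that rigidity: a vertex of $s$ may have arbitrarily large degree, so the entire $\alpha$-block can legally collapse onto $s[1]$ (or onto whichever prefix character minimizes $|\alpha - s[j]|$), and whether it is cheaper to collapse or to spread depends on the unknown bits of $s$. There is no ``strict constant-gap suboptimality for every non-rigid matching''; in many cases a collapsed and a spread matching cost exactly the same, and in others the optimum genuinely shifts with $s$. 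Consequently the position window that your binary body matches against is $s$-dependent and unknown, and the residual problem inside the frame is again DTW between two binary strings --- precisely the ambiguity that makes the problem hard. You also have the absorption worry backwards: the danger is not one $\alpha$ absorbing many $s$-characters, but many $\alpha$'s landing on one $s$-character.

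The paper's Theorem~\ref{theorem:dtw_o1} takes the opposite design choice and that choice is the whole point. The two extra characters $a,b$ are chosen to satisfy $0 < b-a < a < b < \tfrac12$ (so both are \emph{close to $0$ and to each other}, not far from $\{0,1\}$), and the queries $q^{(i)} = a^{n-i}b^{i}$ are monotonic of length exactly $n$. Three facts then do the work that your barriers cannot: (i) Lemma~\ref{lemma:single-direction_o1} shows that monotonicity plus the bound $\min_i \max\{|q[i]|,|1-q[i]|\} > \max_{i,j}|q[i]-q[j]|$ forces every query character to have degree~$1$ (input-uniqueness); (ii) because $a,b < \tfrac12$, every $1$ of $s$ also has degree~$1$ (Lemma~\ref{lemma:3_o1}, the ``$1$-uniqueness'' half of 0/1-uniqueness); (iii) a shifting-operation argument (Claims~\ref{claim:3_o1}--\ref{claim:4_o1}, Lemma~\ref{lemma:4_o1}) shows that one can choose \emph{isomorphic} optimal matchings $M_i^*$ across all $n$ queries, determined only by the position of the first $0$ in $s$. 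It is exactly this isomorphism that makes the successive differences $d_i - d_{i-1}$ expose a single coordinate, via a number-theoretic extraction from $a=\tfrac13$, $b=\tfrac25$. Your plan has no analogue of the isomorphism lemma, and the ``take differences'' step is not justified without it, so the reconstruction does not go through.
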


We aim to recover the given input sequence (of length $\leq n$) with the minimum number of queries for different distance metrics.
Theorems \ref{thm:informal_dtw_hardness}, \ref{thm:informal_dtw_no_extra} and \ref{thm:informal_dtw_no_extra_lowerbound} summarize the best results one can hope to obtain for recovering sequences from a DTW oracle without extra characters, i.e., identifying the equivalence class that the input sequence belongs to.
If we are allowed to use extra characters in the query construction, we can distinguish and recover all the sequences with $\mc{O}(n)$ queries, as informally stated in Theorem \ref{thm:informal_dtw_extra}.
We summarize and highlight the techniques used in proving these theorems in the rest of this section, in which the informal proofs are grouped as follows.
In Section \ref{subsec:tech_binary}, we show proof sketches on recovery of sequences using binary queries, which include results from Theorems \ref{thm:informal_dtw_hardness}, \ref{thm:informal_dtw_no_extra} and \ref{thm:informal_dtw_no_extra_lowerbound}.
In Section \ref{subsec:tech_extra}, we give a bird's-eye view over the key ideas of the query construction and proof of Theorem \ref{thm:informal_dtw_extra}.

\begin{figure}
    \centering
    \includegraphics[width=0.65\textwidth]{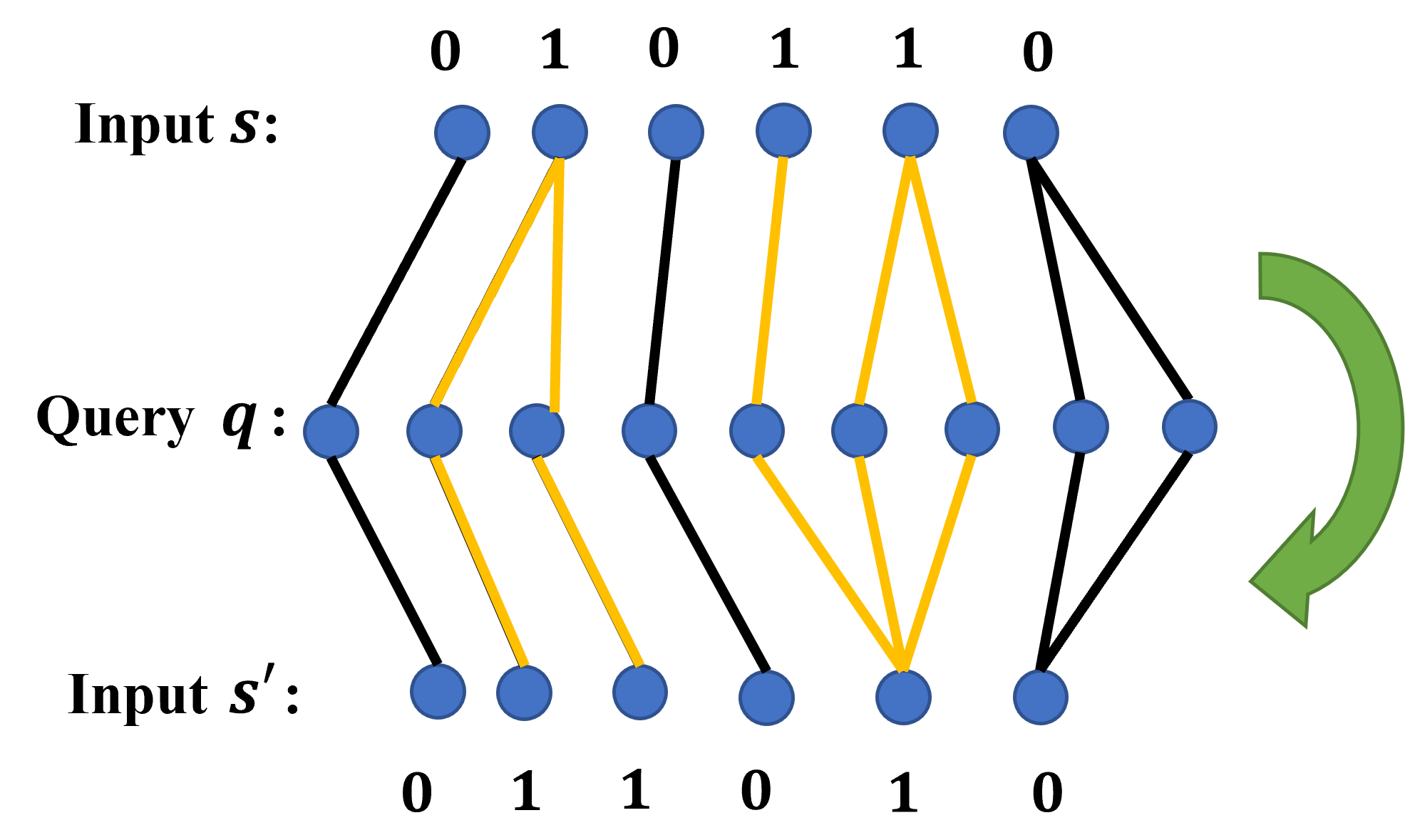}
    \caption{Constructing a matching between $q$ and $s'$ based on the matching between $q$ and $s$.}
    \label{fig:hardness}
\end{figure}

\subsection{Optimal Non-adaptive Strategy using DTW Queries over Binary Alphabet} 
\label{subsec:tech_binary}

\revise{
The hardness result (Theorem \ref{thm:informal_dtw_hardness}) is shown by finding evidence of such a pair of indistinguishable input sequences.
}
\smallskip

\noindent\textit{Informal proof for Theorem \ref{thm:informal_dtw_hardness}.} In the case of DTW Distance, we discover that it is actually impossible to recover any given input with an arbitrary number of queries.
For example, the input sequences $s = 010110$ and $s' = 011010$ cannot be exactly recovered, since they cannot be distinguished by any query sequence. To see this, the idea is that $d_{\revise{\DTW}}(1, r) = d_{\revise{\DTW}}(11, r)$ for any non-empty sequence $r$, unless $r = 0$. Therefore, a DTW matching between $s$ and any query sequence $q$ would yield a corresponding matching between $s'$ and $q$ with the same cost, (see Figure \ref{fig:hardness} as an example) and vice versa. %
(Refer to Theorem \ref{theorem:dtw_hardness} for detailed proof). This implies that $d_{\revise{\DTW}}(s, q) = d_{\revise{\DTW}}(s', q)$, and thus $s$ and $s'$ cannot be distinguished by $q$.

\bigskip

Before giving the intuition for the proof of Theorem \ref{thm:informal_dtw_no_extra} and \ref{thm:informal_dtw_no_extra_lowerbound}, we first introduce the notion of a Min 1-Seperated Sum (MSS) problem \citep{abboud2015tight,DBLP:conf/cpm/SchaarFN20}, where each instance of the DTW distance computation can be reduced to solving a corresponding instance of MSS problem.
\revise{The reduction plays the role of an important primitive in our proofs.}

\stitle{MSS Problem}. The min 1-separated sum (MSS) problem takes as input a sequence $seq$ of $m$ positive integers and an integer $r \geq 0$. The problem is to select $r$ integers from $seq$ and minimize their sum, under the constraint that any two adjacent integers cannot be selected simultaneously. We say $\MSS(seq, r)$ is an MSS instance.

\begin{theorem}[DTW-to-MSS Reduction, \citep{DBLP:conf/cpm/SchaarFN20}, Theorem 2]
\label{theorem:our_techs_dtw_mss}
Let $x \in\{0,1\}^{m}$ and $y \in\{0,1\}^{n}$ be two binary strings such that $x[1]=y[1], x[m]=y[n]$, and $\#\textsc{runs}(x) \geq\#\textsc{runs}(y)$. Then, the DTW distance between $x$ and $y$, i.e., $d_{\revise{\DTW}}(x, y)$, equals the sum of a solution for the MSS instance $\MSS\biggl(\Bigl(\textsc{lor}(x,2), \ldots,\textsc{lor}(x, \#\textsc{runs}(x)-1)\Bigl), \frac{(\#\textsc{runs}(x)-\#\textsc{runs}(y))}{2} \biggl)$. 
\end{theorem}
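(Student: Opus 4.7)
The plan is to exploit the matching formulation of DTW from Definition~\ref{def:dtw-matching} and exhibit a cost-preserving bijection between optimal DTW matchings of $(x,y)$ and feasible solutions of the MSS instance. As a preliminary observation, since both $x$ and $y$ start and end with the same characters and runs in a binary alphabet strictly alternate, $\#\textsc{runs}(x)-\#\textsc{runs}(y)$ is forced to be even, say $2r$, so the MSS parameter $r$ is well-defined. First I would note that any DTW matching $M(x,y)$ induces a partition of $[\#\textsc{runs}(x)]$ into contiguous \emph{blocks}, where each block consists of all runs of $x$ matched to a single run of $y$ (by the monotonicity property of matchings and the fact that characters within one run of $y$ share a single value). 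The endpoint constraint in the definition of matching, together with $x[1]=y[1]$ and $x[m]=y[n]$, forces run $1$ of $x$ and run $\#\textsc{runs}(x)$ of $x$ to lie in the first and last blocks respectively.

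Next I would compute the cost of such a block-structured matching. Within a block assigned to a $y$-run of character $c$, exactly the $x$-runs of character $\bar c$ contribute to $\textup{Cost}(M)$; each such $x$-run at position $i$ contributes exactly $\textsc{lor}(x,i)$ regardless of how the matching redistributes individual edges, because every character of that run must be matched to at least one $\bar c$ across the boundary. Since runs alternate in $\{0,1\}$, a block of $2t+1$ consecutive $x$-runs contributes the $\textsc{lor}$ of its $t$ interior ``wrong-character'' runs, and contracts $2t+1$ runs of $x$ down to one run of $y$. Summing over all blocks, the total number of absorbed interior $x$-runs equals $(\#\textsc{runs}(x)-\#\textsc{runs}(y))/2 = r$, and these absorbed indices are strictly interior (first and last runs never absorbed) and pairwise non-adjacent (two adjacent absorbed indices would sit inside a block of even length, whose runs would not alternate correctly to match a single $y$-run). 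Therefore every DTW matching has cost $\sum_{i\in S}\textsc{lor}(x,i)$ for some feasible $S$ of the MSS instance, proving $d_{\DTW}(x,y)\ge \MSS(\cdot,\cdot)$.

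For the reverse direction, given any MSS-feasible $S$ of size $r$, I would explicitly construct a matching by grouping the runs of $x$ into $\#\textsc{runs}(y)$ blocks whose ``skeleton'' (the non-selected runs) is in bijection with the runs of $y$ and alternates with the matching character of $y$; concretely, the blocks are the maximal intervals of interior runs cut by selected indices, padded by their two correctly-colored neighbors. Within each block, assign every character of the selected $x$-runs to a boundary character of the corresponding $y$-run, while matching the correctly-colored $x$-runs one-to-one (possibly with run extensions) to the remaining characters of $y$. The resulting matching has cost exactly $\sum_{i\in S}\textsc{lor}(x,i)$, so $d_{\DTW}(x,y)\le \MSS(\cdot,\cdot)$.

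The main obstacle will be the second paragraph's normalization step: proving that \emph{every} optimal DTW matching can be assumed block-structured with non-adjacent absorbed interior runs. A matching might a priori split characters of a single $x$-run across multiple $y$-runs, or align a selected $x$-run partly to its left neighbor and partly to its right neighbor. I would handle this by a local exchange argument: any such split can be replaced by a pure block assignment without increasing the cost, using the fact that in a binary alphabet the cost of matching character $a$ to $b$ is $|a-b|\in\{0,1\}$ and that reassigning edges within a contiguous aligned region preserves monotonicity. A short case analysis showing that selecting two adjacent interior runs never beats selecting just one of them will then close the no-two-adjacent constraint.
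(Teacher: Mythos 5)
The paper does not prove this statement: it is imported verbatim as a cited result (Theorem~2 of Schaar, Filtser, and Naor, CPM 2020), and both places it appears in the manuscript (Theorem~\ref{theorem:our_techs_dtw_mss} and Theorem~\ref{theorem:dtw_mss}) reference that external proof. So there is no in-paper argument against which to compare your reconstruction; what follows evaluates your proposal on its own.

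Your overall plan is the right one and matches the route taken in the cited source: reduce an optimal DTW correspondence to a partition of the runs of $x$ into blocks, one per run of $y$, observe that the cost of a block is the sum of $\textsc{lor}$ over its wrong-colored interior runs, then show that the wrong-colored runs form a feasible $1$-separated set of cardinality exactly $r=(\#\textsc{runs}(x)-\#\textsc{runs}(y))/2$, and finish with a matching-cost-achieving construction for any feasible $S$. Two places need tightening. First, your statement that each wrong-colored $x$-run ``contributes exactly $\textsc{lor}(x,i)$'' is only a lower bound on the contribution in an arbitrary matching (degree-$>1$ vertices can inflate it); for the $\ge$ direction that is all you need, and for the $\le$ direction you must exhibit the block matching that is tight, so the argument should be phrased as an inequality on each side rather than an equality claim. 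Second, and more substantively, your non-adjacency justification via ``even-length blocks whose runs would not alternate correctly'' is off: within a single block two adjacent $x$-runs automatically have opposite colors, so they cannot both be wrong for that block; the genuine danger is at block boundaries, where the last run of block $j$ and the first run of block $j+1$ could in principle both be wrong-colored if a block begins or ends with a wrong-colored run. Ruling that out is equivalent to showing that, in an optimal block-structured matching, every block has odd length and starts and ends on a correctly colored run; this follows from a boundary-shift exchange (shifting the block boundary to absorb a wrong-colored terminal run into the neighboring block strictly decreases cost, except in the degenerate case of a singleton block, which requires its own case analysis). You should make this lemma explicit rather than fold it into the hand-wave about even blocks. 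With those two repairs, together with the normalization-to-block-structure lemma you already flag as the main obstacle, the proposal assembles into a correct proof of the reduction.
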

\noindent
To give an example of the reduction, let $s = 010110$ and $q = 010$.
By \revise{\linelabel{mis-ref}Theorem \ref{theorem:our_techs_dtw_mss}}, we obtain $d_{\revise{\DTW}}(s, q) = \revise{\MSS}((1, 1, 2), 1)$.
For ease of presentation, we will use $\revise{\MSS}(x ,(\#\textsc{runs}(x)-\#\textsc{runs}(y)) / 2)$ to represent the same MSS instance.

\begin{figure*}[t]
    \centering
    \includegraphics[width=0.8\textwidth]{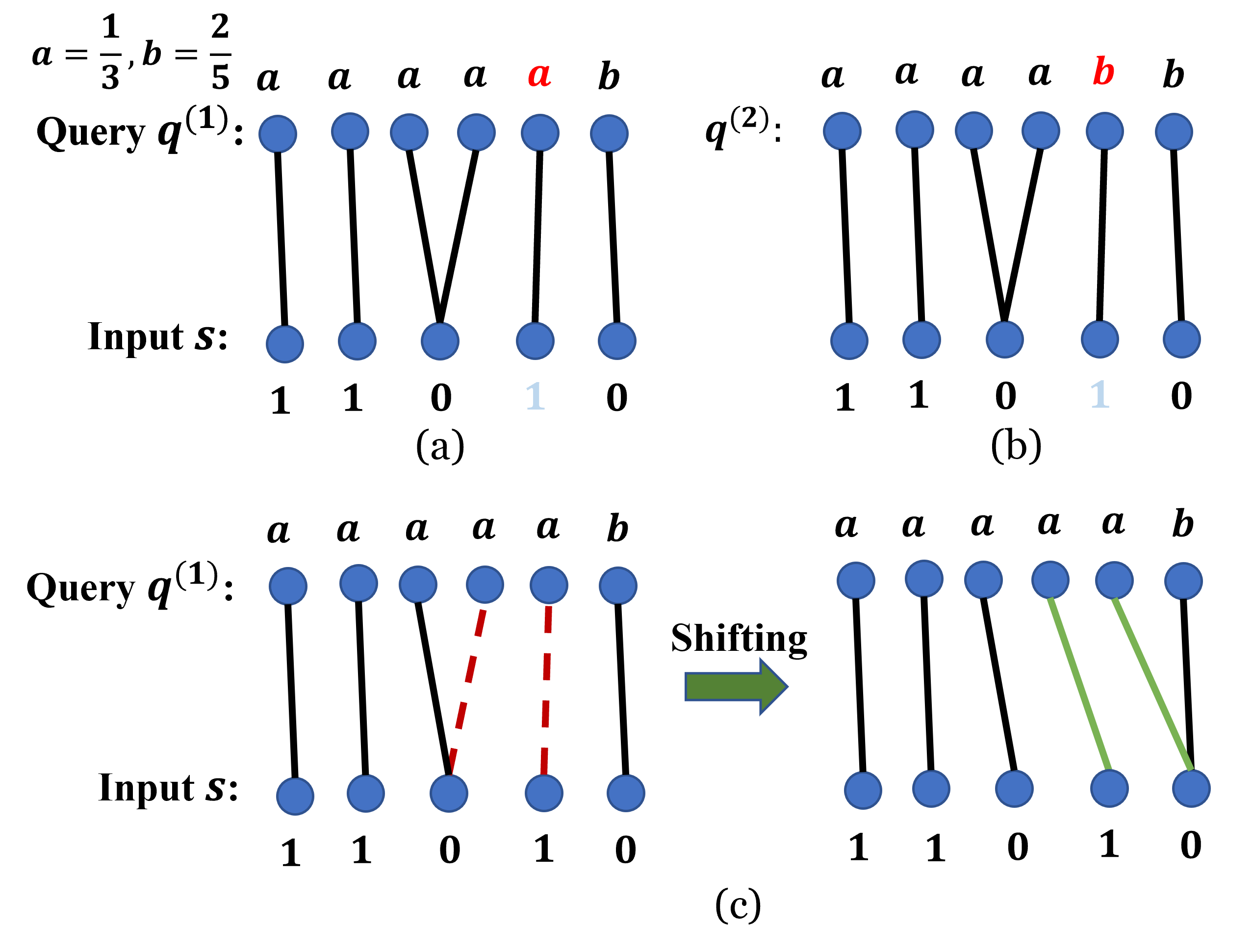}
    \caption{(a) Illustration of input-uniqueness and 0/1-uniqueness; (b) Illustration of isomorphism and \revise{performing a difference operation}, compared to Fig (a); (c) Illustration of shifting operation.}
    \label{fig:our_techniques}
\end{figure*}

\stitle{Remark.} For binary strings $x \in \{0, 1\}^m, y \in \{0, 1\}^n$ where $x[1]\neq y[1]$ or $x[m] \neq y[n]$, we can still reduce $d_{\revise{\DTW}}(x, y)$ to an MSS instance \revise{(which will be presented later in the paper \needreview{using} another technique from \citep{DBLP:conf/cpm/SchaarFN20})}. In this section, where we only illustrate the main idea of the proofs, we will only consider the case where the input sequence and query sequence each have the same starting character and the same ending character (so Theorem \ref{theorem:our_techs_dtw_mss} can be directly applied), and other cases can be resolved similarly. For full details, we defer to later sections. 

\medskip

\noindent\textit{Intuition for Theorem \ref{thm:informal_dtw_no_extra}.}
\revise{We would like to skip the proof sketch for Theorem \ref{thm:informal_dtw_no_extra}, but just to mention the insights of the query construction to obtain such an \needreview{orderwise} optimal query complexity upper bound.
The set of queries $\mc{Q}$ contains queries of all possible combinations of runs in the input sequence. That is, for the maximum length $n$ of the input sequence, the set of the possible number of runs is $[n]$. 
This gives us $n$ queries. Since we have 0 runs and 1 runs, there are $2n$ queries in the query set $\mc{Q}$ in total.
Then the remainder of the proof is to perform case analysis --  we first eliminate obvious cases and then build a mapping to the corresponding MSS instances such that if any pair of sequences cannot be distinguished by $\mc{Q}$, they cannot be distinguished by any binary queries.
}

\bigskip

\noindent\textit{Informal proof of Theorem \ref{thm:informal_dtw_no_extra_lowerbound}.}
\revise{\linelabel{query-construct-1}Recall our query set $\mc{Q}$ contains queries of all numbers of runs.}
The intuition for the proof of Theorem \ref{thm:informal_dtw_no_extra_lowerbound} is that, for each given constant-length interval of the number of runs, we can construct a certain pair of input sequences which can only be distinguished by queries with a number of runs within this interval. For instance, it can be proved that $s_1 = 01^301^30^31^30^31^30$ and $s_2 = 01^30^21^30^21^30^31^30$ can only be distinguished with queries with a number of runs within $[4, 10]$. Thus, an  $\Omega(n)$ number of such constructed pairs of input sequences can correspond to $\Omega(n)$ disjoint intervals, yielding an $\Omega(n)$ lower bound for this problem. 

We now construct a class of pairs of input sequences $(s, s')$ \revise{where $s$ and $s'$ share the same starting and ending character,} such that $s$ and $s'$ can only be distinguished by queries $q$ with a number of runs within $[\#\textsc{runs}$(s)$+c_1, \#\textsc{runs}$(s)$+c_2]$ for two constants $c_1<c_2$. According to Theorem \ref{theorem:our_techs_dtw_mss}, as long as the constructed pair of input sequences $(s, s')$ have the same number of runs, for a query $q$ with more than \#\textsc{runs}$(s)$ number of runs, $d_{\revise{\DTW}}(q, s)$ and $d_{\revise{\DTW}}(q, s')$ are only determined by the query $q$ and \#\textsc{runs}$(s)$, and thus $q$ cannot distinguish $s$ and $s'$. For a query $q$ with fewer than \#\textsc{runs}$(s)$ number of runs, $d_{\revise{\DTW}}(q, s)$ and $d_{\revise{\DTW}}(q, s')$ are reduced to two MSS instances. Note that for different queries $q$, the sequences (i.e., the first parameter) of MSS instances remain the same, while \#\textsc{runs}($q$) determines the number of elements selected in the sequences of MSS instances (i.e., $(\#\textsc{runs}(s) - \#\textsc{runs}(q))/2$). We hope to construct a pair of sequences $seq$ and $seq'$ such that $\revise{\MSS}(seq, 1)\neq \revise{\MSS}(seq', 1)$ and $\revise{\MSS}(seq, x) = \revise{\MSS}(seq', x)$ for all $x>1$: let $seq$ and $seq'$ be the sequences corresponding to MSS instances of $s$ and $s'$; in this way, $s$ and $s'$ would still be distinguishable because $\revise{\MSS}(seq, x)\neq \revise{\MSS}(seq', x)$ for $x=1$, but any query $q$ with fewer than \#\textsc{runs}$(s)-4$ runs cannot distinguish $s$ and $s'$ because $\revise{\MSS}(seq, x) = \revise{\MSS}(seq', x)$ for all $x\ge 2$, where $x=(\#\textsc{runs}(s) - \#\textsc{runs}(q))/2$.

\subsection{Optimal Non-adaptive Strategy using DTW Queries with Extra Characters}
\label{subsec:tech_extra}

We show that, if we augment the ability of our oracles by introducing extra characters, we can solve the DTW distance oracle recovery problem with optimal query complexity up to polylogarithmic factors.

\smallskip
\noindent\textit{Informal proof of Theorem \ref{thm:informal_dtw_extra}.}
We would like to construct a query set of size $\mc{O}(n)$ that can recover the input sequence using a DTW distance oracle.
A natural idea is to retrieve information about the input sequence by \revise{\linelabel{differentiating-1}taking the difference between} the query results of neighbouring queries (i.e., queries only differing by $1$ character).
To achieve this, we construct a query set satisfying the following three properties:

1) \emph{Isomorphism}:
The matchings corresponding to neighboring queries should be isomorphic. \needreview{Fig.~\ref{fig:our_techniques} (a) and Fig.~\ref{fig:our_techniques} (b) show} an example of isomorphism, where only one character of the input sequence is changed, while the structure of both optimal matchings remains identical.
With this property, we know that the difference between the query results of neighboring queries only reflects the effects of the different characters in neighboring queries.
This property is the essence of guaranteeing the correctness of \revise{\linelabel{differentiating-2}the difference operation}.

2) \emph{Input-uniqueness}:
Each character in the query sequence should be matched to exactly $1$ character in the input sequence.
\revise{Another way to think of this property is to imagine a \emph{total function} that maps the entire query sequence to the input sequence.
Each matching between the query and input defines such a function so that we can extract information about the input by knowing something about the function.}
With this property, we can \revise{\linelabel{differentiating-3}take the difference} to get the information of a single character in the input sequence with a pair of neighboring queries. Note that if the differing character in the neighboring queries is matched to multiple characters in the input sequence, the difference in the query results can only reflect the sum of the costs over these characters, which makes exact recovery hard. Take \needreview{Fig.~\ref{fig:our_techniques} (a) and Fig.~\ref{fig:our_techniques} (b)} as an example. 
Input-uniqueness is satisfied for both \needreview{Fig.~\ref{fig:our_techniques} (a) and Fig.~\ref{fig:our_techniques} (b)}, since all characters in the query sequences of both figures have degree $1$. 
\needreview{Denote the matchings from Fig.~\ref{fig:our_techniques} (a) and Fig.~\ref{fig:our_techniques} (b) by $M_a$ and $M_b$ respectively.}
Since \needreview{$M_a$} has cost $3(1-a) + 2a + b$ while \needreview{$M_b$} has cost $2(1-a) + 2a + (1-b) + b$, we know that $\revise{\Cost}(\needreview{M_a})-\revise{\Cost}(\needreview{M_b}) = b-a$. 
By \revise{\linelabel{differentiating-4}taking the difference}, we can infer that $s[4] = 1$; otherwise, if $s[4]= 0$, we would have $\revise{\Cost}(\needreview{M_a})-\revise{\Cost}(\needreview{M_b}) = (a-0) - (b-0) = a-b$.

Combining properties 1) and 2), we note that each character in the input sequence can match to $1$ or more characters in the query sequence, so we can obtain an expansion of the input sequence.
Based on the example, \needreview{Fig.~\ref{fig:our_techniques} (a) and Fig.~\ref{fig:our_techniques} (b)}, we can obtain an expansion, $110010$, of the input sequence.
We can then infer that the input sequence is of the form $1^x0^y10$, where $x, y \in [1, 2]$.
To recover the exact input sequence, we require more information given by the following third property.

3) \emph{0/1-uniqueness}:
In an optimal matching w.r.t. our constructed queries, either all 0's or all 1's in the input sequence have degree 1.
Using this property, we can locate the exact position of either all 0's or all 1's in the input sequence, and exactly recover the input sequence by combining the two cases. 
In the example of Fig \ref{fig:our_techniques}, 1-uniqueness is satisfied in \needreview{Fig.~\ref{fig:our_techniques} (a) and Fig.~\ref{fig:our_techniques} (b)}, while 0-uniqueness is not, since $s[3]$ in both figures has degree 2. 
According to 1-uniqueness, we can reduce the form of the input sequence from $1^x0^y10$ to $110^y10$.
Similarly, we can construct another set of queries that satisfies 0-uniqueness to locate the positions of 0's in the input sequence, which determines $y$ in this example.

\stitle{Sequence Monotonicity $\rightarrow$ Input-uniqueness.}
We observe that property 2) can be obtained from a \emph{monotonic} design of the query sequences.

\begin{lemma}[Refers to Lemma \ref{lemma:single-direction_o1}]
\label{lemma:single-direction_o1_intro}
Given a monotonic sequence $q$ of length $n$ where 
\begin{equation}
    \min_{i\in [n]}\max\{|q[i]-0|, |q[i]-1|\} > \max_{i,j\in [n]}|q[i]-q[j]|,\label{equ:single_direction_o1_intro}
\end{equation}
for any input sequence $s$ with length $\ell \leq n$, given a DTW matching $M$ for $(q,s)$, we have $\revise{\deg}(q[i]) = 1$ for all characters $q[i]$ in $q$. 
\end{lemma}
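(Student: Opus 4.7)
The plan is to argue by contradiction via a local exchange on a candidate DTW matching. Suppose $M$ is a DTW matching of $(q,s)$ in which some $q[i^*]$ has degree at least two; by the monotonicity requirement in Definition~\ref{def:matching}, $q[i^*]$ is matched to a contiguous block of input characters $s[a], s[a+1], \ldots, s[b]$ with $b \geq a+1$. I will exhibit a valid matching $M'$ with $\Cost(M') < \Cost(M)$, contradicting the minimality of $\Cost(M)$.

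First I would view $M$ as a monotonic warping path in the $n \times \ell$ grid, so $\deg(q[i^*]) \geq 2$ corresponds to a vertical segment $(i^*, a), (i^*, a+1), \ldots, (i^*, b)$ of cells in column $i^*$. I would then consider two symmetric local modifications: one that moves the leftmost cell $(i^*, a)$ into column $i^*-1$, and one that moves the rightmost cell $(i^*, b)$ into column $i^*+1$ (with boundary indices $i^* \in \{1, n\}$ handled by restricting to whichever side is interior). Each modification splits into two subcases depending on whether the adjacent column already contains the corresponding row: in the \emph{deletion} subcase I simply drop the edge $(q[i^*], s[a])$ (or $(q[i^*], s[b])$) for a cost saving of $|q[i^*]-s[a]|$ (resp.\ $|q[i^*]-s[b]|$); in the \emph{substitution} subcase I replace $(q[i^*], s[a])$ by $(q[i^*-1], s[a])$ (resp.\ $(q[i^*], s[b])$ by $(q[i^*+1], s[b])$), changing cost by an amount of magnitude at most $|q[i^*\pm 1] - q[i^*]| \leq R$, where $R \coloneqq \max_{i,j}|q[i]-q[j]|$.

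Next I would combine the two endpoint modifications to obtain $M'$. The key observation is that, because $s[a], s[b] \in \{0,1\}$ and $q[i^*]$ is matched to a block of length at least two, either $s[a] \neq s[b]$ (in which case at least one deleted or swapped edge contributes the larger cost $\max\{|q[i^*]|, |q[i^*]-1|\}$ to the saving), or $s[a] = s[b]$ and the block is effectively constant, in which case shifting the entire vertical segment uniformly toward $q[i^*-1]$ or $q[i^*+1]$ aggregates savings across $b-a$ edges. In either case the total saving obtained by $M'$ is at least $\max\{|q[i^*]|, |q[i^*]-1|\}$, while the compensating substitution cost on the other side is at most $R$. Invoking the hypothesis $\min_i \max\{|q[i]|, |q[i]-1|\} > R$ then gives $\Cost(M') < \Cost(M)$, contradicting optimality. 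The monotonicity of $q$ is crucial throughout: $q[i^*-1] \leq q[i^*] \leq q[i^*+1]$ (WLOG) ensures the substitution errors have a consistent direction controlled by $R$ rather than accumulating unpredictably.

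The main obstacle will be the constant-block subcase, since neither a single left nor a single right swap directly invokes the full $\max\{|q[i^*]|, |q[i^*]-1|\} > R$ inequality: each individual deletion there only saves $\min\{|q[i^*]|, |q[i^*]-1|\}$, which may be much smaller than $R$ when $q[i^*]$ sits close to an element of $\{0,1\}$. I expect to address this by propagating the exchange across the entire vertical segment in column $i^*$, arguing that monotonicity of $q$ causes the aggregate substitution error to telescope and remain bounded by $R$, while the cumulative deletion saving over $b-a$ edges grows beyond $R$, again yielding the strict decrease needed for the contradiction.
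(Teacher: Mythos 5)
Your proposal is a direct local exchange at the column of the offending high-degree $q$-vertex; the paper's proof is structured quite differently, and your version has a gap that the paper's structure is specifically designed to sidestep. The paper first proves (Claim~\ref{claim:2_o1}) that an optimal matching cannot \emph{simultaneously} contain a degree-$>1$ vertex in $q$ and a degree-$>1$ vertex in $s$: it picks the nearest such pair and re-routes the entire warping segment between them, removing $d+1$ edges and adding $d$ new ones, with the cost comparison established through the hypothesis~(\ref{equ:single_direction_o1_intro}), monotonicity of $q$, and the triangle inequality. It then finishes with a degree-counting step: if $\deg(q[i^*]) > 1$ for some $i^*$ but $\deg(s[j]) = 1$ for every $j$, then the edge count is simultaneously $\sum_i \deg(q[i]) > n \geq \ell$ and $\sum_j \deg(s[j]) = \ell$, a contradiction. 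Your proposal omits both the pairing with a high-degree $s$-vertex and the counting argument.

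The gap is concrete. Your ``deletion subcase'' is valid only when the adjacent column already occupies the endpoint's row, i.e., when that $s$-vertex already has degree $\geq 2$; you never establish that either endpoint of the vertical block has this property. When the warping path enters and exits column $i^*$ diagonally, both endpoint moves fall in the ``substitution'' subcase, and a substitution need not decrease cost: for $q$ monotone increasing, replacing $(q[i^*],s[a])$ by $(q[i^*-1],s[a])$ when $s[a]=1$ strictly \emph{increases} cost, and likewise replacing $(q[i^*],s[b])$ by $(q[i^*+1],s[b])$ when $s[b]=0$. Take $q=(0.3,\,0.33,\,0.36,\,0.39)$, $s=(0,1,0,0)$, and the matching $M=\{(1,1),(2,2),(2,3),(3,4),(4,4)\}$ with $\deg(q[2])=2$: the hypothesis holds ($\min_i\max\{|q[i]|,|q[i]-1|\}=0.61 > 0.09 = R$), yet the left substitution raises $\Cost$ from $2.05$ to $2.08$ and the right substitution raises it to $2.08$ as well, and with $b=a+1$ you cannot do both without unmatching $q[2]$. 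The cheaper matching (the diagonal, cost $1.72$) is only reached by a re-route that passes through the degree-$2$ vertex $s[4]$, exactly the paper's Claim~\ref{claim:2_o1} construction. This also exposes the error in your ``$s[a]\neq s[b]$'' case: a \emph{swapped} edge contributes only the cost \emph{difference}, which is bounded by $R$ and may be of either sign, not the full $\max\{|q[i^*]|,|q[i^*]-1|\}$ you claim as saving; that full saving is obtained only by a genuine deletion. Finally, the telescope you sketch for the constant-block subcase stays inside column $i^*$, but the obstruction is cross-column: when $q$ has equal adjacent characters (allowed by the hypothesis), every single-column substitution changes cost by exactly zero, and the strict decrease can only be realized by a re-route to a distant degree-$>1$ vertex of $s$, which your argument never locates.
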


The intuition for Lemma \ref{lemma:single-direction_o1_intro} is that, with the monotonic property and equation (\ref{equ:single_direction_o1_intro}) guaranteed in our query construction, we can ensure that there do not exist characters $s[i]\in s$ and $q[j] \in q$ where $\revise{\deg}(s[i])>1$ and $\revise{\deg}(q[j])>1$ are satisfied at the same time. Fig \ref{fig:rematching_o1} in a later section illustrates that, for such a pair of $s[i]$ and $q[j]$, we can always construct a matching with lower cost where one of their degrees is decreased to $1$. Therefore, either all characters in $s$ or all characters in $q$ would have degree $1$. Since $\revise{\len}(q) = n \geq \revise{\len}(s)$, we know that $\revise{\deg}(q[i]) = 1$ for all characters $q[i]$ in $q$.

\needreview{Fig.~\ref{fig:our_techniques} (a) and Fig.~\ref{fig:our_techniques} (b)} satisfy sequence monotonicity, since the query sequences in both figures are monotonic sequences of length $n$ and \revise{for $x$ in $\{1, 2\}$, $\min_{i\in [n]}\max\{|q^{(x)}[i]-0|, |q^{(x)}[i]-1|\} = \frac{3}{5}> (\frac{2}{5} - \frac{1}{3}) = \max_{i,j\in [n]}|q^{(x)}[i]-q^{(x)}[j]|$.}

\stitle{Sequence 0/1-preference $\rightarrow$ 0/1-uniqueness.}
We observe that property 3) can be guaranteed by the \emph{0/1-preferred} design of the query sequences.
If all characters in the query sequence are less than (or greater than) $\frac{1}{2}$, then we can guarantee 1-uniqueness (or 0-uniqueness) of the input sequence.
Intuitively, this would hold because, if all characters in the query sequence are less than (or greater than) $\frac{1}{2}$, matching them to 0's (or 1's) in the input sequence yields lower cost than matching to 1's (or 0's). 
\needreview{Fig.~\ref{fig:our_techniques} (a) and Fig.~\ref{fig:our_techniques} (b)} satisfy 0-preference, since all characters in query sequences (either $a = \frac{1}{3}$ or $b= \frac{2}{5}$) are less than $\frac{1}{2}$.

\stitle{Query Construction.} We now propose the following design of the query sequence.
We first need a single 0 query and a single 1 query to obtain the number of 1's and 0's in the input sequence.
Let $a, b$ be two fractional characters that satisfy $0 < b-a < a < b < \frac{1}{2}$ and the denominators of $a, b$ are co-prime.
Without loss of generality, we can assume $a = \frac{1}{3}$ and $b = \frac{2}{5}$.
We will use $a, b$ as the extra characters to construct the query sequences.
In particular, the rest of the query sequences (other than the $0$ query and the $1$ query) consist of queries $\mc{Q}$ in the form of $q^{(i)} = a^{n-i}b^{i}$, where $i = 1, \dots, n$.
This query construction satisfies \emph{sequence monotonicity} and \emph{sequence 0/1-preference} properties. Now we need to prove it also satisfies \emph{isomorphism}.

\begin{lemma}[Refers to Lemma \ref{lemma:4_o1}]
\label{lemma:isomorphism_intro}
For any input sequence $s$, there exists an \revise{ set of isomorphic matchings} $\mc{M}$ where $M_i\in \mc{M}$ is optimal for query $q^{(i)} \in \mc{Q}$.
\end{lemma}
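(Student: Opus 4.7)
The plan is to exhibit a single combinatorial matching structure that is simultaneously optimal for every $q^{(i)}\in\mc{Q}$; since it defines $M_i$ for each $i$ via the same set of query-index/input-index pairs, these matchings are isomorphic by definition.

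First I invoke Lemma \ref{lemma:single-direction_o1_intro} to conclude that in every optimal matching each query character has degree exactly $1$ (input-uniqueness), and I invoke the $0$-preference property (all query characters satisfy $q[j]<1/2$) to conclude that there exists an optimal matching in which every $1$ in $s$ has degree $1$ (1-uniqueness). Writing $p$ for the number of $1$'s in $s$ and $o_a(i)$ for the number of $1$'s in $s$ that get matched into the $a$-prefix $[1,n-i]$ of $q^{(i)}$, a direct bookkeeping using these two properties yields
\[
\Cost(M_i)=a(n-i)+bi+p(1-2b)+2(b-a)\,o_a(i),
\]
so, because $b>a$, minimizing cost at query $q^{(i)}$ is equivalent to minimizing $o_a(i)$.

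I then define a canonical structure $M^*$. Let $k_1<\cdots<k_p$ be the positions of the $1$'s in $s$, and let $j_l$ denote the query position paired with $s[k_l]$ under a 1-unique matching. Monotonicity combined with 1-uniqueness force $j_l=k_l+\sum_{m=0}^{l-1}e_m$, where $e_m\ge 0$ is the number of ``extra'' query positions loaded onto the $0$'s inside the $m$-th gap (gap $0$ precedes $k_1$, gap $m$ lies strictly between $k_m$ and $k_{m+1}$ for $1\le m<p$, and gap $p$ follows $k_p$), subject to $\sum_m e_m=n-\len(s)$ and $e_m=0$ whenever the corresponding gap contains no $0$. Let $m_0$ be the smallest index such that gap $m_0$ is nonempty; define $M^*$ by setting $e_{m_0}^*=n-\len(s)$, $e_m^*=0$ for $m\ne m_0$, and by distributing those extras within gap $m_0$ in any fixed manner (say, all on its first $0$).

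The key observation is that $j_l^*\ge j_l$ for every $l$ and every other valid 1-unique matching, since placing all extras in the earliest feasible gap maximizes the partial sum $\sum_{m<l}e_m$ at every index $l$. Consequently $\{l:j_l^*\le n-i\}\subseteq\{l:j_l\le n-i\}$ for every $i$, hence $o_a(i,M^*)\le o_a(i,M)$, and $M^*$ simultaneously attains the minimum of $o_a(i)$---and thus the minimum cost---at every query $q^{(i)}$. Projecting $M^*$ onto each $q^{(i)}$ produces the desired isomorphic family $\{M_i\}$. The main obstacle is the degenerate case $s=1^{\len(s)}$, where 1-uniqueness is infeasible because there is no $0$ to absorb the surplus; there, however, the cost $\sum_j (1-q^{(i)}[j])$ is structure-independent, so any fixed monotonic matching already yields an isomorphic optimal family. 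Two further minor checks---that $M^*$ is well-defined (its chosen gap contains a $0$ by the definition of $m_0$), and that the boundary conditions $j_1=1$ when $s[1]=1$ and $j_p=n$ when $s[\len(s)]=1$ are automatically respected by the construction---complete the argument.
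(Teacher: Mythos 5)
Your proof is correct, and it constructs essentially the same canonical matching as the paper's $M_i^*$ (extras loaded on the earliest $0$, namely the first $0$ of $s$), but it proves optimality by a genuinely different route. The paper introduces a local exchange operation (``shifting''), proves it never decreases cost (Claim \ref{claim:3_o1}), and then shows every matching satisfying the degree constraints of Lemmas \ref{lemma:single-direction_o1} and \ref{lemma:3_o1} is reachable from $M_i^*$ by a sequence of shifts (Claim \ref{claim:4_o1}). You instead write out the exact cost functional, observing that after Lemmas \ref{lemma:single-direction_o1} and \ref{lemma:3_o1} every optimal matching is determined by the nonnegative slack vector $(e_m)$, and that
\[
\Cost(M_i)=a(n-i)+bi+p(1-2b)+2(b-a)\,o_a(i),
\]
which reduces simultaneous optimality to showing that a single $(e_m)$ minimizes $o_a(i)$ for every $i$. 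Pushing all the slack into the earliest nonempty gap makes every $j_l$ as large as possible, hence makes $o_a(i)$ as small as possible uniformly in $i$. This gives the same isomorphic family but skips the exchange-argument machinery; in exchange you must do the algebraic bookkeeping that the paper avoids. One cosmetic note: the main theorem already dispenses with the all-$0$ and all-$1$ inputs before this lemma is invoked, so your degenerate-case discussion, while correct, is not strictly necessary; and the boundary conditions $j_1=1$, $j_p=n$ are automatic not only when $s[1]=1$ or $s[\len(s)]=1$ but in all cases, since $q$-uniqueness and monotonicity force $q[1]\mapsto s[1]$ and $q[n]\mapsto s[\len(s)]$.
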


\ifTIT
\begin{figure}
    \centering
\includegraphics[width=0.7\linewidth]{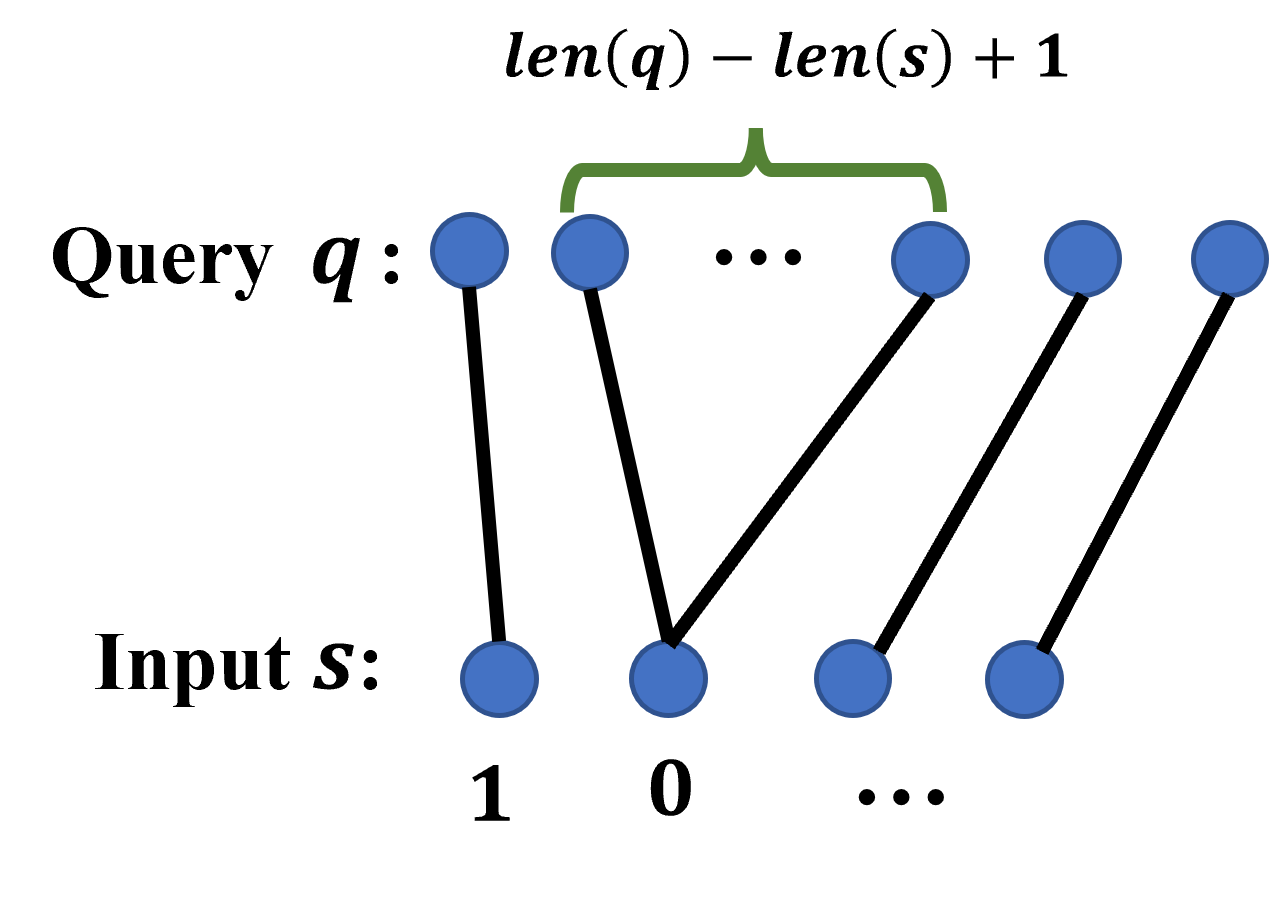}
    \caption{The position of the first $0$ in $s$ and the length of both sequences can determine the structure of $M_i$.}
    \label{fig:first_0}
\end{figure}
\fi

Lemma \ref{lemma:isomorphism_intro} guarantees the isomorphism property of the constructed query set $\mc{Q}$. 
Here we construct an isomorphic set of matchings $M_i \in \mc{M}$ such that only the first $0$ in the input sequence has degree greater than $1$, while all other characters in the matching are of degree $1$. 
\needreview{Fig.~\ref{fig:our_techniques} (a) and Fig.~\ref{fig:our_techniques} (b)} are instances of $M_1$ and $M_2$, where the matchings in both figures are isomorphic to each other. 
\ifarxiv
\begin{wrapfigure}{r}{5.5cm}
\includegraphics[width=5.5cm]{tex_inputs/first_0_multi.png}
\vspace{-1cm}
\caption{The position of the first $0$ in $s$ and the length of both sequences can determine the structure of $M_i$.}\label{fig:first_0}
\end{wrapfigure}
\fi
Note that in this construction, the structure of the matchings is only determined by the position of the first $0$ in the input sequence and the length of both sequences (see Fig \ref{fig:first_0}).
Since all query sequences in $\mc{Q}$ have the same length, an isomorphism of constructed matchings is naturally guaranteed.

To prove the optimality of the $M_i$, 
we introduce the notion of a  \emph{``shifting'' operation}.
Consider two 0's in the input sequence.
If any character between them has degree 1 and the first 0 has degree greater than 1, by running the shifting operation we decrease the degree of the first 0 by 1 and increase the degree of the last 0 by 1, while preserving the degree of all other characters.
\needreview{Fig.~\ref{fig:our_techniques} (c)} illustrates an example of the shifting operation.

\begin{claim}[Refers to Claim \ref{claim:3_o1}]
For our constructed query set $\mc{Q}$, a shifting operation would not reduce the total cost of the matching.
\end{claim}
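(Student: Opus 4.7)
The plan is to compute the cost change $\Delta := \Cost(M') - \Cost(M)$ directly, where $M$ is the matching before the shift and $M'$ is after, and to show $\Delta \geq 0$ by exploiting the monotonicity of the query $q^{(i)} = a^{n-i} b^i$ together with a telescoping identity. No property of $a,b$ beyond $0 \leq a \leq b \leq 1$ actually enters the argument; all that is used is that $q$ is non-decreasing and lies in $[0,1]$.

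To fix notation, suppose the first $0$ sits at position $i$ of $s$ with degree $k \geq 2$, matched in $M$ to $q[j], q[j+1], \ldots, q[j+k-1]$, and that the second $0$ sits at position $i' \geq i+1$ with degree $m$, matched to $q[j'], \ldots, q[j'+m-1]$. By hypothesis each intermediate character $s[p]$ for $i < p < i'$ has degree $1$, so it is matched to $q[\mathrm{old}(p)]$ where $\mathrm{old}(p) := j+k+(p-i-1)$, and the largest such query index is $\mathrm{old}(i'-1) = j'-1$. In $M'$ the edge $(s[i], q[j+k-1])$ is removed, the edge $(s[i'], q[j'-1])$ is added, and every intermediate $s[p]$ is rematched to $q[\mathrm{old}(p)-1]$. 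Since this reassignment preserves monotonicity in both coordinates and every character keeps positive degree, $M'$ is a valid matching.

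Next I split $\Delta$ into an endpoint piece and a middle piece. Since $s[i] = s[i'] = 0$, the endpoint piece is $\Delta_{\mathrm{end}} = q[j'-1] - q[j+k-1]$, which is $\geq 0$ by monotonicity of $q^{(i)}$. For the middle, set $\delta_p := q[\mathrm{old}(p)] - q[\mathrm{old}(p)-1] \geq 0$. Because $|s[p] - q| = q$ when $s[p] = 0$ and $|s[p] - q| = 1 - q$ when $s[p] = 1$, the per-edge contribution from $s[p]$ to $\Delta$ equals $-\delta_p$ if $s[p] = 0$ and $+\delta_p$ if $s[p] = 1$.

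The telescoping identity closes the argument. The indices $\mathrm{old}(p)$ for $p = i+1, \ldots, i'-1$ are exactly the consecutive integers $j+k, \ldots, j'-1$, so $\sum_{p} \delta_p = q[j'-1] - q[j+k-1] = \Delta_{\mathrm{end}}$. Therefore $\Delta = \Delta_{\mathrm{end}} + \sum_{p:\, s[p]=1} \delta_p - \sum_{p:\, s[p]=0} \delta_p = 2 \sum_{p:\, s[p]=1} \delta_p \geq 0$, proving the claim. The only step requiring care is the bookkeeping that turns a shift by one query position into a clean per-edge increment $\delta_p$ with the right sign pattern; once that is in place, the cancellation between $\Delta_{\mathrm{end}}$ and the intermediate $0$-losses is immediate by telescoping, leaving only a non-negative contribution from the intermediate $1$'s.
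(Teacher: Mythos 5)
Your proof is correct, and it takes a genuinely different route from the paper's. The paper's argument leans heavily on the specific structure $q^{(i)} = a^{n-i}b^{i}$: it does a case split on whether the shifted window $q[z],\ldots,q[w-1]$ lies entirely inside the $a$-block or the $b$-block (Case 1, cost unchanged) or straddles the single $a$-to-$b$ boundary (Case 2), and in Case 2 it isolates the one intermediate character $s[t]$ whose edge crosses that boundary and verifies $a + |b - s[t]| \le |a - s[t]| + b$. Your argument instead computes the cost change $\Delta$ per $s$-character as $\pm\delta_p$ with $\delta_p = q[\mathrm{old}(p)] - q[\mathrm{old}(p)-1]$, observes that the endpoint term $q[j'-1] - q[j+k-1]$ equals $\sum_p \delta_p$ by telescoping, and collapses everything to $\Delta = 2\sum_{p : s[p]=1}\delta_p \geq 0$. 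This buys you two things: it is case-free, and it only uses that the query is monotone with entries in $[0,1]$ (so the $a^{n-i}b^i$ structure is irrelevant to this particular claim, which is worth knowing, though it is still needed elsewhere, e.g., in Lemma \ref{lemma:single-direction_o1}). The paper's Case 2 accounting is a bit delicate when the boundary sits at $q[w]$ (so that no intermediate edge crosses it); your telescoping handles that and the $y = x+1$ degenerate case uniformly without extra words. The only cosmetic nit is that you describe $s[x]$, $s[y]$ as ``the first $0$'' and ``the second $0$,'' whereas the shifting operation as defined in the paper allows intermediate degree-$1$ characters to include $0$'s as well — but your accounting already handles arbitrary binary $s[p]$ in between, so nothing breaks.
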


\begin{claim}[Informal, Refers to Claim \ref{claim:4_o1}]
Given input sequence $s$, query $q^{(i)}\in \mc{Q}$ and any optimal matching $M_i^*$ between $s$ and $q^{(i)}$, we can obtain $M_i^*$ by applying a series of shifting operations to $M_i$. 
\end{claim}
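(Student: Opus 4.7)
The plan is to analyze the degree structure of any optimal matching $M_i^*$ and argue that $M_i^*$ differs from $M_i$ only in how the total excess degree is distributed among the $0$'s of $s$; I can then realize any such difference as an explicit sequence of shifts. First I apply Lemma~\ref{lemma:single-direction_o1_intro} to the monotonic query $q^{(i)} = a^{n-i}b^i$ (which satisfies the hypothesis since $\min\{a,1-b\} > b-a$) to conclude that every character of $q^{(i)}$ has degree $1$ in any optimal matching. Second, using the $0$-preference property (all entries of $q^{(i)}$ are below $\tfrac{1}{2}$, so pairing a query character with a $0$ is strictly cheaper than with a $1$) together with a local-exchange argument, every $1$ in $s$ also has degree $1$ in any optimal matching. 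Hence both $M_i$ and $M_i^*$ are completely parameterized by a vector of degrees on the $0$'s of $s$, with each entry at least $1$ and with a fixed total $T := n - (\ell - m)$, where $\ell = \len(s)$ and $m$ is the number of $0$'s in $s$.

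Let $d_j$ denote the degree of the $j$-th $0$ in $M_i^*$; note that in $M_i$ the first $0$ has degree $T - (m-1) = n - \ell + 1$ and all other $0$'s have degree $1$. I then transform $M_i$ into $M_i^*$ by processing the $0$'s from right to left: for $j = m, m-1, \ldots, 2$, apply exactly $(d_j - 1)$ shifting operations with source the first $0$ and target the $j$-th $0$. When the schedule reaches index $j$, all $0$'s at positions $2, \ldots, j-1$ are still untouched (hence still have degree $1$), and no shift ever alters the degree of a $1$, so every character strictly between the first $0$ and the $j$-th $0$ has degree $1$ at that moment; the shift from the first $0$ to the $j$-th $0$ is therefore applicable by the definition of the operation.

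After all shifts are performed, each $0$ at position $j \geq 2$ has the prescribed degree $d_j$, and by conservation of total degree (each shift moves one unit of degree from one $0$ to another) the first $0$ acquires the target degree $d_1$. Since a matching compatible with the degree constraints from the first step is uniquely determined by its degree vector on the $0$'s, the resulting matching coincides with $M_i^*$. The main obstacle is guaranteeing applicability of every shift in the sequence, which is handled precisely by the right-to-left schedule: any earlier ordering could create an intermediate $0$ of degree greater than $1$ and block subsequent shifts. As a byproduct, Claim~\ref{claim:3_o1} (shifts do not reduce cost) combined with the optimality of $M_i^*$ forces every shift in the constructed sequence to be cost-preserving, which simultaneously confirms that $M_i$ itself is an optimal matching.
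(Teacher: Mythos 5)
Your proof is correct, and it reaches the desired transformation by a genuinely different scheduling of the shifting operations than the paper uses. The paper pushes excess degree through \emph{adjacent} $0$'s from left to right: move the surplus from the first $0$ to the second, then from the second to the third, and so on, so that the excess is ``flushed'' forward, with each shift spanning only the $1$'s between two neighbouring $0$'s. You instead fix the source at the first $0$ throughout and perform long-range shifts directly to the $j$-th $0$, sweeping $j$ from $m$ down to $2$; the right-to-left order is precisely what makes every shift applicable, because the intermediate $0$'s at positions $2,\ldots,j-1$ have not yet been targeted and therefore still carry degree $1$, and the source's degree decreases monotonically from $n-\ell+1$ down to $d_1\geq 1$, so it stays strictly above $1$ until the final shift. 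Both routes rest on the same preliminary facts (all of $q^{(i)}$ has degree one, all $1$'s in $s$ have degree one, and the degree vector on the $0$'s uniquely pins down the matching) and both invoke Claim~\ref{claim:3_o1} at the end to conclude that each shift is cost-preserving and that the canonical matching is optimal. What your schedule buys is a cleaner applicability check---you never have to track the running degree at an intermediate $0$ that is first a sink and later a source, which is the one slightly delicate bookkeeping step in the paper's adjacent-pair version. One small slip in the invocation of Lemma~\ref{lemma:single-direction_o1_intro}: for $c<\tfrac12$, $\max\{|c-0|,|c-1|\}=1-c$, so the left-hand side of the hypothesis for $q^{(i)}=a^{n-i}b^i$ is $\min\{1-a,1-b\}=1-b$, not $\min\{a,1-b\}$; the required inequality $1-b>b-a$ still holds since $2b<1<1+a$, so your conclusion is unaffected.
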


Combining the above two claims, we can show that the $M_i$ are always optimal, which proves Lemma \ref{lemma:isomorphism_intro}. 
So far, the constructed query set satisfies three properties -- isomorphism, input-uniqueness, and 0/1-uniqueness.
Further details of our algorithm to recover the input sequence are given in later sections (see Algorithm \ref{alg:dtw_on_recover}).

\section{Recovery with Adaptive Queries}
\label{sec:adaptive}

\subsection{General Framework}
\label{sec:coordinate_descent}

\begin{theorem}[Coordinate Descent Framework]
\label{theorem:adaptive_general}
\revise{\linelabel{coordinate-descent}For a given distance oracle $\revise{\dist}(\cdot, \cdot)$, a constant-sized alphabet $\Sigma$ and any input sequence $s \in \Sigma^i$ where $0 \leq i \leq n$,} there exists an adaptive algorithm which returns a sequence $s'$ such that its distance to the input sequence $s$ satisfies $\revise{\dist}(s, s')=0$ using $\poly(n)$ queries, given that the following two conditions are true:
\begin{itemize}
    \item \revise{There exists a positive constant $c$ (independent of $n$)}, $\forall s \in \Sigma^i, q \in  \Sigma^{\mc{O}(n)}, \text{where}~ \revise{ 0 \leq i \leq n ~\text{and}~ \dist(s, q)>0}$, we can find a sequence $q'$ within $\poly(n)$ queries such that $\revise{\dist}(s, q) \geq \revise{\dist}(s, q') + c$;
    \item $\forall s \in \Sigma^i, q \in  \Sigma^{\mc{O}(n)}$, $\revise{\dist}(s, q) \leq \poly(n)$.
\end{itemize}
\end{theorem}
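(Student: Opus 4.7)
The plan is to view the algorithm as a potential-function argument where the potential is $\mathrm{dist}(s, q)$ for the current query $q$. The two hypotheses of the theorem are exactly what is needed to make coordinate descent terminate in a polynomial number of outer iterations, each of which costs a polynomial number of oracle calls.

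First I would initialize with an arbitrary query $q_0 \in \Sigma^{\mathcal{O}(n)}$, for instance the empty sequence $\phi$ or any fixed sequence of length $\mathcal{O}(n)$ in the alphabet. By the second hypothesis, $\mathrm{dist}(s, q_0) \leq \mathrm{poly}(n)$, so the starting value of the potential is polynomially bounded. At iteration $t$, if the current query $q_t$ already satisfies $\mathrm{dist}(s, q_t) = 0$ we simply return $s' := q_t$, which gives the zero-distance guarantee claimed. Otherwise $\mathrm{dist}(s, q_t) > 0$, so the first hypothesis applies and we can, using $\mathrm{poly}(n)$ oracle calls, locate a sequence $q_{t+1}$ with
\[
\mathrm{dist}(s, q_{t+1}) \leq \mathrm{dist}(s, q_t) - c,
\]
where $c > 0$ is the universal constant from the hypothesis. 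We then update $q_t \leftarrow q_{t+1}$ and iterate.

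To bound the outer iteration count, I would observe that the potential is non-negative and decreases by at least the constant $c$ per iteration, while starting at a value of at most $\mathrm{poly}(n)$. Hence the loop terminates after at most $\mathrm{poly}(n)/c = \mathrm{poly}(n)$ iterations, at which point the potential has been driven to zero. Multiplying by the $\mathrm{poly}(n)$ per-iteration query cost gives a total query complexity of $\mathrm{poly}(n) \cdot \mathrm{poly}(n) = \mathrm{poly}(n)$, matching the statement of the theorem. A small subtlety is to argue that the potential can actually reach $0$ rather than getting stuck at some small positive value: this is handled by the fact that the first hypothesis applies \emph{whenever} $\mathrm{dist}(s, q_t) > 0$, so descent is possible at every positive value of the potential, and because each step subtracts at least the fixed constant $c$, the potential must eventually equal $0$ exactly (it cannot hover in $(0, c)$ forever).

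The main obstacle I anticipate is cosmetic rather than conceptual: one has to be careful that the intermediate queries $q_t$ remain in $\Sigma^{\mathcal{O}(n)}$ so that both hypotheses continue to apply at every step, and that the ``within $\mathrm{poly}(n)$ queries'' clause of the first hypothesis is interpreted as a bound on oracle calls (not on algorithmic work), so the aggregated query count is genuinely $\mathrm{poly}(n)$. Beyond this bookkeeping the proof reduces to the standard decreasing-potential argument sketched above, and no structural property of the specific distance (edit, DTW, Fréchet, EMD, cascaded norms, $A$-norms) is used beyond what the two stated conditions already abstract.
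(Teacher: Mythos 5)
Your proposal is correct and is essentially the same local-search / decreasing-potential argument the paper gives: initialize, use the second condition to bound the starting potential by $\poly(n)$, repeatedly invoke the first condition to decrease it by at least the constant $c$, and multiply the $\poly(n)/c$ iteration count by the $\poly(n)$ per-iteration query cost. The small subtlety you flag about the potential landing exactly at $0$ (rather than hovering in $(0,c)$) is a reasonable extra observation, and the rest matches the paper's proof sketch.
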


\noindent
\textit{Proof sketch}: The two above conditions naturally imply a local search algorithm. To recover the sequence $q$, we perform the following steps: 1) randomly initialize $q$. 2) find $q'$ such that $\revise{\dist}(s, q) > \revise{\dist}(s, q')$. 3) set $q$ to $q'$ and repeat 2) to 3).
The algorithm terminates if $\revise{\dist}(s, q) = 0$, and outputs the final $q$ as the sequence $s'$.

Since we reduce $\revise{\dist}(s, q)$ by at least \revise{a positive constant $c$} in each iteration, and \revise{\linelabel{poly-2}$\revise{\dist}(s, q) \leq \poly(n)$},  the algorithm terminates in at most \revise{\linelabel{poly-3}$\poly(n) / c$} iterations.
Therefore, the total number of queries is \revise{\linelabel{poly-4}$\mc{O}(\poly(n))$}.
\qed

\bigskip

The above local search algorithm can be applied to all aforementioned distances.
Specifically, the complexity for the edit distance, DTW distance and \frechet distance is $\mc{O}(n^2)$, $\mc{O}(n^2)$ and $\mc{O}(n)$, respectively.
A detailed instantiation of the algorithm on these distances can be found in Appendix \ref{sec: Local Search Algorithm Instantiation}.

\revise{
\noindent \textbf{Remark.}
As stated in the theorem, the objective of this coordinate descent framework is to reduce $\dist(s, s')$ to 0, which reflects our ``zero distance to input'' recovery guarantee.
We remark that, for distance function which is a \emph{metric}, this guarantee implies ``recover to equivalence class'', while for distances such as DTW where the triangle inequality does not apply, there exist sequences that can be distinguished whereas the distance is 0.
Such an example includes sequence 101 and 1011.
}

\subsection{Edit Distance}
We show that a binary input sequence with maximum length $n$ can be adaptively recovered using at most $\revise{n + \log n + c  \in \mc{O}(n)}$ queries to the edit distance oracle (where $c$ is a constant), by the following theorem.

\begin{theorem}[Adaptive Strategy for Edit Distance]
\label{theorem:adaptive_edit}
For a binary alphabet $\bits$, and any input sequence $s \in \bits^{\ell}$ with $k$ runs where $0 \leq \ell \leq n$, there exists an adaptive algorithm to recover the input sequence $s$ using at most \revise{$2k \log (n/k) + \log n + k + c$} queries $\revise{\mc{Q}}$ of length $\leq n$ and the exact Levenshtein distance of $s$ to each query sequence $q_i \in \revise{\mc{Q}}$, where the query sequences use no extra characters.
\end{theorem}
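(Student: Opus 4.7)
The plan is to exploit the classical characterization that, assuming $\len(q) \le \len(s)$, one has $d_L(s,q) = \len(s) - \len(q)$ if and only if $q$ is a subsequence of $s$. Thus, once $\ell := \len(s)$ is known, every oracle call becomes a \emph{subsequence test} on a query of our choice, and the whole algorithm is driven by cleverly chosen subsequence probes.

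\emph{Length, first character, and number of runs.} I would first query the empty sequence $\phi$ to obtain $\ell$ in a single call (if $\ell = 0$ we are done). To pin down the starting character $c_1$ and the number of runs $k$, I introduce the alternating patterns $\alpha_m^c := c\,\bar c\, c\, \bar c \cdots$ of length $m$. A direct runs argument shows that $\alpha_m^{c_1}$ is a subsequence of $s$ iff $m \le k$, whereas $\alpha_m^{\bar c_1}$ is a subsequence iff $m \le k-1$. Running exponential search followed by binary search on $m$ for each starting character identifies $c_1$ and $k$ in $\log n + O(1)$ queries.

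\emph{Run-length recovery by binary search.} Inductively, suppose the first $i{-}1$ runs $p_{i-1} := c_1^{r_1} c_2^{r_2}\cdots c_{i-1}^{r_{i-1}}$ have been recovered, with $L_{i-1} := |p_{i-1}|$. To determine $r_i$, I would probe with
\[
q_j \;:=\; p_{i-1}\,\cdot\, c_i^{\,j}\,\cdot\, \alpha_{k-i}^{\bar c_i}, \qquad j = 1, 2, 4, \ldots
\]
The key structural lemma to establish is that $q_j$ is a subsequence of $s$ if and only if $j \le r_i$. The ``if'' direction is the obvious embedding (match $p_{i-1}$ against the first $i-1$ runs, the $c_i^j$ block inside run $i$, and one character from each later run to realise the tail alternation). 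For ``only if'', I would argue that any embedding of more than $r_i$ copies of $c_i$ must place at least one $c_i$ inside some run $i + 2m$ with $m\ge 1$; after that position only $k - i - 2m$ runs of $s$ remain, and since each character of $\alpha_{k-i}^{\bar c_i}$ must be drawn from a fresh run of appropriate parity (consecutive alternating characters differ), this leaves too few runs to realise the suffix, a contradiction. Given the lemma, exponential-then-binary search over $j$ recovers $r_i$ in $2\log r_i + O(1)$ queries, and all probed $q_j$ can be kept of length $\le \ell \le n$ by capping $j$ at $\ell - L_{i-1} - (k-i)$.

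\emph{Accounting and the main obstacle.} Summing the costs gives $1$ query for $\ell$, $\log n + O(1)$ for $(c_1, k)$, and $\sum_{i=1}^k (2\log r_i + O(1))$ for the $r_i$'s. Since $\sum r_i = \ell \le n$, Jensen's inequality (equivalently AM--GM) yields $\sum_i \log r_i \le k \log(\ell/k) \le k\log(n/k)$, producing the desired bound $2k\log(n/k) + \log n + k + c$. The hard part will be proving the structural lemma in full generality: in particular, I must rule out \emph{every} non-greedy embedding of $q_j$ when $j > r_i$, including the case where $p_{i-1}$ itself is embedded non-greedily (pushing trailing $c_{i-1}$'s into runs $i+1, i+3, \ldots$, which would free some $c_i$'s from run $i$ for use later), and the edge case $i = k$ where $\alpha_{k-i}^{\bar c_i}$ is empty so the obstruction comes solely from the scarcity of $c_i$'s in the remaining suffix. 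Both sub-cases should succumb to the same run-budget counting argument, but must be written out carefully.
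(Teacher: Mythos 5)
Your proposal follows essentially the same route as the paper's proof: reduce to subsequence tests via the characterization $d_L(s,q)=|\len(s)-\len(q)|$ iff one is a subsequence of the other, first pin down the condensed expression (starting character and number of runs $k$) with $\log n + O(1)$ alternating-pattern probes, then recover each run length by exponential-then-binary search with probes of the form \emph{(recovered prefix)}$\cdot c_i^j\cdot$\emph{(alternating suffix)}, and close with the AM--GM/Jensen accounting $\sum_i 2\log r_i + O(1) \le 2k\log(n/k)+O(k)$. Your structural lemma (once the prefix and condensed expression are fixed, $q_j$ embeds iff $j\le r_i$) is exactly the implicit claim behind the paper's ``expand run by run'' step, and your worry about non-greedy embeddings dissolves under the run-budget argument: because $q_j$ and $s$ share a $k$-run condensed expression, if $f(i)$ and $g(i)$ denote the $s$-run indices of the first and last character of the $i$-th run of $q_j$ under an embedding, then $f(i+1)\ge g(i)+1$, $g(i)\ge f(i)$, $f(1)\ge 1$, $g(k)\le k$ forces $f(i)=g(i)=i$ for all $i$, so each run of $q_j$ must land entirely inside the corresponding run of $s$. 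The one slip is in Part~1: exponential-then-binary search on $m$ \emph{for each starting character} costs up to $4\log n$ queries, not $\log n + O(1)$; to match the theorem's $\log n + c$ term you should instead (as the paper does) run a single binary search over $m\in[0,n]$ with, say, the $0$-starting alternating pattern, which yields $m^*\in\{k,k-1\}$, and spend one extra probe with the $1$-starting pattern of length $m^*+1$ to disambiguate $c_1$ and $k$.
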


\begin{proof} 
The proof makes use of the following claim.
\begin{claim}
\label{claim:edit_subseq}
Given two sequences $x$ and $y$, the edit distance $d_{L}(x, y) = |\revise{\len(x)} - \len(y)| $ if and only if $x$ is a subsequence of $y$ or $y$ is a subsequence of $x$.
\end{claim}

\noindent
\textit{Proof of claim.}
\revise{Without loss of generality}, we can assume that $\revise{\len(x)} \geq \len(y)$. Since each insertion, deletion or substitution operation can change the sequence length by at most $1$, we have $d_{L}(x, y) \geq t$ where $(t=\len(x) - \len(y))$. 
If $y$ is a subsequence of $x$, we can obtain $y$ by performing $t$ deletions on $x$. Since $d_{L}(x, y) \geq t$, we have $d_{L}(x, y) = t$.
If $y$ is not a subsequence of $x$, we show that $d_{L}(x, y) > t$. To transform $x$ to $y$ we would need at least $t$ deletions. Since $y$ is not a subsequence of $x$, we cannot obtain $y$ by merely performing $t$ deletions on $x$, implying that $d_{L}(x, y) > t$.
\bigskip

\noindent
Next, to prove Theorem \ref{theorem:adaptive_edit}, we observe that for any sequence on a binary alphabet, the first run starts with either 0 or 1.
That is, the \revise{\linelabel{dense-1}condensed expression} of a binary sequence is in the form of $1010...$ or $0101...$.
Let the number of runs be $k$.
The first part of our adaptive recovery algorithm is determining the input sequence's \revise{\linelabel{dense-2}condensed expression}.
To do so, we need the following set of $2n+1$ queries, $\{ \phi, 0, 1, 01, 10, 010, \dots \}$, where the maximum length of the query in this set is $n$.
The length of the input sequence $\ell$ can be determined by querying the empty sequence $\phi$.
The \revise{\linelabel{dense-3}condensed expression} of the input sequence is equal to the query in the query set of maximum length such that $k = \ell - r$, where $k$ is the length of this query and $r$ is the query result from the oracle. 
Since we are adaptively querying the oracle, we do not require all $2n+1$ queries.
By using our querying strategy, the query complexity of this part can be reduced to $\log n + c$.
To see this, we take out all $n$ queries beginning with $0$ from the query set and adaptively query the oracle using binary search\revise{\linelabel{revise_theorem:adaptive_edit_log_n} to find the longest query sequence such that the edit distance between this query and the input sequence equals the length difference between two sequences. 
Next, we add an $1$ to the left side (or the most significant bit) of the longest query we just selected, then query the oracle to see if the distance is smaller.
The query sequence with the smaller edit distance is therefore the \revise{\linelabel{dense-4}condensed expression} of the input sequence.
Since a $\phi$ query is required at the beginning,
the entire process requires at most $\log n + 2 \leq \log n + c$ queries.}

The second part of our algorithm is to recover the sequence from the \revise{\linelabel{dense-5}condensed expression} \revise{via expanding each run} by inserting $1$'s (or $0$'s) into the corresponding location.
We have obtained the number of runs, which is $k$.
\revise{According to the claim, if any one of the runs of the query sequence contains more characters than that of the input sequence, meaning the query sequence is no longer the subsequence of the input sequence, then we can observe from the query results.
Therefore, we can recover the input sequence run by run.
The na\"ive way of achieving this is to iterate over runs and insert one character per time to a run until scanning and fulfilling the entire sequence, which results in at most $n$ queries.
Combining the first part of recovering the condensed expression, this approach gives us the overall query complexity of $n + \log n + c  \in \mc{O}(n)$.
}

\revise{An alternative approach to recover the runs is to determine the number of characters in each run using line search and binary search.
That is, we increase the number of characters in a run exponentially (by a factor of 2) and then look back to find the exact number by binary search.
Compared to directly using binary search to find the length of the run within the range of $[2, n-k]$, the complexity analysis of our approach can avoid a potential $n\log n$ term.
To give an example of this approach, suppose we have a run of length 13.
To recover this run, instead of using 13 queries by the na\"ive approach, we can make 7 queries with the following numbers of 1's: 2, 4, 8, 16, 12, 14, and 13, respectively. 
For a run with length $m$, the worst-case query complexity of this approach is $\lceil2\log m\rceil$.
}
Let $t_i$ be the number of characters in each $1$ run or $0$ run.
Then we have $\sum_{i=1}^k t_i = \ell \leq n$.
Since the number of characters in each block can be determined adaptively using line and binary search, we can derive the query complexity of the second part as $\sum_{i=1}^k \lceil 2\log t_i \rceil \leq 2\log (\Pi_{i=1}^k t_i) + k \leq 2\log (\sum_{i=1}^k t_i / k)^k + k \leq 2k \log (n / k) + k$.
The first inequality holds due to the AM-GM inequality.
Combining the two parts of the algorithm, we know that the overall adaptive query complexity for the exact recovery of the sequence is \revise{$2k\log (n/k) + k + \log n + c \in \mc{O}(n)$.}
\end{proof}

In many cases our alternative approach, as shown in the complexity, saves queries.
There are also edge cases that the na\"ive approach wins the game -- for runs with two characters, using binary search requires 3 queries (i.e., queries with 2, 4, 3, characters in this run respectively), while the na\"ive approach finishes the task with only 2 queries.

\begin{theorem}[Yet Another Adaptive Strategy for Edit Distance]
\label{theorem:adaptive_edit_N+1}
For a binary alphabet $\bits$, and any input sequence $s \in \bits^{\ell}$ where $0 \leq \ell \leq n$, there exists an adaptive algorithm to recover the input sequence $s$ using at most $\revise{n + 2  \in \mc{O}(n)}$ queries $\revise{\mc{Q}}$ of length $\leq n$ and the exact Levenshtein distance of $s$ to each query sequence $q_i \in \revise{\mc{Q}}$, where the query sequences use no extra characters.
\end{theorem}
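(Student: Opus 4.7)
The plan is to design an adaptive algorithm that uses $\ell+2 \le n+2$ queries, where $\ell := \len(s)$ is learned by a single $\phi$-query (yielding $d_L(s,\phi) = \ell$), the number of $1$'s in $s$, call it $n_1$, is learned by a single $0^{\ell}$-query (yielding $d_L(s, 0^\ell) = n_1$ since the two strings have equal length and substituting every $1$ with $0$ is optimal), and then the characters of $s$ are peeled off one at a time. For $i = 1, 2, \ldots, \ell$, having already recovered $s[1..i-1]$, the algorithm issues the length-$\ell$ query $q_i = s[1..i-1] \cdot 1 \cdot 0^{\ell-i}$ and reads off $s[i]$ from $d_L(s, q_i)$.

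The main algebraic tool is the \emph{common-prefix cancellation} identity $d_L(pu, pv) = d_L(u, v)$ for any strings $p, u, v$, which I would justify by observing that the Wagner--Fischer dynamic-programming tableau for $d_L(pu, pv)$ takes value $0$ at the cell $(\len(p), \len(p))$ and, from that cell on, its recurrence (with the natural row/column boundary conditions) coincides with that of the tableau for $d_L(u, v)$. Applied to $p = s[1..i-1]$, this collapses the quantity $d_L(s, q_i)$ to $d_L(s[i..\ell], 1 \cdot 0^{\ell-i})$, a comparison of two equal-length binary strings.

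The main step, which I expect to demand the most care, is the exact formula: for any nonempty binary string $x$ of length $m$ with $k$ ones, $d_L(x, 1 \cdot 0^{m-1}) = k - 1$ if $x[1] = 1$ and $d_L(x, 1 \cdot 0^{m-1}) = k + 1$ if $x[1] = 0$. The $x[1] = 1$ case follows by another application of common-prefix cancellation, reducing to $d_L(x[2..m], 0^{m-1}) = k - 1$ (substitute each remaining $1$). The $x[1] = 0$ case requires matched bounds: the upper bound $k + 1$ is achieved by substituting $x[1]$ to $1$ and then each of the $k$ remaining ones in $x[2..m]$ to $0$; the lower bound proceeds by case-splitting any optimal alignment on whether the unique $1$ of the target is matched with some $1$ of $x$. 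If no match exists, each of the $k$ ones of $x$ must be substituted or deleted and producing the target's $1$ costs at least one additional operation, totalling $\ge k+1$. If a match exists, it must occur at some $x[j]$ with $j \ge 2$ (since $x[1] = 0$), so the $j - 1 \ge 1$ leading characters of $x$ must be deleted and the other $k - 1$ ones of $x$ still incur $k - 1$ operations, again totalling $\ge k + 1$.

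Granted this formula, decoding $s[i]$ is immediate: maintaining $n_i' := n_1 - \sum_{j < i} \ind[s[j] = 1]$, which equals the number of ones in $s[i..\ell]$ and is computable from previously recovered characters, the algorithm outputs $s[i] = 1$ if $d_L(s, q_i) = n_i' - 1$ and $s[i] = 0$ if $d_L(s, q_i) = n_i' + 1$. The total query count is $1 + 1 + \ell \le n + 2$, each query has length at most $\ell \le n$, and only the binary alphabet is used, matching the theorem.
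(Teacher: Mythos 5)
Your construction is a genuinely different route from the paper's. The paper issues $\phi$ to learn $\ell$, then uses the \emph{fixed} queries $e_0 = 0^{\ell}$ and $e_i = 0^{i-1}10^{\ell-i}$, decoding $s[i]$ from the sign of $d_L(s, e_0) - d_L(s, e_i)$; after the first query nothing is adaptive. You instead build $q_i = s[1..i-1]\cdot 1\cdot 0^{\ell-i}$ from the already-recovered prefix, so that common-prefix cancellation collapses each query to $d_L(s[i..\ell], 1\cdot 0^{\ell-i})$. Both schemes give $\ell + 2 \le n+2$ binary queries, but the analyses are distinct: the paper's argument only needs $d_L(s, e_i)$ to be at least the number of 1's in $s$ when $s[i]=0$, whereas you assert an exact two-valued formula.

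That formula's lower bound in the $x[1]=0$, ``match exists'' case has a gap. You tally $(j-1)$ prefix deletions plus $(k-1)$ eliminations of the other ones, claiming $\ge k+1$, but (i) the two counts overlap on any ones inside $x[1..j-1]$, and (ii) even absent overlap $j-1\ge 1$ yields only $(j-1)+(k-1)\ge k$, not $k+1$. Concretely, for $x=01$ against the target $10$ you have $j=2$, $k=1$: your tally gives $1$, yet $d_L(01,10)=2$. The missing operations are the insertions forced on the other side of the match: once $x[j]$ is pinned to the target's leading $1$, aligning $x[j+1..m]$ (length $m-j$) with $0^{m-1}$ (length $m-1$) necessarily leaves at least $j-1$ target characters unmatched. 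Writing $a$ for the number of ones in $x[1..j-1]$ and $b=k-1-a$ for those in $x[j+1..m]$, the cost is $\ge (j-1)+(j-1)+b$, and since $a\le j-2$ (because $x[1]=0$) one has $2(j-1)-a \ge j \ge 2$, so the total is $\ge (a+b)+2 = k+1$. With this repair your formula and decoder are correct; as written your argument only establishes $d_L\ge k$, and your decoder (which expects exactly $n_i'-1$ or $n_i'+1$) is left undefined at the value $n_i'$ that this leaves open.
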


\begin{proof}
The adaptive query strategy is the following.
We first use an empty sequence to query the length $\ell \in [n] = \{1, 2, \ldots, n\}$ of the input sequence.
Then we use $\ell + 1 \leq n + 1$ queries: an $e_0 = 0^\ell$ query and a set of $e_i = 0^{i-1}10^{\ell-i}, i \in [\ell]$ queries (all with length $\ell$).

\begin{claim}
\label{claim:edit_adaptive_n_1}
    $s[i] = \begin{cases}
        0, & \text{if}~~ d_{L}(s, e_0) - d_{L}(s, e_i) \leq 0;\\
        1, & \text{if}~~ d_{L}(s, e_0) - d_{L}(s, e_i) = 1.
    \end{cases}$
\end{claim}

\noindent
\textit{Proof of claim.}
If $s[i] = 1$, $d_{L}(s, e_0) - d_{L}(s, e_i) = $ (\#1's in $s$) $-$ (\#1's in $s - 1$) $ = 1$.
If $s[i] = 0$, $d_{L}(s, e_0) = $ (\#1's in $s$).
We show that $d_{L}(s, e_i) \geq $ (\#1's in $s$).
First, $d_{L}(s, e_i) \geq $ (\#1's in $s$) $-$ (\#1's in $e_i$) = \#1's in $s$ $-$ 1.
Consider the series of transformations from $s$ to $e_i$:
1) If we only perform substitution on $s$, we need at least \#1's in $s$ + 1 operations.
2) Otherwise we show that we have at least one insertion. If we perform at least one deletion operation(s) on $s$, since $s$ and $e_i$
 are of the same length, we would need at least one insertion(s) on $s$.
 Note that insertions on $s$ cannot reduce the difference of the number of 1's between $s$ and $e_i$.
 Thus, we need at least (\#1's in $s$ $-$1) extra operations to reduce the difference to $0$ and we have $d_{L}(s, e_i) \geq $ (\#1's in $s$ $-$1) $+$1 $=$ \#1's in $s$.
Combining these cases, we obtain $d_{L}(s, e_i) \geq $ (\#1's in $s$).

\bigskip
\noindent
By claim \ref{claim:edit_adaptive_n_1}, we can recover the sequence $s$ character by character.
\end{proof}

\stitle{Remark.}
We remark that both results of Theorem \ref{theorem:adaptive_edit} and Theorem \ref{theorem:adaptive_edit_N+1} are useful.
\revise{Clearly, using $n+2$ queries in the second algorithm is a better strategy than the na\"ive approach in the first algorithm, which requires $n + \log n + c$ queries.
However, the first algorithm with the binary search approach yields query complexity of $2k \log (n /k) + \log n + 2k + c$.
When the number of runs (i.e., $k$) is small,} this result is better than the $n+2$ queries in the second algorithm.

\subsection{DTW Distance}

\begin{theorem}[Adaptive Strategy for DTW Distance]
\label{thm:dtw_adaptive}
For a binary alphabet $\bits$, and any input sequence $s \in \bits^{\ell}$ where $0 \leq \ell \leq n$, there exists an adaptive algorithm to recover the input sequence $s$ using at most $\revise{n + 1  \in \mc{O}(n)}$ queries $\revise{\mc{Q}}$ of length $\leq n$ and the exact DTW distance of $s$ to each query sequence $q^{(i)} \in \revise{\mc{Q}}$, where the query sequences use \revise{\linelabel{revise:theorem_3.4}1} extra character.
\end{theorem}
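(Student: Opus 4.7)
\begin{proofoutline}
The plan is an adaptive two-stage algorithm that uses a single irrational extra character, say $a = 1/\pi \in (0, 1/2)$, and costs at most $\ell+1 \le n+1$ queries, where $\ell = \len(s)$.

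\emph{Stage 1 (length calibration).} Issue the length-one query $q^{(1)} = a$. Because $q^{(1)}$ has a single character, every feasible correspondence aligns each character of $s$ with $a$, so
\[
R \;:=\; d_{\DTW}(q^{(1)}, s) \;=\; \sum_{j=1}^{\ell}\bigl|a - s[j]\bigr| \;=\; \ell a + k(1-2a),
\]
where $k$ is the number of $1$'s in $s$. Since $a$ and $1-2a$ are both irrational with irrational ratio, the nonnegative integer pair $(\ell, k)$ is uniquely determined by the single real number $R$; in particular $\ell$ is recovered from one query. (If $R=0$ then $s=\phi$ and we output it immediately.)

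\emph{Stage 2 (position probing).} For each $i = 1, 2, \ldots, \ell$, issue
\[
q^{(i+1)} \;=\; a^{i-1}\cdot 1 \cdot a^{\ell-i}, \qquad \len(q^{(i+1)}) = \ell \le n.
\]
The key technical lemma, stated below, is that the diagonal one-to-one matching is optimal for the pair $(q^{(i+1)}, s)$, which yields
\[
D_i \;:=\; d_{\DTW}(q^{(i+1)}, s) \;=\; \sum_{j=1}^\ell \bigl|q^{(i+1)}[j] - s[j]\bigr| \;=\; R + (1-a)\bigl(1-2\,s[i]\bigr).
\]
A direct comparison of $D_i$ with the already-known reference $R$ gives $s[i] = 1$ when $D_i = R - (1-a)$ and $s[i] = 0$ when $D_i = R + (1-a)$. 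The total query cost is $1 + \ell \le n+1$, every query has length at most $n$, and the only symbol used outside $\bits$ is the single extra character $a$.

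\emph{Main obstacle and plan for the lemma.} The non-routine step is proving optimality of the diagonal matching for $q^{(i+1)}$ against $s$. I expect to prove it via the standard grid-path view of DTW: any monotone path from $(1,1)$ to $(\ell,\ell)$ in the cost grid of $q^{(i+1)}$ versus $s$ is shown to have cost at least $\sum_j |q^{(i+1)}[j]-s[j]|$. Off-diagonal detours either repeat an $a$-character against a sequence position, adding a non-negative cost of at least $\min\{a, 1-a\} = a > 0$, or shift the lone $1$ away from column $i$, which must be compensated by extra $a$-to-sequence edges whose cost strictly dominates any saving because $a \in (0, 1/2)$ strictly prefers matching $a$ to $0$ over $a$ to $1$. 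A local exchange argument then rewrites any feasible matching into the diagonal one without increasing cost. Boundary conditions (first-to-first, last-to-last from Definition~\ref{def:matching}) and degenerate regimes ($s$ all-$0$, $s$ all-$1$, $\ell=1$, and $i \in \{1, \ell\}$) are handled by direct case-checking and are what most of the bookkeeping in the formal write-up will be devoted to.
\end{proofoutline}
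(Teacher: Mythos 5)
Your Stage~1 is fine (and in fact the paper achieves the same thing with the rational character $\tfrac12$, recovering only $\ell$ rather than the pair $(\ell,k)$). The problem is Stage~2: the key lemma, that the diagonal matching is optimal for $q^{(i+1)} = a^{i-1}\,1\,a^{\ell-i}$ against $s$, is false, and with it the reconstruction formula $D_i = R + (1-a)(1-2s[i])$.

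Concretely, take $s = 001$, $\ell = 3$, $i = 2$, so $q^{(3)} = a\,1\,a$. The diagonal matching costs
$|a-0| + |1-0| + |a-1| = a + 1 + (1-a) = 2$, which is what your formula predicts ($R + (1-a) = (1+a) + (1-a) = 2$ since $s[2]=0$). But the warping path $(1,1),(1,2),(2,3),(3,3)$ has cost $a + a + 0 + (1-a) = 1 + a < 2$, and this is in fact the DTW optimum. The informal argument you offer (``the extra $a$-to-sequence edge strictly dominates any saving'') does not hold: moving the lone $1$ in $q$ to align with a nearby $1$ in $s$ saves a full unit of cost (a cell cost of $1$ drops to $0$), while the extra $a$-to-$0$ edge incurred costs only $a < \tfrac12$. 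So DTW warping can shift the probe away from column $i$, and your reconstruction rule returns the wrong value of $s[i]$.

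The deeper issue is that after Stage~1 your query set is non-adaptive: a single probe symbol embedded in a sea of $a$'s must isolate $s[i]$ against all possible surroundings, which DTW does not permit. The paper's proof of Theorem~\ref{thm:dtw_adaptive} is genuinely adaptive at every step: the $(k{+}2)$-nd query is $q^{(k+2)} = s[1,k]\,s[k]\,(\tfrac12)^{\ell-k-1}$, whose binary prefix reuses the already-recovered characters. The prefix can be matched for free exactly when $s[k+1]=s[k]$, and the $\tfrac12$-tail contributes exactly $\tfrac12$ per character against any binary symbol, so there is nothing for the warping to save. To repair your approach you would essentially need to adopt that incremental, prefix-reusing construction rather than fixing all Stage-2 queries in advance.
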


\begin{proof}
Using an adaptive method, for a binary alphabet $\bits$, an input sequence $s \in \bits^i$ where $0 \leq i \leq n$ can be exactly recovered with at most $\revise{n + 1  \in \mc{O}(n)}$ queries to the DTW distance oracle.
We need 1 additional character, which is the fractional character $\frac{1}{2}$, to construct the set of query sequences.
The details are presented as follows. 

First, with a single-character query sequence $q^{(1)} = \frac{1}{2}$, we can obtain the length of the input sequence $s$, which is $\ell = 2d_{\revise{\DTW}}(s, q^{(1)})$.

Consider the query $q^{(2)}=0~ (\frac{1}{2})^{\ell-1}$. Note that each $\frac{1}{2}$ in the $q^{(2)}$ corresponds to at least $\frac{1}{2}$ cost in the query result, and we have $d_{\revise{\DTW}}(s, q^{(2)})\geq (\ell-1)/2$. If $s[1] = 0$, then $s[1]$ and $q^{(2)}[1]$ are perfectly matched, so $d_{\revise{\DTW}}(s, q^{(2)}) = (\ell-1)/2$. Otherwise, the first character $0$ in $q^{(2)}$ would correspond to cost $>0$ in the query result, so $d_{\revise{\DTW}}(s, q^{(2)}) > (\ell-1)/2$.
In this way, we can recover $s[1]$.

Now we recover the whole sequence by induction. Suppose we have recovered \revise{$s[1,k]$}, we show that we can recover $s[k+1]$ with the query sequence \revise{$q^{(k+2)} = s[1,k]s[k](\frac{1}{2})^{\ell-k-1}$}. Noting that each $\frac{1}{2}$ in $q^{(k+2)}$ corresponds to at least a $\frac{1}{2}$ cost in the query result, we have $d_{\revise{\DTW}}(s, q^{(k+2)})\geq (\ell-k-1)/2$. If $s[k+1] = s[k]$, then \revise{$s[1,k+1]$} and \revise{$q^{(k+2)}[1,k+1]$} can be perfectly matched, so $d_{\revise{\DTW}}(s, q^{(k+2)}) = (\ell-k-1)/2$. Otherwise, we claim that $d_{\revise{\DTW}}(s, q^{(k+2)}) > (\ell-k-1)/2$. If the cost corresponding to $q^{(k+2)}[k+1] > 0$, we would already have $d_{\revise{\DTW}}(s, q^{(k+2)}) > (\ell-k-1)/2$, so we can assume that the cost corresponding to $q^{(k+2)}[k+1]$ is $0$. Since $q^{(k+2)}[k+1] = s[k] \neq s[k+1]$, we know that $q^{(k+2)}[k+1]$ cannot be matched with $s[k+1]$. Suppose $q^{(k+2)}[k+1]$ is matched with substring \revise{$s[u,u+t]$} in the optimal DTW matching, where $t\geq 0$ and $s[u] = s[u+1] = \cdots  = s[u+t] = q^{(k+2)}[k+1] = s[k]$. Since $k+1 \notin [u, u+t]$, we either have $k+1 > u+t$ or $k+1 < u$. If $k+1 > u+t$, since $\forall u+t < j \leq \ell$ we have $s[j]$ matched to a $\frac{1}{2}$, the total cost would be at least $(\ell-(u+t))/2 > (\ell-k-1)/2$. Otherwise if $k+1 < u$, note that \revise{$s[1,u]$} are matched to \revise{$q^{(k+2)}[1,k+1]$} in the optimal DTW matching. Since $s[k+1]\neq s[k]$, the number of runs in \revise{$s[1,k+1]$} would be greater than the number of runs in \revise{$q^{(k+2)}[1,k+1]$} by $1$. Thus,  the number of runs in \revise{$s[1,u]$} would be greater than the number of runs in \revise{$q^{(k+2)}[1,k+1]$} by at least 1, and they cannot be perfectly matched. Therefore, the cost corresponding to \revise{$q^{(k+2)}[1,k+1]$} would be greater than $0$, yielding a total cost of greater than $(\ell-k-1)/2$.

By induction, we can recover the input sequence of maximum length $n$ with $n+1$ queries. %
\end{proof}

\section{Recovery with Non-Adaptive Edit Distance Oracle Queries}
\label{sec:edit_non}

We begin with a lower bound for edit distance.
\begin{theorem}
\label{thm:edit_lower_bound}
For a binary alphabet $\bits$, any algorithm to recover an arbitrary input sequence $s \in \bits^\ell$ where $0 \leq \ell \leq n$ by querying the  Levenshtein distance to a set of sequences of length $\mc{O}(n)$  requires a query complexity of $\Omega(n/\log n)$.

\end{theorem}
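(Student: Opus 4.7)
The plan is to apply the information-theoretic counting argument from Theorem \ref{theorem:lower_bound_exact}, instantiated for the edit distance oracle. The crux is that the domain of possible inputs is exponentially large while each query returns a value from a polynomially-bounded range, so any recovery scheme must collect $\Omega(n/\log n)$ query answers to distinguish all inputs.

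More concretely, I would first count inputs: the set of binary sequences of length at most $n$ has cardinality $\sum_{\ell=0}^n 2^\ell = 2^{n+1}-1$, which is $2^{\Omega(n)}$. Next, I would bound the range of the oracle. If every query sequence has length at most $cn$ for some constant $c$, then for any input $s$ of length at most $n$ and query $q$ of length at most $cn$, the Levenshtein distance satisfies
\[
0 \le d_L(s,q) \le \max(\len(s), \len(q)) \le cn,
\]
since one can transform $s$ into $q$ by at most $\max(\len(s),\len(q))$ insertions, deletions, and substitutions. Thus the oracle takes at most $cn+1 = \mathcal{O}(n)$ distinct values on any fixed query.

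Given this, any (possibly adaptive) algorithm making $k$ queries can be represented by a decision tree where each internal node has branching factor at most $\mathcal{O}(n)$ and whose depth is $k$. The number of distinct leaves (i.e., the number of distinct transcripts of oracle answers) is at most $\mathcal{O}(n)^k$. For the algorithm to exactly recover $s$, distinct inputs must produce distinct transcripts, so
\[
\mathcal{O}(n)^k \;\ge\; 2^{n+1}-1.
\]
Taking logarithms yields $k \cdot \log(\mathcal{O}(n)) \ge \Omega(n)$, hence $k = \Omega(n/\log n)$, as claimed. The same argument works non-adaptively, since any non-adaptive strategy is a special case of an adaptive one.

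I do not anticipate a serious obstacle: the entire argument is counting-based and uses only the trivial upper bound $d_L(s,q) \le \max(\len(s),\len(q))$. The only subtlety worth spelling out is the handling of the constant-factor slack in the query length (the ``$\mathcal{O}(n)$'' bound), which merely scales $c$ in the range estimate and does not affect the asymptotic conclusion; this matches the $\tilde\Omega(n)$ bound claimed in Theorem \ref{theorem:lower_bound_exact} when specialized to edit distance.
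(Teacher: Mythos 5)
Your proposal is correct and follows essentially the same information-theoretic counting route as the paper's main argument: count $2^{n+1}-1$ possible inputs, note that each Levenshtein query on length-$\mathcal{O}(n)$ sequences against a length-$\leq n$ input has a polynomially bounded range of possible answers, and conclude via a transcript/decision-tree count that $\Omega(n/\log n)$ queries are required. The paper is a bit sharper on the range of a single query---for a query of length $an+b$ it observes that the distance lies in an interval of width $n+1$, because $d_L(s,q)\geq \len(q)-\len(s)$ already pins down the scale---but this refinement changes nothing asymptotically, and your looser bound of $cn+1$ suffices.

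One thing the paper does that your write-up does not: it explicitly notes that this counting argument only bounds \emph{deterministic} algorithms, and it supplements it with a second argument for randomized algorithms via a reduction from the one-way INDEX communication game (Alice's message is the sequence of $\mathcal{O}(\log n)$-bit query responses, and the $\Omega(n)$-bit INDEX lower bound then forces $\Omega(n/\log n)$ queries). The theorem as stated quantifies over ``any algorithm,'' so your argument, taken alone, leaves the randomized case open. If you want to keep your decision-tree framing, you could close the gap with a Yao-type minimax argument (a randomized scheme succeeding on every input yields a deterministic scheme succeeding on at least a $2/3$ fraction of uniformly random inputs, and the leaf count still caps the number of distinguishable inputs), or you could adopt the paper's INDEX reduction.
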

\begin{proof}
For each query \revise{of length $\mc{O}(n)$}, the result would be an integer $d = \mc{O}(n)$.
\revise{\linelabel{edit_lower_range}Without loss of generality, assume the query is of length $an+b$ where $a, b$ are non-negative constant integers and $b<n$.
For an arbitrary input sequence with length $\leq n$, the query result falls into the range of $[(a-1)n+b, an+b]$ if $a>0$ (or $[0, n]$ if $a=0$), yielding $n+1$ possibilities.}
For $0\leq k \leq n$, the number of different sequences of length $k$ is $2^k$, and the total number of sequences of length no greater than $n$ would be $\sum_{k=0}^{n}2^{k} = 2^{n+1}-1$. Thus, to distinguish all possible sequences, one would need at least $\log_{\revise{n+1}}(2^{n+1}) = (n+1)/\log(n+1) \in \Omega(n/\log n)$ queries.

\revise{
Note that this information theoretical proof only applies to deterministic algorithms.
Next we give a proof if one is allowed to use a randomized algorithm.
To show this, we introduce a one-way two-party communication game called \emph{INDEX}.

\begin{definition}[INDEX Game \citep{indexGame}]
    Consider two players Alice and Bob. Alice and Bob have access to a common public coin and their computation can depend on this. Alice holds an $n$-bit string $x\in \bits^n$ and is allowed to send a single message $M$ to Bob (i.e., this is a one-way protocol). Bob has an index $i \in [n]$ and his goal is to learn $x[i]$, i.e., $\Pr_{r}[\mathcal{O}ut(M)=x[i]]\geq \frac{2}{3}$). 
\end{definition}

It is shown in \citep{indexGame} that the above problem requires $|M| = \Omega(n)$. To reduce our recovery problem from the INDEX game, let $R$ be an adaptive randomized recovery algorithm which works as follows.
First, Alice randomly selects a query $q_1$ based on the first part $r_1$ of the shared public coin, and computes $d(q_1, x)$. Alice then adaptively selects a set of queries $q_i$, where each $q_i$ is chosen based on disjoint parts $r_1, \ldots, r_i$ of the public coin, as well as the responses $d(q_1, x), \ldots, d(q_{i-1}, x)$ to previous queries. Alice then sends all query results $d(q_i, x)$ to Bob as the message $M$.

We now show that, if the algorithm $R$ is correct w.p. $2/3$, then $M$ contains $\Omega(n/\log n)$ query results.
Given the success probability $2/3$ of $R$, from message $M$, Bob can reconstruct the string $x$ w.p. at least $2/3$, so Bob can learn each bit of $x$ w.p. at least $2/3$.
According to \citep{indexGame}, $|M| = \Omega(n)$ bits. Since each distance query result contains at most $\mathcal{O}(\log n)$ bits, it follows that $\Omega(n/\log n)$ queries are required.
}
\end{proof}

\subsection{Exact Recovery with Extra Character(s)}

We now move on to the analysis of the upper bound for edit distance with the assistance of extra character(s).
\revise{The following theorem uses 1 extra character in the extended alphabet to construct query sequences. We note that for edit distance, using more than 1 extra character in the extended alphabet does not help recover the input sequence, because the edit distance oracle only counts the edit operations made from transforming one sequence to another. Different characters result in the same edit cost.}

\begin{theorem}[Non-adaptive Strategy for Edit Exact Recovery with 1 Extra Character]
\label{theorem:edit}
For a binary alphabet $\bits$ and an input sequence $s \in \bits^\ell$ where $0 \leq \ell \leq n$, there exists an algorithm to recover the input sequence $s$, given a set of \revise{\linelabel{On_N_1_2}$n+1 \in \mc{O}(n)$} query sequences $\revise{\mc{Q}}$ of length $\leq n$ and the exact Levenshtein distance of $s$ to each query sequence $q \in \revise{\mc{Q}}$, where an extra character $2$ is allowed in the query sequences.
\end{theorem}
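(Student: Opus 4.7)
My plan is to take the $n+1$ non-adaptive queries to be $q^{(0)} = \phi$ and, for $i = 1, \ldots, n$, $q^{(i)} = 1^{i} 2^{n-i}$, each of length at most $n$ and using only the single extra character $2$. The query $q^{(0)}$ immediately yields $\ell = d_L(s, \phi)$. The rest of the recovery reduces to establishing the identity
\[
d_L(s, q^{(i)}) \;=\; n - b_{\min(\ell, i)},
\]
where $b_j := |\{k \le j : s[k] = 1\}|$ counts the $1$'s in the length-$j$ prefix of $s$; from these one reads off $b_i$ for each $i \in [\ell]$ and then outputs $s_i = b_i - b_{i-1}$ (with $b_0 := 0$), which reconstructs $s$ exactly. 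Values at $i > \ell$ merely repeat $b_\ell$ and can be ignored.

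The upper bound is witnessed by the canonical alignment that pairs $s[k]$ with $q^{(i)}[k]$ for $k \le \ell$ and inserts the remaining $n - \ell$ characters of $q^{(i)}$. Its cost decomposes into (i) $\min(\ell, i) - b_{\min(\ell, i)}$ substitutions at the $0$'s of $s$ lying in the $1$-zone of $q^{(i)}$, (ii) $\max(0, \ell - i)$ substitutions where $s$-characters are forced against the $2$'s of the $2$-zone, and (iii) $n - \ell$ insertions at the trailing $2$'s, summing to exactly $n - b_{\min(\ell, i)}$.

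The matching lower bound is the main obstacle. I plan to analyze an arbitrary optimal alignment through the identity $d_L(s, q^{(i)}) = \ell + n - m - \mathrm{em}$, where $m$ is the number of matched character-pairs and $\mathrm{em}$ counts the exact matches among them. Because $s$ is binary and $q^{(i)}[i+1..n]$ consists only of $2$'s, any exact match must lie in the $1$-zone of $q^{(i)}$. Letting $r$ and $t$ denote the numbers of matches in the $1$- and $2$-zones respectively, the $r$ matched $s$-indices of the $1$-zone must precede the $t$ matched $s$-indices of the $2$-zone, so they all lie in $[1, \ell - t]$; hence $\mathrm{em} \le \min(r, b_{\ell - t})$, with the further constraints $r \le i$, $r + t \le \ell$, and $t \le n - i$. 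A brief case split on whether $t \le \ell - i$, combined with the monotonicity of the number of $0$'s in a prefix (i.e., $u - b_u$ is non-decreasing in $u$), shows that the maximum of $r + t + \min(r, b_{\ell - t})$ is exactly $\ell + b_{\min(\ell, i)}$, yielding $d_L(s, q^{(i)}) \ge n - b_{\min(\ell, i)}$ and completing the proof.
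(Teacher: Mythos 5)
Your query set and the central identity $d_L(s, 1^i2^{n-i}) = n - b_{\min(\ell,i)}$ coincide exactly with the paper's Lemma~\ref{lem:edit_distance_mapping}, and the recovery algorithm (length from the empty query, then read off $b_i - b_{i-1}$) is the same. Where you diverge is in \emph{how} the lower bound $d_L(s, q^{(i)}) \geq n - b_{\min(\ell,i)}$ is established. The paper works in the ``operation-sequence'' view of edit distance: it normalizes an optimal transformation to put all deletions first, does a delicate count to prove the claim when $s[j] = 0$, and then handles $s[j] = 1$ by passing to a neighbor $s''$ obtained by flipping $s[j]$ and invoking the triangle inequality ($d_L(s,s')\geq d_L(s'',s') - 1$). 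You instead work in the ``alignment'' view, writing $d_L(s,q^{(i)}) = \ell + n - m - \mathrm{em}$ and showing that $m + \mathrm{em} \leq \ell + b_{\min(\ell,i)}$ by observing that exact matches can only occur in the $1$-zone against $s$-characters in the prefix $[1,\ell-t]$, then optimizing $r + t + \min(r, b_{\ell-t})$ under the natural constraints; I checked the two cases ($t \leq \ell-i$ using monotonicity of $u \mapsto u - b_u$, and $t > \ell-i$ using monotonicity of $b$) and the bound goes through. Your route is a uniform argument that avoids both the operation-reordering normalization and the triangle-inequality detour on the $s[j]=1$ case, at the cost of a small two-variable optimization; the paper's route is shorter per-case but requires that extra reduction trick. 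Both are correct, and the choice is a matter of taste.
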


\stitle{Proof of Theorem \ref{theorem:edit}.} 
\revise{The intuition of our proof is to build an oracle that returns the number of 1's in the first $j$ characters of the input sequence $s$.
Then querying the oracle with all possible $j$'s (where $j \in [n]$) implies a recovery of the input sequence.
Note that this oracle calls the edit distance oracle as a subroutine.
The following lemma shows the existence of such an oracle.}

\begin{lemma}
\label{lem:edit_distance_mapping}
\revise{Let $s \in \bits^{n'}$ be a non-empty sequence with length $\len(s) = n' \leq n$.
Consider a sequence $s' = 1^{j}2^{n-j}$, where $j \in [n']$, and $2$ denotes a random character not in the binary alphabet $\bits$.
Let $k$ denote the number of $1$'s in the substring $s[1, j]$ (i.e., the first $j$ characters of $s$).
Then the edit distance between $s$ and $s'$ is equal to $n-k$.}
\end{lemma}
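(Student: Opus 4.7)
The plan is to prove $d_L(s, s') = n - k$ by establishing matching upper and lower bounds.

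For the upper bound, I exhibit an explicit edit sequence: substitute each of the $j - k$ zeros in $s[1, j]$ to $1$; substitute each of the $n' - j$ characters in $s[j+1, n']$ to $2$; then insert $n - n'$ copies of $2$ at the end. The total cost is $(j - k) + (n' - j) + (n - n') = n - k$.

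For the lower bound, I fix an optimal alignment and decompose its operations by character type: $m$ matches (necessarily between $1$'s, since $1$ is the only symbol common to both alphabets); substitutions $u_{01}, u_{02}, u_{12}$ classified by source/target; deletions $d_0, d_1$; and insertions $i_1, i_2$. Counting the $2$'s on the $s'$ side yields $u_{02} + u_{12} + i_2 = n - j$, so the cost simplifies to $d_L(s, s') = (n - j) + u_{01} + d_0 + d_1 + i_1$, and it suffices to show $u_{01} + d_0 + d_1 + i_1 \geq j - k$, which is the number of zeros in $s[1, j]$. Each such zero is either deleted, substituted to a $1$, or substituted to some $2$ (whose position in $s'$ must exceed $j$); letting $t$ denote the count of the third type, the first two types contribute at least $(j - k) - t$ to $u_{01} + d_0$, so the claim reduces further to showing $i_1 \geq t$.

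For this last inequality, let $p_1 < \cdots < p_t \leq j$ denote the positions in $s$ of the $t$ zeros substituted to a $2$ in $s'[j+1, n]$. By the order-preserving property of edit alignments, every aligned (match or substitute) pair whose $s'$-coordinate lies in $[1, j]$ must have $s$-coordinate strictly less than $p_1$, so if $\alpha$ denotes the number of such pairs, then $\alpha \leq p_1 - 1$. Each of the $j$ slots in $s'[1, j]$ is either matched, substituted-to, or inserted, so $i_1 = j - \alpha \geq j - p_1 + 1 \geq t$, where the final step uses that $p_1, \ldots, p_t$ are $t$ distinct integers in the interval $[p_1, j]$.

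The main obstacle is the order-preservation step $\alpha \leq p_1 - 1$; the rest is essentially bookkeeping. The upper bound and the cost decomposition are routine, and the only piece of genuine combinatorial content is this interaction between $0$-to-$2$ substitutions at positions $\leq j$ in $s$ and the insertions of $1$'s into $s'[1, j]$ that they force.
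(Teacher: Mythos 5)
Your proof is correct, and it takes a genuinely different route from the paper's. The upper bound (explicit transformation of cost $(j-k) + (n'-j) + (n-n') = n-k$) is the same in both. For the lower bound, the paper performs a case split on $s[j]$: when $s[j]=0$ it normalizes the optimal transcript so that all deletions come first, runs a subcase analysis on the number of deletions $d$ against the count of $0$'s in the prefix, and bounds the cost directly; when $s[j]=1$ it flips that bit to form an auxiliary sequence $s''$ and invokes the triangle inequality $d_L(s',s) \geq d_L(s',s'') - d_L(s,s'')$ to reduce to the first case. Your argument instead works uniformly, with no case split: you fix an optimal monotone alignment, decompose the cost by operation type, cancel the $n-j$ contribution coming from the $2$'s in $s'$, and reduce the whole lemma to the inequality $i_1 \geq t$, where $t$ counts $0\to 2$ substitutions at positions $\leq j$ in $s$. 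The monotonicity of edit alignments then gives $i_1 \geq j - p_1 + 1 \geq t$. This is cleaner in two respects: it avoids the auxiliary-sequence/triangle-inequality detour, and it replaces the paper's deletion-first canonicalization (which tacitly skips the subcase $d \geq t+1$, $n'-d > j$) with a single counting identity on the transcript. What you lose is the paper's slightly more elementary flavor; your argument leans on the structural description of edit distance as an optimal monotone partial matching, which the paper never makes explicit. Both proofs are valid, but yours isolates the genuine combinatorial content --- that each $0\to 2$ substitution crossing the boundary $j$ forces a compensating insertion of a $1$ --- more sharply.
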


\begin{proof}
\revise{We prove the lemma in two steps. 
First, we prove that the number of operations required in the transformation from $s$ to $s'$ is greater than or equal to $n-k$.
Second, we show the existence of a sequence of operations that transforms $s$ to $s'$ in exactly $n-k$ steps.

To formally prove the first step, we perform a case analysis on the $j$-th character of $s$, i.e., $s[j]$, being 0 or 1.
When $s[j]=0$, the following claim shows $d_L(s, s') \geq n-k$.}

\begin{claim}
\label{claim:edit_distance_mapping}
\revise{If $s[j]=0$, then $d_L(s, s') \geq n-k$.}
\end{claim}

\noindent
\emph{Proof of claim.}
Let $s = [\text{prefix}]0[\text{suffix}]$, where the length of the prefix is $\len([\text{prefix}]) = j-1$.
Recall sequence $s' = 1^{j}2^{n-j}$.
The edit distance between $s$ and $s'$ can be regarded as the number of operations required in the transformation from $s$ to $s'$. 
This transformation from $s$ to $s'$ leads to a \textit{sequence} of operations of insertion, deletion, and substitution.
For an optimal transformation sequence, swapping two adjacent operations in this sequence generates another valid sequence of operations of the same length.%
We can therefore assume all the deletion operations are performed in the beginning, and we denote the number of deletions by $d$.
Let $t$ be the number of $0$'s in $[\text{prefix}]$, and 
suppose $d \geq t + 1$.
We can assume that all entries in $[\text{suffix}]$ are $1$ and any $0$ is deleted before a $1$ is deleted; indeed, these assumptions will not increase the edit distance.
After the deletion operations, $s$ is a sequence of $n'-d$ 1's. 
If $n'-d \leq j$, then we need an additional $n-(n'-d)$ insertions to recover $s'$. 
Thus the total cost is $d + n-(n'-d) \geq d + n - j \geq (t + 1) + n - j = n - k $. 
It remains to consider the case that $d < t + 1$. 
At this point $s$ is a sequence of length $n'-d$ containing $(t + 1)-d$ 0's among its first $j$ entries, and remaining $1$'s.  
Since there are no more deletions, any 1's occurring after the $j$-th entry must be substituted to a 2. 
There are $(n'-d)-j$ such 1's that each cost $1$. 
Also, each of the $(t + 1)-d$  $0$'s among the first $j$ entries costs one for a substitution. Finally, we need at least $n-n'+d$ insertions to obtain equal-length sequences. So the total cost is at least $\underbrace{(d)}_{\text{\#deletions}} + \underbrace{(n-n'+d)}_{\text{\#insertions}} + \underbrace{[(n'-d) - j]}_{\text{\#substitutions of 1's}}  + \underbrace{[(t + 1)-d]}_{\text{\#substitutions of 0's}} = n -j + t + 1 = n - k$. 
This completes all cases.

\bigskip

\revise{To finish the case analysis, now we consider the case that $s[j]=1$.
We define $s'' = (s[1,j-1]) 0 (s[j+1, n'])$.
Note $s''$ is obtained by substituting the $j$-th character of $s$ from 1 to 0, hence $d_L(s, s'') = 1$.
We have $k$ 1's in $s[1, j]$, so we have $(k-1)$ 1's in $s''[1, j]$.
By Claim \ref{claim:edit_distance_mapping}, we have $d_L(s', s'') \geq n-k +1$.
Since edit distance is \emph{a metric}, by triangle inequality, $d_L(s', s) + d_L(s, s'') \geq d_L(s', s'') \geq n-k + 1$.
Therefore, $d_L(s, s') \geq n - k + 1 -1 = n-k$.

Now, for the second step, we give a valid sequence of operations to transform $s$ to $s'$ in exact $n-k$ steps.
1) insert $2$'s to the end of $s$ such that $s$ and $s'$ have the same length. 
This results in $n-n'$ insertions.
2) for every index $i \in [n]$, substitute $s[i]$ to $s'[i]$ if they are different in the first place.
The number of operations is counted as follows. 
For $i \in [1, j]$, it requires $j - k$ substitutions since there are $(j-k)$ $0$'s.
For $i \in [j+1, n']$, it requires $n' - j$ substitutions since we need to substitute every character to $2$.
For $i \in [n'+1, n]$, it requires no substitutions.
Therefore, we have $n-n'+j-k+n'-j = n-k$ operations in total.}
\end{proof}

\smallskip

\noindent\textit{Query Sequence Construction.}
\revise{
We introduce an additional wildcard character which is not in the input sequence alphabet.
Using this newly introduced character, the $n+1$ query sequences are constructed as follows.
We use an empty sequence together with $n$ sequences of the form of $1^j2^{n-j}$ for $j=1, \ldots, n$ where $2$ denotes a ``not-in-the-alphabet'' character.}

\medskip

\noindent\emph{Algorithm to recover input sequence $s$.}
\revise{We now give an algorithm to recover $s$ using the query sequence set to complete the proof of Theorem \ref{theorem:edit}.
From the query result of the empty sequence, we know the length of the input sequence ($n'$).
If $n'=0$, we know the input sequence is empty as well.
Otherwise, consider the query results of the sequences $1^{j}2^{n-j}$, for $j \in [n']$. 
By Lemma \ref{lem:edit_distance_mapping}, we know the number of $1$'s ($k$) in the first $j$ characters of $s$ ($j \in [n']$), which implies a complete recovery of $s$.}
The exact recovery algorithm is presented in Algorithm \ref{alg:edit_recover}. 
We note that the order of the query sequences matters.
\hfill$\blacksquare$

\stitle{Remark.} Note that the exact length of sequence $s$ is unknown. This algorithm works non-adaptively for any sequence $s$ with length $\leq n$.

\subsection{Exact Recovery without Extra Characters}

\begin{theorem}
\label{theorem:edit_non_adaptive_n2}
For a binary alphabet $\bits$ and an input sequence $s \in \bits^\ell$ where $0 \leq \ell \leq n$, there exists an algorithm to recover the input sequence $s$, given \revise{$\frac{1}{2}(n^2 + 3n) \in \mc{O}(n^2)$} query sequences $\revise{\mc{Q}}$ of length $\leq n$ and the exact Levenshtein distance of $s$ to each query sequence $q_i \in \revise{\mc{Q}}$, without extra characters.
\end{theorem}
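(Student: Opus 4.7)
The plan is to extend the adaptive strategy from Theorem~\ref{theorem:adaptive_edit_N+1} to the non-adaptive setting by querying \emph{every candidate length simultaneously}. Concretely, for each $\ell' \in [n]$ I include the query $e_0^{(\ell')} := 0^{\ell'}$ together with $e_i^{(\ell')} := 0^{i-1} 1\, 0^{\ell'-i}$ for $i \in [\ell']$. The total number of queries is
\[
\sum_{\ell'=1}^{n}(\ell'+1) \;=\; \frac{n(n+1)}{2} + n \;=\; \frac{n^2+3n}{2},
\]
matching the bound in the statement, and every query is binary of length at most $n$.

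Given the responses, the recovery algorithm proceeds in three stages. First, I detect whether $s = \phi$ using the identity $d_L(\phi, q) = \len(q)$ that holds for every query $q$: I declare $s = \phi$ if and only if every response equals the corresponding query length. Second, assuming $s \neq \phi$, I recover $\ell := \len(s)$ by analyzing $f(\ell') := d_L(s, 0^{\ell'})$ on $\ell' \in [n]$. Letting $k$ denote the number of $1$'s in $s$, a short direct computation gives $f(\ell') = \max(\ell - \ell', k)$ for $\ell' \leq \ell$ and $f(\ell') = \ell' - \ell + k$ for $\ell' > \ell$, so $f$ is convex and piecewise linear with a plateau of height $k$ on $[\ell - k, \ell]$; thus $\ell$ is recovered as the largest $\ell' \in [n]$ attaining $\min_{\ell'' \in [n]} f(\ell'')$. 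Third, since the queries $e_0^{(\ell)}$ and $\{e_i^{(\ell)}\}_{i \in [\ell]}$ are already in the set, I read off each bit $s[i]$ from the sign of $d_L(s, e_0^{(\ell)}) - d_L(s, e_i^{(\ell)})$ by directly invoking Claim~\ref{claim:edit_adaptive_n_1}.

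The main obstacle is the edge case $\ell = 0$ and its interaction with the length-determination step. Observe that $s = \phi$ and $s = 1$ both satisfy $d_L(s, 0^{\ell'}) = \ell'$ for every $\ell' \in [n]$, so the $e_0$-type queries alone cannot separate them and the naive ``largest minimizer of $f$'' rule would misidentify $\phi$ as having length $1$. Hence the preliminary $\phi$-detection stage is essential, and its correctness requires showing that every nonempty $s$ produces at least one query whose response is strictly less than the query length. I would verify this by cases on the structure of $s$: when $s \neq 1^\ell$ the response $d_L(s, 0^\ell) = k < \ell$ (since $s$ contains at least one $0$) already witnesses this; when $s = 1^\ell$ the response $d_L(1^\ell, 1\,0^{\ell-1}) = \ell - 1 < \ell$ from $e_1^{(\ell)}$ witnesses it. Combining this case analysis with the plateau structure of $f$ yields the full correctness argument.
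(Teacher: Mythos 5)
Your proof is correct and uses the same query construction the paper intends: for each candidate length $\ell' \in [n]$, include $0^{\ell'}$ and the $\ell'$ singleton-one queries $0^{i-1}10^{\ell'-i}$, for a total of $\sum_{\ell'=1}^n(\ell'+1)=\tfrac{1}{2}(n^2+3n)$ queries, and then decode the length-$\ell$ block via Claim~\ref{claim:edit_adaptive_n_1}. The paper's own proof, however, is a one-line ``naturally extend the query set'' remark that silently drops the empty-string query which the adaptive algorithm of Theorem~\ref{theorem:adaptive_edit_N+1} uses to learn $\ell$, and it gives no decoder for determining $\ell$ non-adaptively. Your work closes this gap: the explicit formula $f(\ell')=d_L(s,0^{\ell'})=\max(\ell-\ell',k)$ for $\ell'\le\ell$ and $f(\ell')=\ell'-\ell+k$ for $\ell'>\ell$ (convex piecewise linear with a plateau of height $k$ on $[\ell-k,\ell]$) shows the largest minimizer of $f$ recovers $\ell$, and your observation that the $e_0$-type queries alone cannot distinguish $s=\phi$ from $s=1$ (both give $f(\ell')=\ell'$ identically), so that the $\phi$-detection must inspect the $e_i$-type responses as well, is a real edge case the paper leaves unaddressed. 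In short: same route, but you have supplied the correctness argument the paper omits.
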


\begin{proof}
    The construction can be obtained by naturally extending the query set in the proof of Theorem \ref{theorem:adaptive_edit_N+1} to all lengths $\ell \in [n]$.
    This gives us \revise{\linelabel{sum-n-n}$\sum_{\ell=1}^n (\ell +1) = \frac{1}{2}(n^2 + 3n) \in \mc{O}(n^2)$} queries.
\end{proof}

\begin{algorithm2e*}[t]
\LinesNumbered
\caption{Exact Recovery Algorithm via Queries to an Edit Distance Oracle}
\label{alg:edit_recover}
\SetKwInput{KwInput}{Input}                %
\SetKwInput{KwOutput}{Output}              %
\DontPrintSemicolon
  \KwInput{Non-adaptive query sequences $\revise{\mc{Q}} = \{ q^{(1)}, q^{(2)}, \dots, q^{(n+1)} \}$;
      The edit distance \emph{query result} from the sequence for recovery to each query sequence $\revise{\mc{R}} = \{ d^{(1)}, d^{(2)}, \dots, d^{(n+1)} \}$.\;}
\KwOutput{The sequence for recovery $s$.}
\SetKwFunction{FMain}{{\sc RecoveryEdit}}
  \SetKwProg{Fn}{Function}{:}{}
    \Fn{\FMain{$\revise{\mc{Q}}, \revise{\mc{R}}$}}{
    sequence = [] {\color{blue}\Comment{Initialize the sequence for recovery.}} \\
    sequence.append($n-d^{(2)}$) \\
    \For{$i \in [2, n+1]$} {
        sequence.append($d^{(i)} - d^{(i+1)}$)
    }
    sequence = $\phi$ if $d^{(1)}=0$, else sequence[1, $d^{(1)}$] {\color{blue}\Comment{$d^{(1)}$ is the distance to the empty string.}} \\
    \Return s $\coloneqq$ sequence
    }
\end{algorithm2e*}

\section{Recovery with Non-Adaptive DTW Distance Oracle Queries}
\label{sect:dtw}

\subsection{Hardness Result without Extra Characters}
\label{sec:hardness_dtw}

\begin{theorem}[Indistinguishable Sequences by Binary Queries with DTW Oracle]
\label{theorem:dtw_hardness}
There exists a pair of input sequences $s$ and $s'$ such that for any query sequence $q$, $d_{\revise{\DTW}}(s, q)$ = $d_{\revise{\DTW}}(s', q)$. That is, $s$ and $s'$ cannot be distinguished by DTW Distance Oracle queries without extra characters. 
\end{theorem}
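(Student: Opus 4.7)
The plan is to exhibit the concrete witness pair $s = 010110$ and $s' = 011010$ and argue that $d_{\DTW}(s, q) = d_{\DTW}(s', q)$ for every binary sequence $q$. Both strings have length $6$ and the identical run character pattern $(0,1,0,1,0)$, differing only in whether the unique length-$2$ run of $1$'s sits as the fourth run (in $s$) or as the second run (in $s'$).

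The main tool will be a cost-preserving transformation between DTW correspondences for $(s, q)$ and $(s', q)$. Starting from an optimal correspondence $(\overline{s}, \overline{q})$ for $(s, q)$, I would rewrite it at the \emph{run level}: each of the five runs of $s$ is matched to a contiguous block of $q$-characters $Q_1, \ldots, Q_5$, and the total cost decomposes as $\sum_i \Cost(Q_i, R_i)$, where $R_i$ denotes the $i$-th run of $s$. Because the run characters of $s$ and $s'$ coincide position-for-position, the same block decomposition furnishes a candidate correspondence for $(s', q)$; the only subtlety is that the unique length-$2$ run has shifted from position $4$ to position $2$. I would close this gap using the identity $d_{\DTW}(1, w) = d_{\DTW}(11, w)$, which holds for every non-empty binary block $w$ except $w = 0$ (the key observation in Section \ref{subsec:tech_binary}); plugging it into the run-level decomposition equates the costs of the two correspondences. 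The symmetric construction going from an optimal correspondence for $(s', q)$ back to $(s, q)$ yields the reverse inequality, and thus equality.

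As an independent sanity check, for queries $q$ that start and end with $0$, I would invoke the DTW-to-MSS reduction of Theorem~\ref{theorem:our_techs_dtw_mss}. When $\#\textsc{runs}(q) \le 5$, the reduction produces $\MSS((1,1,2), r)$ for $s$ and $\MSS((2,1,1), r)$ for $s'$ with $r = (5 - \#\textsc{runs}(q))/2 \in \{0,1,2\}$, and direct inspection of these three tiny instances shows equality in each. When $\#\textsc{runs}(q) > 5$, the reduction depends only on $q$'s interior runs and on $\#\textsc{runs}(s) = \#\textsc{runs}(s') = 5$, so it returns the same MSS value in both cases.

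The main obstacle is the degenerate case where the block matched to the length-$2$ run collapses to a single $0$: here the identity $d_{\DTW}(1,w)=d_{\DTW}(11,w)$ fails ($1$ versus $2$) and a naive swap can change the cost. I would dispose of this via a normalization lemma showing that any optimal correspondence can be perturbed without increasing cost so that the block assigned to the length-$2$ $1$-run contains at least one $1$; intuitively, the offending $0$-character can be reallocated to an adjacent $0$-run at no extra cost. A second mild extension is needed for queries whose endpoints differ from those of $s, s'$, where one invokes the extended MSS reduction from \citep{DBLP:conf/cpm/SchaarFN20} promised (but not spelled out) in the excerpt.
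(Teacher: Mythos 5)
Your proposal uses the same witness pair $s = 010110$, $s' = 011010$ and rests on the same pivotal observation ($d_{\DTW}(1,w) = d_{\DTW}(11,w)$ for non-empty $w \ne 0$) that the paper's proof does, so the main route is essentially the paper's, just phrased at the run level instead of the character level. The paper normalizes an optimal correspondence for $(s,q)$ so that each $1$ of $s$ is matched to a $q$-block beginning and/or ending with a $1$ (properties a, b, c), then re-routes those blocks onto the $1$'s of $s'$; your blocks $Q_1,\dots,Q_5$ and the two applications of the identity say the same thing more compactly. However, two points in your sketch are off and would need fixing. First, the degenerate case is not the block assigned to the \emph{length-two} $1$-run: in the $d_{\DTW}(s',q) \le d_{\DTW}(s,q)$ direction, if $Q_4$ (attached to $s[4,5]=11$) is a single $0$, re-matching it to the single-$1$ run of $s'$ only \emph{decreases} cost ($2 \to 1$), which is harmless. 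The block that must contain a $1$ is $Q_2$, the one attached to $s$'s single-$1$ run $s[2]$ and re-matched to $s'$'s double-$1$ run $s'[2,3]$; that is exactly what the paper's property (a) secures. Second, the normalization is not as simple as ``reallocate the offending $0$ to an adjacent $0$-run'' -- doing so leaves $s[2]$ with no partner. One has to argue that whenever $q$ contains a $1$, an optimal correspondence can be modified so that some $1$ of $q$ lands in $Q_2$ without raising cost; that is the multi-case surgery the paper carries out (and also only sketches).

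The route you raise as a ``sanity check'' -- the DTW-to-MSS reduction via Theorem~\ref{theorem:our_techs_dtw_mss} together with Theorem~\ref{theorem:obv_4} for mismatched endpoints -- is actually a genuinely different and arguably cleaner complete proof. Both $s$ and $s'$ have the same first and last runs (each a single $0$), so the endpoint-peeling offsets agree, and every reduced sub-instance with $\#\textsc{runs}(q) \le \#\textsc{runs}(s)$ becomes $\MSS((1,1,2),r)$ versus $\MSS((2,1,1),r)$, which are reverses of each other and hence have equal optima for every $r$; when $\#\textsc{runs}(q) > \#\textsc{runs}(s)$ the instance depends only on $q$'s interior runs and on $\#\textsc{runs}(s) = \#\textsc{runs}(s') = 5$. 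This is precisely the machinery the paper deploys for Theorems~\ref{thm:dtw_no_extra_lowerbound} and~\ref{theorem:oz_equivalence}, and it sidesteps the normalization surgery entirely; it would be worth promoting from sanity check to primary proof.
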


\begin{proof}
\revise{\linelabel{witness}We can prove this theorem by constructing a witness pair of input sequences.}
Consider the following pair of input sequences: $s = 010110$ and $s'=011010$.
We argue that this pair of input sequences cannot be distinguished by any binary sequence query $q$.

First, for query sequences that only consist of $0$, it is obvious $d_{\revise{\DTW}}(s, q) = d_{\revise{\DTW}}(s', q) = 3$.
Then we only need to consider query sequences containing 1('s).
To see $d_{\revise{\DTW}}(s, q) = d_{\revise{\DTW}}(s', q)$ in this case, we will show (1) $d_{\revise{\DTW}}(s, q) \leq d_{\revise{\DTW}}(s', q)$ and (2) $d_{\revise{\DTW}}(s, q) \geq d_{\revise{\DTW}}(s', q)$ hold simultaneously.

Note that the sequence $s$ contains three 1's.
To prove case (1), we show that there exists an optimal DTW matching satisfying the following properties:

\begin{itemize}
    \item[a)] The first 1 in $s$ is matched to a \emph{substring (c.f. Definition \ref{def:subsequence_substring})} of $q$ that begins and ends with both 1's;
    
    \item[b)] The second 1 in $s$ is matched to a substring of $q$ that begins with 1;
    
    \item[c)] The third 1 in $s$ is matched to a substring of $q$ that ends with 1.
\end{itemize}

To see the existence of such an optimal matching, we would like to show that, if any one of these properties is violated, we can find another matching with at most the same cost that does not violate these properties.
We take property a) as an example to illustrate this.
If a) is violated, then the substring in $q$ that the first 1 in $s$ gets matched to contains at least a $0$ in the beginning or the end, or both.
If this substring contains both a $0$ and a $1$, then we can map the $0$ at the beginning (or in the end) to the $0$ on the left (or right) side to the first $1$ to obtain a matching with lower cost.
We consider the substring that contains only $0$.
In the optimal matching, the first $1$ in $s$ cannot get matched to more than one $0$ because this will yield more cost than necessary.
Then it reduces to the case where $1$ is matched to a single $0$.
In this case, if the left $0$ in $s$ is matched to a substring that contains at least a $1$ in $q$, then matching the first 1 (in $s$) to this (these) 1('s) leads to a matching with lower cost, since the right 0('s) in the substring can be matched to the 0 on the right to the first 1 in $s$.
Then this leaves the discussion for the case that the first ``01'' in $s$ is matched to a substring with only 0('s) in $q$.
For ease of presentation, we denote this substring by ``$ss$-0''.
The second 0 in $s$ can always be matched to $ss$-0 because this will not yield cost and therefore we know in a potential optimal matching the first ``010'' can be matched to $ss$-0.
Since we know $q$ contains at least a single 1, this(these) 1('s) will be matched to character after the first ``010'' (i.e., the second 1)  in $s$.
We then argue that we can change this matching to obtain an equally optimal matching without violating the properties:
i) the first 0 in $s$ is matched to the substring before the first 1 in $q$; ii) the first 1 and the second 0 in $s$ are simultaneously matched to the first 1 in $q$; iii) after the first ``010'' in $s$, the matching does not change.
In this new matching, there is a cost of 1 saved and generated due to the matching changing on the first 1 and the second 0 in $s$, and therefore the overall DTW cost does not change and the matching remains optimal.
For the rest of the properties, the cases and proofs are similar.
We therefore omit the detailed analyses.

Next, we will show that, given the matching (between $s$ and $q$) with these three properties, we can find a matching between $s'$ and $q$ that will generate DTW cost at most $c$ (that is, $d_{\revise{\DTW}}(s', q) \leq c$).
In particular, we give the following reduction in two matchings.

\begin{itemize}
    \item[a)] All 0's in $s'$ get matched to the same substring in $q$ as all 0's in $s$;
    
    \item[b)] The first 1 and the second 1 in $s'$ get matched to the substring in $q$ that matches the first 1 in $s$;
    
    \item[c)] The third 1 in $s'$ gets matched to the two substrings in $q$ that match the second 1 and the third 1 in $s$.
\end{itemize}

By this matching, the cost between $s'$ and $q$ is exactly $c$.
We do not need to know if this matching is optimal for $d_{\revise{\DTW}}(s', q)$ but this shows $d_{\revise{\DTW}}(s', q) \leq c$.
We note that these three properties hold for any query sequence that contains at least a single 1, because the single 1 can be a substring of this query sequence to satisfy the properties.
Thus, this analysis covers all possible cases of a binary query sequence.
By symmetry, a similar construction can be shown for the opposite side and the conclusion is $d_{\revise{\DTW}}(s, q) \geq d_{\revise{\DTW}}(s', q)$.
Combining the two parts of the proof, we obtain that $d_{\revise{\DTW}}(s, q) = d_{\revise{\DTW}}(s', q)$ for any binary query $q$.
\end{proof}

\subsection{Recovery without Extra Characters w.r.t. Equivalence Classes}

As indicated by Theorem \ref{theorem:dtw_hardness}, there exist input sequences that cannot be distinguished by DTW distance oracle queries.
For ease of presentation, we say that any two different input sequences $s$ and $s'$ are \emph{distinguishable} if $s$ and $s'$ can be distinguished by DTW Distance Oracle queries.
We categorize mutually indistinguishable sequences into \emph{equivalence classes}.
In this context, using binary queries, the best solution we can provide in this problem setting is to recover those input sequences up to their equivalence class.

\linelabel{characterization_is_hard}The characterization of the set of \emph{indistinguishable} binary sequences, given a parameterized sequence length $n$, is not so simple to describe (which can be seen from Observation 4 of \citep{DBLP:conf/cpm/SchaarFN20}). 
However, we can propose an optimal query strategy in this setting to distinguish all distinguishable sequences and \emph{prove optimality} by making use of the reduction between the calculation of DTW distance and the min 1-separated sum problem \citep{abboud2015tight,DBLP:conf/cpm/SchaarFN20}.
We introduce the necessary results from \citep{DBLP:conf/cpm/SchaarFN20} below and interpret them in our setting.

\begin{definition}[Min 1-Separated Sum (MSS), \citep{DBLP:conf/cpm/SchaarFN20}]
The min 1-separated sum (MSS) problem takes the inputs of a sequence $\left(b_{1}, \ldots, b_{m}\right)$ of $m$ positive integers and an integer $r \geq 0$.
The problem is to select $r$ integers $b_{i_{1}}, \ldots, b_{i_{r}}$ with $1 \leq i_{1}<i_{2}<\cdots<i_{r} \leq m$ and $i_{j}<i_{j+1}-1$ for all $1 \leq j<r$ such that $\sum_{j=1}^{r} b_{i_{j}}$ is minimized.
We say $((b_{i_{1}}, \ldots, b_{i_{r}}), r)$ is an MSS instance.
\end{definition}

\begin{theorem}[DTW-to-MSS Reduction, \citep{DBLP:conf/cpm/SchaarFN20}, Theorem 2]
\label{theorem:dtw_mss}
Let $x \in\{0,1\}^{m}$ and $y \in\{0,1\}^{n}$ be two binary strings such that $x[1]=y[1], x[m]=y[n]$, and $\#\textsc{runs}(x) \geq\#\textsc{runs}(y)$. Then, the DTW distance between $x$ and $y$, i.e., $d_{\revise{\DTW}}(x, y)$, equals the sum of a solution for 
$\MSS\biggl(\Bigl(\textsc{lor}(x,2), \ldots,\textsc{lor}(x, \#\textsc{runs}(x)-1)\Bigl), (\#\textsc{runs}(x)-\#\textsc{runs}(y))/2 \biggl)$. 
\end{theorem}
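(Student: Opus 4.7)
The plan is to establish $d_{\DTW}(x, y)$ equals the MSS value via a two-sided inequality. Write $k_x = \#\textsc{runs}(x)$, $k_y = \#\textsc{runs}(y)$, and $r = (k_x - k_y)/2$. The hypotheses $x[1] = y[1]$ and $x[m] = y[n]$ force $k_x$ and $k_y$ to share parity, so $r$ is a non-negative integer and the MSS instance is well-posed. The guiding intuition is that any correspondence between $x$ and $y$ can be viewed as ``collapsing'' a set of interior runs of $x$ so that the surviving alternating run pattern of $x$ aligns with that of $y$, and the DTW cost is exactly the total length of the collapsed runs.

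For the upper bound $d_{\DTW}(x, y) \le \text{MSS value}$, I would fix a feasible selection $2 \le i_1 < \cdots < i_r \le k_x - 1$ with $i_{j+1} \ge i_j + 2$ and exhibit a correspondence of cost $\sum_{j=1}^{r} \textsc{lor}(x, i_j)$. Construct it run-by-run: each non-selected run of $x$ is aligned with the corresponding (same-type) run of $y$, using expansions to absorb any length difference at zero cost; each selected run $i_j$ is folded into an adjacent block by matching its $\textsc{lor}(x, i_j)$ characters to opposite-type characters of $y$, contributing exactly $\textsc{lor}(x, i_j)$. The non-adjacency condition ensures that the $k_x - 2r = k_y$ surviving runs of $x$ still strictly alternate and begin/end with the correct types forced by $y[1]$ and $y[n]$, making the one-to-one run alignment well defined. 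Minimising over feasible selections gives the upper bound.

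For the lower bound $d_{\DTW}(x, y) \ge \text{MSS value}$, I would take any optimal correspondence and extract from it a feasible MSS solution whose sum is at most the correspondence cost. The structural lemma driving the argument is a normal form: any optimal correspondence can be transformed, without increasing its cost, into one in which each $x$-run is either (a) matched entirely to same-type $y$-characters via an equal-length expansion on both sides (cost $0$), or (b) matched entirely to opposite-type $y$-characters with the number of path cells in that run equal to its own length (cost exactly $\textsc{lor}(x, i)$). Letting $S$ be the set of type-(b) run indices, the cost equals $\sum_{i \in S} \textsc{lor}(x, i)$; the endpoint conditions $x[1] = y[1]$ and $x[m] = y[n]$ enforce $1, k_x \notin S$; no two indices of $S$ can be consecutive, since otherwise a short local swap turning two adjacent type-(b) runs into type-(a) runs (at the cost of at most one new type-(b) run elsewhere) would strictly decrease the cost and contradict optimality; and a run-count tally matches $|S| = r$. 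Thus $S$ is a feasible MSS solution of value equal to the DTW cost.

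The main obstacle is proving the normal form lemma. A priori, an optimal alignment may contain ``mixed'' path segments in which a single run of $x$ has some characters matched same-type and others matched opposite-type, or a type-(b) run spans more $y$-characters than its own length, inflating its apparent cost above $\textsc{lor}(x, i)$. Both phenomena must be eliminated by local exchange: whenever the path crosses a $y$-run boundary in the interior of an $x$-run, shift the crossing point so the $x$-run is matched to a single-type $y$-block; whenever a type-(b) run uses excess path cells, compress them into a diagonal of exactly $\textsc{lor}(x, i)$ cells. Verifying that each such rewrite is cost-nonincreasing, and that the process terminates in the claimed normal form, is the technical crux of the argument. Once the normal form is in hand, the bookkeeping for $|S| = r$ and the value identity is routine.
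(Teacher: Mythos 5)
This theorem is imported verbatim from \citet{DBLP:conf/cpm/SchaarFN20} (their Theorem~2); the present paper does not prove it, so there is no internal proof to compare your attempt against. Evaluated on its own, your outline captures the right high-level picture (collapsed interior runs $\leftrightarrow$ MSS selection, two-sided inequality), and the upper-bound construction is essentially correct once one notes that removing $r$ pairwise non-adjacent interior runs of $x$ reduces \#\textsc{runs} by exactly $2r$ and that the collapsed run is always ``swallowed'' by a single opposite-type run of $y$.

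The lower-bound half, however, still has a genuine gap. You identify the normal-form lemma as ``the technical crux'' but do not prove it, and the informal justification you give for the non-adjacency of the selected set $S$ is both unnecessary and fragile. Unnecessary, because if you do establish the normal form, the non-adjacency of $S$ falls out structurally: partition the runs of $x$ into maximal blocks mapping to a single run of $y$; within a block the (b)-runs and (a)-runs alternate with (a)-runs at both ends (the first and last $x$-run in the block must share $y$'s type), and across consecutive blocks both boundary runs are (a)-runs, so two (b)-runs can never be adjacent. Fragile, because your ``local swap'' argument would have you convert two adjacent (b)-runs $i, i{+}1$ into (a)-runs while introducing ``at most one new (b)-run elsewhere'' — but you haven't shown such a replacement always exists, nor that the replacement run $j$ satisfies $\textsc{lor}(x,j) < \textsc{lor}(x,i) + \textsc{lor}(x,i{+}1)$, so the claimed strict cost decrease is not justified. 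More fundamentally, the normal-form lemma itself (eliminating mixed-type runs of $x$ and compressing over-wide (b)-runs, all without increasing cost, with a terminating rewrite system) is exactly where the real work lies in \citet{DBLP:conf/cpm/SchaarFN20}, and your sketch does not supply that argument. Without it, you have the structure of a proof but not a proof.
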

\noindent
For ease of presentation, we will use $\revise{\MSS}(x, (\#\textsc{runs}(x)-\#\textsc{runs}(y))/2)$ to represent the same MSS instance.
\begin{theorem}[\cite{DBLP:conf/cpm/SchaarFN20}, Observation 4]
\label{theorem:obv_4}
Let $x \in\{0,1\}^{m}, y \in\{0,1\}^{n}$ with $m^{\prime}:=\#\textsc{runs}(x) \geq n^{\prime}:=\#\textsc{runs}(y)$. Further, let $a:=\textsc{lor}(x,1), a^{\prime}:=\textsc{lor}(x,m^{\prime}), b:=\textsc{lor}(y,1)$, and $b^{\prime}:=\textsc{lor}(y,n^{\prime})$. The following holds:

\noindent
If $x[1] \neq y[1]$, then:
\begin{align*}
    & d_{\revise{\DTW}}(x, y)  \nonumber \\    
& = \begin{cases}\max (a, b), & m^{\prime}=n^{\prime}=1; \\ 
a+d_{\revise{\DTW}}(x[a+1, m], y), & m^{\prime}>n^{\prime}=1;  \nonumber\\ 
\parbox{5.5cm}{$\min \left(a+d_{\revise{\DTW}}(x[a+1, m], y), \right. \\
    \left. \quad \quad ~~ b+d_{\revise{\DTW}}(x, y[b+1, n])\right)$} & n^{\prime}>1.\end{cases} 
\end{align*}

\noindent
If $x[1]=y[1]$ and $x[m] \neq y[n]$, then:
\begin{align*}
& d_{\revise{\DTW}}(x, y)  \nonumber \\    
& = \begin{cases}a^{\prime}+d_{\revise{\DTW}}\left(x\left[1, m-a^{\prime}\right], y\right), & n^{\prime}=1; \\ 
\parbox{6.3cm}{$\min \left(a^{\prime}+d_{\revise{\DTW}}\left(x\left[1, m-a^{\prime}\right], y\right), \right. \\
    \left. \quad \quad ~~
    b^{\prime}+d_{\revise{\DTW}}\left(x, y\left[1, n-b^{\prime}\right]\right)\right)$} & n^{\prime}>1.\end{cases}
\end{align*}
\end{theorem}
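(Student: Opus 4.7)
The plan is to prove Theorem \ref{theorem:obv_4} by unwinding the correspondence definition of $d_{\DTW}$ and analyzing how the first runs of $x$ and $y$ align in an optimal correspondence. The second case ($x[1] = y[1]$, $x[m] \neq y[n]$) reduces to the first case ($x[1] \neq y[1]$) via reversal: since $d_{\DTW}$ is invariant under simultaneously reversing both inputs, applying Case 1 to $(\tilde{x}, \tilde{y})$, whose first-run lengths are $a'$ and $b'$, immediately yields the recurrences for Case 2. So the real work is Case 1.

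For the base sub-case $m' = n' = 1$ with $x[1] \neq y[1]$, every matched pair in any correspondence $(\bar{x}, \bar{y})$ has cost $|x_1 - y_1| = 1$, so the total cost equals the expansion length $L \geq \max(a,b)$, and this bound is attained by padding the shorter sequence. For the remaining sub-cases I would prove matching upper and lower bounds. The upper-bound direction is straightforward: to realize $a + d_{\DTW}(x[a+1, m], y)$, I prepend $a$ copies of $x_1$ to the $x$-side and $a$ copies of $y_1$ to the $y$-side of any optimal correspondence for $(x[a+1, m], y)$ (the prepended $y_1$'s are absorbed into $y$'s first-run expansion); the $a$ new pairs contribute cost exactly $a$. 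Symmetrically for the $b$-branch. For the lower bound, I would take any optimal correspondence for $(x, y)$ and identify the ``exit point'' from the cost-$1$ block $[1, a] \times [1, b]$ (where $x$ is at character $x_1$ and $y$ at character $y_1$): it either leaves through row $a$ first (consuming all of $x$'s first run first) or through column $b$ first. In the former case, the portion of the correspondence inside the block has cost $\geq a$, and the remainder is a correspondence for $(x[a+1, m], y[j^{**}, n])$ for some $j^{**} \leq b$. A key sub-claim, $d_{\DTW}(x[a+1, m], y[j^{**}, n]) \geq d_{\DTW}(x[a+1, m], y)$, follows because $x_{a+1} = 1 - x_1 = y_1$, so one can prepend $(j^{**}-1)$ cost-free pairs $(x_{a+1}, y_1)$ to extend any such correspondence into one for $(x[a+1, m], y)$. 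The symmetric sub-case yields the $b$-branch, and the minimum recovers the stated formula. The sub-case $n' = 1$ is handled by observing that the $b$-branch is vacuous since $y[b+1, n]$ is empty.

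The main obstacle will be the lower-bound argument: specifically, the surgery that splits an arbitrary optimal correspondence at its exit point from the first-run block, and the sub-claim $d_{\DTW}(x[a+1, m], y[j^{**}, n]) \geq d_{\DTW}(x[a+1, m], y)$, which relies crucially on the binary alphabet (ensuring $x_{a+1} = y_1$) and on careful bookkeeping about which expansions absorb the prepended characters at the boundary $y_{j^{**}}$. Once Case 1 is nailed down, the reversal reduction to Case 2 is routine, modulo verifying that $\widetilde{x[1, m-a']} = \tilde{x}[a'+1, m]$ under the reversal convention and that DTW cost is preserved under simultaneous reversal.
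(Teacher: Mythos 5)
The paper imports this statement verbatim from Schaar, Froese, and Niedermeier (CPM~2020) as their Observation~4 and does not reproduce a proof, so there is no in-paper argument to compare against. Your reconstruction is nevertheless essentially sound and follows the canonical line of reasoning for such DTW recurrences: handle $m'=n'=1$ by observing that every matched pair has unit cost, reduce Case~2 to Case~1 by simultaneous reversal (which preserves both the DTW value and the run structure, and sends $x[1,m-a']$ to $\tilde{x}[a'+1,m]$), and prove the recursive sub-cases by a matching upper bound (prepend $a$ unit-cost pairs $(x_1,y_1)$, resp.\ $b$ such pairs, to an optimal correspondence for the truncated problem) and a lower bound obtained by splitting an optimal warping path at its exit from the unit-cost block $[1,a]\times[1,b]$. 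The crucial binary-alphabet fact that $x_{a+1}=1-x_1=y_1$, so that the prepended pairs in the monotonicity sub-claim $d_{\DTW}(x[a+1,m],\,y[j^{**},n])\ge d_{\DTW}(x[a+1,m],\,y)$ have cost zero, is exactly the right ingredient, and it is what makes the recurrence specific to the binary case.

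Two details to tighten in a full write-up. First, the corner exit: an optimal path may leave the block along the diagonal, stepping from $(a,b)$ to $(a+1,b+1)$, so the exit parameter should allow $j^{**}=b+1$ rather than only $j^{**}\le b$. The same prepending argument still works because the first $b$ characters of $y$ are all equal to $y_1$ (so $y_1^{\,b}$ followed by an expansion of $y[b+1,n]$ is a valid expansion of $y$), but the case should be stated explicitly; combined with the bound $k^*\ge\max(a,b)\ge a$ on the in-block cost, the corner exit is absorbed into the $a$-branch. Second, for the sub-case $m'>n'=1$ the stated formula is an equality rather than a minimum; it is cleaner to argue that, since $m>a$ and there is no column $n{+}1$ to exit into, the path \emph{must} exit through row $a$, so only the $a$-branch of the lower-bound argument arises, rather than merely saying the $b$-branch is ``vacuous.'' With these two patches the argument is complete and is the expected style of correspondence-splitting proof for this recurrence.
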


In Theorem \ref{theorem:obv_4}, we call $a, b, a'$ and $b'$ (which are the length of first/last blocks of $x$ or $y$) \textit{offsets}. Theorem \ref{theorem:obv_4} actually states that, for two sequences $x, y$ with different starting and ending characters, by removing the first/last run of $x$ or $y$, calculating $d_{\revise{\DTW}}(x, y)$ can be reduced to calculating the offset and solving a DTW sub-problem where the sub-sequences start and end with the same character.

To illustrate how we can transfer a DTW problem to an MSS instance, we give a concrete example here.
Let $s = 010110$, $q^{(1)} = 010$, $q^{(2)} = 011$.
We first consider the calculation of the DTW distance between $s$ and $q^{(1)}$.
Since the first and the last blocks of $s$ and $q^{(1)}$ are the same and the number of runs of $s$ is more than that of $q^{(1)}$, we can directly apply Theorem \ref{theorem:dtw_mss}, where we have the MSS instance $\revise{\MSS}((1, 1, 2), 1)$ and the DTW distance is equal to the solution to this MSS instance.
As for the computation of the DTW distance between $s$ and $q^{(2)}$, we need to first apply Theorem \ref{theorem:obv_4} since the last blocks of $s$ and $q^{(2)}$ are different.
By Theorem \ref{theorem:obv_4}, $d_{\revise{\DTW}}(s, q^{(2)}) = \min (1 + d_{\revise{\DTW}}(\text{``}01011\text{''}, q^{(2)}), 2 + d_{\revise{\DTW}}(s, \text{``}0\text{''}))$.
Then by Theorem \ref{theorem:dtw_mss}, the calculation of $d_{\revise{\DTW}}(\text{``}01011\text{''}, q^{(2)})$ yields the MSS instance $\revise{\MSS}((1, 1), 1)$ and computing $d_{\revise{\DTW}}(s, \text{``}0\text{''}))$ is equivalent to
$\revise{\MSS}((1, 1, 2), 2)$.

\noindent
We now show the lower bound on the query complexity using binary queries.

\begin{theorem}[\revise{Lower Bound for DTW Equivalence Class Recovery}]
\label{thm:dtw_no_extra_lowerbound}
    For binary alphabet $\bits$, any algorithm to recover an arbitrary input sequence $s \in \bits^\ell$, where $0 \leq \ell \leq n$, up to equivalence class, by querying the DTW distance to a set of sequences, requires a query complexity of $\Omega(n)$.
\end{theorem}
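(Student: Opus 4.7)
The plan is to use an adversarial argument built on the DTW-to-MSS reduction (Theorem~\ref{theorem:dtw_mss}) to establish the lower bound. Concretely, I will exhibit $\Omega(n)$ pairs of distinguishable input sequences such that each pair can only be distinguished by queries whose number of runs lies in a short, pairwise-disjoint interval; any algorithm making $o(n)$ queries must miss at least one such interval and therefore fail on the corresponding pair.

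First, I fix candidate pairs $(s, s')$ sharing the same first character, last character, and the same number of runs $\#\textsc{runs}(s) = \#\textsc{runs}(s') = r$. By Theorem~\ref{theorem:dtw_mss}, whenever a query $q$ agrees with $s, s'$ on the boundary characters and $\#\textsc{runs}(q) \le r$, we have $d_{\DTW}(q, s) = \MSS(seq, k)$ where $seq$ is the middle-run-length vector of $s$ and $k = (r - \#\textsc{runs}(q))/2$, and analogously $d_{\DTW}(q, s') = \MSS(seq', k)$. When $\#\textsc{runs}(q) > r$ the roles reverse and the distance depends only on $q$ and $r$, so such queries automatically fail to distinguish $s$ from $s'$. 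For queries with different boundary characters, I invoke Theorem~\ref{theorem:obv_4} to strip off first/last runs; by forcing the first and last runs of $s$ and $s'$ to be identical, the offsets produced by that recursion are identical for $s$ and $s'$, and distinguishability reduces once more to whether the residual MSS values agree.

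Next, I construct $seq$ and $seq'$ of constant length so that $\MSS(seq, 1) \ne \MSS(seq', 1)$ but $\MSS(seq, k) = \MSS(seq', k)$ for every feasible $k \ge 2$. The informal example $s_1 = 01^301^30^31^30^31^30$ versus $s_2 = 01^30^21^30^21^30^31^30$ is the template: a single-element MSS selects the unique minimum middle run, which differs between the two, while any $1$-separated selection of $\ge 2$ runs is forced by the symmetric padding to have the same total on both sides. Embedding $seq, seq'$ as the middle-run lengths of two otherwise identical sequences yields a pair $(s, s')$ that (i) lies in distinct equivalence classes, certified by a query with $\#\textsc{runs}(q) = r - 2$, yet (ii) is indistinguishable by any query with $\#\textsc{runs}(q) \notin [r - c_2, r - c_1]$ for fixed constants $0 < c_1 < c_2$ (with a small slack on the upper side to absorb the finite offset case from Theorem~\ref{theorem:obv_4}). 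Repeating this template at run-counts $r_1 < r_2 < \cdots$ spaced by more than the constant interval width produces a family $\{(s_i, s'_i)\}_{i=1}^{\Omega(n)}$ whose distinguishing intervals are pairwise disjoint and whose sequences all have length at most $n$.

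Finally, a pigeonhole argument closes the proof. Any algorithm issuing fewer than the number of pairs in our family must produce a query transcript whose collection of run-counts misses the distinguishing interval for some pair $(s_j, s'_j)$. By the construction, every such query returns the same DTW value on inputs $s_j$ and $s'_j$, so the algorithm's output is identical under either input; since $s_j$ and $s'_j$ lie in distinct equivalence classes, the algorithm must be wrong on one of them. Hence $\Omega(n)$ queries are necessary. The main obstacle I anticipate is the careful verification that $\MSS(seq, k) = \MSS(seq', k)$ collapses for \emph{every} $k \ge 2$ (not merely the particular $k$ used in the witness) and that the offsets induced by Theorem~\ref{theorem:obv_4} genuinely cancel across $s$ and $s'$ in the mixed-boundary cases; this is a finite but delicate case analysis which is made tractable by the symmetric padding chosen in the construction.
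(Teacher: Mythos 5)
Your proposal is correct and takes essentially the same approach as the paper: you identify the same DTW-to-MSS reduction as the key tool, construct the same template pair (your $s_1, s_2$ example is precisely the paper's witnesses at $c=2$) with a constant-size differing core of middle-run lengths padded by all-$3$s, use Theorem~\ref{theorem:obv_4} to cancel offsets in the mixed-boundary case, argue a narrow run-count window suffices to distinguish, and close with a pigeonhole over $\Omega(n)$ disjoint windows. The paper's Claims~\ref{claim:MSS_133232} and~\ref{claim:mss_unique_diff} flesh out exactly the verification you flag as the remaining delicate step.
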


\begin{proof}
We will assume the input sequence is of length $\leq n$ and all the query sequences are of length $\mc{O}(n)$, when the context is clear in the proof.
\begin{claim}
\label{claim:MSS_133232}
    Given $\revise{\MSS}_1((\underbrace{3, \ldots, 3}_\text{a}, 1, 3, 3, \underbrace{3, \ldots, 3}_\text{b} ), x)$ and $\revise{\MSS}_2((\underbrace{3, \ldots, 3}_\text{a}, 2, 3, 2, \underbrace{3, \ldots, 3}_\text{b} ), x)$, we claim that when $x=1$, $\revise{\MSS}_1 \neq \revise{\MSS}_2$ and when $2 \leq x\leq (a+b+4)/2$, $\revise{\MSS}_1 = \revise{\MSS}_2$.
\end{claim}

\bigskip
\noindent\textit{Proof of claim \ref{claim:MSS_133232}.}
By the definition of MSS, when $x=1$, $\revise{\MSS}_1 = \min (3, \ldots, 3, 1, 3, 3, 3, \ldots, 3) = 1$ and $\revise{\MSS}_2 = \min (3, \ldots, 3, 2, 3, 2, 3, \ldots, 3) = 2$.
When $2 \leq x < (a+b+4)/2$, $\revise{\MSS}_1 = 3x-2 = \revise{\MSS}_2 $.
When $x = (a+b+4)/2$, if $a$ is odd, $\revise{\MSS}_1 = 3x = \revise{\MSS}_2$; otherwise, $\revise{\MSS}_1 = 3x-2 = \revise{\MSS}_2 $.

\bigskip
\begin{claim}
\label{claim:mss_unique_diff}
Let $\mc{Q}$ be a query set which can distinguish any pair of binary input sequences \revise{\linelabel{that-are-distinguishable}that are distinguishable}.
For $\forall c \in \mathbb{N}_{+}$ such that $6c+9 \leq n$, $\exists q \in \mc{Q}$ such that \#\textsc{runs}($q$) $\in [2c, 2c+6]$.
\end{claim}

\bigskip
\noindent\textit{Proof of claim \ref{claim:mss_unique_diff}.} 
Consider two input sequences, $s = 01^301^3(0^31^3)^c0$ and $s' = 01^30^21^30^21^3(0^31^3)^{c-1}0$, where $\revise{\len}(s) = \revise{\len}(s') = 6c+9 \leq n$.
We know that \#\textsc{runs}($s$) $=$ \#\textsc{runs}($s'$) $= 2c+5$.
First we show that $s$ and $s'$ are distinguishable.
Let $q^\dagger = 0(10)^c10$, where \#\textsc{runs}($q^\dagger$) $= 2c+3$.
According to Theorem \ref{theorem:dtw_mss}, $d_{\revise{\DTW}}(s, q^\dagger) = \revise{\MSS}((3, 1, 3, 3, \ldots), 1) = 1$ and $d_{\revise{\DTW}}(s', q^\dagger) = \revise{\MSS}((3, 2, 3, 2, 3, \ldots), 1) = 2$.
Thus, $q^\dagger$ can distinguish $s$ and $s'$.

Next we show, for any query $q$ such that \#\textsc{runs}($q$) $\geq 2c+7$ or $\leq 2c-1$, $d_{\revise{\DTW}}(s, q) = d_{\revise{\DTW}}(s', q)$.
Note that, to compute the DTW distances, according to Theorem \ref{theorem:obv_4}, we may remove the first/last blocks of $s$ (and $s'$) or $q$ to reduce to the case of Theorem \ref{theorem:dtw_mss}. 
Since $s$ and $s'$ have the same first and last blocks, 
the offsets while reducing to the case of Theorem \ref{theorem:dtw_mss} are the same. To prove that $d_{\revise{\DTW}}(s, q) = d_{\revise{\DTW}}(s', q)$, we only need to prove that for each possible reduction, the corresponding reduced MSS instances have the same sum of solutions (see Example \ref{example:mss_reduction_1} for illustration). 

\begin{example}
\label{example:mss_reduction_1}
To illustrate, take $c=2$ and $q=101$. In this case, we would have $s = 01^301^30^31^30^31^30$, $s' = 01^30^21^30^21^30^31^30$.
According to Theorem \ref{theorem:obv_4}, we have
 $d_{\revise{\DTW}}(s, q) = \min (1 + d_{\revise{\DTW}}(s[2, \revise{\len}(s)], q), 1 + d_{\revise{\DTW}}(s, q[2, \revise{\len}(q)])) = \min (1 + d_{\revise{\DTW}}(s[2, 21], q), 1 + d_{\revise{\DTW}}(s, q[2, 3]))$.
 Then $d_{\revise{\DTW}}(s[2, 21], q) = \min (1 + d_{\revise{\DTW}}(s[2, 20], q), 1 + d_{\revise{\DTW}}(s[2, 21], q[1, 2]))$.
 Also, we note that $d_{\revise{\DTW}}(s, q[2, 3])) = \min (1 + d_{\revise{\DTW}}(s[1, 20], q[2, 3]), 1 + d_{\revise{\DTW}}(s, q[2]))$.
 
 Therefore, to show that $d_{\revise{\DTW}}(s, q) = d_{\revise{\DTW}}(s', q)$, we only need to prove that 
 $$\begin{cases}d_{\revise{\DTW}}(s[2, 21], q) = d_{\revise{\DTW}}(s'[2, 21], q); \\ d_{\revise{\DTW}}(s[2, 20], q[1, 2]) = d_{\revise{\DTW}}(s'[2, 21], q[1, 2]); \\ d_{\revise{\DTW}}(s[1, 20], q[2, 3]) = d_{\revise{\DTW}}(s'[1, 20], q[2, 3]);\\
 d_{\revise{\DTW}}(s, q[2, 3] ) = d_{\revise{\DTW}}(s', q[2, 3] ), \end{cases}$$
where each of the 4 cases corresponds to an MSS instance.
\end{example}

Suppose after applying Theorem \ref{theorem:obv_4}, $s$ and $q$ are reduced to sub-sequences $s^*$ and $q^*$ (where $s^*$ and $q^*$ have the same beginning and ending characters), while $s'$ and $q$ are reduced to $s'^*$ and $q^*$. Now we calculate $d_{\revise{\DTW}}(s^*, q^*)$ and $d_{\revise{\DTW}}(s'^*, q^*)$ according to Theorem \ref{theorem:dtw_mss}. Suppose $s^*$ and $s'^*$ have $k^*$ runs and $q^*$ have $l^*$ runs.

Case 1. If \#\textsc{runs}($q$)$\geq 2c+7 = $\#\textsc{runs}($s$)$+2$, then $k^* \leq l^*$, by Theorem \ref{theorem:dtw_mss} the generated MSS instance only depends on $q^*$ and $k^*$. Thus, $d_{\revise{\DTW}}(s^*, q^*) = d_{\revise{\DTW}}(s'^*, q^*)$.

Case 2. If \#\textsc{runs}($q$)$\leq 2c-1$, then $k^* > l^*$ and we have the MSS instances $\revise{\MSS}(s^*, (k^* - l^*)/2)$ and $\revise{\MSS}(s'^*, (k^* - l^*)/2)$. Note that, $(k^* - l^*)/2 \geq (\#\textsc{runs}(s)-\#\textsc{runs}(q)-2)/2 \geq ((2c+5)-(2c-1)-2)/2 = 2$.
By Claim \ref{claim:MSS_133232}, we have $\revise{\MSS}(s^*, (k^* - l^*)/2) = \revise{\MSS}(s'^*, (k^* - l^*)/2)$. Thus, $d_{\revise{\DTW}}(s^*, q^*) = d_{\revise{\DTW}}(s'^*, q^*)$.

Combining case 1 and case 2, we know $d_{\revise{\DTW}}(s, q) = d_{\revise{\DTW}}(s', q)$ when \#\textsc{runs}($q$) $\geq 2c+7$ or $\leq 2c-1$.
Since there always exists $q\in \mc{Q}$ that can distinguish $s$ and $s'$, we know that \#\textsc{runs}($q$) $\in [2c, 2c+6]$, which proves the claim.

\bigskip

Let $c' \in \mathbb{N}_{+}$ satisfy $24 c' -9 \leq n$.
Let $c = 4c' -3$. We have $6c+9 \leq n$.
By Claim \ref{claim:mss_unique_diff}, $\exists q \in \mc{Q}$ such that \#\textsc{runs}($q$) $\in [2c, 2c+6]$, i.e., \#\textsc{runs}($q$) $\in [8c'-6, 8c']$.
For $c' = 1, 2, \ldots$, intervals  $[8c'-6, 8c']$ are disjoint.
Therefore, there should be at least $\lfloor (n+9)/24 \rfloor = \Omega(n)$ queries in the set $\mc{Q}$.
\end{proof}

With these useful results at hand, now we prove the following results for recovering sequences using the DTW distance oracle with only binary queries.

\begin{theorem}[Non-adaptive Strategy for DTW Equivalence Class Recovery]
\label{theorem:oz_equivalence}
There exists a set $\mc{Q}$ of \revise{$2n \in \mc{O}(n)$} queries, each of which has $\mc{O}(n)$ length, such that for any two different input sequences $s$ and $s'$, $s$ and $s'$ are distinguishable $\iff$ $s$ and $s'$ can be distinguished by $\mc{Q}$.
\end{theorem}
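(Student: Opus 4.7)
My plan is to take $\mc{Q}=\{q_c^{(k)}:c\in\{0,1\},\,k\in[n]\}$, where $q_c^{(k)}$ is the length-$k$ alternating sequence starting with $c$ (so $q_0^{(4)}=0101$). Then $|\mc{Q}|=2n$, each query has length at most $n$, and the $(\Leftarrow)$ direction is immediate.

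\textbf{Strategy and principal regime.} For the $(\Rightarrow)$ direction, suppose some query $q$ satisfies $d_{\DTW}(s,q)\ne d_{\DTW}(s',q)$; I will produce $q^{\star}\in\mc{Q}$ with the same property. The central tool is the expansion of $d_{\DTW}(\cdot,q)$ using Theorems~\ref{theorem:obv_4} and~\ref{theorem:dtw_mss}, which unfolds it into a minimum over finitely many reduction branches of (sum of boundary-run offsets stripped from leading/trailing runs) $+$ (one MSS value on the middle runs of whichever reduced sequence has more runs, with parameter equal to half the run-count gap). When $\#\textsc{runs}(q)\le\min(\#\textsc{runs}(s),\#\textsc{runs}(s'))$ and all three sequences share first and last characters, the $q$-side offsets vanish and the MSS reads the middle runs of $s$ (resp.\ $s'$) with a parameter depending only on $\#\textsc{runs}(q)$; hence $d_{\DTW}(s,q)=d_{\DTW}(s,q^{\star})$ and $d_{\DTW}(s',q)=d_{\DTW}(s',q^{\star})$ for the canonical choice $q^{\star}=q_{q[1]}^{(\#\textsc{runs}(q))}$, and we are done. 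When $\#\textsc{runs}(q)>\max(\#\textsc{runs}(s),\#\textsc{runs}(s'))$ with matched boundaries, the MSS reads $q$'s middle runs and $d_{\DTW}(s,q)-d_{\DTW}(s',q)$ can only be nonzero if $\#\textsc{runs}(s)\ne\#\textsc{runs}(s')$; in that case I instead select a canonical $q^{\star}$ with $k<\min(\#\textsc{runs}(s),\#\textsc{runs}(s'))$ runs, so that the MSS parameters $(\#\textsc{runs}(s)-k)/2$ and $(\#\textsc{runs}(s')-k)/2$ disagree, and one can verify by trying both starting characters that some $q_c^{(k)}$ realizes this disagreement.

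\textbf{Main obstacle.} The delicate step is the case where one or more boundary characters of $s,s',q$ disagree, since then Theorem~\ref{theorem:obv_4} introduces a $\min$ over branches whose $q$-side offsets involve the actual first/last-run lengths of $q$, which differ between $q$ and the alternating $q^{\star}$. My plan is to dispatch this via a finite case analysis on the four parity combinations of (first-character agreement) and (last-character agreement) between $(s,q)$ and $(s',q)$. The simplifying observation is that when $s$ and $s'$ share boundary characters, the $q$-side offsets contribute identically to $d_{\DTW}(s,q)$ and $d_{\DTW}(s',q)$, so they cancel in the difference, leaving a residual that depends on $q$ only through $\#\textsc{runs}(q)$ and $q[1]$ --- exactly the features that $q^{\star}$ preserves. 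When $s$ and $s'$ themselves have mismatched boundaries, the single-run canonical queries $q_0^{(1)}=0$ and $q_1^{(1)}=1$ already distinguish them unless $s$ and $s'$ have identical character counts; in that residual sub-case a careful extension of the MSS-based argument, using both starting characters to force the $\min$-recursion onto a common branch, completes the proof.
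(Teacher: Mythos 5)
Your construction is genuinely different from the paper's, and unfortunately it is incorrect: the queries really do need the long first/last runs that the paper gives them.

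The paper's queries $z_i, o_i$ all have first and last runs of length $n$. This is not an aesthetic choice; it is the load-bearing feature of the argument. When the DTW recursion from Observation~4 (Theorem~\ref{theorem:obv_4}) is unrolled, each step takes a $\min$ between ``strip a boundary run of $s$ (paying $\textsc{lor}(s,1)$ or $\textsc{lor}(s,k)$)'' and ``strip a boundary run of $q$ (paying $\textsc{lor}(q,1)$ or $\textsc{lor}(q,\#\textsc{runs}(q))$)''. With the paper's queries, stripping from $q$ costs $n$, which is at least as large as the entire $s$-side contribution, so the minimum always strips from $s$; the distance therefore exactly exposes the $s$-offsets plus an MSS value on a truncation of $s$. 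This is what makes Claim~\ref{claim:first_last_blocks} ($\textsc{lor}(s,1)$ and $\textsc{lor}(s,k)$ are recoverable) and Case~3 of the proof work. Your length-$k$ alternating queries have all runs of length $1$, so stripping from $q$ costs $1$ and the $\min$ is usually won by that branch, hiding the $s$-offsets entirely. Your sentence ``the $q$-side offsets contribute identically ... so they cancel in the difference'' is exactly where the argument breaks: the outer $\min$ is not linear, so nothing cancels, and which branch wins can depend on $\textsc{lor}(s,1)$ and $\textsc{lor}(s,k)$ in a way your canonical $q^\star$ cannot see.

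Here is a concrete counterexample (take $n\geq 11$). Let $s = 0^5\,1\,0^5$ and $s' = 0^4\,1\,0^6$. Both have three runs with condensed expression $010$, middle run $(1)$, ten $0$'s and one $1$, so all of $q_0^{(1)}=0$, $q_1^{(1)}=1$, $q_0^{(3)}=010$, etc.\ agree. For $q_0^{(2)}=01$ one gets $\min(\textsc{lor}(s,3)+0,\ 1+d_{\DTW}(s,0))=\min(5,2)=2$ for $s$ and $\min(6,2)=2$ for $s'$; for $q_1^{(2)}=10$, $\min(5,2)=2$ and $\min(4,2)=2$. In general, for every $c\in\{0,1\}$ and $k\in[n]$ the ``strip from $q$'' branch (cost $1$) dominates the branch carrying $\textsc{lor}(s,\cdot)$ (cost $4,5,$ or $6$), and a short induction on $k$ shows $d_{\DTW}(s,q_c^{(k)})=d_{\DTW}(s',q_c^{(k)})$ for all $2n$ of your queries. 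Yet $s,s'$ \emph{are} distinguishable: with the paper's $z_2 = 0^n 1^n$ (or already with $0^5 1^5$), $d_{\DTW}(s,0^n 1^n)=5$ while $d_{\DTW}(s',0^n 1^n)=6$, because the long first run of the query forbids the cheap ``strip from $q$'' branch and forces the recursion to read off $\textsc{lor}(s,3)$ versus $\textsc{lor}(s',3)$. So your $\mc{Q}$ fails the forward implication of the theorem. To repair the proof you would have to replace the alternating sequences with something that resists being trimmed --- which is precisely what the paper's $0^n\cdots 0^n$ and $1^n\cdots 1^n$ padding accomplishes.
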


\begin{proof}
First ($\Leftarrow$), for any given query set $\mc{Q}$ and two different input sequences $s$ and $s'$, if $s$ and $s'$ can be distinguished by $\mc{Q}$ then $s$ and $s'$ are distinguishable.
Then we need to prove the opposite side ($\Rightarrow$).
To see this, we construct the following query set $\mc{Q}$ \revise{of size $2n \in \mc{O}(n)$} and prove the contrapositive: if $s$ and $s'$ cannot be distinguished by $\mc{Q}$, then $s$ and $s'$ are not distinguishable.

Let $$z_i = \begin{cases} 0^{n},\ i = 1; \\ 0^{n}1(01)^{m-1}0^{n} ,\ i = 2m+1; \\ 0^{n}(10)^{m-1}1^{n},\ i = 2m,\end{cases}$$ and $$o_i = \begin{cases} 1^{n},\ i = 1; \\ 1^{n}0(10)^{m-1}1^{n} ,\ i = 2m+1; \\ 1^{n}(01)^{m-1}0^{n},\ i = 2m,\end{cases}$$
where $1\leq i \leq n$ and $m$ is an positive integer. It is clear that $o_i$'s and $z_i$'s are of $\mc{O}(n)$ length. Let $\mc{Q} = \{o_i|1\leq i\leq n\}\bigcup\{z_i|1\leq i\leq n\}$. We show that given any two different input sequences $s$ and $s'$, if $s$ and $s'$ cannot be distinguished by $\mc{Q}$ then $s$ and $s'$ are not distinguishable.

\begin{claim}
\label{claim:condensed_exp}
Given two different input sequences $s$ and $s'$, if the condensed expressions of $s$ and $s'$ are different, then $s$ and $s'$ can be distinguished by $\mc{Q}$. 
\end{claim}
\noindent\textit{Proof of claim \ref{claim:condensed_exp}.} 
We note that the condensed expressions of $o_i$ and $z_i$ for $1 \leq i \leq n$ cover all possible condensed expressions for a sequence with length at most $n$. Therefore, for input sequence $s$, we can find a query sequence $q\in\mc{Q}$ such that $s$ and $q$ have the same condensed expression, and we would have $d_{\revise{\DTW}}(s, q) = 0$.
Since $s$ and $s'$ have different condensed expressions, we would have $d_{\revise{\DTW}}(s', q) \neq 0$. Thus, $q$ distinguishes $s$ and $s'$.

\leavevmode\newline
\noindent
Suppose $s$ and $s'$ cannot be distinguished by $\mc{Q}$. By Claim \ref{claim:condensed_exp}, we know that the condensed expression of $s$ and $s'$ are the same. Let the number of runs in $s$ and $s'$ be $k = \#\textsc{runs}(s) = \#\textsc{runs}(s')$.

\begin{claim}
\label{claim:runs_geq_3}
If $s$ and $s'$ cannot be distinguished by $\mc{Q}$, then $k\geq 3$.
\end{claim}

\noindent\textit{Proof of claim \ref{claim:runs_geq_3}}.
Consider the query sequence $z_1 = 0^n$ and $o_1 = 1^n$.
By querying $z_1$ and $o_1$, we can obtain the number of 1's and 0's in the input sequence. If $k \leq 2$, then $s$ (and $s'$) would contain at most a single $0$-run and a $1$-run. With queries $z_1$ and $o_1$ we can determine the length of the $0$-run and the $1$-run in $s$ and $s'$, and therefore distinguish them.

\begin{claim}
\label{claim:first_last_blocks}
If $s$ and $s'$ cannot be distinguished by $\mc{Q}$, then $\textsc{lor}(s, 1) = \textsc{lor}(s', 1)$ and $\textsc{lor}(s, k) = \textsc{lor}(s', k)$.
\end{claim}

\noindent\textit{Proof of claim \ref{claim:first_last_blocks}}. If $s$ and $s'$ start with $0$, we show that $d_{\revise{\DTW}}(s, o_{k-1}) = \textsc{lor}(s, 1) $.
Since $k\geq 3$, by Theorem \ref{theorem:obv_4}, 
\ifTIT
\needreview{
\begin{align*}
 \intertext{$d_{\DTW}(s, o_{k-1})=$}
  \min \left(d_{\DTW}(s[\textsc{lor}(s, 1) +1, \len(s)], o_{k-1})+\textsc{lor}(s, 1), \right.\\
    \left. n+d_{\DTW}(s, o_{k-1}[n+1, \len(o_{k-1})])\right)   \nonumber
\end{align*}
}
\fi

\ifarxiv
\begin{align*}
d_{\revise{\DTW}}(s, o_{k-1}) = \min \left(\textsc{lor}(s, 1) +d_{\revise{\DTW}}(s[\textsc{lor}(s, 1) +1, \revise{\len}(s)], o_{k-1}), \right. \\
    \left. n+d_{\revise{\DTW}}(s, o_{k-1}[n+1, \revise{\len}(o_{k-1})])\right)   \nonumber
\end{align*}
\fi

Note that $d_{\revise{\DTW}}(s[\textsc{lor}(s, 1) +1, \revise{\len}(s)], o_{k-1}) = 0$, and $\textsc{lor}(s, 1)  \leq n \leq n+d_{\revise{\DTW}}(s, o_{k-1}[n+1, \revise{\len}(o_{k-1})])$, we know that $d_{\revise{\DTW}}(s, o_{k-1}) = \textsc{lor}(s, 1) $. Similarly $d_{\revise{\DTW}}(s', o_{k-1}) = \textsc{lor}(s', 1)$. 
Since $s$ and $s'$ cannot be distinguished by $\mc{Q}$, we have $\textsc{lor}(s, 1)  = d_{\revise{\DTW}}(s, o_{k-1}) = d_{\revise{\DTW}}(s', o_{k-1}) = \textsc{lor}(s', 1) $. Similarly, we would have $\textsc{lor}(s, k)  = d_{\revise{\DTW}}(s, z_{k-1}) = d_{\revise{\DTW}}(s', z_{k-1}) = \textsc{lor}(s', k) $.

By symmetry, if $s$ and $s'$ starts with $1$, we would have  $\textsc{lor}(s, 1)  = d_{\revise{\DTW}}(s, z_{k-1}) = d_{\revise{\DTW}}(s', z_{k-1}) = \textsc{lor}(s', 1) $ and  $\textsc{lor}(s, k)  = d_{\revise{\DTW}}(s, o_{k-1}) = d_{\revise{\DTW}}(s', o_{k-1}) = \textsc{lor}(s', k) $. This finishes the proof for Claim \ref{claim:first_last_blocks}.

\leavevmode\newline
\noindent
Next, we show that $s$ and $s'$ cannot be distinguished by any binary query $r$. Let the number of runs in $r$ be $l = \#\textsc{runs}(r)$.
Given $s$ and $r$, we can calculate $d_{\revise{\DTW}}(s, r)$ with Theorem \ref{theorem:obv_4} and Theorem \ref{theorem:dtw_mss}. Note that in Theorem \ref{theorem:obv_4}, we may remove the first/last blocks of $s$ (and $s'$) or $r$ to reduce to the case of Theorem \ref{theorem:dtw_mss}. By Claim \ref{claim:first_last_blocks} we have $\textsc{lor}(s, 1)  = \textsc{lor}(s', 1) $ and $\textsc{lor}(s, k)  = \textsc{lor}(s', k) $, while $\textsc{lor}(r, 1) $ and $\textsc{lor}(r, l) $ are only related to $r$ but not $s$ and $s'$. Therefore, 
the offsets while reducing to the case of Theorem \ref{theorem:dtw_mss} are the same. To prove that $d_{\revise{\DTW}}(s, r) = d_{\revise{\DTW}}(s', r)$, we only need to prove that for each possible reduction, the corresponding reduced MSS instances have the same sum of solutions (see Example \ref{example:mss_reduction} for illustration). 

\begin{example}
\label{example:mss_reduction}
To illustrate, take $s = 010110$, $s' = 011010$ and $r = 1001011$ as an example. In this case, we would have $\textsc{lor}(s, 1)  = \textsc{lor}(s', 1)  = 1$, $\textsc{lor}(s, k)  = \textsc{lor}(s', k)  = 1$, $\textsc{lor}(r, 1)  = 1$ and $\textsc{lor}(r, l)  = 2$.
According to Theorem \ref{theorem:obv_4}, we have
 $d_{\revise{\DTW}}(s, r) = \min (\textsc{lor}(s, 1)  + d_{\revise{\DTW}}(s[\textsc{lor}(s, 1) +1, \revise{\len}(s)], r), \textsc{lor}(r, 1)  + d_{\revise{\DTW}}(s, r[\textsc{lor}(r, 1) +1, \revise{\len}(r)])) = \min (1 + d_{\revise{\DTW}}(s[2, 6], r), 1 + d_{\revise{\DTW}}(s, \\r[2, 7]))$.
 Then $d_{\revise{\DTW}}(s[2, 6], r) = \min (\textsc{lor}(s, k)  + d_{\revise{\DTW}}(s[2, 6-\textsc{lor}(s, k) ], r), \textsc{lor}(r, l)  + d_{\revise{\DTW}}(s[2, 6], \\r[1, \revise{\len}(r)-\textsc{lor}(r, l) ])) = \min (1 + d_{\revise{\DTW}}(s[2, 5], r), 2 + d_{\revise{\DTW}}(s[2, 6], r[1, 5]))$.
 Also, $d_{\revise{\DTW}}(s, r[2, 7])) = \min (\textsc{lor}(s, k)  + d_{\revise{\DTW}}(s[1, \revise{\len}(s)-\textsc{lor}(s, k) ], r[2, 7]), \textsc{lor}(r, k)  + d_{\revise{\DTW}}(s, r[2, 7-\textsc{lor}(r, k) ] )) = \min (1 + d_{\revise{\DTW}}(s[1, 5], r[2, 7]), 2 + d_{\revise{\DTW}}(s, r[2, 5] ))$.
 
 Therefore, to show that $d_{\revise{\DTW}}(s, r) = d_{\revise{\DTW}}(s', r)$, we only need to prove that 
 $$\begin{cases}d_{\revise{\DTW}}(s[2, 5], r) = d_{\revise{\DTW}}(s'[2, 5], r); \\ d_{\revise{\DTW}}(s[2, 6], r[1, 5]) = d_{\revise{\DTW}}(s'[2, 6], r[1, 5]); \\ d_{\revise{\DTW}}(s[1, 5], r[2, 7]) = d_{\revise{\DTW}}(s'[1, 5], r[2, 7]);\\
 d_{\revise{\DTW}}(s, r[2, 5] ) = d_{\revise{\DTW}}(s', r[2, 5] ), \end{cases}$$
where each of the 4 cases corresponds to an MSS instance. \qed
\end{example}

Suppose after applying Theorem \ref{theorem:obv_4}, $s$ and $r$ are reduced to sub-sequences $s^*$ and $r^*$ (where $s^*$ and $r^*$ have the same beginning and ending characters), while $s'$ and $r$ are reduced to $s'^*$ and $r^*$. Now we calculate $d_{\revise{\DTW}}(s^*, r^*)$ and $d_{\revise{\DTW}}(s'^*, r^*)$ according to Theorem \ref{theorem:dtw_mss}. Suppose $s^*$ and $s'^*$ have $k^*$ runs and $r^*$ have $l^*$ runs.

Case 1. If $k^* = l^*$, then $d_{\revise{\DTW}}(s^*, r^*) = d_{\revise{\DTW}}(s'^*, r^*) = 0$.

Case 2. If $k^* < l^*$, by Theorem \ref{theorem:dtw_mss} the generated MSS instance only depends on $r^*$ and $k^*$. Thus, $d_{\revise{\DTW}}(s^*, r^*) = d_{\revise{\DTW}}(s'^*, r^*)$.

Case 3. If $k^* > l^*$, we have the MSS instances $\revise{\MSS}(s^*, (k^* - l^*)/2)$ and $\revise{\MSS}(s'^*, (k^* - l^*)/2)$. Note that, we can always find a query $q \in \mc{Q}$ which has $l^*$ runs and has the same starting and ending characters as $s^*$ and $s'^*$. Consider $d_{\revise{\DTW}}(s, q)$ and $d_{\revise{\DTW}}(s', q)$. Note that the first and last runs of $q$ are both of length $n$ and removing them would yield at least cost $n$, the only possible reduction would be $d_{\revise{\DTW}}(s^*, q)$ and $d_{\revise{\DTW}}(s'^*, q)$. Since $q$ cannot distinguish $s$ and $s'$, we have $d_{\revise{\DTW}}(s, q) = d_{\revise{\DTW}}(s', q)$, so $d_{\revise{\DTW}}(s^*, q) = d_{\revise{\DTW}}(s'^*, q)$, implying that $\revise{\MSS}(q^*, (k^* - l^*)/2)$ and $\revise{\MSS}(q'^*, (k^* - l^*)/2)$ have the same sum of solution. Therefore, $d_{\revise{\DTW}}(s^*, r^*) = d_{\revise{\DTW}}(s'^*, r^*)$.

Combining the 3 cases above, we always have $d_{\revise{\DTW}}(s^*, r^*) = d_{\revise{\DTW}}(s'^*, r^*)$, so $d_{\revise{\DTW}}(s, r) = d_{\revise{\DTW}}(s', r)$, implying that $s$ and $s'$ cannot be distinguished by $r$. This finishes the proof for Theorem \ref{theorem:oz_equivalence}.
\end{proof}

\subsection{Exact Recovery with Extra Character(s)}

\begin{theorem}[\revise{Lower Bound for DTW Exact Recovery}]
\label{thm:dtw_lower_bound}
    For a binary alphabet $\bits$, any algorithm to recover arbitrary input sequence $s \in \bits^\ell$ where $0 \leq \ell \leq n$ by querying DTW distance to a set of sequences of length $\mc{O}(n)$ \revise{\linelabel{constant-sized-alphabet}from a constant-sized extended alphabet $\Sigma$} would require a query complexity of $\Omega(n/\log n)$.
\end{theorem}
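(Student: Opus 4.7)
\textit{Proof proposal for Theorem \ref{thm:dtw_lower_bound}.} The plan is to apply an information-theoretic counting argument identical in spirit to the one used in Theorem \ref{thm:edit_lower_bound}, showing first that every DTW query has only $\poly(n)$ possible answers, and then using the fact that there are $2^{n+1}-1$ possible binary input sequences of length at most $n$.

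First I would establish the finite-range claim: for any query $q \in \Sigma^{O(n)}$ drawn from the constant-sized extended alphabet $\Sigma$ and any input $s \in \{0,1\}^{\leq n}$, the DTW distance $d_{\DTW}(s,q)$ can take at most $\poly(n)$ distinct values. The DTW cost is by Definition \ref{def:dtw} a sum of $|\overline{x}_i - \overline{y}_i|$ over $O(n)$ positions in some optimal correspondence. Since $\Sigma \cup \{0,1\}$ is a constant-sized set, there are only $O(1)$ distinct pairwise absolute differences $\delta_1,\dots,\delta_k$. Therefore any DTW cost equals $\sum_{j=1}^k c_j \delta_j$ where each $c_j \in \{0,1,\dots,O(n)\}$ is the number of correspondence pairs whose endpoints have difference $\delta_j$. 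The number of such tuples $(c_1,\dots,c_k)$ is at most $(O(n)+1)^k = \poly(n)$, so $d_{\DTW}(s,q)$ takes at most $\poly(n)$ values, i.e.\ each query returns at most $O(\log n)$ bits.

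Next I would run the counting step: there are $\sum_{\ell=0}^{n}2^\ell = 2^{n+1}-1$ possible binary inputs of length at most $n$, so any deterministic algorithm that exactly recovers $s$ from query responses must distinguish all of them and hence needs at least $\log_{\poly(n)}(2^{n+1}-1) = \Omega(n/\log n)$ queries. To upgrade this to randomized algorithms (matching the treatment in the proof of Theorem \ref{thm:edit_lower_bound}), I would give a reduction from the one-way communication INDEX problem: Alice holds the hidden $s\in\{0,1\}^n$, simulates the recovery algorithm $R$ using shared randomness to choose queries, computes the responses $d_{\DTW}(s,q_i)$, and sends them as her message $M$ to Bob; Bob, who wants to learn a single index of $s$, can run $R$ on $M$ to recover $s$ (and hence the desired bit) with constant probability. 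By the $\Omega(n)$ lower bound on $|M|$ for INDEX and the fact that each response fits in $O(\log n)$ bits, the number of queries must be $\Omega(n/\log n)$.

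The main technical point to be careful about is the first step, justifying the $\poly(n)$ bound on the number of distinct DTW values. The subtle issue is that entries of $\Sigma$ may be arbitrary real numbers that are linearly independent over $\mathbb{Q}$, so one cannot simply clear denominators as in the edit-distance argument. The fix I sketched above is to count tuples of multiplicities of the $O(1)$ possible edge-costs rather than the actual numerical values; this gives $\poly(n)$ tuples and hence $\poly(n)$ numerical values regardless of arithmetic properties of $\Sigma$. Everything else is either an immediate information-theoretic counting step or a direct transcription of the INDEX reduction used earlier in the paper.
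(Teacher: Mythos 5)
Your proposal is correct and follows the paper's own approach: the paper's proof of Theorem~\ref{thm:dtw_lower_bound} is just a pointer back to the proof of Theorem~\ref{thm:edit_lower_bound}, which consists of the same information-theoretic counting argument for deterministic algorithms plus the same INDEX reduction for randomized ones. The one piece you spell out that the paper leaves implicit---that $d_{\DTW}(s,q)$ takes only $\poly(n)$ distinct values because the cost of any optimal correspondence of length $O(n)$ is a nonnegative integer combination $\sum_j c_j\delta_j$ of the $O(1)$ pairwise differences $\delta_j$ arising from the constant-sized set $\Sigma\cup\bits$ with $\sum_j c_j = O(n)$---is exactly the lemma needed to invoke the counting bound and the ``each answer is $O(\log n)$ bits'' step of the INDEX reduction, and your multiplicity-tuple argument is the right way to prove it without any assumption about the arithmetic nature of the entries of $\Sigma$.
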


\noindent
\revise{\linelabel{upper-or-lower}Theorem \ref{thm:dtw_lower_bound} shows the lower bound of the query complexity for DTW exact recovery.
The proof of Theorem \ref{thm:dtw_lower_bound} is given by an information-theoretic lower bound, which refers back to the proof of Theorem \ref{thm:edit_lower_bound}.
\bigskip

With this lower bound, now we would like to show that if one is allowed to construct queries from a slightly larger alphabet beyond $\bits$, there exists a non-adaptive query strategy such that this lower bound is attainable as per the order of magnitude.
}

\bigskip

\subsubsection{With One Extra Character}

\begin{theorem}[Non-adaptive Strategy for DTW Exact Recovery with 1 Extra Character]
\label{theorem:dtw_1e}
For a binary alphabet $\bits$ and an input sequence $s := \bits^{\ell}$ where $0 \leq \ell \leq n$, there exists an algorithm to recover the input sequence $s$, given \revise{$n^2+n \in \mc{O}(n^2)$} query sequences $\revise{\mc{Q}}$ and the $d_{\revise{\DTW}}(s, q)$ to each query sequence $q \in \revise{\mc{Q}}$, where the query sequences are allowed to use only one extra character.
\end{theorem}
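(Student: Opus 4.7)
The plan is to build a non-adaptive set $\mathcal{Q}$ of exactly $n(n+1) = n^{2}+n$ queries whose DTW distances to the unknown $s$ determine $s$ character by character. Let $c$ be any fixed real number with $c \geq 2$ (for concreteness $c = 2$), serving as the single extra character; this is permitted since for ($p$-)DTW the extended alphabet can be any real number. For every triple $(\ell,k,x)$ with $\ell \in [n]$, $k \in [\ell]$ and $x \in \{0,1\}$, put into $\mathcal{Q}$ the query
\[
q_{\ell,k,x} \;=\; c^{\,k-1}\, x\, c^{\,\ell-k}.
\]
Then $|\mathcal{Q}| = \sum_{\ell=1}^{n} 2\ell = n^{2}+n$ and every query has length at most $n$. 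Observe that $q_{1,1,0}=0$ and $q_{1,1,1}=1$ are purely binary, while all other queries mix $c$ with a single binary symbol.

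The core technical step is an element-wise optimality lemma: for $c \geq 2$, any $\ell$, any $k \in [\ell]$, any $x \in \{0,1\}$ and any binary $s$ with $|s|=\ell$,
\[
d_{\DTW}(s, q_{\ell,k,x}) \;=\; \sum_{i=1}^{\ell}\bigl|q_{\ell,k,x}[i]-s[i]\bigr|.
\]
I would prove this by viewing any DTW correspondence as a monotone lattice path from $(1,1)$ to $(\ell,\ell)$ in the $\ell \times \ell$ grid and analyzing per-cell costs: every cell outside row $k$ costs at least $c-1 \geq 1$, whereas every cell in row $k$ costs at most $1$. The diagonal path has exactly $\ell$ cells (one per row); any deviation inserts extra non-row-$k$ cells each costing at least $c-1$, while the best possible saving from rerouting the row-$k$ cell to a position where $s[j]=x$ is at most $1$. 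A local-move argument, in which an arbitrary path is transformed toward the diagonal one swap at a time, shows the net change is non-negative. This is the step I expect to be the main obstacle: the bound genuinely requires $c \geq 2$, as witnessed by the counterexample $s=001$, $k=2$, $x=1$ at $c=\tfrac12$, where rerouting the $1$ to match $s[3]$ strictly beats the element-wise cost.

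Assuming the lemma, write $b$ for the number of $1$'s in $s$. For $|s|=\ell$ a direct computation yields
\[
d_{\DTW}(s, q_{\ell,k,0}) \;=\; c(\ell-1) - b + 2\,s[k], \qquad d_{\DTW}(s, q_{\ell,k,1}) \;=\; c(\ell-1) - b + 1,
\]
so $d_{\DTW}(s, q_{\ell,k,0}) - d_{\DTW}(s, q_{\ell,k,1}) = 2s[k]-1 \in \{-1,+1\}$ unambiguously determines $s[k]$. To recover the true length $\ell^{*}=|s|$, use only the two binary queries: expanding the single character of $q=0$ against $s$ shows $d_{\DTW}(s,0)=b$ and similarly $d_{\DTW}(s,1)=\ell^{*}-b$, so $\ell^{*} = d_{\DTW}(s,0)+d_{\DTW}(s,1)$. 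The non-adaptive recovery algorithm is then: first compute $\ell^{*}$ from the two binary queries; for each $k \in [\ell^{*}]$ recover $s[k]$ from the pair $(q_{\ell^{*},k,0}, q_{\ell^{*},k,1})$ via the formula above; discard the queries of all other candidate lengths.
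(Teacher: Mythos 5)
Your element-wise optimality lemma is false, and not just for small $c$: it fails for \emph{every} choice of the extra character. Take $\ell = 4$, $k = 3$, $x = 0$, $s = 1010$, so $q_{4,3,0} = (c, c, 0, c)$. The diagonal (element-wise) cost is $|c-1|+|c-0|+|0-1|+|c-0| = 3c$, but the monotone correspondence through cells $(1,1),(2,1),(3,2),(3,3),(4,4)$ costs $(c-1)+(c-1)+0+1+c = 3c-1$, which is strictly smaller for any $c$. Concretely, with $c=2$ the DTW dynamic program gives $d_{\DTW}(s, q_{4,3,0}) = 5$ while the element-wise cost is $6$. Your lattice-path intuition misidentifies where savings can come from: you argue any deviation inserts extra non-row-$k$ cells costing $\geq c-1$ while gaining at most $1$ from rerouting the single row-$k$ cell, but here the extra cell is $(3,2)$, which lies \emph{in} row $k$ and has cost $0$ because $x=0$ matches a $0$ of $s$ for free, while a row-$\neq k$ cell slides from $(2,2)$ (cost $c$) to $(2,1)$ (cost $c-1$), for a net gain of $1$. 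Since the binary character $x$ sits in the interior of a constant $c$-run, the query is not monotonic, so neither the paper's Lemma \ref{lemma:single-direction_o1} nor any degree-one argument protects the diagonal, and warping genuinely helps.

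This is fatal to the recovery step, not just to the lemma. For the same $s = 1010$ and $c=2$, one also computes $d_{\DTW}(s, q_{4,3,1}) = 5$, so $d_{\DTW}(s, q_{\ell,k,0}) - d_{\DTW}(s, q_{\ell,k,1}) = 0 \notin \{-1,+1\}$: the pair of queries you rely on does not determine $s[3]$ at all. The paper's proof of Theorem~\ref{theorem:dtw_1e} avoids exactly this pathology by a different design: the single extra character $\tfrac{1}{2}$ appears only as a monotone suffix after a binary condensed expression (queries of the form $z_{i,k} = 0(10)^m(\tfrac{1}{2})^k$, etc.), a $\tfrac12$ in the tail always incurs cost exactly $\tfrac12$ per match, and the queries are used to recover the run lengths of $s$ rather than its bits one position at a time.
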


\stitle{Proof of Theorem \ref{theorem:dtw_1e}}
We give our proof by constructing $n^2+n$ query sequences of length $\leq n$ and presenting an algorithm to recover an input sequence $s$ from its DTW distance to these \revise{\linelabel{1-extra-char}$n^2+n$} query sequences.

Let $z_{i,k} = \begin{cases}0(10)^{m}(\frac{1}{2})^{k}, i = 2m+1; \\ (01)^{m}(\frac{1}{2})^{k}, i = 2m,\end{cases}$ and $o_{i,k} = \begin{cases}1(01)^{m}(\frac{1}{2})^{k}, i = 2m+1; \\ (10)^{m}(\frac{1}{2})^{k}, i = 2m,\end{cases}$ where $m$ is a non-negative integer, $1 \leq i \leq n$ and $0\leq k \leq n-i$. Let $\mc{Q} = \{o_{i,k}|1\leq i\leq n, 0\leq k \leq n-i\}\bigcup\{z_{i, k}|1\leq i\leq n, 0\leq k \leq n-i\} $. We have $|\mc{Q}| = 2\sum_{i = 1}^{n}i = n^2 + n$.

\revise{Without loss of generality}, we can assume that $s$ starts with a $0$ and has $t$ runs, where the $i$-th run of $s$ is $\textsc{lor}(s, i) $, $l = \sum_{i = 1}^{t}\textsc{lor}(s, i) $. Then $d_{\revise{\DTW}}(z_{t, 0}, s) = 0$.  Consider $d_{\revise{\DTW}}(z_{i, k}, s)$ where $1\leq i \leq t$. The first $i$ runs of $s$ have a total length of $\sum_{j=1}^{i}\textsc{lor}(s, j) $, and the last $t-i$ runs of $s$ have a total length of $\sum_{j=i+1}^{t}\textsc{lor}(s, j) $.

\begin{claim}
\label{claim: dtw_1e}
For $1\leq i\leq t$, $d_{\revise{\DTW}}(z_{i, k}, s) = \frac{k}{2} \iff k\geq \sum_{j=i+1}^{t}\textsc{lor}(s, j) $.
\end{claim}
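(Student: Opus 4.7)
The plan is to establish a universal lower bound $d_{\DTW}(z_{i,k}, s) \geq k/2$ driven by the trailing $1/2$-characters of $z_{i,k}$, and then determine exactly when this bound is tight. Since $s[p] \in \{0, 1\}$, every edge incident to a $1/2$-character contributes cost $1/2$; and each of the $k$ trailing $1/2$-characters has degree $\geq 1$, so summing over just those edges already gives cost $\geq k/2$. The bound is attained if and only if (a) each $1/2$-character has degree exactly one, and (b) every edge incident to a binary character of $z_{i,k}$ has zero cost, which forces it to match only $s$-characters of the same value.

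For the ``$\Leftarrow$'' direction, I would give an explicit cost-$k/2$ matching when $k \geq R$. Since the binary prefix of $z_{i,k}$ is the length-$i$ condensed expression of the first $i$ runs of $s$, I can assign binary character $j$ (for $j<i$) to all of run $j$ of $s$ at zero cost. Let $Q_j := \sum_{r=1}^{j}\textsc{lor}(s,r)$, so $Q_i = l-R$. Setting $q_i := \max(Q_{i-1}+1,\, l-k) \in [Q_{i-1}+1, Q_i]$, which is well-defined precisely when $k\geq R$, I let binary character $i$ match positions $Q_{i-1}+1, \ldots, q_i$ of run $i$ at zero cost. The $k$ trailing $1/2$-characters then cover positions $q_i+1, \ldots, l$ monotonically, one edge per $1/2$; when $k$ exceeds $l-q_i$, the extra $1/2$-characters stack on a single $s$-position, which is permitted by Definition~\ref{def:matching} since only the $z_{i,k}$-side degree matters for (a). This matching satisfies (a) and (b) and has cost exactly $k/2$.

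For the ``$\Rightarrow$'' direction, suppose the DTW cost equals $k/2$, so some optimal matching satisfying (a) and (b) exists. Let $p_1 \leq \cdots \leq p_k = l$ be the $s$-positions matched by the $k$ $1/2$-characters in order. Monotonicity forces every binary edge to land on an $s$-position $\leq p_1$, so each $s$-position in $\{p_1+1, \ldots, l\}$ is matched only by $1/2$-characters; hence $l - p_1 \leq k - 1$, i.e.\ $p_1 \geq l-k+1$. Condition (b) restricts binary character $j$ to $s$-positions inside run $j$, so the union of binary-matched positions lies in $[1, Q_i]$, forcing $p_1 \leq Q_i + 1 = l - R + 1$ (otherwise positions $Q_i+1, \ldots, p_1-1$ would be uncovered). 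Combining the two inequalities yields $k \geq R$.

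The main obstacle I anticipate is verifying that the explicit construction remains valid across the full range $k \geq R$, including when $k$ substantially exceeds $R + \textsc{lor}(s,i) - 1$ and many $1/2$-characters must stack on a single $s$-position; this is resolved by the observation that condition (a) constrains only the $z_{i,k}$-side degrees, leaving the $s$-side free to have arbitrarily high degree, so the stacking preserves the degree-one property of each $1/2$ and keeps the total cost at exactly $k/2$.
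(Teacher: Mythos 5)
Your proof is correct and rests on the same core facts as the paper's: the $k$ trailing $\tfrac12$-characters force cost at least $k/2$, equality requires each $\tfrac12$ to have degree exactly one and the binary prefix to match with zero cost, and the sufficiency direction is by explicit construction. Where you diverge from the paper is mostly in presentation. For sufficiency, the paper simply assigns the binary prefix to all of $s[1,Q_i]$ and lets the $k$ copies of $\tfrac12$ absorb $s[Q_i+1,l]$ (stacking the surplus), whereas you introduce the $k$-dependent split point $q_i=\max(Q_{i-1}+1,\,l-k)$; both work, but the paper's split is simpler and its well-definedness is immediate, while yours buys you exactly $k$ distinct $\tfrac12$-targets in one regime at the cost of a bit of bookkeeping. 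For necessity, the paper argues by contrapositive — if $k<R$ and all $\tfrac12$'s have degree $1$, then the binary prefix (with $i$ runs) is forced to cover $s[1,l-k]$, which has strictly more than $i$ runs, so the binary part already incurs positive cost — while you argue directly from the two inequalities $p_1\geq l-k+1$ and $p_1\leq Q_i+1$. These are equivalent in content; yours avoids the explicit case split on the degrees of the $\tfrac12$'s because you fold that into establishing (a) and (b) up front.

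Two small places where the sketch is terse. First, the step ``Condition (b) restricts binary character $j$ to $s$-positions inside run $j$'' is not quite a one-liner: (b) alone only forces the match onto positions of the same \emph{value}, i.e.\ any run of the same parity. You do get run $j$ exactly, but this needs the short induction that uses the boundary edge $(z_{i,k}[1],s[1])$, the contiguity of each character's $s$-interval, and the no-gap/overlap structure of a warping path to push each binary character into the next run. Second, the construction's phrase ``the extra $\tfrac12$-characters stack on a single $s$-position'' implicitly means position $l$; in the degenerate case $q_i=l$ (which occurs when $i=t$, $\textsc{lor}(s,t)=1$, and $k>0$) the range $[q_i+1,l]$ is empty and \emph{all} $k$ of the $\tfrac12$'s must stack on $s[l]$ together with $z_{i,k}[i]$, which is fine but worth saying explicitly. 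Neither point affects correctness.
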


\noindent\textit{Proof of Claim \ref{claim: dtw_1e}.} Since each $\frac{1}{2}$ in $z_{i, k}$ corresponds to at least $\frac{1}{2}$ cost, we have $d_{\revise{\DTW}}(z_{i, k}, s) \geq \frac{k}{2}$. Note that $s[1, \sum_{j=1}^{i}\textsc{lor}(s, j) ]$ and $z_{i, k}[1, i]$ can be perfectly matched. If $k\geq \sum_{j=i+1}^{t}\textsc{lor}(s, j) $, then $s[(\sum_{j=1}^{i}\textsc{lor}(s, j) )+1, l]$ and $z_{i, k}[i+1, i+k] = (\frac{1}{2})^{k}$ can be matched with exactly $\frac{k}{2}$ cost, so $d_{\revise{\DTW}}(z_{i, k}, s) = \frac{k}{2}$. Otherwise, if $k< \sum_{j=i+1}^{t}\textsc{lor}(s, j) $, we show that $d_{\revise{\DTW}}(z_{i, k}, s) > \frac{k}{2}$. In fact, if any of the $\frac{1}{2}$ in $z_{i, k}$ is matched to more than one character in $s$, we would already have $d_{\revise{\DTW}}(z_{i, k}, s) > \frac{k}{2}$. If all $\frac{1}{2}$'s in $z_{i, k}$ have degree 1, then $z_{i, k}[1..i]$ must be matched with $s[1, l-k]$. Since $l-k >  \sum_{j=1}^{i}\textsc{lor}(s, j) $, $z_{i, k}[1..i]$ and $s[1, l-k]$ cannot be perfectly matched, yielding a non-zero cost. This finishes the proof of Claim \ref{claim: dtw_1e}.

\leavevmode\newline
By Claim \ref{claim: dtw_1e}, we know that for $1\leq i \leq t$, $\sum_{j=i+1}^{t}\textsc{lor}(s, j)  = min_{d_{\revise{\DTW}}(z_{i, k}, s) = \frac{k}{2}}k$. In this way, we can recover the length of each run in $s$, and therefore recover $s$. A similar analysis can be performed for the cases where $s$ starts with a single $1$. \hfill$\blacksquare$

\bigskip

\subsubsection{With Two Extra Characters}

\begin{theorem}[Non-adaptive Strategy for DTW Exact Recovery with 2 Extra Characters]
\label{theorem:dtw_o1}
For a binary alphabet $\bits$ and an input sequence $s := \bits^{\ell}$ where $0 \leq \ell \leq n$, there exists an algorithm to recover the input sequence $s$, given \revise{$n + 2 \in \mc{O}(n)$} query sequences $\revise{\mc{Q}}$ of length $\leq n$ and the $d_{\revise{\DTW}}(s, q)$ to each query sequence $q \in \revise{\mc{Q}}$, where the query sequences are allowed to use only $\mc{O}(1)$ extra characters.
\end{theorem}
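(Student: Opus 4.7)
The plan is to exhibit a query set $\mc{Q}$ of size $n+2$ over the extended alphabet $\bits \cup \{a, b\}$, where $a, b$ are rationals satisfying $0 < b - a < a < b < 1/2$ (concretely, $a = 1/3$, $b = 2/5$), together with a decoding procedure. The queries are $q^{(0)} = 0$ and $q^{(n+1)} = 1$, plus the $n$ monotonic queries $q^{(i)} = a^{n-i}b^{i}$ for $i = 1, \ldots, n$. The two singleton queries immediately reveal the number of $1$'s and $0$'s in $s$ via $d_{\DTW}(s,0) = \#\{j : s[j]=1\}$ and $d_{\DTW}(s,1) = \#\{j : s[j]=0\}$, and hence the length $\ell$. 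The degenerate cases $s = 0^{\ell}$ and $s = 1^{\ell}$ are settled directly from these two values; henceforth assume $s$ contains both symbols.

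The heart of the argument is to establish three structural properties for any optimal matching $M_i$ between $s$ and $q^{(i)}$, invoking the machinery of Section \ref{subsec:tech_extra}. \emph{Input-uniqueness}: every character of $q^{(i)}$ has degree one. This follows from Lemma \ref{lemma:single-direction_o1_intro}, whose hypothesis for monotonic queries reduces to $1 - b > b - a$, i.e.\ $b < (a+1)/2$, which holds for our $a,b$. \emph{$1$-uniqueness}: since $a, b < 1/2$, matching any $q$-character to a $0$ of $s$ is strictly cheaper than matching it to a $1$, so every $1$ in $s$ must have degree one. \emph{Isomorphism}: by Lemma \ref{lemma:isomorphism_intro} there exists a family $\{M_i\}_{i=1}^{n}$ of simultaneously optimal matchings with the same combinatorial structure---the first $0$ of $s$ absorbs the $n - \ell + 1$ extra $q$-characters, and every other $s$-character has degree exactly one.

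With these properties in place, the decoding proceeds by differencing. From $q^{(i-1)}$ to $q^{(i)}$ only coordinate $n - i + 1$ changes (from $a$ to $b$), and by isomorphism it is matched to a fixed character $s[\pi(i)]$ under both matchings, giving
\[
\Delta_i \;:=\; d_{\DTW}(s, q^{(i-1)}) - d_{\DTW}(s, q^{(i)}) \;=\; |a - s[\pi(i)]| - |b - s[\pi(i)]|,
\]
which equals $a - b < 0$ when $s[\pi(i)] = 0$ and $b - a > 0$ when $s[\pi(i)] = 1$. Thus the $n - 1$ signs $\sgn(\Delta_2), \ldots, \sgn(\Delta_n)$ together with the absolute value $d_{\DTW}(s, q^{(1)})$ (used to settle position $n$) yield one bit per $q$-coordinate. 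Combining this bit-pattern with the known $\ell$ and with the position $k$ of the first $0$ of $s$ (pinned down from the number of leading $1$'s, which is read off from the high-index signs) determines the bijection $\pi$ and thus $s[1], \ldots, s[\ell]$ character by character.

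The main obstacle will be the isomorphism step: proving that the canonical matching $M_i$, in which the first $0$ of $s$ absorbs the entire length slack $n - \ell + 1$, is optimal for \emph{every} $q^{(i)}$ simultaneously. This is where the shifting-operation arguments of Claims \ref{claim:3_o1} and \ref{claim:4_o1} do the real work: one must show that redirecting degree from the first $0$ to a later $0$ of $s$ cannot decrease the total cost, and the argument depends delicately on the gap inequality $0 < b - a < a$, which ensures any such redirection incurs strictly more cost than it saves. Once the three structural properties are secured, input- and $1$-uniqueness are essentially direct consequences of monotonicity and of $a, b < 1/2$, and the differencing decoder above recovers $s$ in exactly $n+2$ queries.
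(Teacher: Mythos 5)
Your proposal is correct and follows essentially the same route as the paper: the same query set $\{0, 1, a^{n-i}b^i : i\in[n]\}$ with $0<b-a<a<b<1/2$, the same three structural properties (input-uniqueness via monotonicity, $1$-uniqueness via $a,b<1/2$, and isomorphism via the shifting argument of Claims~\ref{claim:3_o1} and~\ref{claim:4_o1}), and the same differencing decoder. The only cosmetic difference is that you read off each bit from the sign of $d_{\DTW}(s,q^{(i-1)})-d_{\DTW}(s,q^{(i)})$ while the paper uses the modular trick $(d_i-d_{i-1})\cdot 15 \bmod 5$ (and this is also why the paper's condition insists on co-prime denominators for $a,b$; with the sign-based decoder for $i\ge 2$ that hypothesis is only needed for the base case $x[n]$ from $d_1$, which you gloss over a bit but which can also be settled by counting, since the number of $1$'s in $x$ equals the number of $1$'s in $s$, already known from the singleton query).
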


\begin{figure*}[t]
    \centering
    \includegraphics[width=\textwidth]{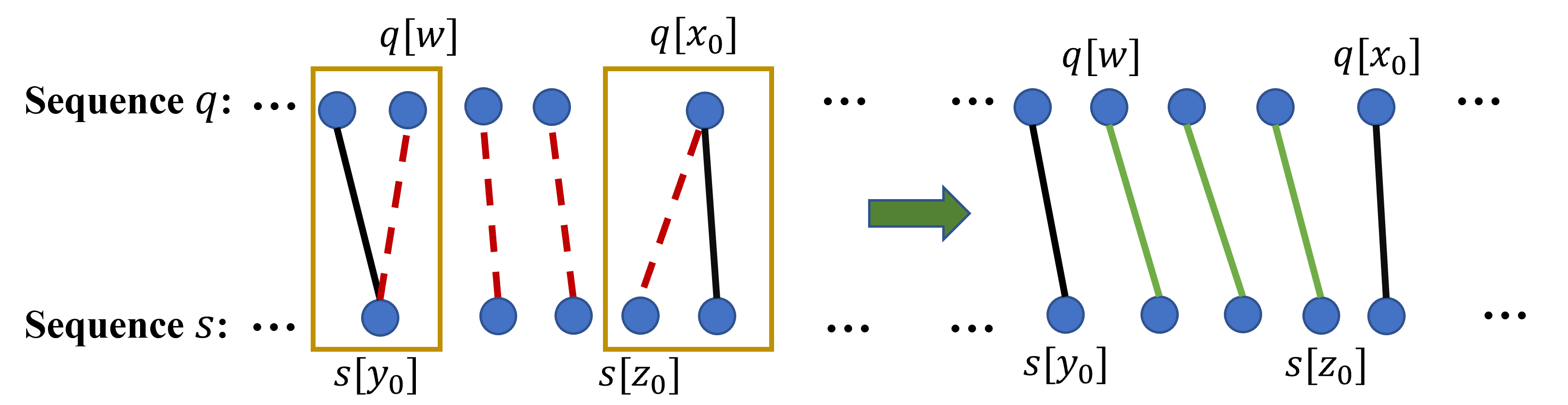}
    \caption{\revise{Optimal re-matching (\emph{hybrid stitching}) when both query and input sequences have a vertex with degree greater than $1$ (c.f. Claim \ref{claim:2_o1}). }}
    \label{fig:rematching_o1}
\end{figure*}

\stitle{Proof of Theorem \ref{theorem:dtw_o1}.}
We give our proof by constructing \revise{$n+2 \in \mc{O}(n)$} query sequences of length $\leq n$ and presenting an algorithm to recover an input sequence $s$ from its DTW distance to these $n+2$ query sequences.

Note that for any sequence $s$, we have $d_{\revise{\DTW}}(s, 0) = 0 \iff s \text{ consists of only 0's}$ and $d_{\revise{\DTW}}(s, 1) = 0 \iff s \text{ consists of only 1's}$. We can also derive that $d_{\revise{\DTW}}(0^m, 1) = m$ and $d_{\revise{\DTW}}(1^m,0) = m$. Thus, any input sequence consisting of only $0$s or $1$s can be exactly recovered
by the two query sequences $0$ and $1$. 
For simplicity, \textbf{we assume in the rest of the proof that the input sequence $s$ contains both $0$ and $1$} and let $s = s[1]s[2] \ldots s[\ell]$.

\medskip
\noindent\textit{Query Sequences Construction.}
Let $a, b$ be two fractional characters that satisfy $0 < b-a < a < b < \frac{1}{2}$ and the denominators of $a, b$ are co-prime.
We will use $a, b$ as the extra characters to construct the query sequences.
In particular, the rest of the query sequences (other than the $0$ query and the $1$ query) consist of queries $\mc{Q}$ is in the form of $a^{n-i}b^{i}$, where $i = 1, \dots, n$.
It is not hard to see this set of queries are monotonic sequences, for which we show the following property holds in the distance query to DTW.
\revise{\linelabel{wlog-1}We will use $a = \frac{1}{3}$ and $b = \frac{2}{5}$ as a running example for better explanation when necessary, but the proof works for all $a, b$ satisfying the condition.}

\begin{lemma}
\label{lemma:single-direction_o1}
Given a monotonic sequence $q$ of length $n$ where 
\begin{equation}
    \min_{i\in [n]}\max\{|q[i]-0|, |q[i]-1|\} > \max_{i,j\in [n]}|q[i]-q[j]|,\label{equ:single_direction_o1}
\end{equation}
for any input sequence $s$ with length $\ell \leq n$, given a DTW matching $M$ for $(q,s)$, we have $\revise{\deg}(q[i]) = 1$ for all elements $q[i]$ in $q$. 
\end{lemma}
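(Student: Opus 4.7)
\noindent\emph{Proof Plan.}
The plan is to proceed by contradiction: assume the DTW matching $M$ is optimal yet contains both a vertex $s[i^*]$ with $\deg(s[i^*]) > 1$ and a vertex $q[j^*]$ with $\deg(q[j^*]) > 1$. Using the monotonicity clause of Definition~\ref{def:matching}, these two ``high-degree'' events occur in disjoint regions of the warping path, and without loss of generality the $q$-expansion region (in which $s[i^*]$ absorbs several consecutive characters of $q$) appears no later than the $s$-expansion region (in which $q[j^*]$ absorbs several consecutive characters of $s$). The goal is to build a legal matching $M'$ of strictly smaller cost than $M$, contradicting the optimality of $M$.

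The key surgery, illustrated in Fig.~\ref{fig:rematching_o1}, is a \emph{hybrid stitching}: at the $q$-expansion I drop the outermost edge $s[i^*]$--$q[\alpha]$ and instead pair $q[\alpha]$ with an adjacent character of $s$, and symmetrically at the $s$-expansion I drop the outermost edge $q[j^*]$--$s[\beta]$ and pair $s[\beta]$ with an adjacent character of $q$. These two local swaps are stitched together by shifting the $1$-to-$1$ portion of $M$ between the two expansion regions by one position, so that the global structure still satisfies the coverage and monotonicity conditions of Definition~\ref{def:matching}. In effect one unit of degree is stripped from each of $s[i^*]$ and $q[j^*]$ and redistributed to adjacent vertices that previously had degree exactly $1$.

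The critical step is the cost bookkeeping, which is where hypothesis~\eqref{equ:single_direction_o1} is used. Each edge that is \emph{added} by the stitching re-pairs some $q[\gamma]$ against a binary character that was previously matched to a nearby $q$-value, so the incremental cost of every added edge is at most $|q[\gamma]-q[\gamma']|\le \max_{i,j\in[n]}|q[i]-q[j]|$. By contrast, each edge that is \emph{removed} at the extreme of an expansion region pairs some $q[\gamma]$ against a binary character lying on the ``far'' side of $q[\gamma]$ (this is forced because at the boundary of the expansion the binary character on the other side of the run must differ), so the savings from each removal are at least $\min_{i\in[n]}\max\{|q[i]|,|q[i]-1|\}$. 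Summing over the two swaps, the total savings strictly dominate the total additions by \eqref{equ:single_direction_o1}, yielding $\Cost(M')<\Cost(M)$, the desired contradiction.

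Finally, I will convert the resulting ``at most one of $s,q$ has a vertex of degree $>1$'' into the conclusion $\deg(q[i])=1$ for all $i$. Since every vertex is covered, $\sum_i\deg(s[i])=\sum_j\deg(q[j])\ge \max\{\len(s),\len(q)\}=n$; if every $s[i]$ had degree $1$ we would then be forced into $\len(s)=\len(q)=n$, in which case all $\deg(q[j])$ equal $1$ as well, while otherwise the side having all degrees equal to $1$ must already be $q$. Either way, every character of $q$ has degree exactly $1$. The step I expect to be most delicate is justifying that the hybrid swap can be performed globally---in particular, verifying that the intermediate $1$-to-$1$ portion of $M$ between the two expansion regions can be uniformly shifted by one step without violating monotonicity---which will require a short case analysis on the relative positions of $i^*$, $j^*$, and the run boundaries of $s$ around them.
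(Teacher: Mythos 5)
Your proposal follows the same overall route as the paper's proof: argue by contradiction, observe (the paper's Claims~\ref{claim:1_o1} and~\ref{claim:2_o1}) that a high-degree $q$-vertex and a high-degree $s$-vertex cannot share an edge and hence sit in disjoint parts of the warping path, perform a local re-matching (the paper removes the $d+1$ edges $(q[w],s[y_0]),\dots,(q[x_0],s[z_0])$ and inserts the $d$ shifted edges $(q[w],s[y_0+1]),\dots,(q[x_0-1],s[z_0])$, which is exactly your ``hybrid stitching''), and close with the degree count $\sum_{j}\deg(s[j])=\sum_{i}\deg(q[i])$.

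The gap is in the cost bookkeeping, which you yourself flag as the critical step. You assert that each removed extremal edge pairs some $q[\gamma]$ against the binary character on the \emph{far} side of $q[\gamma]$, and justify this by saying that ``at the boundary of the expansion the binary character on the other side of the run must differ.'' Neither half of this is forced. Nothing in Definition~\ref{def:matching} or in the lemma's hypothesis places an expansion at a run boundary of $s$: a high-degree vertex $s[y_0]$ can perfectly well satisfy $s[y_0]=s[y_0+1]$, and even when $s[y_0]\neq s[y_0+1]$ this says nothing about whether $s[y_0]$ lies on the far side of $q[w]$. Indeed, for the concrete queries $q^{(i)}=a^{n-i}b^{i}$ of Theorem~\ref{theorem:dtw_o1} with $a,b\in(0,\frac{1}{2})$, a $0$ absorbing several $q$-characters is exactly the \emph{near}-side case (this is what happens at the first $0$ in the matchings $M_i^*$ of Lemma~\ref{lemma:4_o1}). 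You also tally savings over ``two swaps,'' but the surgery removes only one edge net ($d+1$ removals against $d$ additions), so only one term's worth of savings is genuinely available. The paper's bookkeeping avoids any far/near classification of $s[y_0]$: it writes $\Cost(E)=|s[y_0]-q[w]|+\sum_{i=1}^{d}|s[y_0+i]-q[w+i]|$, lower-bounds the single net-removed term $|s[y_0]-q[w]|$ above $|q[w]-q[x_0]|$ via \eqref{equ:single_direction_o1}, telescopes $|q[w]-q[x_0]|=\sum_{i=1}^{d}|q[w+i-1]-q[w+i]|$ by monotonicity, and finishes with the triangle inequality termwise to recover $\Cost(E')$. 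That is the chain you should reproduce instead of a per-edge far/near argument; the per-edge argument cannot be salvaged as stated.
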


\begin{proof}
Suppose $\exists i\in[\ell]$ such that $\revise{\deg}(q[i]) > 1$.
We first prove the following two claims.

\begin{claim}
\label{claim:1_o1}
\revise{
For any edge $e$ in $M$, the two vertices corresponding to $e$ in $M$ cannot have degree $>1$ at the same time.}
\end{claim}

\noindent\textit{Proof of Claim \ref{claim:1_o1}}. This is trivial, since otherwise by deleting $e$ we would obtain a better matching.

\begin{claim}
\label{claim:2_o1}
There does not exist $i\in[n]$ and $j\in[\ell]$ such that $\revise{\deg}(q[i])>1$ and $\revise{\deg}(s[j])>1$.
\end{claim}

\noindent \textit{Proof of Claim \ref{claim:2_o1}}. We prove this by contradiction. Suppose $\exists(i,j)$ where $i\in[n]$ and $j\in[\ell]$ such that  $\revise{\deg}(q[i])>1$ and $\revise{\deg}(s[j])>1$. 
\revise{\linelabel{index_set}Consider the following index sets $\mb{X}, \mb{Y}, \mb{Z}$.}
Let $\mb{X} = \{x \in [n] \mid \revise{\deg}(q[x]) > 1\}$, $\mb{Y} = \{y \in [\ell] \mid \revise{\deg}(s[y]) > 1 \}$ and $\mb{Z} = \{z \in [\ell] \mid \exists x \in \mb{X} \text{ such that edge }(q[x], s[z])\in M\}$. According to Claim \ref{claim:1_o1}, we know that $\mb{Y}\bigcap \mb{Z} = \emptyset$. Let $d = \min_{y\in \mb{Y}, z\in \mb{Z}}|y-z|$, we would have $d>0$. Suppose we have $x_0\in \mb{X}, y_0\in \mb{Y}, z_0\in \mb{Z}$ such that edge $(q[x_0], s[z_0])\in M$ and $|y_0-z_0| = d$.
And this leaves two cases to discuss.

1) if $y_0 < z_0$, then we know that (i) $\forall k\in (y_0,z_0)$, $\revise{\deg}(s[k]) = 1$; otherwise we would have $k\in \mb{Y}$ and $|k-z_0|<d$, causing a contradiction. (ii) $\forall k_1, k_2$ within range $(y_0, z_0)$, $k_1\neq k_2$, we would have $s[k_1]$ and $s[k_2]$ matched to different vertices in $q$; otherwise, suppose edges $(q[t], s[k_1])\in M$ and $(q[t], s[k_2])\in M$. We would have $t\in \mb{X}$, $k_1\in \mb{Z}$ and $|y_0-k_1| < d$, causing a contradiction. 

Now, since $d$ is minimal, we can suppose that $s[y_0]$ is matched to $$\{q[w-\revise{\deg}(s[y_0])+1], q[w-\revise{\deg}(s[y_0])+2], \ldots, q[w]\},$$ and $q[x_0]$ is matched to $$\{s[z_0], s[z_0+1], \ldots, s[z_0+\revise{\deg}(q[x_0])-1]\}.$$ 
Since $y_0 < z_0$, by the monotonic property of the matching, we know that $w < x_0$. 
With (i) and (ii), we know that for $w < l < x_0$ and $y_0 < r < z_0$, the vertices $q[l]$'s and $s[r]$'s are perfectly matched one-to-one. Fig \ref{fig:rematching_o1} is an illustration of such an example.

We now claim, by re-matching edges between vertices $q[w], q[w+1], \ldots, q[x_0]$ and $s[y_0], s[y_0+1], \ldots, s[z_0]$, we can construct another matching $M'$ which is better than $M$, contradicting that $M$ is a DTW matching. We remove the $d+1$ edges $E = \{(q[w], s[y_0]), (q[w+1], s[y_0+1]), \ldots, (q[x_0], s[z_0])\}$ from $M$ and add $d$ new edges $E'=\{(q[w], s[y_0+1]), (q[w+1], s[y_0+2]), \ldots , (q[x_0-1], s[z_0])\}$ to obtain a new matching $M'$. Since $\revise{\deg}(s[y_0])>1$ and $\revise{\deg}(q[x_0])>1$ in $M$, $M'$ would still be a valid matching. Computing the sum of two sets of edges $E$ and $E'$, respectively, would yield 
\ifarxiv
the following.
\fi
\ifTIT
\needreview{Equation.~\ref{eqn_dbl_1} (see the cross-column equations)}.
\fi

\ifarxiv
\begin{align*}
    \Cost(E)  &   = |s[y_0]-q[w]|+\sum_{i = 1}^{d}|s[y_0+i]-q[w+i]| \label{eqn_dbl_1}\\
    & > |q[w]-q[x_0]|+ \sum_{i = 1}^{d}|s[y_0+i]-q[w+i]| \hfill \tag{Equation.~\ref{equ:single_direction_o1}} \\
    & = \sum_{i = 1}^{d}|q[w+i-1]-q[w+i]|+ \sum_{i = 1}^{d}|s[y_0+i]-q[w+i]|  \hfill \tag{Monotonicity of $q$}  \\
    & =  \sum_{i = 1}^{d}(|q[w+i-1]-q[w+i]|+|s[y_0+i]-q[w+i]|) \nonumber \\
    & \geq \sum_{i = 1}^{d}|q[w+i-1]-s[y_0+i]| \hfill \tag{Triangle Inequality}\\
    & = \revise{\Cost}(E'). \nonumber
\end{align*}
\fi

\ifTIT
\newcounter{MYtempeqncnt}
\begin{figure*}[bh]
\normalsize
\hrulefill
\begin{align}
  \Cost(E)  &   = |s[y_0]-q[w]|+\sum_{i = 1}^{d}|s[y_0+i]-q[w+i]| \label{eqn_dbl_1}\\
    & > |q[w]-q[x_0]|+ \sum_{i = 1}^{d}|s[y_0+i]-q[w+i]| \hfill \tag{Equation.~\ref{equ:single_direction_o1}} \\
    & = \sum_{i = 1}^{d}|q[w+i-1]-q[w+i]|+ \sum_{i = 1}^{d}|s[y_0+i]-q[w+i]|  \hfill \tag{Monotonicity of $q$}  \\
    & =  \sum_{i = 1}^{d}(|q[w+i-1]-q[w+i]|+|s[y_0+i]-q[w+i]|) \nonumber \\
    & \geq \sum_{i = 1}^{d}|q[w+i-1]-s[y_0+i]| \hfill \tag{Triangle Inequality}\\
    & = \revise{\Cost}(E'). \nonumber
\end{align}

\vspace*{4pt}
\end{figure*}
\fi

So $M'$ would be a better matching than $M$, causing a contradiction. 

2) if $y_0 > z_0$, this case is symmetric to 1) and we can use a similar method to complete the proof by contradiction. We give a detailed proof in the appendix. 

Combining the two cases finishes the proof for Claim \ref{claim:2_o1}.

\leavevmode\newline
\noindent
Suppose $\exists i\in[\ell]$ such that $\revise{\deg}(q[i]) > 1$. With Claim 2, we know that $\forall j\in[\ell]$, $\revise{\deg}(s[j]) = 1$. Thus, we would have $\sum_{j=1}^{\ell}\revise{\deg}(s[j]) = \sum_{i=1}^{n}\revise{\deg}(q[i]) > n \geq \ell = \sum_{j=1}^{\ell}\revise{\deg}(s[j]),$
which causes a contradiction and finishes the proof of Lemma \ref{lemma:single-direction_o1}.
\end{proof}

\begin{figure*}[t]
    \centering
    \includegraphics[width=\textwidth]{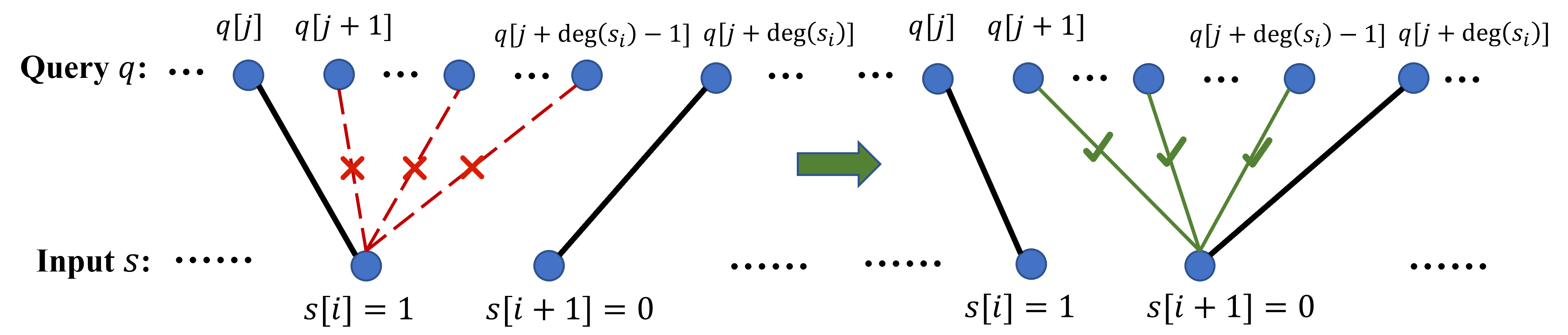}
    \caption{\revise{Obtaining a lower cost matching by shifting matched edges (c.f. Lemma \ref{lemma:3_o1}).}}
    \label{fig:shift_o1}
\end{figure*}

\noindent
Furthermore, we have the following lemma for the DTW matching for our query sequences.

\begin{lemma}
\label{lemma:3_o1}
For any given input sequence $s$ and query $q \in \mc{Q}$, the DTW matching $M$ for $(q, s)$ has $\revise{\deg}(s[i]) = 1$ in $M$ if $s[i]=1$. 
\end{lemma}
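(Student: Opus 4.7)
\begin{proofoutline}
The plan is to argue by contradiction using a shifting argument, leveraging that every character of $q \in \mc{Q}$ is at most $b < \tfrac{1}{2}$ so that matching such a character to a $0$ strictly beats matching it to a $1$.

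First, by Lemma~\ref{lemma:single-direction_o1}, $\deg(q[i]) = 1$ for every $i \in [n]$. Thus $M$ is equivalent to a monotonic, surjective map $f : [n] \to [\ell]$ where edge $(q[i], s[f(i)]) \in M$, and $\deg(s[j]) = |f^{-1}(j)|$. Suppose for contradiction that some $i^* \in [\ell]$ with $s[i^*] = 1$ satisfies $\deg(s[i^*]) \geq 2$, so $f^{-1}(i^*) = \{j, j+1, \ldots, j+k\}$ with $k \geq 1$. Since we are in the nontrivial case where $s$ contains both $0$ and $1$, there is a nearest $0$ to $s[i^*]$; without loss of generality suppose it lies to the right, at position $i^* + r$, so $s[i^*] = s[i^*+1] = \cdots = s[i^*+r-1] = 1$ and $s[i^*+r] = 0$. (The symmetric left-side case is handled analogously.)

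For each $p \in \{0, 1, \ldots, r-1\}$, let $m_p$ denote the rightmost query index with $f(m_p) = i^*+p$; in particular $m_0 = j+k$, and by monotonicity of $f$ we have $m_0 < m_1 < \cdots < m_{r-1}$. Define a new assignment $f'$ by
\[
f'(m_p) = i^* + p + 1 \quad \text{for } p = 0, 1, \ldots, r-1,
\]
and $f'(i) = f(i)$ otherwise. I would then verify: (i) \emph{monotonicity} of $f'$, by checking that on each interval $(m_{p-1}, m_p]$ the values are consistent with the shifted endpoints; (ii) \emph{surjectivity} of $f'$: $s[i^*]$ loses one edge but still has degree $\geq 1$ (since originally $\geq 2$); each $s[i^*+p]$ for $1 \leq p \leq r-1$ loses edge $m_p$ but gains edge $m_{p-1}$; and $s[i^*+r]$ additionally gains edge $m_{r-1}$. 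Hence $f'$ is a valid matching.

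Finally I compute the cost change. For $p = 0, \ldots, r-2$, the edge $(q[m_p], s[i^*+p])$ is replaced by $(q[m_p], s[i^*+p+1])$, and since both endpoints in $s$ are $1$ the cost is unchanged. For $p = r-1$, the edge $(q[m_{r-1}], s[i^*+r-1])$ with cost $1 - q[m_{r-1}]$ is replaced by $(q[m_{r-1}], s[i^*+r])$ with cost $q[m_{r-1}]$, giving a net change of $2 q[m_{r-1}] - 1 < 0$ because $q[m_{r-1}] \leq b < \tfrac{1}{2}$. This contradicts the optimality of $M$, so no such $i^*$ exists. The main obstacle will be carefully verifying monotonicity of $f'$ on the intervals between consecutive $m_p$'s and handling the corner case when no $0$ lies to the right of $s[i^*]$, where I invoke the symmetric left-shift argument using the nearest $0$ on the left.
\end{proofoutline}
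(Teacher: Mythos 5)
Your proposal is correct and follows essentially the same approach as the paper: both argue by contradiction using a shifting operation to push an excess edge off a $1$ onto the nearest $0$, and both derive the strict cost decrease from the fact that every $q[i] \leq b < \tfrac{1}{2}$. The only cosmetic difference is that the paper first performs iterated degree-preserving swaps along the run of $1$'s and then bulk-shifts $\deg(s[i])-1$ edges onto the adjacent $0$, whereas you do a single cascading chain shift of rightmost edges; the two constructions yield the same contradiction.
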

\begin{proof}
We give proof by contradiction. Given an optimal DTW matching $M$ for $(q,s)$, suppose $\exists 1\leq i\leq \ell$ such that $s[i] = 1$ and $\revise{\deg}(s[i]) > 1$. Suppose $s[i]$ is matched to $q[j], q[j+1], \ldots, q[j+\revise{\deg}(s[i])-1]$. 

First, we show that we can ``swap'' $s[i]$ with its neighboring element while maintaining the optimality of the matching. If one of the neighboring elements of $s[i]$ is $1$, w.l.o.g, suppose $s[i+1] = 1$, then we can construct an alternate optimal matching $M^{*}$ where $\revise{\deg}(s[i]) = 1$ and $\revise{\deg}(s[i+1])>1$. According to Lemma \ref{lemma:single-direction_o1}, $s[i+1]$ cannot be matched with any of $q[j], q[j+1], \ldots, q[j+\revise{\deg}(s[i])-1]$ in $M$, otherwise there would exist $j+1\leq k\leq j+\revise{\deg}(s[i])-1$ such that $\revise{\deg}(q[k]) = 2$. Thus, by matching $q[j+1], \ldots, q[j+\revise{\deg}(s[i])-1]$ to $s[i+1]$ instead of $s[i]$, we would obtain a new optimal matching $M^{*}$ where $\revise{\deg}(s[i]) = 1$ and $\revise{\deg}(s[i+1])>1$. 

As there exists at least one $0$ in $s$, we know that there exists an optimal DTW matching $M_0^{*}$ for $(q, s)$ where $\exists s[i]$ such that $s[i] = 1$, $\revise{\deg}(s[i]) > 1$ and one of the neighboring element of $s[i]$ is $0$. \revise{Without loss of generality}, suppose $s[i+1] = 0$. Similarly, according to Lemma \ref{lemma:single-direction_o1}, $s[i+1]$ cannot be matched with any of $q[j], q[j+1], \ldots, q[j+\revise{\deg}(s[i])-1]$ in $M_0$. Here we construct a new matching $M_0'$ by matching $q[j+1], \ldots , q[j+\revise{\deg}(s[i])-1]$ to $s[i+1]$ instead of $s[i]$. Fig \ref{fig:shift_o1} illustrates an example of such a construction. Considering the total cost of differing edges in both matchings, we have 
$\sum_{k=j+1}^{j+\revise{\deg}(s[i])-1}|s[i]-q[k]|
   > \sum_{k=j+1}^{j+\revise{\deg}(s[i])-1}\frac{1}{2}
   > \sum_{k=j+1}^{j+\revise{\deg}(s[i])-1}|s[i+1]-q[k]|.$
Thus $M_0^*$ would be a better matching than $M_0$, causing a  contradiction and thus finishing the proof. 
\end{proof}

\noindent
\textbf{Notation clarification.}
For the rest of the proof, we will use $q^{(i)}$ to denote the $i$-th query in the query set $\mc{Q}$ and $q^{(i)}[j]$ the $j$-th character in $q^{(i)}$.

\begin{lemma}
\label{lemma:4_o1}
For any input sequence $s$, there exists \revise{\linelabel{isomorphic_set-0}a set of isomorphic matchings} $\mc{M}^*$, where $\revise{M_i^*(q^{(i)}, s)} \in \mc{M}^*$ is optimal for query $q^{(i)} \in \mc{Q}$.
\end{lemma}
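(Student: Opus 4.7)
The plan is to exhibit an explicit family of matchings $\mc{M}^{*}=\{M_i^{*}:1\le i\le n\}$ whose underlying edge sets (viewed as pairs of indices) coincide across $i$, and then to certify that each $M_i^{*}$ is optimal for its corresponding query by combining the structural lemmas established so far with the two forthcoming claims about the ``shifting'' operation.

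\emph{Construction.} Let $p$ denote the index of the first $0$ in $s$, which exists because we have assumed $s$ contains both characters. Define $M_i^{*}$ by three groups of edges: (i) for $1\le j<p$, include the edge $(q^{(i)}[j],s[j])$; (ii) include the $n-\ell+1$ edges $(q^{(i)}[p+t],s[p])$ for $0\le t\le n-\ell$, so that $\deg(s[p])=n-\ell+1$; (iii) for $p<j\le \ell$, include the edge $(q^{(i)}[j+n-\ell],s[j])$. One checks directly from Definition \ref{def:matching} that this is a valid matching: first and last characters on both sides are properly matched, indices are monotonic, and every vertex is covered. Crucially, the edge set depends only on $(p,\ell,n)$ and not on the values $a,b$ or on $i$, so $M_1^{*},\ldots,M_n^{*}$ are pairwise isomorphic by construction, and the isomorphism claim is discharged.

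\emph{Optimality via shifting.} Fix $i$ and let $M_i^{\circ}$ be any optimal DTW matching for $(q^{(i)},s)$. By Lemma \ref{lemma:single-direction_o1} every character of $q^{(i)}$ has degree $1$ in $M_i^{\circ}$, and by Lemma \ref{lemma:3_o1} every $1$-character of $s$ has degree $1$; hence the only freedom remaining in $M_i^{\circ}$ is how the excess degree $n-\ell$ is distributed among the $0$'s of $s$. In $M_i^{*}$ this excess is entirely absorbed by the first $0$. I would then invoke Claim \ref{claim:4_o1}, which states that $M_i^{\circ}$ is reachable from $M_i^{*}$ via finitely many shifting operations, and Claim \ref{claim:3_o1}, which states that each such shift is cost-non-decreasing. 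Chaining these, $\Cost(M_i^{*})\le \Cost(M_i^{\circ})$, and since $M_i^{\circ}$ is optimal, $M_i^{*}$ must be optimal as well.

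\emph{Expected obstacle.} The delicate step is justifying the two auxiliary claims, especially Claim \ref{claim:4_o1}. While it is easy to verify that redistributing excess degree among $0$'s yields a valid matching, showing that every such redistribution can be rewritten as a composition of the specific left-to-right shifts described in Section \ref{sec:our_techniques} requires a normal-form argument: one must show that any excess sitting on a later $0$ can be ``pulled back'' to the first $0$ through intermediate $0$'s without creating a configuration forbidden by Lemmas \ref{lemma:single-direction_o1}--\ref{lemma:3_o1}. For Claim \ref{claim:3_o1}, the non-decrease under shifting will follow from the monotonicity of $q^{(i)}$ together with inequality \eqref{equ:single_direction_o1}, via an argument structurally identical to the re-matching computation used in the proof of Lemma \ref{lemma:single-direction_o1} (sum of adjacent monotone gaps bounding the cost of a rerouted block). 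Once Claims \ref{claim:3_o1} and \ref{claim:4_o1} are in hand, the lemma is immediate.
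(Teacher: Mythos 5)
Your construction of $M_i^{*}$ (route the first $p-1$ characters one-to-one, dump the excess $n-\ell$ degree onto the first $0$, and route the tail with an offset of $n-\ell$) is exactly the matching the paper builds, and your optimality argument — establish degree-one constraints on $q^{(i)}$ and on the $1$'s of $s$ via Lemmas \ref{lemma:single-direction_o1} and \ref{lemma:3_o1}, then reach any optimal matching from $M_i^{*}$ by a chain of cost-non-decreasing shifts via Claims \ref{claim:4_o1} and \ref{claim:3_o1} — is the paper's argument verbatim. The proposal is correct and takes essentially the same route; your ``expected obstacle'' paragraph also correctly identifies the normal-form / pull-back argument as the crux of Claim \ref{claim:4_o1}.
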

\begin{proof}
According to previous assumptions, we know that the input sequence $s$ contains at least one $0$. Suppose $s[u]$ is the first $0$ in $s$. 
\linelabel{star_unstar_1} 
We construct the following matching $M_i^*$ for each $q^{(i)} \in \mc{Q}$:

1) For $1\leq j < u$, $q^{(i)}[j]$ is matched to $s[j]$ in $M_i^*$;

2) For $u\leq j \leq u+n-\ell$, $q^{(i)}[j]$ is matched to $s[u]$ in $M_i^*$;

3) For $ u+n-\ell < j \leq n$, $q^{(i)}[j]$ is matched to $s[j-(n-\ell)]$ in $M_{i}^*$.

The constructed $M_i^*$'s form \revise{\linelabel{isomorphic_set-1}a set of isomorphic matchings}, and we will show that each $M_i^*$ is an optimal matching between $s$ and $q^{(i)}$. To prove this, we first define the ``shifting'' operation.

\begin{definition}[Shifting Operation for Queries in $\mc{Q}$]
\label{def:shifting_operation}
Given a matching $M$ between input sequence $s$ of length $\ell$ and query sequence $q\in\mc{Q}$ of length $n$. Suppose $\exists 1\leq x < y \leq \ell$ s.t. $s[x] = s[y] = 0$, $\revise{\deg}(s[x]) > 1$, and $\forall x < j < y$, $\revise{\deg}(s[j]) = 1$. We now construct a new matching $M'$ based on $M$:

Suppose $q[z]$ is the last character matched to $s[x]$ and $q[w]$ is the first character matched to $s[y]$, we know that $z-x = w-y$ (cf. lemma \ref{lemma:single-direction_o1}). For $x \leq j < y$, we remove the edge $(q[j-x+z], s[j])$ from $M$ and add the edge $(q[j-x+z], s[j+1])$. As $\revise{\deg}(s[x]) > 1$. This will give us a valid matching.
We call this process a \emph{shifting operation}.
\end{definition}

\begin{figure*}[t]
    \centering
    \includegraphics[width=\textwidth]{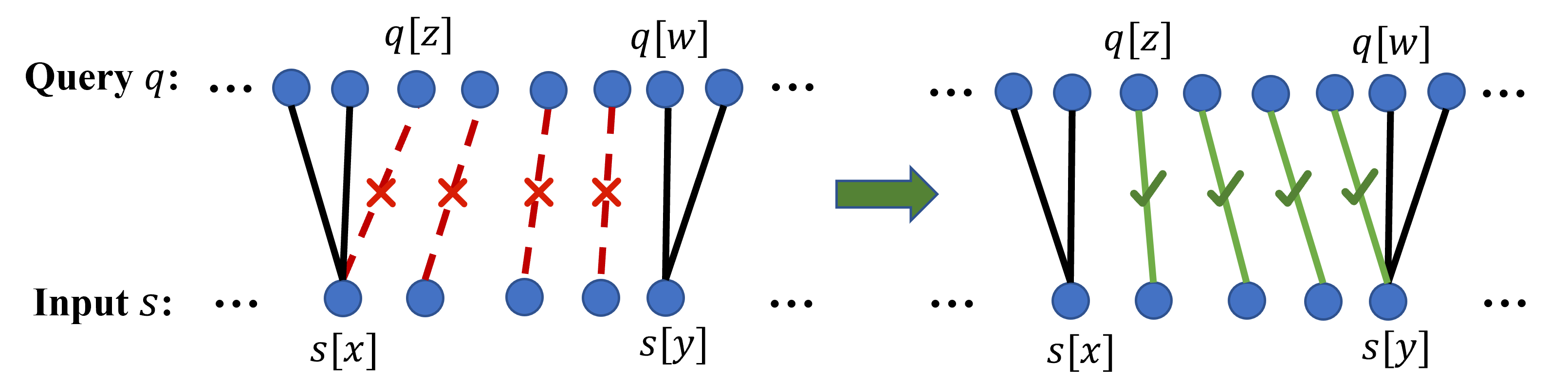}
    \caption{An illustration of the shifting operation (c.f. Definition \ref{def:shifting_operation}).}
    \label{fig:shifting_operation}
\end{figure*}

An illustration of the shifting operation is shown in Fig \ref{fig:shifting_operation}.
The shifting operation reduces $\revise{\deg}(s[x])$ by 1 and increases $\revise{\deg}(s[y])$ by 1, while preserving the degree of all other vertices in $s$. 
Now we give the following claims for shifting operations.

\begin{claim}
\label{claim:3_o1}
A shifting operation does not reduce the total cost of the matching.
\end{claim}

\noindent\textit{Proof of claim.}
As one can observe, the shifting operation will not increase the total number of edges -- the number of removed edges is equal to the number of newly added edges.
Then we only need to consider the cost of those changed edges.
\revise{\linelabel{wlog-2}Recall that our monotonic query sequences are in the form of $a^{n-k}b^{k}$ for $k = 1, \dots, n$.}
To calculate the change of cost in the shifting operation, we have two cases to analyze.

\noindent
Case 1. All characters between $q[z]$ and $q[w]$ (including $q[z]$ and $q[w]$) in the query sequence are the same, either $a$ or $b$.
In this case, the total cost does not change after the shifting operation.
This is because, $\forall s[j] \in s$ s.t. $x < j < y$, the edge changes from $(q[j-x+z], s[j])$ to $(q[j-x+z-1], s[j])$ and the cost $\revise{\Cost}(q[j-x+z], s[j]) = \revise{\Cost}(q[j-x+z-1], s[j])$ since $q[j-x+z] = q[j-x+z-1]$.
Notice in the matching before shifting, we have the edge $(q[z], s[x])$ while in the matching after shifting this edge is removed but the edge $(q[w], s[y])$ is added.
These two edges have equal cost $\revise{\Cost}(q[z], s[x]) = \revise{\Cost}(q[w], s[y])$ because $s_x = s_y = 0$.

\noindent
Case 2. The characters between $q[z]$ and $q[w]$ (including $q[z]$ and $q[w]$) contain both $a$ and $b$.
\revise{Without loss of generality}, we can assume there exists \revise{\linelabel{qi_i}index $i$,} s.t. for $j < i, q[j] = a$ while for $j \geq i, q[j] = b$.
Applying a similar analysis as we did in case 1, the cost of edges containing characters $q[j]$ such that $z < j < i - 1$ \revise{\linelabel{and_or}or} $i< j < w$ remains the same after the shifting operation.
Suppose $s[t]$ gets matched to $q[i]$ before the shifting operation.
We only need to analyze the cost of the (removed and added) edges corresponding to characters $s[x], s[t]$ and $s[y]$.
Before the shifting operation, these three characters get matched in edges $(q[z], s[x]), (q[i], s[t])$, respectively, while in the matching after shifting, they are involved in edges $(q[i-1], s[t]), (q[w-1], s[y])$.
We can compute the total cost of these three edges before shifting $\revise{\Cost_{\text{before}}} = |a - 0| + |b - s[t]|$ and the total cost after shifting $\revise{\Cost_{\text{after}}} = |a - s[t]| + |b - 0|$.
If $s[t] = 0$, then $\revise{\Cost_{\text{before}}} = a+b = \revise{\Cost_{\text{after}}}$; otherwise if $s[t] = 1$, then $\revise{\Cost_{\text{before}}} = a + 1 - b$ and $\revise{\Cost_{\text{after}}} = 1 - a  + b$.
Since $0 < a < b < 1$, $\revise{\Cost_{\text{before}}} < \revise{\Cost_{\text{after}}}$ when $s[t] = 1$.
Therefore in this case, $\revise{\Cost_{\text{before}}} \leq \revise{\Cost_{\text{after}}}$.

Combining both cases, the total cost of the matching before the shifting operation could be only less than or equal to the cost after shifting, which proves the claim.

\begin{claim}
\label{claim:4_o1}
Given input sequence $s$, query $q^{(i)}\in \mc{Q}$ and any matching $M_i$ between $s$ and $q^{(i)}$. 
If $M_i$ satisfies the properties that (i) $\forall 1\leq j \leq \ell$, $\revise{\deg}(s[j]) > 1\Rightarrow s_j = 0$, (ii) $\forall 1 \leq k \leq n$, $\revise{\deg}(q^{(i)}[k])=1$, then we can obtain $M_i$ by applying a series of shifting operations to $M_i^*$. 
\end{claim}

\noindent\textit{Proof of Claim.} 
If the input sequence $s$ contains only a single 0, then this claim is trivial since any matching $M_i = M_i^*$.
For cases that the input sequence $s$ contains more than one 0, \revise{without loss of generality}, we can assume $s$ has $k$ 0's and in the matching $M_i$, for each 0 in $s$ (denoted by $s_{0_m}$, $m \in [k]$), the degree $\revise{\deg}(s_{0_m}) = t_m \geq 1$.
Note that, as we defined, the shifting operation can be performed between $s[x]$ and $s[y]$, if $s[x] = s[y] = 0$, $\revise{\deg}(s[x]) > 1$, and $\forall x < t < y$, $\revise{\deg}(s[t]) = 1$.
This condition obviously holds for the matching $M_i^*$ if $s[x]$ and $s[y]$ are the nearest neighboring 0's in the input sequence $s$, because all characters between $s[x]$ and $s[y]$ are 1's and in $M_i^*$ all characters $s[j]$ s.t. $s[j] = 1$ we have $\revise{\deg}(s[j]) = 1$ (indicated by Lemma \ref{lemma:3_o1}).
{Property (ii) indicates that both $M_i^*$ and $M_i$ have the same number of edges $n$, and property (i) indicates $\forall 1 \leq j \leq \ell, s[j] = 1 \Rightarrow  \revise{\deg}(s[j])=1$.} 
For $M_i^*$ and $M_i$, we have $\sum_{s[j]=0} \revise{\deg}(s[j]) = n - (\# 1$s in $s) = \sum_{m=1}^k t_m$. 
For matching $M_i^*$, the degree of all 0's is 1 except for the first 0 and therefore the degree of the first 0 is $\sum_{m=1}^k t_m - (k - 1)$.
Therefore, we can perform the shifting operation $\sum_{m=1}^k t_m - (k - 1) - t_1$ times to move $\sum_{m=1}^k t_m - (k - 1) - t_1$ edges from the first 0 to the second 0.
Similarly, we continue doing shifting operations to move $\sum_{m=j}^{k} t_m - (k - 1) - t_j$ edges from the $j$-th 0 to the $(j+1)$-th 0.
We can hence obtain $M_i$ after all shifting operations are finished and this shows the correctness of this claim.

\leavevmode\newline
\noindent
Suppose $M_{i0}$ is an optimal DTW matching between $s$ and $q^{(i)}$. 
By Lemma \ref{lemma:single-direction_o1}, in DTW matching $M_{i0}$, $\forall 1 \leq k \leq n, \revise{\deg}(q^{(i)}[k]) = 1$.
By Lemma \ref{lemma:3_o1} we know that $M_{i0}$ has $\revise{\deg}(s[i]) = 1$ in $M_{i0}$ if $s[i] = 1$, so $\forall 1\leq j \leq \ell$, $\revise{\deg}(s[j]) > 1\Rightarrow s_j = 0$ in $M_{i0}$. By Claim \ref{claim:4_o1} we know that we can obtain $M_{i0}$ by applying a series of shifting operations to $M_i^*$, and according to Claim \ref{claim:3_o1} we would have $\revise{\Cost}(M_i^*)\leq \revise{\Cost}(M_{i0})$. Thus, $M_i^*$ is an optimal matching between $q^{(i)}$ and $s$. 
\end{proof}

\begin{proposition}
\label{prop:1}
Let $x_j \in\{0, 1\}$ be the value of the character matched to $q^{(i)}[j]$ in all isomorphic DTW matchings $M_i^*$, where $j \in [n]$.
We denote the sequence $x \coloneqq x[1]\dots x[n]$, where $x[j] = x_j$ for $j \in [n]$.
\revise{\linelabel{amplify_leftmost_0}The sequence $x$ can be obtained by amplifying the leftmost 0 in s.}
\end{proposition}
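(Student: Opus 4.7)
The plan is to read off $x$ directly from the explicit construction of $M_i^*$ given in the proof of Lemma \ref{lemma:4_o1}, and then to check that the resulting string is exactly the expansion of $s$ obtained by repeating the leftmost $0$. Because the $M_i^*$ are isomorphic (depending only on $\ell = \len(s)$, $n$, and the position $u$ of the first $0$ of $s$), the value $x_j$ of the character matched to $q^{(i)}[j]$ is independent of $i$, so the definition of $x$ is unambiguous.

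First I would recall from the construction in Lemma \ref{lemma:4_o1} that, letting $u$ be the index of the leftmost $0$ in $s$, the matching $M_i^*$ pairs $q^{(i)}[j]$ with $s[j]$ when $1 \leq j < u$, pairs $q^{(i)}[j]$ with $s[u] = 0$ for every $u \leq j \leq u + n - \ell$, and pairs $q^{(i)}[j]$ with $s[j-(n-\ell)]$ when $u + n - \ell < j \leq n$. Substituting these three cases into the definition $x_j = (\text{character matched to } q^{(i)}[j])$, I obtain
\[
x \;=\; s[1]\,s[2]\,\cdots\,s[u-1]\;\underbrace{0\,0\,\cdots\,0}_{n-\ell+1 \text{ copies}}\;s[u+1]\,\cdots\,s[\ell].
\]
A short length check confirms $\len(x) = (u-1)+(n-\ell+1)+(\ell-u) = n$, as required.

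Next I would observe that, since $s[u] = 0$ is the leftmost $0$ of $s$, the string above is obtained from $s$ by replacing the single character $s[u]$ with the block $0^{n-\ell+1}$, i.e.\ by extending the run containing the leftmost $0$ by $n-\ell$ additional copies of $0$. By Definition of expansion (from the Runs and Expansion definition used throughout the paper), this is exactly an expansion of $s$ and, more specifically, is the expansion obtained by amplifying only the run corresponding to the leftmost $0$. This yields the claim.

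I do not anticipate a serious obstacle: the proposition is essentially a restatement of the explicit definition of $M_i^*$ together with the fact (established already in Lemma \ref{lemma:4_o1}) that this matching is optimal and isomorphic across $i$. The only subtlety worth being explicit about is that $x$ does not depend on the choice of $i \in [n]$; this follows immediately from the isomorphism of the $M_i^*$, which I would note in one sentence before exhibiting $x$.
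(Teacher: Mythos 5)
Your proposal is correct and follows exactly the same route as the paper's (very terse) proof, which simply appeals to the construction of $M_i^*$ in Lemma~\ref{lemma:4_o1}; you have merely spelled out the three-case substitution, the length check $(u-1)+(n-\ell+1)+(\ell-u)=n$, and the observation that the result is $s$ with $s[u]=0$ replaced by $0^{n-\ell+1}$. Nothing is missing, and the extra explicitness about well-definedness of $x$ (independence from $i$ via isomorphism) is a welcome clarification rather than a departure.
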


\begin{proof}
We see the proposition is naturally true based on the construction of $M_i^*$ in the proof of Lemma~\ref{lemma:4_o1}.
\end{proof}

\noindent\textit{Algorithm to recover DTW matching $M_i^*$.}
\linelabel{star_unstar_2} We now give the algorithm to recover the \emph{isomorphic} DTW matchings $M_i^*$ with the query set $\mc{Q}$ (Algorithm \ref{alg:dtw_on_recover}: line 8-15).
The query result $d_i$ of $q^{(i)} = a^{n-i}b^i$ would be $d_i = \sum_{j=1}^{n-i} |x[j]-a|+\sum_{j=n-i+1}^{n}|x[j]-b|$.
Recall that $a = \frac{1}{3}$ and $b = \frac{2}{5}$.
Consider $q^{1} = a^{n-1} b$, where $d_1 = \sum_{j=1}^{n-1} |x[j]- 1/3|+|x[n]-2/5|$.
By computing $(d_1 * 15) \mod 5$, we can know whether $x[n]$ is 0 or 1.
For $i>1$, we have $d_i - d_{i-1} = (\sum_{j=1}^{n-i} |x[j]-a|+\sum_{j=n-i+1}^{n}|x[j]-b|) - (\sum_{j=1}^{n-i+1} |x[j]-a|+\sum_{j=n-i}^{n}|x[j]-b|) = |x[n-i+1]-b|-|x[n-i+1]-a|$. By computing $((d_i-d_{i-1}) * 15) \mod 5$, we can know whether $x[n-i+1]$ is 0 or 1.
Then we can recover all $x[j]$'s using this procedure.

\begin{algorithm2e*}[t]
\LinesNumbered
\caption{Exact Recovery Algorithm via Queries to DTW Distance Oracle ($\mc{O}(1)$ Extra Chars)}
\label{alg:dtw_on_recover}
\SetKwInput{KwInput}{Input}                %
\SetKwInput{KwOutput}{Output}              %
\DontPrintSemicolon
  \KwInput{Non-adaptive query sequences $\revise{\mc{Q}} = \{ q^{(1)}, q^{(2)}, \dots, q^{(n+2)} \}$, where $q^{(n+1)} = 0$, $q^{(n+2)} = 1$ and the rest of the queries follows our construction; \newline
  The DTW distance \emph{query results} $\revise{\mc{R}} = \{ d_1, d_2, \dots, d_{n+2} \}$ aligned from each query sequence in $\mc{Q}$ to the input sequence to be recovered.\;}
\KwOutput{The sequence $s$ to be recovered.}
\SetKwFunction{FMain}{{\sc RecoveryDTW}}
  \SetKwProg{Fn}{Function}{:}{}
    \Fn{\FMain{$\revise{\mc{Q}}, \revise{\mc{R}}$}}{
    \uIf{$d_{n+1}$ = 0}{
        \Return s $\coloneqq$ $0^{d_{n+2}}$
    }
    \uIf{$d_{n+2}$ = 0}{
        \Return s $\coloneqq$ $1^{d_{n+1}}$
    }
    positions$\coloneqq$ [] \\
    coef\_1 $\coloneqq$ 0 \\
    \For({{\color{blue}\Comment{Corresponding queries $q^{(i)} = a^{n-i}b^i$}}}){$i \in [1, n]$}{
        $ \text{coef} \coloneqq d_i * 15 * 2 \mod 5$  \\
        \uIf{$(\text{\em coef} - \text{\em coef}\_1 + 5) \mod 5  = 2$}{
            positions.append(0)
        }
        \uElseIf{$(\text{\em coef} - \text{\em coef}\_1 + 5) \mod 5 = 3$ }{
            positions.append(1)
        }
        coef\_1 $\coloneqq$ coef
    }
    positions.reverse() \\
    sequence $\coloneqq$ [], $i \coloneqq 0$\\
    $n\_0$ $\coloneqq$ $d_{n+2}$, $n\_1$ $\coloneqq$ $d_{n+1}$\\
    \While{$\text{\em positions}[i] = 1$} {
        sequence.append($1$)\\
        $i$ += 1 \\
    }
    $i$ += $n - n\_0 - n\_1$ \\
    \While{$i<n$} {
        sequence.append(positions[$i$])\\
        $i$ += 1 \\
    }
    \Return s $\coloneqq$ sequence
    }
\end{algorithm2e*}

\bigskip
\noindent\textit{Algorithm to recover input sequence $s$.}  We now give an overall algorithm (as shown in Algorithm \ref{alg:dtw_on_recover}) that recovers $s$ using the matching recovery algorithm and claims. 
For the all 0 and all 1 input sequences, we can use $q^{(n+1)} = 0$, $q^{(n+2)} = 1$ to directly recover them (Algorithm  \ref{alg:dtw_on_recover}: line 2-5).
For the rest of the cases, we first recover the optimal isomorphic matching using the described algorithm (Algorithm  \ref{alg:dtw_on_recover}: line 6-15).
Let the recovered matching for $\mc{Q}$ be $m = x[1] \ldots x[n]$, ($x[i] \in \{0, 1\}$). 
Denote the position of the \revise{leftmost} 0 in $x$ to be $u$ ($1 \leq u \leq \ell$).
Then we know $s[1], \ldots, s[u-1] = 1$ by Proposition~\ref{prop:1}. 
By using the sequence $q^{(n+2)} = 1$ to query $s$, we get the total number of 0's $(n_0)$ in $s$. Consider the substring $x[u, n]$, and delete the leading zeros in $x[u, n]$ until it has $n_0$ zeros. Suppose we obtain string $s_d$ after the deletion. We know that $s = 1^{u-1}s_d$  (Algorithm  \ref{alg:dtw_on_recover}: line 16-24).
\hfill$\blacksquare$

\stitle{Remark.}
If we are allowed to use $\mc{O}(n)$ extra characters in our queries, we have non-adaptive solutions with $\mc{O}(1)$ query complexity for DTW distance.
This assumption is stronger than the problem setting (where only $\mc{O}(1)$ extra characters are considered) throughout the paper.
For details of this complementary result, see Appendix \ref{app:dtw_on_extra}.

\section{Recovery with Non-Adaptive \frechet Distance Oracle Queries}
\label{sec:frechet}

Consider two sequences $x$ and $y$ ($x \neq y$) defined on the binary alphabet $\bits$.
\revise{The query result from a \frechet distance oracle only gives very limited information, viz. 0 or 1 (which is more limited than the query from DTW oracle).
This 1-bit binary information restricts the power of sequence recovery with \frechet oracle.
Note that it is not possible to distinguish any sequences $x$ and $y$ under \frechet distance.
\linelabel{trivial}To see this and to see why the recovery problem is interesting for \frechet distance, we first define the concept of equivalent sequences \emph{under \frechet distance} and revisit the problem from the perspective of equivalent sequences.
}

\begin{definition}[Equivalent Sequences under \frechet Distance]
Given two sequences $x$ and $y$, we say $x$ and $y$ are equivalent if $y$ is obtained by taking any bit in $x$ and copying this bit contiguously any number of times.
For any pair of equivalent sequences, the \frechet distance between them is $0$.
\end{definition}

A simple example of equivalent sequences under \frechet distance is two sequences, $1$ and $11$.
$11$ can be seen as copying the bit $1$ in the first sequence and the \frechet distance between $1$ and $11$ is $0$.
In addition, these two sequences cannot be distinguished by \emph{any} query sequence.
This is because for the second sequence, the double $1$ characters can be matched to the same character in the query sequence as the single $1$ sequence.
This will not change the \frechet distance because the $l_\infty$ norm of the cost of matching edges is not changed.

\revise{From the perspective of equivalent sequences, for any two sequences $x$ and $y$, they are either in the same equivalence class (the \frechet distance is 0) or in different equivalence classes (the \frechet distance is 1).
Thus the \frechet distance between two sequences reflects whether or not they are equivalent.
Any equivalent sequences, therefore as suggested by its name, are not distinguishable, because all queries from the same equivalence class return 0 and all queries from different equivalence classes return 1.
Further, we can categorize all the equivalence classes under \frechet distance and then derive the lower bound of query complexity of recovering  \emph{non-equivalent sequences under \frechet distance}, which is shown in the following theorem.
}

\begin{theorem}[\revise{Lower Bound of Recovery from \frechet Distance}]
\label{thm:frechet_lower_bound}
For a binary alphabet $\bits$, any algorithm to recover an arbitrary input sequence $s \in \bits^i$ \revise{\linelabel{revise_thm:frechet_lower_bound}up to equivalence}, where $0 \leq i \leq n$, by querying its \frechet distance to a non-adaptive set of sequences requires a query complexity of $\Omega(n)$.

\end{theorem}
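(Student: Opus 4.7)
The plan is to exploit the fact that, on binary sequences, the \frechet distance takes only two values: $d_F(x,y) \in \{0,1\}$, with $d_F(x,y) = 0$ precisely when $x$ and $y$ are equivalent under \frechet (i.e., share the same condensed expression). I would first verify this dichotomy: because every character pair contributes $0$ or $1$ to the $\ell_\infty$ cost, we always have $d_F \le 1$; a zero-cost correspondence exists iff the two condensed expressions agree, in which case we simply pad runs to align matching characters, while for distinct condensed expressions every correspondence must contain at least one $(0,1)$-pair, forcing the cost to equal $1$.

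Next I would count the equivalence classes of binary sequences of length at most $n$. Since each class is determined by its condensed expression, and the condensed expression is either empty or an alternating string of length $k \in [n]$ starting with $0$ or $1$, there are exactly $2n + 1$ such classes. This finite enumeration is what makes the argument quantitative.

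The core step is the following pigeonhole count. Fix any non-adaptive query set $\mc{Q} = \{q_1, \ldots, q_k\}$ of binary sequences, and without loss of generality assume the $q_j$'s lie in pairwise distinct equivalence classes (replacing two equivalent queries by one loses no information, since they always return the same value). By the dichotomy above, for any input $s$ the answer vector $(d_F(s, q_1), \ldots, d_F(s, q_k)) \in \{0,1\}^k$ is either the standard basis vector $e_j$, in case $s$ is equivalent to the unique $q_j$ with which it shares a class, or the all-ones vector $\mathbf{1}$, in case $s$ is not equivalent to any $q_j$. Hence $\mc{Q}$ induces at most $k + 1$ distinct answer vectors, so $\mc{Q}$ can distinguish at most $k + 1$ equivalence classes of inputs.

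Requiring $k + 1 \ge 2n + 1$ forces $k \ge 2n = \Omega(n)$, establishing the lower bound. The only delicate point is verifying the $\{0,1\}$-dichotomy and its characterization of zero distance; once these are in hand, the rest is a short pigeonhole argument. The extension to queries drawn from an extended alphabet is handled separately by Theorem~\ref{thm:frechet_extra_not_useful} cited earlier, which shows that extra characters cannot reduce the query complexity for \frechet, so it suffices to prove the binary-query case here.
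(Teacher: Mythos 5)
Your argument is correct and mirrors the paper's: both count the $\Theta(n)$ equivalence classes under \frechet distance (the paper counts $2n$, you count $2n+1$ with the empty class) and apply a pigeonhole bound on the number of distinct response patterns that $k$ non-adaptive queries can produce. The only slip is notational---the answer vector for an $s$ equivalent to $q_j$ is $\mathbf{1}-e_j$ (zero in position $j$, ones elsewhere), not $e_j$---which leaves your count of $k+1$ possible response vectors, and hence the lower bound, unchanged.
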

\begin{proof}
\revise{\linelabel{classify_frechet} We begin this proof of query complexity lower bound with a classification of all equivalence classes under the \frechet distance.
For each length $1 \leq i \leq n$, there exists two non-equivalent sequences under  \frechet distance, which are  $\underbrace{010101 \ldots}_{\text{of length} ~i}$ and $\underbrace{101010 \ldots}_{\text{of length} ~i}$, yielding $2n$ mutually non-equivalent sequences in total.} As the \frechet distance oracle returns 0 when the input sequence and the query sequence are equivalent and 1 otherwise, we would need at least $2n-1$ queries to exactly recover the input sequence. If the number of queries is less than $2n-1$, we can always select 2 sequences from the $2n$ mutually non-equivalent sequences which are not covered by the queries, and these two sequences cannot be distinguished by the query sequences. This yields an $\Omega(n)$ lower bound on the query complexity. 
\end{proof}

\revise{In the analysis of non-adaptive strategies for DTW distance, we have shown that, with extra characters, we can obtain stronger results in recovering the exact sequence.
However, using queries from the extended alphabet (no matter how many extra characters are allowed) does not help increase the power of recovery under \frechet distance, proved in the following theorem.}

\begin{theorem}[Extra Characters Are Not Helpful]
    \label{thm:frechet_extra_not_useful}
    Given two sequences $s$ and $s'$, if the \frechet distance $d_{F}(s, s') = 0$, then any query $q$ with extra characters cannot distinguish $s$ and $s'$.
\end{theorem}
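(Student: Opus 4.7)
The plan rests on a structural observation: when $d_F(s,s')=0$, the optimal correspondence between $s$ and $s'$ must achieve zero $\ell_\infty$ cost, which forces the expansions $\bar s$ and $\bar{s'}$ realizing this correspondence to be literally equal (since all matched pairs must coincide and the alphabet is binary). Hence $s$ and $s'$ share a common expansion $\sigma$, equivalently they have the same condensed expression. I will prove that $d_F(s,q)=d_F(\sigma,q)=d_F(s',q)$ for every query $q$, from which the theorem is immediate.

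For the direction $d_F(\sigma,q)\ge d_F(s,q)$, note that any expansion of $\sigma$ is also an expansion of $s$: further extending runs of $\sigma$, which already extend runs of $s$, still yields an expansion of $s$. Thus any correspondence $(\bar\sigma,\bar q)$ for $(\sigma,q)$ is automatically a correspondence for $(s,q)$ of the same $\ell_\infty$ cost. The same argument handles $d_F(\sigma,q)\ge d_F(s',q)$.

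The main technical step is the reverse direction $d_F(s,q)\ge d_F(\sigma,q)$. Starting from any correspondence $(\bar s,\bar q)$ for $(s,q)$ of cost $c$, I will construct a correspondence $(\bar\sigma,\bar q')$ for $(\sigma,q)$ whose cost is also at most $c$ by a run-wise padding operation. Because $s$ and $\sigma$ share the same run structure, the $i$-th run of $\bar s$ aligns with the $i$-th run of $\sigma$; if its length $r_i(\bar s)$ is smaller than $r_i(\sigma)$, I pad that run of $\bar s$ up to length $\max(r_i(\bar s),r_i(\sigma))$ by repeating its run-value, and simultaneously duplicate one character inside the aligned block of $\bar q$ the same number of times. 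The expected obstacle is verifying this padding preserves a valid correspondence: $\bar q'$ remains an expansion of $q$ because duplicating a character in place only lengthens an existing run of $\bar q$ (which is already an expansion of $q$), and the $\ell_\infty$ cost does not grow because each newly added pair matches the duplicated $\bar q$-character against the same run-value it was matched against originally, contributing an identical cost that was already part of the maximum.

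Combining the two directions, together with the symmetric statement for $s'$, yields $d_F(s,q)=d_F(s',q)$ for every $q$, proving that no query, even over an extended real-valued alphabet, can distinguish $s$ from $s'$.
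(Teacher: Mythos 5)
Your proof is correct, and it takes a genuinely different route from the paper's while resting on the same underlying insight. The paper directly transports an optimal matching for $(s,q)$ to one for $(s',q)$ of identical cost: it partitions $q$ according to which run of $s$ each character is matched to, then re-attaches each piece to the corresponding run of $s'$; since two runs with the same character contribute the same $\ell_\infty$ cost to any matched block of $q$ regardless of their lengths, the cost is preserved, and symmetry gives equality. You instead factor through the common expansion $\sigma$ of $s$ and $s'$ and prove $d_F(s,q)=d_F(\sigma,q)$ by a two-sided argument: the inequality $d_F(\sigma,q)\ge d_F(s,q)$ is free because expansions compose (an expansion of $\sigma$ is an expansion of $s$), and the reverse inequality is established by a run-wise padding of $(\bar s,\bar q)$ that inserts the run value into $\bar s$ and simultaneously duplicates an aligned character of $\bar q$, so each new matched pair repeats a cost already present and the max cannot grow; then $d_F(s',q)=d_F(\sigma,q)$ by the identical argument. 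Your route is slightly longer, but it has a pleasant modular structure: the padding step isolates, as a reusable lemma, the statement that $d_F(\cdot,q)$ depends only on the condensed expression of its first argument — which is exactly the $\ell_\infty$-specific property the theorem hinges on (and which fails for DTW, where lengthening a run can increase the $\ell_1$ sum). Both proofs are interchangeable in strength; the paper's is terser, while yours makes the mechanism behind the \frechet equivalence classes more explicit. One cosmetic remark: your parenthetical ``since \dots the alphabet is binary'' is unnecessary — $\|\bar s - \bar{s'}\|_\infty = 0$ forces $\bar s = \bar{s'}$ over any alphabet.
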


\begin{proof}
\revise{\linelabel{not-triangle}Given sequences $s, s'$ (where $d_{F}(s, s') = 0$) and query $q$ with extra characters, our goal is to show $d_{F}(s, q) = d_{F}(s', q)$ for \emph{every possible $q$}. 
The technique of our proof is, for an optimal matching between $s$ and any query $q$, we can construct a matching between $s'$ and $q$ with the same cost, and vice versa.}
In this way, we know that $d_{F}(s', q) \leq d_{F}(s, q)$ and $d_{F}(s, q) \leq d_{F}(s', q)$, so $d_{F}(s, q) = d_{F}(s', q)$ and $q$ cannot distinguish $s$ and $s'$.  

Since $d_{F}(s, s') = 0$, $s$ and $s'$ have the same condensed expression. Suppose $s$ and $s'$ has $k$ runs. 
\revise{In the optimal matching between $q$ and $s$}, let \revise{$q_{s^{(i)}}$} denote the substring in $q$ which is matched to the $i$-th run of $s$ for every $i\in[k]$.  
We can always match all \revise{$q_{s^{(i)}}$}'s to the $i$-th run of $s'$ instead.
\revise{Note that the $i$-th runs of $s$ and $s'$ (denoted by $s^{(i)}$ and $s^{\prime(i)}$, resp.) are of the same character with maybe various length.
The \frechet distance between $q_{s^{(i)}}$ and $s^{(i)}$ only depends on the characters in $q_{s^{(i)}}$ and thus $d_F(q_{s^{(i)}}, s^{(i)}) = d_F(q_{s^{(i)}}, s^{\prime(i)})$.
Therefore we obtain a matching between $s'$ and $q$ with a cost of $d_{F}(s, q)$.
This matching between $q$ and $s'$ may be not optimal but is valid, and therefore we can conclude $d_{F}(s', q)\leq d_{F}(s, q)$.
Due to the symmetry of the statement, we can similarly obtain $d_{F}(s, q) \leq d_{F}(s', q)$.}
This finishes the proof of this theorem.
\end{proof}

\revise{Since extra characters are not helpful in recovering from \frechet distance queries, we conclude the analysis with a \emph{trivially interesting} approach to recover sequences up to equivalence.
The approach uses up to $2n-1$ queries, which exactly matches our query complexity lower bound, as shown in the following theorem.}

\begin{theorem}[Non-adaptive Strategy for \frechet Equivalence Class Recovery]
\label{theorem:frechet}
For a binary alphabet $\bits$ and two input sequences $s, s' \in \bits^i$ where $0 \leq i \leq n$ and $s$ and $s'$ are non-equivalent sequences under \frechet distance, there exists an algorithm to distinguish the input sequences $s$ and $s'$, given \revise{$2n-1 \in \mc{O}(n)$} query sequences $\revise{\mc{Q}}$ and the \frechet distance of $s$ and $s'$ to each query sequence $q \in \revise{\mc{Q}}$.
\end{theorem}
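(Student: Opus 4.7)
The plan is to construct a query set of size $2n-1$ that matches the lower bound from Theorem~\ref{thm:frechet_lower_bound}, by using one representative from each of $2n-1$ Fr\'echet-equivalence classes. First I would observe that for binary sequences the Fr\'echet distance takes only the values $0$ or $1$: since characters lie in $\{0,1\}$, the $\ell_\infty$ cost of any single matched pair is at most $1$, and it is $0$ overall precisely when the two inputs share the same condensed expression (i.e., lie in the same equivalence class as defined in the text preceding Theorem~\ref{thm:frechet_lower_bound}). So for any binary query $q$, the value $d_F(s,q)$ is just a one-bit indicator of whether $s$ and $q$ share their condensed expression.

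Next I would enumerate the equivalence classes of nonempty binary sequences of length at most $n$: the condensed expressions of length $k$ are exactly the two alternating strings of that length (one starting with $0$, one with $1$), so across all $k \in [1,n]$ there are precisely $2n$ classes. The query set is then $\mc{Q} = \{0,1,01,10,010,101,\ldots\}$ consisting of all condensed expressions of length $1$ through $n$, \emph{with one expression removed} (for concreteness, the alternating string of length $n$ that begins with $1$). This gives $|\mc{Q}|=2n-1$, with all queries of length $\le n$.

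The distinguishing argument is then a one-line pigeonhole: given two non-equivalent inputs $s,s'$, at most one of their equivalence classes can coincide with the single omitted class, so some $q\in\mc{Q}$ represents exactly one of them. For that $q$, the oracle returns $d_F(s,q)=0$ and $d_F(s',q)=1$ (or vice versa). The corresponding recovery algorithm takes the $2n-1$ oracle answers and outputs the unique $q\in\mc{Q}$ with $d_F(s,q)=0$ if one exists, and otherwise outputs a representative of the omitted class; this uniquely identifies the equivalence class of $s$ and hence distinguishes any non-equivalent pair.

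The main obstacle is essentially bookkeeping rather than technical depth: I need to verify cleanly that there are exactly $2n$ distinct equivalence classes for non-empty binary sequences of length at most $n$, that the Fr\'echet distance on binary alphabets is genuinely $\{0,1\}$-valued, and that the case distinction in the recovery is exhaustive. Once these are nailed down, optimality is immediate: matching the $\Omega(n)$ (and in fact $2n-1$) lower bound from Theorem~\ref{thm:frechet_lower_bound} shows the strategy is tight simultaneously for adaptive and non-adaptive queries, since Theorem~\ref{thm:frechet_extra_not_useful} already rules out any gain from extra characters and adaptivity cannot beat the information-theoretic $2n-1$ bound.
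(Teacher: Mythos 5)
Your proof is correct and follows essentially the same approach as the paper: enumerate the $2n$ Fr\'echet-equivalence classes (two alternating condensed expressions per length $1$ through $n$), query one representative per class, and drop one query because exactly one oracle response must be $0$. The paper phrases the dropped-query step simply as ``there would be exactly one $0$ among the $2n$ query results,'' which is the same pigeonhole observation you make explicitly.
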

\begin{proof}
We first show that, for each length $0 \leq i \leq n$, there are only two non-equivalent sequences under \frechet distance, which are  $010101...$ and $101010...$ sequences, viz., we can identify two non-equivalent sequences by specifying the sequence length $i$ and the starting bit.
Therefore, for the maximum sequence length $n$, there are only $2n$ mutually non-equivalent sequences.

Given any two different sequences from this $2n$-sized collection of non-equivalent sequences under \frechet distance, we can use $\mc{O}(n)$ query sequences to distinguish them.
That is, we can utilize the exact set of $2n$ non-equivalent sequences as the query sequences.
If the query sequence $p$ is exactly the input sequence $q$, the \frechet distance between $p$ and $q$ is $d_F(p, q) = 0$.
If the query sequence $p$ is not equivalent to the input sequence $q$, then the \frechet distance between $p$ and $q$ is $d_F(p, q) = 1$ because it is impossible to skip over a bit without paying cost $1$.
\revise{Note that any one of the $2n$ queries can be skipped since we know the fact that there would be exactly one 0 among the $2n$ query results.}
Therefore, $2n-1 \in \mc{O}(n)$ query sequences suffice to distinguish any two sequences from the non-equivalent sequence set and this finishes the proof.
\end{proof}

This theorem shows that, if an input is in the collection of non-equivalent sequences under \frechet distance, we can use $\mc{O}(n)$ queries to exactly recover this sequence given the query results under the \frechet distance.

\medskip
\noindent\textbf{Remark: Extension to non-binary alphabets.}
Our results are presented for input sequences from binary alphabet $\bits$.
These results can be extended to any non-binary alphabet $\Sigma$ by encoding the non-binary alphabet in a binary domain.
This will increase the query complexity by a constant factor from $|\Sigma|$ (one-hot encoding) to $\log(|\Sigma|)$ (binary encoding).
This extension works for the results for all distance metrics shown in this paper.
However, we note that this extension may not be optimal if one considers a large alphabet (e.g., larger than $n$).
In fact, calculating some of the distances themselves on a general alphabet is under SETH \citep{abboud2016simulating,bringmann2015quadratic}, which is a much hard problem than on the binary case \citep{kuszmaul2021binary}.
Obtaining optimal results on the extension of the non-decomposable distance recovery problem leaves room for future research.

\section{Related Work}
\label{sec:related_work}

A distance embedding \citep{cormode2003sequence} embeds  sequences from the original distance metric space to other distance measures (usually $l_p$ norms), such that the distance measurements in the original space can be preserved up to a factor of $D$, namely \emph{the distortion rate}.
The sequence distance embedding problem is related to our problem in the sense that, in our problem, we intend to recover the input sequence from a list of query results that are in the $l_p$ space, which can be regarded as finding a special distance embedding.
Existing works on the sequence distance embedding problem mainly focus on constructing such an embedding which can have a close approximation (viz., \emph{low distortion rate}) and reduce the computational complexity (i.e., cost) on the new distance space.
\cite{andoni2003lower} shows a lower bound of $3/2$ on the distortion rate of embedding edit distance into $\ell_p$ norm spaces.
An improvement of $(\log n)^{\frac{1}{2} - o(1)}$ on this lower bound \citep{khot2005nonembeddability} has been further simplified and improved into $\Omega(\log n)$ by \cite{krauthgamer2009improved}.

Distance embeddings can be used to estimate the distance on the complex metric space because the evaluation and computations on the new (simpler metric) space can be significantly faster \citep{cormode2003sequence}.
Under the asymmetric query model (when estimating the edit distance between $x$ and $y$, the algorithm has unrestricted power accessing $x$ but limited power accessing $y$), \cite{andoni2010polylogarithmic} proposes a $(\log n)^{\mc{O}(1/\epsilon)}$ approximation algorithm that runs in $n^{1 + \epsilon}$ time.
\cite{charikar2018estimating} considers the alignment problem when estimating the edit distance (finding the sequence of edits between the estimated sequences) and presents an alignment with $(\log n)^{\mc{O}(1/\epsilon^2)}$ approximation in time $\Tilde{\mc{O}}(n^{1+\epsilon})$.
The sequence distance embedding problem has been investigated on other distance metrics as well, for example, the block edit distance \citep{cormode2003sequence} and the Ulam distance \citep{charikar2006embedding}.
Existing work also shows embeddings from edit distance to the Hamming space \citep{belazzougui2016edit,chakraborty2016streaming}.
However, to the best of our knowledge, there is no prior work considering the embedding problem of the \emph{DTW distance} and the \emph{exact} recovery problem based on distance oracle query results.

\section{Open Problems}
\label{sec:open_problems}

We initiate an exact recovery problem of sequences using queries to a non-decomposable distance oracle.
We show recovery algorithms for edit distance, DTW distance, and \frechet distance, as well as a general adaptive algorithm for a wide class of distance oracles.
We envision the following directions for future work.

First, for the edit distance, there is still a quadratic gap between the non-adaptive query complexity upper and lower bounds without extra characters. Closing this gap requires a deeper understanding about the properties of edit distance.

Second, for the DTW distance, it remains unclear whether 1 extra character suffices for an $\mc{O}(n)$ non-adaptive upper bound, or we can have an $\Omega(n^2)$ non-adaptive lower bound with $1$ extra character (our proof uses $2$ extra characters).

Furthermore, as the initial work on non-decomposable distance recovery problem, we consider a simpler setting where input sequences are drawn from binary alphabet $\bits$.
While our results can be naturally extended to a non-binary alphabet, as stated in the paper, with a compensation of increasing the query complexity up to a constant factor, we notice that for some distances (e.g., DTW), the calculation on the general alphabet is much harder than on the binary case.
This spawns the open question for follow-up work to consider: Would there exist a strategy specifically designed for the non-binary alphabet with lower query complexity (than using encoding extensions to our results on the binary alphabet)?

Lastly, it would be interesting to consider the exact sequence recovery problem using the properties of specific distance metrics.
For example, the Edit distance with Real
Penalty (ERP) distance \citep{chen2004marriage} which supports local time shifting in time series by the marriage of the $\ell_1$ norm and edit distance, would be of interest.
One can also consider other variants of our problem in terms of adaptive queries or the approximate recovery problem in the presence of noise.

\newpage

\section*{Acknowledgement}
All authors thank the anonymous reviewers of ITCS 2023 and IEEE Transactions of Information Theory for their detailed comments which helped to improve the paper during the revision process.
David P. Woodruff would like to thank support from ONR grant N00014-18-1-2562 and a Simons Investigator Award.
Hongyang Zhang would like to thank support from NSERC Discovery Grant RGPIN-2022-03215, DGECR-2022-00357. 

\bibliography{ref}

\newpage
\appendix

\section{Other Related Work}

\vspace{0.5em}

\noindent
\textbf{Recovery problems in metric spaces.}
Our problem is related to the recovery or reconstruction problems over metric spaces.
\citep{amir2018period} study the \emph{period recovery problem} on strings, which is to find the primitive periods between two strings such that the periodic distance is below a threshold. 
They present an $\mc{O}(n \log n)$-time algorithm for Hamming distance and an $\mc{O}(n^{4/3})$-time algorithm for edit distance.
\citep{sima2021trace} investigate the approximate recovery problem over bounded edit distance spaces in the presence of noise and show $n^{\mc{O}(k)}$ noisy samples suffice for (approximate) reconstruction.
Interestingly, \citep{bressan2021exact} consider the \emph{exact} recovery problem using oracle queries but the objective of their work is to exactly recover the clusters in Euclidean space, which is similar but orthogonal to our problem.

\vspace{0.5em}

\noindent
\textbf{Learning problems: Coin-weighing and group-testing problems.}
The related ``decomposable'' instance to our problem of querying a Hamming distance oracle is equivalent to the coin-weighing problem \citep{bshouty2009optimal} and the quantitative group testing problem \citep{wang2017optimal}.
Both the coin-weighing problem and group-testing problems are well-studied learning problems in the literature and have many real-world applications \citep{soderberg1963combinatory,cantor1966determination,li1991combinatorics,sunjaya2020pooled,yelin2020evaluation}.
The coin-weighing problem is to determine the weight of each coin (of two distinct weights $w_1$ and $w_2$) by using a minimal number of weighings of a subset of $n$ total coins each time.
\citep{cantor1966determination} and \citep{bshouty2009optimal} respectively present $2n/\log n$ weighing solutions which are optimal non-adaptive solutions to this problem.
Assuming the number of $w_1$ weight coins is known to be $d$, this $d$-coin weighing problem can be solved by an adaptive algorithm in time $2d\log \frac{n}{d}/\log d + \mc{O}(d/\log d + d(\log \log d) \log \frac{n}{d}/(\log d)^2)$ \citep{bshouty2009optimal}.
The major difference between our problem and these well-studied problems is that we consider distance metrics which cannot be aligned and represented as $\sum_i^n f(x_i - y_i)$ (i.e., the edit distance, DTW distance, and \frechet distances).

\section{Coordinate Descent Algorithm Instantiation}
\label{sec: Local Search Algorithm Instantiation}

Now we briefly discuss how we apply the Coordinate Descent algorithm to all three distances we consider in this paper by justifying the two conditions hold.

\stitle{Edit distance.}
For condition 2, we know that \revise{$\forall s, q, \dist(s, q) \leq n$} since the maximum length of $s$ or $q$ is $n$.
For condition 1, in each iteration, we consider a set $Q$ that contains all sequences that can be transformed from \revise{$q$} by inserting, deleting or substituting one character in \revise{$q$} (edit operations).
Note that $|Q|$ cannot exceed $(n+1) + n +n= 3n+1$.
We claim that there exists a \revise{$q'$} in $Q$ such that $\revise{\dist(s, q) > \revise{\dist}(s, q')}$.
Let $\revise{\dist(s, q) = d}$. 
By the definition of edit distance, there exists a chain of edit operations of length $d$ that transforms \revise{$s$} to $q$, resulting in a list of intermediate sequences $q_1, ..., q_{d-1}$.
Note that $\revise{\dist(s, q) \geq \dist(q_1, q) + 1}$, otherwise we have $\revise{\dist(q_1, q) > d - 1}$.
However, the chain implies we can transform $q_1$ to $q$ in $d-1$ edit operations, which leads to a contradiction.
Since $q_1 \in Q$, we can find $q_1$ satisfying the condition in $3n+1$ searches.
Therefore, the algorithm is guaranteed to recover the input in $\mc{O}(n^2)$ steps.

\stitle{DTW distance.}
For DTW distance, condition 2 holds since \revise{$\forall s, q, \dist(s, q) \leq n$}.
For condition 1, consider the $\#\textsc{runs}(x)$ in $s$ and $q$. If 
$\#\textsc{runs}(x)$ of $q$ $<$ $s$, then either adding an (arbitrary length) run to the start or the end of $q$ will decrease the DTW distance from $s$. On the other hand, if 
$\#\textsc{runs}(x)$ of $q$ $>$ $s$, then either deleting a run from the start or the end of $q$ will decrease the DTW distance from $s$.
If $\#\textsc{runs}(x)$ of $q$ $=$ $s$ and $\revise{\dist}(s, q) \neq 0$, we can still decrease the distance from $q$ by either adding/deleting a run to the start/end of the sequence.
Therefore, the algorithm is guaranteed to recover the input in $\mc{O}(n^2)$ steps.

\stitle{\frechet distance.}
Condition 2 holds since \revise{$\forall s, q, \dist(s, q) \leq 1$}.
For condition 1, enumerating $2n$ non-equivalent sequences, (i.e., 010101... and 101010...) guarantees to find \revise{$q'$} such that $\revise{\dist(s, q) > \revise{\dist}(s, q') = 0}$.
Therefore, the algorithm terminates in $\mc{O}(n)$ steps.

\section{Proofs of Claim \ref{claim:2_o1} in Lemma \ref{lemma:single-direction_o1} }

\noindent \textit{Proof of Claim \ref{claim:2_o1}}. We prove this by contradiction. Suppose $\exists(i,j)$ where $i\in[n]$ and $j\in[\ell]$ such that  $\revise{\deg}(q[i])>1$ and $\revise{\deg}(s[j])>1$. 
Let $\mb{X} = \{x \in [n] \mid \revise{\deg}(q[x]) > 1\}$, $\mb{Y} = \{y \in [\ell] \mid \revise{\deg}(s[y]) > 1 \}$ and $\mb{Z} = \{z \in [\ell] \mid \exists x \in \mb{X} \text{ such that edge }(q[x], s[z])\in M\}$. 
According to Claim \ref{claim:1_o1}, we know that $\mb{Y}\bigcap \mb{Z} = \emptyset$. Let $d = \min_{y\in \mb{Y}, z\in \mb{Z}}|y-z|$. We would have $d>0$. Suppose we have $x_0\in \mb{X}, y_0\in \mb{Y}, z_0\in \mb{Z}$ such that edge $(q[x_0], s[z_0])\in M$ and $|y_0-z_0| = d$.
There are two cases to discuss. As we have already solved the case $y_0<z_0$ in the proofs of Claim \ref{claim:2_o1} in Lemma \ref{lemma:single-direction_o1}, here we only discuss the case $y_0 > z_0$.

\ifTIT
In this case, we can assume that $s[y_0]$ is matched to 
$$\{q[w], q[w+1], ..., q[w+\revise{\deg}(s[y_0])-1]\}$$ and $q[x_0]$ is matched to $$\{s[z_0-\revise{\deg}(q[x_0])+1], s[z_0-\revise{\deg}(q[x_0])+2],..., s[z_0]\}.$$ 

We remove $d+1$ edges $E = \{(q[x_0], s[z_0]), (q[x_0+1], s[z_0+1]), ..., (q[w], s[y_0])\}$and add $d$ new edges 
$$E'=\{(q[x_0+1], s[z_0]), (q[x_0+2], s[z_0+1]), ..., (q[w], s[y_0-1])\}$$ to construct a new matching $M'$. Since $\revise{\deg}(s[y_0])>1$ and $\revise{\deg}(q[x_0])>1$ in $M$, $M'$ would still be a valid matching. Computing the sum of two sets of edges $E$ and $E'$, respectively, would yield 
\needreview{Equation.~\ref{eqn_dbl_2} (see the cross-column equations).}
\fi

\ifarxiv
In this case, we can assume that $s[y_0]$ is matched to 
$\{q[w], q[w+1], ..., q[w+\revise{\deg}(s[y_0])-1]\}$ and $q[x_0]$ is matched to $\{s[z_0-\revise{\deg}(q[x_0])+1], s[z_0-\revise{\deg}(q[x_0])+2],..., s[z_0]\}.$ 
We remove $d+1$ edges $E = \{(q[x_0], s[z_0]), (q[x_0+1], s[z_0+1]), ..., (q[w], s[y_0])\}$and add $d$ new edges 
$$E'=\{(q[x_0+1], s[z_0]), (q[x_0+2], s[z_0+1]), ..., (q[w], s[y_0-1])\}$$ to construct a new matching $M'$. Since $\revise{\deg}(s[y_0])>1$ and $\revise{\deg}(q[x_0])>1$ in $M$, $M'$ would still be a valid matching. Computing the sum of two sets of edges $E$ and $E'$, respectively, would yield the following.
\begin{align}
   \Cost(E) &  = |s[y_0]-q[w]|+\sum_{i = 1}^{d}|s[z_0+i-1]-q[x_0+i-1]| \label{eqn_dbl_2} \\
    & > |q[w]-q[x_0]|+ \sum_{i = 1}^{d}|s[z_0+i-1]-q[x_0+i-1]|  \hfill \tag{Equation.~\ref{equ:single_direction_o1}} \\
    & = \sum_{i = 1}^{d}|q[x_0+i-1]-q[x_0+i]|+ \sum_{i = 1}^{d}|s[z_0+i-1]-q[x_0+i-1]|  \hfill \tag{Monotonicity of $q$}  \\
    & =  \sum_{i = 1}^{d}(|q[x_0+i-1]-q[x_0+i]|+|s[z_0+i-1]-q[x_0+i-1]|) \nonumber \\
    & \geq \sum_{i = 1}^{d}|q[x_0+i]-s[z_0+i-1]| \hfill \tag{Triangle Inequality}\\
    & = \revise{\Cost}(E'). \nonumber
\end{align}
\fi

Hence, $M'$ would be a better matching than $M$, a contradiction. Combining 1) and 2) completes the proof of Claim \ref{claim:2_o1}.

\ifTIT
\begin{figure*}[t]
\normalsize

\begin{align}
   \Cost(E) &  = |s[y_0]-q[w]|+\sum_{i = 1}^{d}|s[z_0+i-1]-q[x_0+i-1]| \label{eqn_dbl_2} \\
    & > |q[w]-q[x_0]|+ \sum_{i = 1}^{d}|s[z_0+i-1]-q[x_0+i-1]|  \hfill \tag{Equation.~\ref{equ:single_direction_o1}} \\
    & = \sum_{i = 1}^{d}|q[x_0+i-1]-q[x_0+i]|+ \sum_{i = 1}^{d}|s[z_0+i-1]-q[x_0+i-1]|  \hfill \tag{Monotonicity of $q$}  \\
    & =  \sum_{i = 1}^{d}(|q[x_0+i-1]-q[x_0+i]|+|s[z_0+i-1]-q[x_0+i-1]|) \nonumber \\
    & \geq \sum_{i = 1}^{d}|q[x_0+i]-s[z_0+i-1]| \hfill \tag{Triangle Inequality}\\
    & = \revise{\Cost}(E'). \nonumber
\end{align}
\hrulefill
\vspace*{4pt}
\end{figure*}
\fi

\section{Recovery Using Non-Adaptive DTW Distance Oracle with \texorpdfstring{$\mc{O}(n)$}{On} Extra Characters}
\label{app:dtw_on_extra}

\begin{theorem}[Non-adaptive Strategy for DTW Exact Recovery with $\mc{O}(n)$ Extra Characters]
\label{theorem:dtw_on}
Define a sequence of $n$ elements, each of which has $\mc{O}(\log n)$ bit complexity, as a query sequence.
For a binary alphabet $\bits$ and an input sequence $s := \bits^{\ell}$ where $0 \leq \ell \leq n$, there exists an algorithm to recover the input sequence $s$, given \revise{$4 \in \mc{O}(1)$} query sequences $\revise{\mc{Q}}$ and the $d_{\revise{\DTW}}(s, q)$ to each query sequence $q \in \revise{\mc{Q}}$.
\end{theorem}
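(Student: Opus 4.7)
The plan is to construct four non-adaptive queries whose DTW distances jointly determine $s$. Queries 1--3 are single-character queries gathering global information: $q^{(1)}=\tfrac{1}{2}$ gives $d_{\DTW}(s,q^{(1)})=\ell/2$ and hence $\ell$; $q^{(2)}=0$ gives the number of $1$'s in $s$; and $q^{(3)}=1$ gives the number of $0$'s (which also lets us detect and handle the degenerate cases $s=0^\ell$ and $s=1^\ell$).

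The main query $q^{(4)}$ of length $n$ is built from prime numbers. Let $p_1<p_2<\cdots<p_n$ be the first $n$ primes; by the prime number theorem each $p_i=\mc{O}(n\log n)$, so each has $\mc{O}(\log n)$ bit complexity. Define $q^{(4)}_i := \tfrac{1}{2} - \tfrac{1}{3 p_i}$. This sequence is monotonically increasing in $i$, all values lie in $(0,\tfrac{1}{2})$ (giving 0-preference), and $\min_i \max(q^{(4)}_i,1-q^{(4)}_i) \ge \tfrac{1}{2} > \tfrac{1}{6} \ge \max_{i,j}|q^{(4)}_i - q^{(4)}_j|$, so the hypothesis of Lemma \ref{lemma:single-direction_o1} holds. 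Adapting the shifting-operation analysis of Theorem \ref{theorem:dtw_o1} (Claims \ref{claim:3_o1} and \ref{claim:4_o1}) with $a,b$ replaced by the $q^{(4)}_i$, Lemmas \ref{lemma:single-direction_o1}, \ref{lemma:3_o1}, \ref{lemma:4_o1} together imply that whenever $s$ is not the all-$1$ string, the optimal DTW matching has the canonical isomorphic form: positions $1,\ldots,u-1$ of $q^{(4)}$ match the leading $1$'s of $s$, positions $u,\ldots,u+n-\ell$ all match the first $0$ at $s_u$, and positions $u+n-\ell+1,\ldots,n$ match $s_{u+1},\ldots,s_\ell$ one-to-one.

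Under this structure, setting $\epsilon_i:=1/(3p_i)$ and letting $f(i)\in\{0,1\}$ be the value of $s$ matched to $q^{(4)}_i$, a direct calculation gives $d_{\DTW}(s,q^{(4)}) = n/2 + \sum_i(2f(i)-1)\epsilon_i$. Rearranging, the quantity $T:=\sum_{i\in S}\epsilon_i$ with $S:=f^{-1}(1)$ is computable from $d_{\DTW}(s,q^{(4)})$. The key algebraic step is that $S$ is uniquely recoverable from $T$: the integer $\bigl(3\prod_j p_j\bigr)T=\sum_{i\in S}\prod_{j\neq i}p_j$ reduces modulo $p_k$ to $\bigl(\prod_{j\neq k}p_j\bigr)\cdot \mathbb{I}[k\in S]$, which is nonzero modulo $p_k$ if and only if $k\in S$. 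Given $S$ and the already-known $\ell$, we identify $u$ as the smallest index not in $S$ and read off $s_j = \mathbb{I}[j+n-\ell \in S]$ for $j>u$, completely recovering $s$.

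The main technical obstacle is rigorously verifying that the canonical isomorphic matching is optimal for our particular prime-based query: this amounts to repeating the shifting-operation analysis of Claims \ref{claim:3_o1} and \ref{claim:4_o1} for $q^{(4)}_i = \tfrac{1}{2}-\tfrac{1}{3p_i}$, where one must check that swapping a pair of matched edges across a $q^{(4)}_i$--$q^{(4)}_j$ transition never decreases cost. The prime-denominator construction is precisely what enables extracting $\Omega(n\log n)$ bits of information from a single DTW cost value while keeping each query character at $\mc{O}(\log n)$ bits, yielding the claimed $\mc{O}(1)$-query upper bound.
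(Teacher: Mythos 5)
Your proof is correct, but it follows a genuinely different route from the paper's, and the difference is worth noting. Both arguments use the same core device---a monotone query with $\mc{O}(\log n)$-bit entries built from distinct prime denominators so that the single rational value $d_{\DTW}(s,q)$ encodes, via residues modulo each prime, the $\{0,1\}$ value matched to every position---and both handle the all-$0$/all-$1$ degenerate cases with the single-character queries $0$ and $1$. Where you diverge is in how you turn the recovered expansion $m = f(1)\cdots f(n)$ of $s$ into $s$ itself. The paper issues \emph{two} prime-based queries, one $0$-preferred (entries $x_i/p_i\in(\tfrac14,\tfrac12)$, sorted increasing) and its mirror $q'=1-q$ which is $1$-preferred; it only invokes Lemma \ref{lemma:single-direction_o1} and the one-sided degree-one property (Claim \ref{claim:3}, the analogue of Lemma \ref{lemma:3_o1}), then splits $m$ and $m'$ into maximal runs and stitches $s$ together by taking the $1$-runs from $m$ and the $0$-runs from $m'$. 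You instead issue a \emph{single} $0$-preferred prime query and port the entire canonical-matching machinery of Theorem \ref{theorem:dtw_o1} (Lemma \ref{lemma:4_o1} and Claims \ref{claim:3_o1}, \ref{claim:4_o1}) to the strictly increasing query $q^{(4)}_i=\tfrac12-\tfrac{1}{3p_i}$, concluding that the unique over-matched input character in an optimal matching is the leftmost $0$; knowing $\ell$ and $S=f^{-1}(1)$ then determines $s$ directly.

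The step you flagged as the ``main technical obstacle''---re-deriving Claims \ref{claim:3_o1} and \ref{claim:4_o1} for a strictly increasing, rather than two-valued, $0$-preferred query---does go through. Claim \ref{claim:4_o1} is purely combinatorial and does not use the values of $q$ at all. For Claim \ref{claim:3_o1}, a short telescoping computation gives that a shifting operation changes the cost by
\begin{equation*}
\Delta \;=\; \sum_{j=x+1}^{y-1}\bigl(|q'_{j-1}-s[j]|-|q'_j-s[j]|\bigr) \;+\; (q'_{y-1}-q'_x)\;=\; 2\sum_{\substack{x<j<y\\ s[j]=1}}(q'_j - q'_{j-1})\;\ge\;0,
\end{equation*}
where $q'_j:=q^{(4)}[z+j-x]$, which uses only that $q^{(4)}$ is increasing and $<\tfrac12$, not that it is two-valued; so the canonical $M^*$ is indeed optimal. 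Your residue argument is also sound, since $3\prod_j p_j \cdot T=\sum_{i\in S}\prod_{j\ne i}p_j$ reduces modulo $p_k$ to a unit times $\ind[k\in S]$. Two minor remarks: (a) your $q^{(1)}=\tfrac12$ query is redundant once you have $q^{(2)}=0$ and $q^{(3)}=1$, since $\ell=d_{\DTW}(s,0)+d_{\DTW}(s,1)$, so your method actually needs only three queries; (b) your bound $\max_{i,j}|q^{(4)}_i-q^{(4)}_j|\le\tfrac16$ is loose (the true value is $<\tfrac19$), but still suffices for Lemma \ref{lemma:single-direction_o1}. Overall, your route is leaner and reuses the Theorem \ref{theorem:dtw_o1} structural lemma more fully, at the price of re-verifying the shifting analysis for a non-two-valued query; the paper's route sidesteps that verification entirely by paying one extra complementary prime query to pin down the $0$-runs independently.
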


We note that, if we remove the constraint of $\mc{O}(\log n)$ bit complexity, we can give a straightforward solution by leveraging the query string $q = \{ \frac{1}{(n+1)}, \frac{1}{(n+1)^2}, \dots, \frac{1}{(n+1)^n} \}$ to encode much more information in a single query. 
With the word RAM bit complexity requirement \citep{fredman1990blasting} on the queries though, namely that each entry fits into a single $O(\log n)$-bit word, such solutions are not allowed. 

\stitle{Proof of Theorem \ref{theorem:dtw_on}.}
Note that we can still use query sequences $0$ and $1$ to recover input sequences consisting of only $0$s or $1$s. For simplicity, we assume in the rest of the proof that the input sequence $s$ contains both $0$ and $1$ and let $s = s[1]s[2]...s[\ell]$.

We give our proof by constructing $2$ query sequences $q$ and $q'$ and presenting an algorithm to recover an input sequence $s$ from its DTW distance to these $2$ query sequences.

\noindent\textit{Query Sequences Construction.}
Let $P_{prime} = \{p_1, p_2, ..., p_n\}$ be the first $n$ primes not including $2$. By the prime number theorem \citep{selberg1949elementary}, we have that $p_n = \mc{O}(n\log n)$.  Note that for any prime number $p_i > 2$, $\exists 1\leq x_i < p_i$ such that $\frac{1}{4} < \frac{x_i}{p_i} < \frac{1}{2}$. We obtain $q$ by selecting such a $\frac{x_i}{p_i}$ for each $p_i\in P_{prime}$ and rearranging them in increasing order. Then we construct $q$ as $q = q[1]q[2]...q[n]$ where $\frac{1}{4} < q[1] < q[2] < ... < q[n] < \frac{1}{2}$. Let $q'[i] = 1-q[i], 1\leq i \leq n$, and let $q' = q'[1]q'[2]...q'[n]$. We would have $\frac{3}{4} > q'[1] > q'[2] > ... > q'[n] > \frac{1}{2}$. Since $p_n = \mc{O}(n\log n)$, it is easy to verify that each $q[i]$ and $q'[i]$ does have bit complexity  $\mc{O}(\log n)$.

According to Lemma \ref{lemma:single-direction_o1}, each element in $q$ would be involved exactly once in $d_{\revise{\DTW}}(q, s)$, and a similar argument would hold for $q'$. We hereby present an algorithm to determine the value of the matched element for each element in $q$, and the same algorithm can also be applied to $q'$.

\noindent\textit{Algorithm to determine matched elements for a query sequence.} Suppose $q[i] = \frac{x_{t_i}}{p_{t_i}}$ where $\{t_j\}$ is a permutation of $[n]$. Letting $m_i$ be the value matched to $q[i]$ in the optimal DTW matching for $(q, s)$ (different $m_i$s could correspond to the same element in $s$), $m_i \in\{0, 1\}$ , we would have
\begin{align*}
    d_{\revise{\DTW}}(q, s) = \sum_{i=1}^{n}|m_i-q[i]| =
\sum_{i=1}^{n}\left|\frac{m_{i}p_{t_i}-x_{t_i}}{p_{t_i}}\right| \\
=\frac{\sum_{i=1}^{n} \left( |m_{i}p_{t_i}-x_{t_i}| \cdot \Pi_{j\neq t_i}p_j \right) }{{\Pi_{i=1}^{n}p_{i}}}.
\end{align*}

Let $d_{\revise{\DTW}}(q, s) = \frac{u}{v}$, where $u$ and $v$ are co-primes.  
We have $u = \sum_{i=1}^{n}\left(|m_{i}p_{t_i}-x_{t_i}|\cdot\Pi_{j\neq t_i}p_j \right)$ and $v = \Pi_{i=1}^{n}p_{i}$. 
Consider $u \bmod p_{t_k}$ for a specific $k$. 
As each term in the summation has a factor $p_{t_k}$ except $|m_{k}p_{t_k}-x_{t_k}|\cdot\Pi_{j\neq t_k}p_j$, we have
$a \equiv |m_{k}p_{t_k}-x_{t_k}|\cdot\Pi_{j\neq t_k}p_j \mod{p_{t_k}}.$
Note that $p_{t_k}-x_{t_k}\not\equiv x_{t_k}\mod{p_{t_k}}$, so $(p_{t_k}-x_{t_k})\cdot \Pi_{j\neq t_k}p_j\not\equiv x_{t_k}\cdot \Pi_{j\neq t_k}p_j\bmod p_{t_k}$. Thus, we can determine $m_k$ by checking whether $(p_{t_k}-x_{t_k})\cdot \Pi_{j\neq t_k}p_j\equiv u\mod p_{t_k}$ or $x_{t_k}\cdot \Pi_{j\neq t_k}p_j\equiv u\mod p_{t_k}$.

Furthermore, we have the following claim for the optimal DTW matching between $q$, $q'$ and $s$.

\begin{claim}
\label{claim:3}
For any given input sequence $s$ and optimal DTW matching $M$ and $M'$ for $(q,s)$ and $(q', s)$ respectively, we have $\revise{\deg}(s_[i]) = 1$ in $M$ if $s[i]=1$ and $\revise{\deg}(s[i]) = 1$ in $M'$ if $s[i]=0$ . 
\end{claim}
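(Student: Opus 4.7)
\noindent\textit{Proof proposal.} The plan is to adapt the reasoning behind Lemma \ref{lemma:3_o1} to the new query pair $(q,q')$. I will first verify that both query sequences satisfy the hypotheses of Lemma \ref{lemma:single-direction_o1}, and then run a swap-then-shift argument. By construction each of $q, q'$ is monotonic. For $q$, every entry lies in $(\tfrac14, \tfrac12)$, so $\min_{i\in[n]}\max\{|q[i]|,|q[i]-1|\} = 1-\max_i q[i] > \tfrac12$, while $\max_{i,j\in[n]}|q[i]-q[j]| < \tfrac12-\tfrac14 = \tfrac14$. The analogous inequalities hold for $q'$ since its entries lie in $(\tfrac12, \tfrac34)$. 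Therefore Lemma \ref{lemma:single-direction_o1} yields $\deg(q[i]) = 1$ for every $i\in[n]$ under $M$ and $\deg(q'[i]) = 1$ for every $i\in[n]$ under $M'$.

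I will now argue the first half of the claim by contradiction. Suppose some index $i$ with $s[i]=1$ has $\deg(s[i])>1$ under $M$, and let $q[j], q[j+1],\dots,q[j+\deg(s[i])-1]$ be the block of $q$-elements matched to $s[i]$. As in the proof of Lemma \ref{lemma:3_o1}, I will first ``swap away from 1-runs'': if $s[i\pm1]=1$, reassigning $q[j+1],\dots,q[j+\deg(s[i])-1]$ from $s[i]$ to that neighbor keeps the total cost identical (because each $|q[k]-1|$ is unchanged when the endpoint is another $1$) and preserves the monotonicity constraints of Definition \ref{def:matching}. Iterating this swap along consecutive $1$-runs and using the standing assumption that $s$ contains at least one $0$, I will reach an optimal matching $M_0$ in which some high-degree $s[i]=1$ has a $0$-neighbor $s[i+1]$ (WLOG).

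The core step is then to exhibit a strictly cheaper matching at this adjacency. Because $\deg(q[k])=1$ for all $k$, the block $q[j+1],\dots,q[j+\deg(s[i])-1]$ is adjacent only to $s[i]$ in $M_0$, so reassigning it from $s[i]=1$ to the neighboring $0$ yields a valid matching whose total cost differs from that of $M_0$ by
\[
\sum_{k=j+1}^{j+\deg(s[i])-1}\bigl(|q[k]-0|-|q[k]-1|\bigr) \;=\; \sum_{k=j+1}^{j+\deg(s[i])-1}(2q[k]-1) \;<\; 0,
\]
since every $q[k]<\tfrac12$. This contradicts the optimality of $M_0$ and hence of $M$.

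The statement for $M'$ then follows by the symmetric argument: every $q'[k]>\tfrac12$ means matching a $q'$-element to $1$ is strictly cheaper than matching it to $0$, so the same swap-then-shift reasoning forces $\deg(s[i])=1$ whenever $s[i]=0$. The main obstacle I anticipate is the bookkeeping required to check that the swap step produces a valid matching in the sense of Definition \ref{def:matching} when the high-degree character sits at the boundary of a long $1$-run; this will be handled by using that Lemma \ref{lemma:single-direction_o1} already forces each $q[k]$ to have degree exactly one, so the swapped block can be absorbed into the neighboring matched block without disturbing any other edge or violating the monotonicity of the matching.
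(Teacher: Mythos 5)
Your proof is correct and follows essentially the same swap-then-reassign argument the paper uses: first slide the excess degree along a maximal run of $1$'s (cost-preserving, since both endpoints are $1$'s and by Lemma \ref{lemma:single-direction_o1} no $q[k]$ is shared with the neighbor), then discharge the contradiction at a $0$-neighbor by computing that reassignment strictly lowers the cost because every $q[k]<\tfrac12$ (resp.\ $q'[k]>\tfrac12$). Your explicit verification that $q$ and $q'$ satisfy equation (\ref{equ:single_direction_o1}) — giving the $>\tfrac12$ vs.\ $<\tfrac14$ separation — is a useful addition that the paper leaves implicit, and your cost computation $\sum(2q[k]-1)<0$ is just a rearrangement of the paper's $\sum|s[i]-q[k]| > \sum\tfrac12 > \sum|s[i+1]-q[k]|$. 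One small imprecision: when the $1$-neighbor is $s[i-1]$ rather than $s[i+1]$, the block to reassign should be $q[j],\dots,q[j+\deg(s[i])-2]$ (keeping the \emph{last} $q$-vertex attached to $s[i]$) so that monotonicity of the matching is preserved; your "WLOG $s[i+1]$" disposition covers this, as does the paper's.
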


\noindent\textit{Proof of claim.} We give a proof by contradiction. Given an optimal DTW matching $M$ for $(q,s)$, suppose $\exists 1\leq i\leq \ell$ such that $s[i] = 1$ and $\revise{\deg}(s[i]) > 1$. Suppose $s[i]$ is matched to $q[j], q[j+1], \ldots, q[j+\revise{\deg}(s[i])-1]$. 

First, we show that we can ``swap'' $s[i]$ with its neighboring element while maintaining optimality of the matching. If one of the neighboring elements of $s[i]$ is $1$, \revise{without loss of generality}, suppose $s[i+1] = 1$, then we can construct an alternate optimal matching $M^*$ where $\revise{\deg}(s[i]) = 1$ and $\revise{\deg}(s[i+1])>1$. According to Lemma \ref{lemma:single-direction_o1}, $s[i+1]$ cannot be matched with any of $q[j], q[j+1], \ldots, q[j+\revise{\deg}(s[i])-1]$ in $M$, otherwise there would exist $j+1\leq k\leq j+\revise{\deg}(s[i])-1$ such that $\revise{\deg}(q[k]) = 2$. Thus by matching $q[j+1], \ldots, q[j+\revise{\deg}(s[i])-1]$ to $s[i+1]$ instead of $s[i]$, we would obtain a new optimal matching $M^*$ where $\revise{\deg}(s[i]) = 1$ and $\revise{\deg}(s[i+1])>1$. 

As there exists at least one $0$ in $s$, we know that there exists an optimal DTW matching $M_0^{*}$ for $(q, s)$ where $\exists s[i]$ such that $s[i] = 1$, $\revise{\deg}(s[i]) > 1$ and one of the neighboring elements of $s[i]$ is $0$. \revise{Without loss of generality}, suppose $s[i+1] = 0$. Similarly, according to Lemma \ref{lemma:single-direction_o1}, $s[i+1]$ cannot be matched with any of $q[j], q[j+1], \ldots, q[j+\revise{\deg}(s[i])-1]$ in $M_0$. Here we construct a new matching $M_0'$ by matching $q[j+1], \ldots , q[j+\revise{\deg}(s[i])-1]$ to $s[i+1]$ instead of $s[i]$. Fig \ref{fig:shift_o1} illustrates an example of such a construction. Considering the total cost of differing edges in both matchings, we have 
$\sum_{k=j+1}^{j+\revise{\deg}(s[i])-1}|s[i]-q[k]|
   > \sum_{k=j+1}^{j+\revise{\deg}(s[i])-1}\frac{1}{2}
   > \sum_{k=j+1}^{j+\revise{\deg}(s[i])-1}|s[i+1]-q[k]|.$
Thus $M_0^*$ would be a better matching than $M_0$, causing a contradiction and thus finishing the proof. 
A similar proof can be derived for query sequence $q'$ and the case $s[i]=0$.\\

\noindent\textit{Algorithm to recover $s$.}  We now give an overall algorithm that recovers $s$ using the above algorithm and claim. 
Applying the above algorithm gives the matched elements of $q$ and $q'$. 
Let the matching result for $q$ and $q'$ be $m = m_1...m_n$ and $m' = m_1'...m_n'$, ($m_i, m_i' \in \{0, 1\}$) respectively. 
We break $m$ and $m'$ into blocks such that each block is the longest substring that contains either $0$ or $1$. By also breaking $s$ into such blocks, we know that $m$ has the same number of blocks as $s$ according to Lemma \ref{lemma:single-direction_o1}. Similarly $m'$ has the same number of blocks as $s$. Let $l$ be the number of blocks that $m$ and $m'$ have. Then we can represent $m$ and $m'$ as $m=A_{1}...A_{l}$ and $m'=B_{1}...B_{l}$. Note that if $A_i$ contains only $1$, then $s$ must have the same number of $1$'s in the \revise{\linelabel{i-th-2}$i$-th} block, otherwise there will be some $s[k]$ for which  $\revise{\deg}(s[k])>1$, which contradicts Claim \ref{claim:3}. Similarly, if $B_j$ contains only $0$, then $s$ has the same number of $0$'s in the \revise{\linelabel{i-th-3}$j$-th} block.
Then we can fully recover $s$ as $h(A_1)...h(A_l)$ where $h(X_i) = A_i$ if $X_i$ contains $1$ and $h(X_i) = B_i$ if $X_i$ contains $0$.
\hfill$\blacksquare$

\end{document}